\pgfplotsset{compat=1.18}
\definecolor{YaleBlue} {HTML}{00356b}
\definecolor{YaleMidBlue} {HTML}{286dc0}
\definecolor{YaleLightBlue} {HTML}{b0d4ff}
\definecolor{YaleMidGrey} {HTML}{e0dcda}
\definecolor{YaleLightGrey} {HTML}{eeeeee}
\definecolor{ao}{rgb}{0.0, 0.5, 0.0}
\definecolor{YaleBlack} {HTML}{222222}
\definecolor{YaleDarkGrey} {HTML}{4a4a4a}
\definecolor{YaleWhite} {HTML}{f9f9f9}
\definecolor{YaleLightGreen}{HTML}{AFB896}
\definecolor{YaleGreen} {HTML}{5f712d}
\definecolor{YaleOrange} {HTML}{bd5319}
\newcommand{\spd}[1]{(#1+1)d}
\theoremstyle{definition}
\newtheorem{theorem}{Theorem}[section]
\newtheorem{corollary}[theorem]{Corollary}
\newtheorem{definition}[theorem]{Definition}
\newtheorem{proposition}[theorem]{Proposition}
\theoremstyle{remark}
\newtheorem{remark}{Remark}
\newtheorem{example}{Example}
\newcommand{\Z}{\mathbb{Z}}
\newcommand{\R}{\mathbb{R}}
\newcommand{\C}{\mathbb{C}}
\newcommand{\M}{\mathbb{M}}
\newcommand{\N}{\mathbb{N}}
\newcommand{\ii}{\mathrm{i}}
\newcommand{\ee}{\mathrm{e}}
\newcommand{\cC}{ {\cal C} } 
\newcommand{\cD}{ {\cal D} }
\newcommand{\cG}{ {\cal G} } 
\newcommand{\cH}{ {\cal H} } 
\newcommand{\cJ}{{\cal J}}
\newcommand{\cL}{ {\cal L} } 
\newcommand{\cM}{ {\cal M} } 
\newcommand{\cN}{ {\cal N} } 
\newcommand{\cP}{ {\cal P} } 
\newcommand{\cQ}{ {\cal Q} }
\newcommand{\sH}{ {\mathsf H} }
\newcommand{\fA}{\mathfrak{A}}
\newcommand{\fS}{\mathfrak{S}}
\newcommand{\fZ}{\mathfrak{Z}}
\newcommand{\BI}{\mathbf{I}}
\newcommand{\Ve}{\mathsf{Vec}}
\newcommand{\sVec}{\mathsf{sVec}}
\newcommand{\Hilb}{\mathsf{Hilb}}
\newcommand{\one}{\mathbf{1}}
\newcommand{\Tr}{\mathrm{Tr}}
\newcommand{\ev}{\mathrm{ev}}
\newcommand{\coev}{\mathrm{coev}}
\newcommand{\Hom}{\mathrm{Hom}}
\newcommand{\End}{\mathrm{End}}
\newcommand{\categoricalEnd}{{\mathsf{End}}}
\newcommand{\Fun}{\mathsf{Fun}}
\newcommand{\rev}{\mathrm{rev}}
\newcommand{\Rep}{\mathsf{Rep}}
\newcommand{\id}{\mathrm{id}}
\newcommand{\im}{\mathrm{im}}
\newcommand{\TY}{\mathrm{TY}}
\newcommand{\Irr}{\mathrm{Irr}}
\newcommand{\chargef}{\mathrm{fgt}}
\newcommand{\xt}{\times}
\newcommand{\ot}[1][]{\underset{#1}{\otimes}}
\newcommand{\rt}[1][]{\underset{#1}{\triangleright}}
\newcommand{\lt}[1][]{\underset{#1}{\triangleleft}}
\newcommand{\la}{\langle} 
\newcommand{\ra}{\rangle}
\newcommand{\scr}{\mathrm{scr}}
\newcommand{\bl}[1]{{\hat{#1}}}
\newcommand{\fgt}{\mathrm{fgt}}
\newcommand{\spann}{\mathrm{span}}
\tikzstyle string=[thin,postaction={decorate},decoration={markings,
\newcommand{\diagram}[2]{
\begin{tikzpicture}[baseline=(current bounding box),scale=#1]
#2
\end{tikzpicture}
}
\begin{document}
\title{Non-invertible SPTs: an on-site realization of \spd{1} anomaly-free fusion category symmetry}
\author{Chenqi Meng}
\affiliation{Department of Physics, The Chinese University of Hong Kong, Shatin, New Territories, Hong Kong SAR}
\author{Xinping Yang}
\affiliation{Department of Physics, Yale University, New Haven, CT 06520-8120, USA}
\author{Tian Lan}
\affiliation{Department of Physics, The Chinese University of Hong Kong, Shatin, New Territories, Hong Kong SAR}
\author{Zhengcheng Gu}
\affiliation{Department of Physics, The Chinese University of Hong Kong, Shatin, New Territories, Hong Kong SAR}

\begin{abstract}
We investigate (1+1)d symmetry-protected topological (SPT) phases with fusion category symmetries. We emphasize that the UV description of an anomaly-free fusion category symmetry must include the fiber functor, giving rise to a local symmetry action, a charge category, and a trivial phase. We construct an ``onsite'' matrix-product-operator (MPO) version of the Hopf algebra symmetry operators in a lattice model with tensor-product Hilbert space. In particular, we propose a systematic framework for classifying and constructing SPTs with non-invertible symmetries. An SPT phase corresponds to a Q-system in the charge category, such that the Q-system becomes a matrix algebra when the symmetry is forgotten. As an example, we provide an explicit microscopic realization of all three $\mathsf{Rep}^\dagger(D_8)$ SPT phases, including a trivial phase, and further demonstrate the $S_3$-duality among these three SPT phases on lattice. 

\end{abstract}

\maketitle
\tableofcontents

\section{Introduction}
The discovery of symmetry-protected topological (SPT) phases sheds light on our understanding of quantum matter over the past decade \cite{Gu_2009,Chen_2013,Chen_2011,Levin_2012,Chen_2011_spinchain}. Starting with the primordial SPT order, the Haldane phase, numerous efforts have been made in systematically constructing SPTs with invertible symmetries ($G$-SPTs), such as the group-cohomology construction and domain wall decoration approach \cite{Chen_2013,wang2021domainwalldecorationsanomalies}. Along the way, mathematical tools are employed to describe and classify $G$-SPTs \cite{Kong_2020,Kong_2022,kapustin2014symmetryprotectedtopologicalphases,ogata2021classificationgappedgroundstate,Kong_2020_math}. Recently, with an increasing interest in generalized symmetries, the generalization of $G$-SPTs to those with non-invertible symmetries, such as fusion categories, has drawn attention across high-energy physics \cite{diatlyk2023gaugingnoninvertiblesymmetriestopological, shao2024whatsundonetasilectures,schafernameki2023ictplecturesnoninvertiblegeneralized,thorngren2019fusioncategorysymmetryi,Seiberg_2024,Seiberg_2024_Majorana,choi2022noninvertibleglobalsymmetriesstandard,Chang_2019,Roumpedakis_2023,thorngren2021fusioncategorysymmetryii}, 
condensed matter physics \cite{Chatterjee_2024,ning2023building1dlatticemodels,Lootens_2023,lootens2024dualitiesonedimensionalquantumlattice,fechisin2024noninvertiblesymmetryprotectedtopologicalorder}, 
and mathematics \cite{jones2024indexquantumcellularautomata,freed2024topologicalsymmetryquantumfield}, where independent progresses have been made in understanding fusion category symmetries and their associated anomalies. 
In particular, the categorical approach to classifying phases \cite{Kong_2020,Kong_2020_math, bhardwaj2023categoricallandauparadigmgapped,bhardwaj2023generalizedchargesiinoninvertible, lan2024categorysetorders, Ji_2020, Chatterjee_2023,Xu_24} has given pivotal insights into the mathematical structures underlying these symmetries. Despite the decent understanding of topological phases enriched by fusion category symmetries at the macroscopic level, the systematic realization of such phases in a microscopic lattice model remains relatively under-explored, with some constructions given by \cite{ning2023building1dlatticemodels,seifnashri2024clusterstatenoninvertiblesymmetry, Inamura_2022,inamura202411dsptphasesfusion,molnar2022matrixproductoperatoralgebras,jia2024weakhopfnoninvertiblesymmetryprotected,Jia_2024,Garre_Rubio_2023,fechisin2024noninvertiblesymmetryprotectedtopologicalorder,jia2024quantumclusterstatemodel}. 

Similar to $G$-SPTs, the UV description of SPTs with non-invertible symmetries should implement the already-known categorical data, realizing the symmetry action on local Hilbert space. Thus in this work, we aim to complement the existing efforts by developing a comprehensive framework to classify and construct (1+1)d SPT phases enriched by fusion category symmetries. We would like to take a further step beyond the abstract categorical data, emphasizing the role of the fiber functor in defining and realizing fusion category symmetries in a tensor-product Hilbert space lattice model. Explicitly, we define the anomalous-free fusion category symmetry by a pair $(\cC, f)$, where $\cC$ is a unitary fusion category and $f:\cC\rightarrow \Hilb$ is a fiber functor. In topological quantum field theory (TQFT), fiber functors are employed to classify SPT phases based on its defining property, the unique symmetric ground state \cite{thorngren2019fusioncategorysymmetryi,davydov2011fieldtheoriesdefectscentre}. In the UV, fiber functor also implements the locality structure of the fusion category symmetry on the tensor product Hilbert spaces kinematically. Most importantly, once a fiber functor is fixed, the existence of a trivial phase (symmetric product state) is guaranteed as an integral part in describing the fusion category symmetry microscopically. On the other hand, the existence of a trivial phase for non-invertible SPTs ensures the physical response of an SPT phase can be well defined, i.e. the non-trivial edge mode. 

A key distinction between SPTs with non-invertible symmetries and those with invertible symmetries is the absence of the stacking structure. Unlike group symmetries, where stacking defines a group structure between phases, fusion category symmetries do not inherently possess such an operation. Thus the choice of the trivial phase is not canonical despite its existence. Therefore, dualities between SPT phases are crucial for relating different SPT phases \cite{Li_2023}. 

The rest of the paper is organized as follows: in Section \ref{sec:RepD8} we begin with explicit examples of SPT phases with $\Rep^\dagger(D_8)$ and $\Rep^\dagger(\sH_8)$ symmetries. For $\Rep^\dagger(D_8)$, we construct a model realizing three distinct SPT phases and demonstrate how the $S_3$-automorphism relates them in construction. In Section \ref{sec:MPO}, we give an ``onsite" construction of the fusion category symmetry acting on the Hilbert space. After assigning an onsite symmetry action, we generalize the construction in \cite{Lan_2024} and define the general fusion category symmetric lattice model by requiring the local Hilbert space at each lattice site to be objects in the charge category, with interactions represented as morphisms in the charge category. With the generic symmetric lattice model established, in Section \ref{section:QSys} we proceed to classify SPT phases by analyzing the properties of the fixed-point model constructed from the Q-system in the charge category \cite{Inamura_2022, Lan_2024}. The Q-system is a unitary version of the separable algebra and provides a mathematical abstraction of fixed-point tensors \cite{Gu_2009,chen2024cftdtqftd1holographictensor} under the tensor network renormalization. Equivalently, it can be interpreted as the defining datum of the open string TQFT \cite{Fukuma1994,LAUDA_2007,davydov2011fieldtheoriesdefectscentre}. In Section \ref{sec:auto}, we study the ground state degeneracy of the lattice model constructed from the Q-system on an infinite open chain. We show that they realize an SPT phase when the Q-system is simple (i.e. a matrix algebra) when mapped to the category of Hilbert spaces by the forgetful functor. In particular, we prove a correspondence between fiber functors from the symmetry category and Q-systems in the charge categories (which are mapped to matrix algebras). The existence of isocategorical groups \cite{etingof2000isocategoricalgroups} provides an exotic example, thus requiring a refined classification of topological phases enriched by fusion category symmetries. We should consider not only the fiber functors but also the fusion of defects. In such an example, the Q-system lattice model has a unique symmetric ground state, which ought to be a symmetric phase by usual reasoning. However, the fusion of point-like defects can differ from that in the trivial phase. As the fusion of defects is physically detectable, in these examples the symmetry is ``less'' preserved than in a usual SPT phase. Such phases are peculiar to non-invertible symmetries and are absent in SPTs with invertible symmetries.

\section{Preliminary}\label{sec:RepD8}
Given a unitary fusion category $\cC$, it has been proposed in the context of TQFT that $\cC$-symmetry is anomaly-free if $\cC$ admits fiber functor, and the classification of 1D $\cC$-SPT phases is determined by fiber functors on $\cC$ \cite{thorngren2019fusioncategorysymmetryi}. When $\cC = \Ve_G$ \footnote{The category $\Hilb_G$ will be used in the rest of the paper to emphasize unitarity.}, the fiber functors on $\Ve_G$ are classified by $H^2(G, U(1))$, recovering the group cohomology classification for 1D $G$-SPTs \cite{Chen_2011_spinchain,Chen_2013}.

In this work, we focus on the case of SPTs with $\Rep^\dagger(D_8)$ symmetry. As a unitary fusion category, $\Rep^\dagger(D_8)$ is equivalent to the Tambara-Yamagami category \cite{TAMBARA1998692,article} $\mathrm{TY}_{\Z_2 \times \Z_2}^{ \chi(g,h) = (-1)^{g_1h_2+g_2h_1},\epsilon=1}$, characterized by the symmetric non-degenerate bicharacter $\chi$ and the Frobenius-Schur indicator $\epsilon \in \{ \pm 1\}$. It is of particular interest as $\Rep^\dagger(D_8)$ admits three fiber functors \cite{TAMBARA1998692}, i.e. corresponding to three distinct SPT phases \cite{thorngren2019fusioncategorysymmetryi}. Furthermore, these three fiber functors form a torsor under the monoidal equivalences in $\Rep^\dagger(D_8)$. In the literature, there has been work realizes three $\Rep^\dagger(D_8)$ SPT phases on lattice \cite{seifnashri2024clusterstatenoninvertiblesymmetry}, while claiming that there does not exist a trivial phase for SPTs with non-invertible symmetries due to the lack of stacking structure. 

Regardless of the stacking structure, a trivial phase should still exist as one essential physical response of an SPT phase is the non-trivial edge mode: the boundary between a non-trivial SPT phase and the vacuum. The lack of stacking structure of non-invertible SPTs simply suggests the choice of the trivial phase is not canonical.

More importantly, an IR description of $\cC$-SPT phases might not be complete. The example of isocategorical groups \cite{etingof2000isocategoricalgroups} is a contradiction if one considers fusion categories alone as symmetry. It states that different fiber functors of the fusion category $\Rep^\dagger(G)$ give rise to different Hopf algebras. They give distinct fusion rules for symmetry charges, which are directly physically observable.

As we will demonstrate in this paper, there exists a product state once we assign the symmetry operator in an ``onsite'' MPO form, also introduced in \cite{inamura202411dsptphasesfusion} called the completeness condition. The onsiteness might sound contradictory as non-invertible symmetry operators or MPOs act non-locally in tensor product Hilbert spaces. Here, as in \cite{inamura202411dsptphasesfusion}, we generalize the onsite condition of $G$-symmetries to $\cC$-symmetries by requiring that the virtual bond dimension $v_c$ of a MPO for an object $c \in \mathcal{C}$ must equal its quantum dimension $d_c$. This condition directly follows from the dualizability condition on the associated MPOs. Realizing objects in $\mathcal{C}$ by dualizable MPOs is characterized by a fiber functor $f: \mathcal{C} \to \Hilb$.
Therefore, the existence of a trivial phase is the consequence of the microscopic description (i.e. fiber functor) of $\cC$ symmetry and thus is an integral part in characterizing non-invertible SPTs:
\begin{mdframed}
    An anomaly-free fusion category symmetry is a pair $(\mathcal{C}, f)$, where $\mathcal{C}$ is an abstract unitary fusion category, and $f: \mathcal{C} \to\Hilb$ is a fiber functor.
\end{mdframed}

A related condition based on whether a (1+1)d defect TQFT admits a state-sum construction is given in \cite{davydov2011fieldtheoriesdefectscentre}.

Below we summarize our main result within a specific example of SPTs with $\Rep^\dagger(D_8)$ symmetry. 
After fixing the $\Rep^\dagger(D_8)$ symmetry charges, $\Rep^\dagger(D_8)$ symmetry is imposed by \textit{dualizable} MPOs. Then applying the general construction in \cite{Inamura_2022}, we write down a lattice model realizing three $\Rep^\dagger(D_8)$-SPT phases as well as phase transitions between them, with one of them being a product state. 
Moreover, three models proposed in \cite{seifnashri2024clusterstatenoninvertiblesymmetry} can be transformed to ours by a local unitary once we reduce their MPOs to the onsite form.

Furthermore, in Theorem \ref{thm:autoequivalence} we show that the monoidal equivalence $\pi$ on the symmetry category $\mathcal{C}$ induces an algebra $A_\pi$ in the charge category $\mathcal{C}_{\mathcal{M}}^\vee$ such that $_{A_\pi}(\mathcal{C}_{\mathcal{M}}^\vee)_{A_\pi} \cong \mathcal{C}_{\mathcal{M}}^\vee$. This suggests that $A_\pi$ in $\cC_\cM^\vee$ realizes a phase, the defects of which form the category $\cC_\cM^\vee$. Guided by the theorem, we explicitly realize the autoequivalence between different fiber functors of $\Rep^\dagger(D_8)$ on the lattice via a duality transformation.

We fix the dictionary throughout this paper:
\begin{center}
{\renewcommand{\arraystretch}{1.7}
\begin{tabular}{ | m{4.8cm} | m{3.7cm}|} 
  \hline
  (dual) Hopf C$^\star $-algebra & ($\mathsf{H^*}$) $\mathsf{H}$\\ 
  \hline
  Symmetry category $\mathcal{C}$ (unitary fusion category) & $ \Rep^\dagger(\mathsf{H^*})$ \\ 
  \hline
  Charge category $\mathcal{C}_{\Hilb_f}^\vee$ & $\Rep^\dagger(\mathsf{H})$ \\ 
  \hline
  Commuting projector fixed point model $(\Z,\,\sH,\,A,\,-m^\dagger m)$ & $A=W \ot W^*$ Q-system in $\mathcal{C}_{\Hilb_f}^\vee$, projective charge (edge mode) $W$, 
   Hilbert space $\mathcal{H}_i = A$ $\forall \, i \in \Z$, Hamiltonian $H = -\sum_i(m^\dagger m)_i$.
 \\
  \hline
\end{tabular} 
}
\end{center}

\subsection{\texorpdfstring{$\Rep^\dagger(D_8)$}{Rep(D8)} symmetry operator}
The sites of 1D $(\cC,f) = (\Rep^\dagger(D_8),\mathrm{fgt})$-symmetric chain are labeled by integers. 
For a system with $\Rep^\dagger(D_8)$ symmetry, we have the charge category $\cC_{\Hilb_f}^\vee = \Hilb_{D_8}$. The Hilbert space on each site $i\in\Z$ is an object in the charge category, i.e. a finite-dimensional $D_8$-graded Hilbert space $V_i = \oplus_{g\in D_8}(V_i)_g\in \Hilb_{D_8}$, and the total Hilbert space on the entire chain is the tensor product of local Hilbert spaces with respect to the ordering of lattice sites, see Figure \ref{fig:RepD8 lattice}. One can view the charge category $\Hilb_{D_8}$ as the category of $D_8$-domain walls.
\begin{figure}
    \centering
    \diagram{1}{
        \foreach \i in {-2,-1,...,2}
        {
            \fill (\i*1.5,0) node[above]{$\i$} node[below]{$V_{\i}$} circle [radius = .05];
        }
        \foreach \j in {-2,-1,...,1}
        {
            \node[below] at (\j*1.5+.75,0) {$\otimes$};
        }
        \node[below] at (-4,0) {$\cH =$};
    }
    \caption{A $\Rep^\dagger(D_8)$-symmetric chain, where black dots are lattice sites indexed by integers. On each site $i\in\Z$, there is a $\Rep^\dagger(D_8)$-charge, i.e. a $D_8$-graded Hilbert space $V_i$. The total Hilbert space is the tensor product of $D_8$-graded Hilbert spaces at each site respecting the ordering of lattice sites.}
    \label{fig:RepD8 lattice}
\end{figure}
The irreducible charges are 1-dimensional Hilbert spaces labeled by $D_8$ elements $\Big \{\C_g \,| \,g \in D_8 \Big\}$. 

We fix the presentation of $D_8$ as $\la s, r|s^2 = r^4 = e, srs = r^{-1}\ra$ and label the simple objects in the symmetry category $\Rep^\dagger(D_8)$, i.e. irreducible representations of $D_8$ by
\begin{multline*}
    \Irr(\Rep^\dagger(D_8)) = \left\{\left(\C,\rho_\one\right), \left(\C,\rho_a\right),\left(\C,\rho_b\right),\left(\C,\rho_c\right),\right.\\
    \left.\left(\C^2,\rho_\sigma\right)\right\},
\end{multline*}
where $\rho_\one$, $\rho_a$, $\rho_b$ and $\rho_c$ are one-dimensional representations and $\rho_\sigma$ is a two-dimensional representation, see Table \ref{tab:RepD8}.
\begin{table}
    \centering
    \begin{tabular}{|c|c|c|c|c|c|}
    \hline
         \diagbox{$D_8$}{Irr} & $\one$ & $a$ & $b$ & $c$ & $\sigma$\\
         \hline
        $s$ & $1$ & $-1$ & $1$ & $-1$ & $X$\\
        \hline
        $r$ & $1$ & $1$ & $-1$ & $-1$ & $\ii Z$\\
        \hline
    \end{tabular}
    \caption{Irreducible representations of $D_8$ on generators.}
    \label{tab:RepD8}
\end{table}
A local tensor of an MPO is graphically represented as
\[
T_{ij}^{\alpha\beta}=
\begin{tikzpicture}[baseline=(current bounding box),scale=.6]
    \draw[string] (0,0) node[below]{$j$} -- (0,.7);
    \draw[string] (0,1.3) -- (0,2) node[above]{$i$};
    \draw[YaleMidBlue, string] (-.3,1)  --  (-1,1) node[left]{$\alpha$};
    \draw[YaleMidBlue, string] (1,1)node[right]{$\beta$} -- (.3,1) ;
    \filldraw[fill = white] (-.3,.7) rectangle node {$T$} (.3,1.3);
\end{tikzpicture},
\]
where the blue line represents the virtual bond while the black line represents the physical bond. We use lower indices to denote the physical bonds, and upper indices for virtual bonds. We will also label the type of virtual bonds by simple objects in $\Rep^\dagger(D_8)$ and the space of physical bonds by objects in $\Hilb_{D_8}$.

$\Rep^\dagger(D_8)$ symmetry operators are represented by MPOs.
We write the local tensor of the $\Rep^\dagger(D_8)$-MPOs that corresponds to the simple object $s$ in $\Rep^\dagger(D_8)$ on irreducible charge $\C_g$ as \cite{inamura202411dsptphasesfusion,molnar2022matrixproductoperatoralgebras}
\[
(T_s)_{gg} = \rho_{s}(g).
\]
More explicitly,
\begin{equation*}
    \begin{split}
        &(T_\one)_{gg} = 1,\; (T_{a})_{gg} = (-1)^{n_s(g)},\; (T_{b})_{gg} = (-1)^{n_r(g)},\\& (T_{c})_{gg} = (-1)^{n_r(g)}(-1)^{n_s(g)}\,,
    \end{split}
\end{equation*}
where $(-1)^{n_s(g)}$ ($(-1)^{n_r(g)}$) is the number parity of $s$ ($r$) contained in $g$. Here the notation is well-defined as the multiplication order of group elements does not affect the number parity. The non-zero components of the local tensor of the non-invertible symmetry operator are written as
\begin{equation*}
    \begin{alignedat}{2}
        & (T_\sigma)_{ee} =  \one,\quad &&(T_\sigma)_{r^2r^2} = -\one,\\
        &(T_\sigma)_{rr} =  \ii Z,\quad && (T_\sigma)_{r^3r^3} = -\ii Z,\\
        & (T_\sigma)_{ss} = X,\quad &&  (T_\sigma)_{(sr^2)(sr^2)}= -X,\\
        &  (T_\sigma)_{(sr)(sr)} =  Y,\quad &&(T_\sigma)_{(sr^3)(sr^3)} = -Y.
    \end{alignedat}
\end{equation*}

To construct the local tensor of $\Rep^\dagger(D_8)$-MPOs on other choices of symmetry charges, one can decompose them into the irreducible charges, graphically
\[
\begin{tikzpicture}[baseline=(current bounding box),scale=1.5]
    \draw[string] (0,0) -- node[right]{$V$} (0,.8);
    \draw[string] (0,1.2) -- node[right]{$V$} (0,2);
    \draw[YaleMidBlue, string] (-.2,1)  -- node[above]{$s$} (-1,1) ;
    \draw[YaleMidBlue, string] (1,1) -- node[above]{$s$} (.2,1) ;
    \filldraw[fill = white] (-.2,.8) rectangle node {$T_s$} (.2,1.2);
\end{tikzpicture}
  :=\sum_{g\in D_8}\sum_{\alpha\in B(V_g)}
\begin{tikzpicture}[baseline=(current bounding box),scale=1.5]
    \draw[string] (0,0) -- node[right]{$V$} (0,.2);
    \draw[string] (0,.2) -- node[left]{$\C_g$} (0,.8);
    \draw[string] (0,1.2) -- node[left]{$\C_g$} (0,1.8);
    \draw[string] (0,1.8) -- node[right]{$V$} (0,2);
    \draw[YaleMidBlue, string] (-.2,1)  -- node[above]{$s$} (-1,1) ;
    \draw[YaleMidBlue, string] (1,1) -- node[above]{$s$} (.2,1) ;
    \filldraw[fill = white] (-.2,.8) rectangle node {$T_s$} (.2,1.2);
    \filldraw[fill = white] (0,.4) node[right]{$\alpha$} -- (.1,.3) -- (-.1,.3) -- cycle;
    \filldraw[fill = white] (0,1.6) node[right]{$\alpha$} -- (.1,1.7) -- (-.1,1.7) -- cycle;
\end{tikzpicture} 
\]
where $B(V_g)$ represents an orthonormal basis of $V_g$. 
We also use the same notation $T_s$ to denote the local tensor of the $s$-MPO on any local Hilbert spaces.
These MPOs clearly preserve the grading of charges.

For any $D_8$-grading preserving linear map $O:V\rightarrow W$, the $\Rep^\dagger(D_8)$-MPO satisfies the pulling-through condition \cite{Cirac_2021,Lootens_2023,lootens2024dualitiesonedimensionalquantumlattice}:
\[
\begin{tikzpicture}[baseline=(current bounding box),scale=1]
    \draw[string] (0,-.5) -- node[right]{$W$} (0,0);
    \draw[string] (0,.4) -- node[right]{$V$} (0,.8);
    \draw[string] (0,1.2) -- node[right]{$V$} (0,2);
    \draw[YaleMidBlue, string] (-.2,1)  -- node[above]{$s$} (-1,1) ;
    \draw[YaleMidBlue, string] (1,1) -- node[above]{$s$} (.2,1) ;
    \filldraw[fill = white] (-.2,.8) rectangle node {$T_s$} (.2,1.2);
    \filldraw[fill = white] (-.2,0) rectangle node {$O$} (.2,.4);
\end{tikzpicture}=
\begin{tikzpicture}[baseline=(current bounding box),scale=1]
    \draw[string] (0,2) -- node[right]{$V$} (0,2.5);
    \draw[string] (0,0) -- node[right]{$W$} (0,.8);
    \draw[string] (0,1.2) -- node[right]{$W$} (0,1.6);
    \draw[YaleMidBlue, string] (-.2,1)  -- node[above]{$s$} (-1,1) ;
    \draw[YaleMidBlue, string] (1,1) -- node[above]{$s$} (.2,1) ;
    \filldraw[fill = white] (-.2,.8) rectangle node {$T_s$} (.2,1.2);
    \filldraw[fill = white] (-.2,1.6) rectangle node {$O$} (.2,2);
\end{tikzpicture}.
\]

The contraction of MPOs is given by
\begin{multline*}
    (T^{\{1,\cdots ,n\}}_{s})^{\alpha\beta}:= \\\sum_{i_1\cdots i_n}\sum_{j_1\cdots j_n}(T_{s})_{i_1j_1}^\alpha(T_{s})_{i_2j_2}\cdots (T_{s})_{i_nj_n}^{\;\;\;\,\beta}\\
    |i_1\cdots i_n\ra\la j_1\cdots j_n|.
\end{multline*}
When MPO composes, we have
\[
    (T_{s})^{\alpha\beta}_{gg}(T_{r})^{\gamma \delta}_{gg} = \rho_s(g)^{\alpha\beta}\rho_{r}(g)^{\gamma\delta} = \rho_{s\ot r}(g)^{(\alpha\gamma)(\beta\delta)}.
\]
Thus the composed MPO can be decomposed by several rank-3 tensors $\mu$,  
\[
    \begin{tikzpicture}[baseline=(current bounding box),scale=.4]
    \draw[string] (0,0) node[below]{$g$} -- (0,.7);
    \draw[string] (0,1.3) -- (0,2);
    \draw[YaleMidBlue, string] (-.3,1)  -- node[above]{$r$} (-2,1) node[left]{$\gamma$};
    \draw[YaleMidBlue, string] (2,1)node[right]{$\delta$} -- node[above]{$r$} (.3,1) ;
    \filldraw[fill = white] (-.3,.7) rectangle node[font = \tiny] {$T_r$} (.3,1.3);
    \draw[string] (0,2) -- (0,2.7);
    \draw[string] (0,3.3) -- (0,4) node[above]{$g$};
    \draw[YaleMidBlue, string] (-.3,3)  -- node[above]{$s$}  (-2,3) node[left]{$\alpha$};
    \draw[YaleMidBlue, string] (2,3)node[right]{$\beta$} -- node[above]{$s$} (.3,3) ;
    \filldraw[fill = white] (-.3,2.7) rectangle node[font = \tiny] {$T_s$} (.3,3.3);
    \end{tikzpicture}
    =
    \sum_{l\in \Irr(\Rep^\dagger(D_8))}\sum_{\mu = 1}^{N^{rs}_l}
    \begin{tikzpicture}[baseline=(current bounding box),scale=.7]
    \draw[string] (0,0) node[below]{$g$} -- (0,1.7);
    \draw[string] (0,2.3) -- (0,4) node[above]{$g$};
    \draw[YaleMidBlue,string] (1,2) -- node[above]{$l$} (0,2);
    \draw[YaleMidBlue,string] (0,2) --node[above]{$l$} (-1,2);
    \draw[YaleMidBlue,string] (-1,2) -- node[above]{$s$} (-2,3);
    \draw[YaleMidBlue,string] (-1,2) -- node[below]{$r$} (-2,1);
    \draw[YaleMidBlue,string] (2,3) -- node[above]{$s$} (1,2);
    \draw[YaleMidBlue,string] (2,1) -- node[below]{$r$} (1,2);
    \filldraw[fill = white] (-.3,1.7) rectangle node {$T_l$} (.3,2.3);
    \filldraw[fill = white] (-1.2,1.8) rectangle node {$\mu$} (-.8,2.2);
    \filldraw[fill = white] (.8,1.8) rectangle node {$\overline{\mu}$} (1.2,2.2);
    \end{tikzpicture}.
\]
which are intertwiners of $\Rep^\dagger(D_8)$:
\[
    \mu\rho_l(g) = \rho_{s\ot r}(g)\mu,\quad \forall g\in D_8.
\]
More concretely, the invertible parts of the MPO satisfy the group multiplication rule:
\begin{equation*}
    \begin{split}
       &(T_a)^{11}(T_a)^{11} = \id,\quad (T_b)^{11}(T_b)^{11} =\id,\\
      &(T_a)^{11}(T_b)^{11} =(T_b)^{11}(T_a)^{11} = (T_c)^{11}. 
    \end{split}
\end{equation*}

And the invertible part multiplies with the non-invertible part by
\begin{align*}
    \begin{tikzpicture}[baseline=(current bounding box),scale=.4]
        \draw[string] (0,0) -- (0,.7);
        \draw[string] (0,1.3) -- (0,2);
        \draw[YaleMidBlue, string] (-.3,1)  -- node[above]{$\sigma$} (-2,1);
        \draw[YaleMidBlue, string] (2,1) -- node[above]{$\sigma$} (.3,1);
        \filldraw[fill = white] (-.3,.7) rectangle node[font = \tiny] {$T_\sigma$} (.3,1.3);
        \draw[string] (0,2) -- (0,2.7);
        \draw[string] (0,3.3) -- (0,4);
        \draw[YaleMidBlue,dashed, string] (-.3,3)  -- node[above]{$a$} (-2,3);
        \draw[YaleMidBlue,dashed, string] (2,3) -- node[above]{$a$} (.3,3);
        \filldraw[fill = white] (-.3,2.7) rectangle node[font = \tiny] {$T_a$} (.3,3.3);
    \end{tikzpicture}
    &=
    \begin{tikzpicture}[baseline=(current bounding box),scale=.7]
        \draw[string] (0,1) -- (0,1.7);
        \draw[string] (0,2.3) -- (0,3);
        \draw[YaleMidBlue,string] (1,2) -- node[above]{$\sigma$} (0,2);
        \draw[YaleMidBlue,string] (0,2) --node[above]{$\sigma$} (-1,2);
        \draw[YaleMidBlue,dashed,string] (-1,2) -- node[above]{$a$} (-2,3);
        \draw[YaleMidBlue,string] (-1,2) -- node[below]{$\sigma$} (-2,1);
        \draw[YaleMidBlue,dashed,string] (2,3) -- node[above]{$a$} (1,2);
        \draw[YaleMidBlue,string] (2,1) -- node[below]{$\sigma$} (1,2);
        \filldraw[fill = white] (-.3,1.7) rectangle node {$T_\sigma$} (.3,2.3);
        \filldraw[fill = white] (-1.2,1.8) rectangle node {$Z$} (-.8,2.2);
        \filldraw[fill = white] (.8,1.8) rectangle node {$Z$} (1.2,2.2);
    \end{tikzpicture},\\
    \begin{tikzpicture}[baseline=(current bounding box),scale=.4]
        \draw[string] (0,0) -- (0,.7);
        \draw[string] (0,1.3) -- (0,2);
        \draw[YaleMidBlue, string] (-.3,1)  -- node[above]{$\sigma$} (-2,1);
        \draw[YaleMidBlue, string] (2,1) -- node[above]{$\sigma$} (.3,1) ;
        \filldraw[fill = white] (-.3,.7) rectangle node[font = \tiny] {$T_\sigma$} (.3,1.3);
        \draw[string] (0,2) -- (0,2.7);
        \draw[string] (0,3.3) -- (0,4);
        \draw[YaleMidBlue,dashed, string] (-.3,3)  -- node[above]{$b$} (-2,3);
        \draw[YaleMidBlue,dashed, string] (2,3) -- node[above]{$b$} (.3,3);
        \filldraw[fill = white] (-.3,2.7) rectangle node[font = \tiny] {$T_b$} (.3,3.3);
    \end{tikzpicture}
    &=
    \begin{tikzpicture}[baseline=(current bounding box),scale=.7]
        \draw[string] (0,1) -- (0,1.7);
        \draw[string] (0,2.3) -- (0,3);
        \draw[YaleMidBlue,string] (1,2) -- node[above]{$\sigma$} (0,2);
        \draw[YaleMidBlue,string] (0,2) --node[above]{$\sigma$} (-1,2);
        \draw[YaleMidBlue,dashed,string] (-1,2) -- node[above]{$b$} (-2,3);
        \draw[YaleMidBlue,string] (-1,2) -- node[below]{$\sigma$} (-2,1);
        \draw[YaleMidBlue,dashed,string] (2,3) -- node[above]{$b$} (1,2);
        \draw[YaleMidBlue,string] (2,1) -- node[below]{$\sigma$} (1,2);
        \filldraw[fill = white] (-.3,1.7) rectangle node {$T_\sigma$} (.3,2.3);
        \filldraw[fill = white] (-1.2,1.8) rectangle node {$X$} (-.8,2.2);
        \filldraw[fill = white] (.8,1.8) rectangle node {$X$} (1.2,2.2);
    \end{tikzpicture},\\
    \begin{tikzpicture}[baseline=(current bounding box),scale=.4]
        \draw[string] (0,0) -- (0,.7);
        \draw[string] (0,1.3) -- (0,2);
        \draw[YaleMidBlue, dashed,string] (-.3,1)  -- node[above]{$a$} (-2,1) ;
        \draw[YaleMidBlue,dashed, string] (2,1) -- node[above]{$a$} (.3,1) ;
        \filldraw[fill = white] (-.3,.7) rectangle node[font = \tiny] {$T_a$} (.3,1.3);
        \draw[string] (0,2) -- (0,2.7);
        \draw[string] (0,3.3) -- (0,4);
        \draw[YaleMidBlue, string] (-.3,3)  -- node[above]{$\sigma$} (-2,3);
        \draw[YaleMidBlue, string] (2,3) -- node[above]{$\sigma$} (.3,3) ;
        \filldraw[fill = white] (-.3,2.7) rectangle node[font = \tiny] {$T_\sigma$} (.3,3.3);
    \end{tikzpicture}
    &=
    \begin{tikzpicture}[baseline=(current bounding box),scale=.7]
        \draw[string] (0,1) -- (0,1.7);
        \draw[string] (0,2.3) -- (0,3);
        \draw[YaleMidBlue,string] (1,2) -- node[above]{$\sigma$} (0,2);
        \draw[YaleMidBlue,string] (0,2) --node[above]{$\sigma$} (-1,2);
        \draw[YaleMidBlue,string] (-1,2) -- node[above]{$\sigma$} (-2,3);
        \draw[YaleMidBlue,dashed,string] (-1,2) -- node[below]{$a$} (-2,1);
        \draw[YaleMidBlue,string] (2,3) -- node[above]{$\sigma$} (1,2);
        \draw[YaleMidBlue,dashed,string] (2,1) -- node[below]{$a$} (1,2);
        \filldraw[fill = white] (-.3,1.7) rectangle node {$T_\sigma$} (.3,2.3);
        \filldraw[fill = white] (-1.2,1.8) rectangle node {$Z$} (-.8,2.2);
        \filldraw[fill = white] (.8,1.8) rectangle node {$Z$} (1.2,2.2);
      \end{tikzpicture},\\
    \begin{tikzpicture}[baseline=(current bounding box),scale=.4]
        \draw[string] (0,0) -- (0,.7);
        \draw[string] (0,1.3) -- (0,2);
        \draw[YaleMidBlue, dashed,string] (-.3,1)  -- node[above]{$b$} (-2,1);
        \draw[YaleMidBlue, dashed,string] (2,1) -- node[above]{$b$} (.3,1) ;
        \filldraw[fill = white] (-.3,.7) rectangle node[font = \tiny] {$T_b$} (.3,1.3);
        \draw[string] (0,2) -- (0,2.7);
        \draw[string] (0,3.3) -- (0,4);
        \draw[YaleMidBlue, string] (-.3,3)  -- node[above]{$\sigma$} (-2,3);
        \draw[YaleMidBlue, string] (2,3) -- node[above]{$\sigma$} (.3,3);
        \filldraw[fill = white] (-.3,2.7) rectangle node[font = \tiny] {$T_\sigma$} (.3,3.3);
    \end{tikzpicture}
    &=
    \begin{tikzpicture}[baseline=(current bounding box),scale=.7]
        \draw[string] (0,1) -- (0,1.7);
        \draw[string] (0,2.3) -- (0,3);
        \draw[YaleMidBlue,string] (1,2) -- node[above]{$\sigma$} (0,2);
        \draw[YaleMidBlue,string] (0,2) --node[above]{$\sigma$} (-1,2);
        \draw[YaleMidBlue,string] (-1,2) -- node[above]{$\sigma$} (-2,3);
        \draw[YaleMidBlue,dashed,string] (-1,2) -- node[below]{$b$} (-2,1);
        \draw[YaleMidBlue,string] (2,3) -- node[above]{$\sigma$} (1,2);
        \draw[YaleMidBlue,dashed,string] (2,1) -- node[below]{$b$} (1,2);
        \filldraw[fill = white] (-.3,1.7) rectangle node {$T_\sigma$} (.3,2.3);
        \filldraw[fill = white] (-1.2,1.8) rectangle node {$X$} (-.8,2.2);
        \filldraw[fill = white] (.8,1.8) rectangle node {$X$} (1.2,2.2);
    \end{tikzpicture}.
\end{align*}
Note that by requiring the orthogonality condition, all these rank-3 tensors are fixed up to a phase factor.
Let
\begin{align*}
    |\psi_e\ra &= \frac{1}{\sqrt{2}}(|01\ra+|10\ra),\\
    |\psi_a\ra &= \frac{1}{\sqrt{2}}(|01\ra-|10\ra),\\
    |\psi_b\ra &=\frac{1}{\sqrt{2}}(|00\ra + |11\ra),\\
    |\psi_c\ra &=\frac{1}{\sqrt{2}}(|00\ra - |11\ra).
\end{align*}
Then the non-invertible tensor composes to
\begin{align*}
    \begin{tikzpicture}[baseline=(current bounding box),scale=.5]
        \draw[string] (0,0) -- (0,.8);
        \draw[string] (0,1.2) -- (0,2);
        \draw[YaleMidBlue, string] (-.3,1)  -- node[above]{$\sigma$} (-2,1);
        \draw[YaleMidBlue, string] (2,1) -- node[above]{$\sigma$} (.3,1);
        \filldraw[fill = white] (-.3,.7) rectangle node[font = \tiny] {$T_\sigma$} (.3,1.3);
        \draw[string] (0,2) -- (0,2.7);
        \draw[string] (0,3.3) -- (0,4);
        \draw[YaleMidBlue, string] (-.3,3)  -- node[above]{$\sigma$} (-2,3);
        \draw[YaleMidBlue, string] (2,3) -- node[above]{$\sigma$} (.3,3);
        \filldraw[fill = white] (-.3,2.7) rectangle node[font = \tiny] {$T_\sigma$} (.3,3.3);
  \end{tikzpicture}
  &=
  \begin{tikzpicture}[baseline=(current bounding box),scale=.9]
        \draw[string] (0,1) -- (0,1.8);
        \draw[string] (0,2.2) -- (0,3);
        \draw[YaleMidBlue,dashed,string] (1,2) -- node[above]{$e$} (0,2);
        \draw[YaleMidBlue,dashed,string] (0,2) --node[above]{$e$} (-1,2);
        \draw[YaleMidBlue,string] (-1,2) -- node[above]{$\sigma$} (-2,3);
        \draw[YaleMidBlue,string] (-1,2) -- node[below]{$\sigma$} (-2,1);
        \draw[YaleMidBlue,string] (2,3) -- node[above]{$\sigma$} (1,2);
        \draw[YaleMidBlue,string] (2,1) -- node[below]{$\sigma$} (1,2);
        \filldraw[fill = white] (-.2,1.8) rectangle node {$T_e$} (.2,2.2);
        \filldraw[fill = white] (-1.2,1.8) rectangle node {$\psi_e$} (-.8,2.2);
        \filldraw[fill = white] (.8,1.8) rectangle node {$\psi_e^\dagger$} (1.2,2.2);
      \end{tikzpicture}\\
      &+
      \begin{tikzpicture}[baseline=(current bounding box),scale=.9]
      \draw[string] (0,1) -- (0,1.8);
      \draw[string] (0,2.2) -- (0,3);
        \draw[YaleMidBlue,dashed,string] (1,2) -- node[above]{$a$} (0,2);
        \draw[YaleMidBlue,dashed,string] (0,2) --node[above]{$a$} (-1,2);
        \draw[YaleMidBlue,string] (-1,2) -- node[above]{$\sigma$} (-2,3);
        \draw[YaleMidBlue,string] (-1,2) -- node[below]{$\sigma$} (-2,1);
        \draw[YaleMidBlue,string] (2,3) -- node[above]{$\sigma$} (1,2);
        \draw[YaleMidBlue,string] (2,1) -- node[below]{$\sigma$} (1,2);
        \filldraw[fill = white] (-.2,1.8) rectangle node {$T_a$} (.2,2.2);
        \filldraw[fill = white] (-1.2,1.8) rectangle node {$\psi_a$} (-.8,2.2);
        \filldraw[fill = white] (.8,1.8) rectangle node {$\psi_a^\dagger$} (1.2,2.2);
      \end{tikzpicture}\\
      &+
      \begin{tikzpicture}[baseline=(current bounding box),scale=.9]
      \draw[string] (0,1) -- (0,1.8);
      \draw[string] (0,2.2) -- (0,3);
        \draw[YaleMidBlue,dashed,string] (1,2) -- node[above]{$b$} (0,2);
        \draw[YaleMidBlue,dashed,string] (0,2) --node[above]{$b$} (-1,2);
        \draw[YaleMidBlue,string] (-1,2) -- node[above]{$\sigma$} (-2,3);
        \draw[YaleMidBlue,string] (-1,2) -- node[below]{$\sigma$} (-2,1);
        \draw[YaleMidBlue,string] (2,3) -- node[above]{$\sigma$} (1,2);
        \draw[YaleMidBlue,string] (2,1) -- node[below]{$\sigma$} (1,2);
        \filldraw[fill = white] (-.2,1.8) rectangle node {$T_b$} (.2,2.2);
        \filldraw[fill = white] (-1.2,1.8) rectangle node {$\psi_b$} (-.8,2.2);
        \filldraw[fill = white] (.8,1.8) rectangle node {$\psi_b^\dagger$} (1.2,2.2);
      \end{tikzpicture}\\
      &+
      \begin{tikzpicture}[baseline=(current bounding box),scale=.9]
      \draw[string] (0,1) -- (0,1.8);
      \draw[string] (0,2.2) -- (0,3);
        \draw[YaleMidBlue,dashed,string] (1,2) -- node[above]{$c$} (0,2);
        \draw[YaleMidBlue,dashed,string] (0,2) --node[above]{$c$} (-1,2);
        \draw[YaleMidBlue,string] (-1,2) -- node[above]{$\sigma$} (-2,3);
        \draw[YaleMidBlue,string] (-1,2) -- node[below]{$\sigma$} (-2,1);
        \draw[YaleMidBlue,string] (2,3) -- node[above]{$\sigma$} (1,2);
        \draw[YaleMidBlue,string] (2,1) -- node[below]{$\sigma$} (1,2);
        \filldraw[fill = white] (-.2,1.8) rectangle node {$T_c$} (.2,2.2);
        \filldraw[fill = white] (-1.2,1.8) rectangle node {$\psi_c$} (-.8,2.2);
        \filldraw[fill = white] (.8,1.8) rectangle node {$\psi_c^\dagger$} (1.2,2.2);
      \end{tikzpicture}.
\end{align*}

Those rank-3 tensors further give the $F$-symbol of $\Rep^\dagger(D_8)$:
\[
    \begin{tikzpicture}[baseline={($(current bounding box)$)}]
        \draw[string,YaleMidBlue] (0,-.6) node[below]{$l$} -- (0,0) node[above]{$\beta$};
        \draw[string,YaleMidBlue] (0,0) -- node[below left]{$m$} (-.6,.6);
        \draw[string,YaleMidBlue] (0,0) -- (1.2,1.2) node[above]{$k$};
        \draw[string,YaleMidBlue] (-.6,.6) node[above]{$\alpha$} -- (-1.2,1.2) node[above]{$i$};
        \draw[string,YaleMidBlue] (-.6,.6) -- (0,1.2) node[above]{$j$};
        \filldraw[fill=white,thick] (0,0) circle [radius=.05];
        \filldraw[fill=white,thick] (-.6,.6) circle [radius=.05];
    \end{tikzpicture}
    =\sum_{n\gamma\delta}[F^{ijk}_{l}]_{m\alpha\beta}^{n\gamma\delta}
    \begin{tikzpicture}[baseline={($(current bounding box)$)}]
        \draw[string,YaleMidBlue] (0,-.6) node[below]{$l$} -- (0,0) node[above]{$\delta$};
        \draw[string,YaleMidBlue] (0,0) --  (-1.2,1.2)node[above]{$i$};
        \draw[string,YaleMidBlue] (0,0) --node[below right]{$n$} (.6,.6) ;
        \draw[string,YaleMidBlue] (.6,.6) node[above]{$\gamma$} -- (1.2,1.2) node[above]{$k$};
        \draw[string,YaleMidBlue] (.6,.6) -- (0,1.2) node[above]{$j$};
        \filldraw[fill=white,thick] (0,0) circle [radius=.05];
        \filldraw[fill=white,thick] (.6,.6) circle [radius=.05];
    \end{tikzpicture}
\]

For each dual object $s^*$ of $s$, there is a corresponding dual MPO tensor
\[
    (T_{s^*})_{gg} = \rho_s(g^{-1})^T
\]
for each symmetry charge, denoted by reversing the arrow
\[
\begin{tikzpicture}[baseline=(current bounding box),scale=1]
    \draw[string] (0,0) -- (0,.7);
    \draw[string] (0,1.3) -- (0,2);
    \draw[YaleMidBlue, string] (-1,1)  -- node[above]{$s$}  (-.3,1);
    \draw[YaleMidBlue, string] (.3,1) -- node[above]{$s$}  (1,1) ;
    \filldraw[fill = white] (-.3,.7) rectangle node {$T_{s^*}$} (.3,1.3);
  \end{tikzpicture}.
\]
Furthermore, MPO tensors and dual tensors satisfy the dualizable property:
\begin{equation}\label{eq:duality condition}
    \begin{tikzpicture}[baseline=(current bounding box),scale=.8]
        \draw[string] (0,0) -- (0,.7);
        \draw[YaleMidBlue, string] (-.3,1)  -- node[above]{$s$} (-1,1) node[left]{$\alpha$};
        \draw[string] (0,1.3)  -- node[left]{$V$} (0,2.7);
        \draw[string] (0,3.3) -- (0,4);
        \draw[YaleMidBlue, string] (-1,3) node[left]{$\beta$}  -- node[above]{$s$}  (-.3,3);
        \draw[YaleMidBlue, string] (0,3) arc  (90:-90:1);
        \filldraw[fill = white] (-.3,.7) rectangle node {$T_s$} (.3,1.3);
        \filldraw[fill = white] (-.3,2.7) rectangle node {$T_{s^*}$} (.3,3.3);
  \end{tikzpicture} 
   =\delta_{\alpha\beta}\id_V
   =
   \begin{tikzpicture}[baseline=(current bounding box),scale=.8]
        \draw[string] (0,0) -- (0,.7);
        \draw[YaleMidBlue, string] (.3,1)  -- node[above]{$s$} (1,1) node[right]{$\alpha$};
        \draw[string] (0,1.3) --  node[right]{$V$} (0,2.7);
        \draw[string] (0,3.3) -- (0,4);
        \draw[YaleMidBlue, string] (1,3) node[right]{$\beta$}  -- node[above]{$s$}  (.3,3);
        \draw[YaleMidBlue, string] (0,3) arc  (90:270:1);
        \filldraw[fill = white] (-.3,.7) rectangle node {$T_{s^*}$} (.3,1.3);
        \filldraw[fill = white] (-.3,2.7) rectangle node {$T_{s}$} (.3,3.3);
  \end{tikzpicture}.
\end{equation}

Here, the bond dimension of the $s$-MPO equals the quantum dimension of $s$ by construction. This representation is, in fact, the minimal representation of the $\Rep^\dagger(D_8)$-MPO. 
However, given a set of MPOs that reproduces the correct fusion rule and F-symbols of $\Rep^\dagger(D_8)$, the dualizability condition serves as a criterion. When it is satisfied, the closure of the $s$-MPO yields a virtual bond dimension $v_s \in \mathbb{Z}_+$. Combined with the correct $\Rep^\dagger(D_8)$-fusion rule, the set of positive integers $\{v_s\}_{s\in\Irr(\Rep^\dagger(D_8))}$ satisfies the standard fusion equation
\[
    v_rv_s = \sum_{l}N^{rs}_{l}v_l.
\]
The only solution to this equation is 
\begin{equation}\label{eq:onsite}
    v_s = d_s.
\end{equation}
This equation defines the \textit{on-site} criterion of fusion category symmetry MPOs \cite{inamura202411dsptphasesfusion}. As we will see in Appendix \ref{sec:reduction}, once we reduce the $\Rep^\dagger(D_8)$ operators in \cite{seifnashri2024clusterstatenoninvertiblesymmetry} to its dualizable MPO form, the cluster state becomes the trivial $\Rep^\dagger(D_8)$ SPT phase. 

To understand the dualizability condition and the on-siteness, consider the partition function of a $\Rep^\dagger(D_8)$-symmetric Hamiltonian at zero temperature. The partition function can be expressed as a $\Rep^\dagger(D_8)$-symmetric tensor network by performing a Suzuki-Trotter expansion. The $\Rep^\dagger(D_8)$-MPOs correspond to defect lines across this tensor network. The pulling-through condition and the dualizability condition ensure that the MPOs can be arbitrarily deformed on the tensor network without altering the partition function. In the continuum limit, these $\Rep^\dagger(D_8)$-MPOs correspond to the topological defect lines (TDLs) shown in Figure \ref{fig:TDL}. As shown in \cite{thorngren2019fusioncategorysymmetryi}, the expectation value of a TDL is equal to its quantum dimension. Due to the pulling-through condition and dualizability condition, the corresponding MPO can be shrunk to the bond of a single local tensor, which becomes the virtual bond dimension of the MPO. We thus obtain that the quantum dimension of the topological defect line is equal to the virtual bond dimension of the MPO. 

In Section \ref{sec:MPO}, we give a detailed discussion on constructing anomaly-free fusion category symmetry operators as onsite MPOs from Hopf algebra. A Hopf algebra $\sH$ is equivalent to the representation category $\Rep(\sH^*)$ equipped with a forgetful functor (a fiber functor) \cite{Kitaev_2012}. It provides the complete data needed to describe an anomaly-free symmetry microscopically. One counter-example is the non-invertible line in the Ising category, the quantum dimension of which is $\sqrt{2}$. On the lattice, it is the Krammers-Wannier operator in the transverse-field Ising model. In this case, the Hopf algebra may be replaced with the weak Hopf algebra \cite{jia2024weakhopfnoninvertiblesymmetryprotected,molnar2022matrixproductoperatoralgebras} and the dualizability condition is not satisfied.

\begin{figure}[htbp]
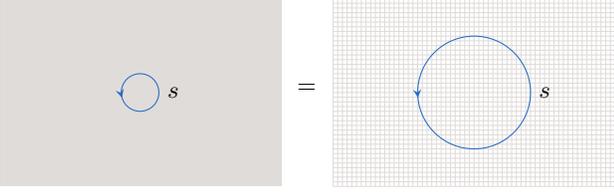

    \centering
    \diagram{2.5}{
        \fill[YaleMidGrey] (0,0) rectangle (1.5,1);
        \draw[YaleMidBlue,string] (.75,.5) circle[radius = .1];
        \node[right] at (.85,.5) {$s$};
        
    }
    =
    \diagram{2.5}{
        \foreach \i in {0,.025,...,1.5}
        {
            \draw[YaleMidGrey] (\i,0) -- (\i,1);
        }
        \foreach \j in {0,.025,...,1}
        {
            \draw[YaleMidGrey] (0,\j) -- (1.5,\j);
        }
        \draw[YaleMidBlue,string] (.75,.5) circle[radius = .3];
        \node[right] at (1.05,.5) {$s$};
    }
    \caption{A TDL is provided by an MPO in the continuum limit. Each crossing at the right-hand side represents a local tensor of the tensor network representation of the partition function.}
    \label{fig:TDL}
\end{figure}

\subsection{Dynamics}\label{sec:D8 lattices}
In this section, we present a lattice model protected by $\Rep^\dagger(D_8)$ symmetry. Specifically, we construct a $\Rep^\dagger(D_8)$-symmetric model with tuning parameters that realizes three distinct $\Rep^\dagger(D_8)$-SPTs at different points of the parameter space, with continuous phase transitions between these phases.

A generic $\Rep^\dagger(D_8)$-symmetric Hamiltonian is of the form
\[
    H = \sum_{X\subset \Z}\Phi(X),
\]
where each $X$ is a finite subset of $\Z$ in the form $\{m,m+1,\cdots, m+n\}$, where $m$ is a integer and $n$ is a non-negative integer, and $\Phi(X)$ is a $D_8$-grading preserving Hermitian operator:
\[
    \Phi(X):V_{m}\ot V_{m+1}\ot\cdots\ot V_{m+n}\rightarrow V_{m}\ot V_{m+1}\ot\cdots\ot V_{m+n}.
\]

In our model, the local Hilbert space on each site is chosen to be
\[
    \C[D_8] := \spann(|g\ra)_{g\in D_8}.
\]
Such a choice of local Hilbert space in our construction is non-canonical and is only for convenience to host three phases in the same model.

The total Hilbert space is
\[
\cH = \cdots \ot \C[D_8]\otimes \C[D_8]\otimes \cdots \otimes \C[D_8]\ot \cdots.
\]
In Section \ref{sec:O-type gs}, we give a more rigorous treatment of the infinite chain using operator algebra.

Here we emphasize that the tensor product order must obey the linear order of the lattice sites, as the tensor product $|g\ra\ot |h\ra$ is graded $gh$, but $|h\ra\ot |g\ra$ is graded $hg$, which does not always equal to $gh$ since $D_8$ is non-Abelian.

Define the projection maps
\begin{align*}
    P_0&:\C[D_8]\to A_0,\\
    P_1&:\C[D_8]\to A_1,\\
    P_2&:\C[D_8]\to A_2,
\end{align*}
with
\begin{align*}
    A_0 &= \C_e := \spann(|e\ra),\\
    A_1 &= \C\la s, r^2 \ra:=\spann(|e\ra,|r^2\ra,|s\ra,|sr^2\ra),\\
    A_2 &= \C\la sr, r^2 \ra:= \spann(|e\ra,|r^2\ra,|sr\ra,|sr^3\ra).
\end{align*}
On basis vectors, projections are explicitly defined by
\begin{align*}
    P_0|g\ra &= \delta_{g,e}|e\ra,\\
    P_1|g\ra &= \begin{cases}
        |g\ra & g \in \{e,r^2,s,sr^2\}\\
        0 & \text{otherwise}
    \end{cases},\\
    P_2|g\ra &= \begin{cases}
        |g\ra & g \in \{e,r^2,sr,sr^3\}\\
        0 & \text{otherwise}
    \end{cases}.
\end{align*}

We also introduce two rank-3 tensors which will be used to describe the nearest-neighbor interaction, they are the multiplication maps
\begin{align*}
    m_1:A_1\ot A_1\rightarrow A_1,\quad (m_1)^i_{jk} = 
    \diagram{2}{
        \draw[string] (-.5,-.5) node[below]{$j$} -- (0,0);
        \draw[string] (.5,-.5) node[below]{$k$} -- (0,0);
        \draw[string] (0,0) -- (0,.5) node[above]{$i$};
        \node[draw = black,fill = white] at (0,0) {$m_1$};
    },\\
    m_2:A_2\ot A_2\rightarrow A_2,\quad (m_2)^i_{jk} = 
    \diagram{2}{
        \draw[string] (-.5,-.5) node[below]{$j$} -- (0,0);
        \draw[string] (.5,-.5) node[below]{$k$} -- (0,0);
        \draw[string] (0,0) -- (0,.5) node[above]{$i$};
        \node[draw = black,fill = white] at (0,0) {$m_2$};
    }.
\end{align*}
with
\begin{align}\label{eq:multiplication}
    (m_1)^g_{hk} &= \frac{1}{2}\delta_{g,hk}\omega(h,k),\quad \forall g,h,k\in \la r^2,s\ra,\\
    (m_2)^g_{hk} &= \frac{1}{2}\delta_{g,hk}\omega(h,k),\quad \forall g,h,k\in \la r^2,sr\ra.
\end{align}
The $\omega$ takes values in Table \ref{tab:omega}.
\begin{table}[htbp]
    \centering
    \begin{tabular}{|c|c|c|c|c|}
    \hline
         &  $e$ & $a$ & $b$ & $ab$\\
         \hline
        $e$ & $1$ & $1$ & $1$ & $1$\\
        \hline
        $a$ & $1$ & $1$ & $1$ & $1$\\
        \hline
        $b$ & $1$ & $-1$ & $1$ & $-1$\\
        \hline
        $ab$ & $1$ & $-1$ & $1$ & $-1$\\
        \hline
    \end{tabular}
    \caption{$\omega(g,h)$ for $g,h \in\Z_2\times \Z_2 = \la a, b|a^2 = b^2 = e,\ ab = ba\ra$. For $\la r^2,s\ra$, $a = r^2$ and $b = s$, for $\la r^2,sr\ra$, $a = r^2$ and $b = sr$.}
    \label{tab:omega}
\end{table}
Under this multiplication map, $A_1$, $A_2$ are isomorphic to the matrix algebra $\mathbb{M}_2$. 

The $\Rep^\dagger(D_8)$-symmetric model is
\begin{equation*}
    \begin{split}
        H = &-\lambda_0\sum_{i}P_0^{(i)} \\
         &- \lambda_1 \sum_{i}\left(m_{1}^{(i,i+1)}P_1^{(i)}P_1^{(i+1)}\right)^\dagger m_1^{(i,i+1)}P_1^{(i)}P_1^{(i+1)}\\
         &- \lambda_2\sum_{i}\left(m_2^{(i,i+1)}P_2^{(i)}P_2^{(i+1)}\right)^\dagger m_2^{(i,i+1)}P_2^{(i)}P_2^{(i+1)}\,,
    \end{split}
\end{equation*}
where $A_\bullet$ is locally embedded into the eight-dimensional Hilbert space $\C[D_8]$. The model is $\Rep^\dagger(D_8)$-symmetric as all the terms are $D_8$-grading-preserving maps. $\lambda_0,\lambda_1,\lambda_2$ are three tuning parameters. When $(\lambda_0,\lambda_1,\lambda_2) = (1,0,0),\  (0,1,0),\  (0,0,1)$ respectively, the models are exactly solvable. 

\subsubsection{Phase 0: trivial phase}
When $(\lambda_0,\lambda_1,\lambda_2) = (1,0,0)$, the Hamiltonian reduces to
\[
    H = -\sum_{i}P^{(i)}_1,
\]
the ground state of which is the product state
\[
    |\psi\ra = |\cdots ee\cdots e\cdots \ra.
\]
And the value of local tensor of the $\Rep^\dagger(D_8)$-MPO are
\[
    (T_{\one})_{ee} = (T_{a})_{ee} = (T_{b})_{ee} = (T_{c})_{ee} = 1,\quad (T_{\sigma})_{ee} = \one.
\]
If the lattice is defined on $S^1$, the eigenvalues of $\Rep^\dagger(D_8)$-MPOs on ground state are equal to their quantum dimensions.

\subsubsection{Phase 1: \texorpdfstring{$W_{(a,c)} \ot W^*_{(a,c)}$}{Wac x Wac*}}
When $(\lambda_0,\lambda_1,\lambda_2) = (0,1,0)$, the Hamiltonian is defined on a the low energy effective Hilbert space
\[
    \cH^{(1)}_{\text{eff}}:=\cdots \otimes A_1\ot A_1\ot \cdots \ot A_1\otimes \cdots,
\]
and the effective Hamiltonian is
\begin{equation}\label{eq:effective Hamiltonian 1}
    H^{(1)}_{\text{eff}} = -\sum_{i}\left(m_{1}^{(i,i+1)}\right)^\dagger m_1^{(i,i+1)}.
\end{equation}

This Hamiltonian can be further simplified by dividing a single site $i$ into two sublattice sites $(2i-1,2i)$ which forms the unit cell in the new lattice. The local Hilbert space is then decomposed into two ``projective'' charge spaces:

\begin{alignat*}{2}
    &|e\ra = \frac{1}{\sqrt{2}}(|00'\ra + |11'\ra),\quad &&|s\ra = \frac{1}{\sqrt{2}}(|01'\ra + |10'\ra),\\
    &|r^2\ra = \frac{1}{\sqrt{2}}(|00'\ra - |11'\ra),\quad &&|sr^2\ra = \frac{1}{\sqrt{2}}(|01'\ra - |10'\ra),
\end{alignat*}
such that $\C[\la r^2,s\ra] =W_{(a,c)} \otimes W_{(a,c)}^*$ with $W_{(a,c)} = \spann(|0\ra,\,|1\ra )$ and $W_{(a,c)}^* = \spann(|0'\ra,\,|1'\ra )$, where $\{|0'\ra,\,|1'\ra\}$ is the dual basis of $\{|0\ra,\,|1\ra\}$. The reason for the subscript in $W_{(a,c)}$ will become clear shortly.
\[
    \diagram{1}{
        \foreach \i in {0,1,...,5}
        {
        \fill (\i*.5,0) circle [radius = .05];
        }
    }
    \longrightarrow 
    \diagram{1}{
        \foreach \i in {0,1,...,5}
        {
        \fill[YaleLightBlue] (\i*.5,0) ellipse [x radius=.2,y radius=0.15];
        \fill[gray] (\i*.5 -.1,0) circle [radius = .05];
        \fill[gray] (\i*.5 +.1,0) circle [radius = .05];
        }
    }
\]
Under such a basis, $\Rep^\dagger(D_8)$ MPOs are correspondingly factorized into two ``projective" MPOs
\[
    (T_s)_{(ab)(cd)} = (L_s)_{ac} (R_s)_{bd},
\]
graphically
\[
    \diagram{1}{
        \draw[gray] (0-.1,0) node[below]{$c$} -- (0,.1) -- (.1,0) node[below]{$d$};
        \draw[gray] (-.1,2) node[above]{$a$} -- (0,1.9) -- (.1,2) node[above]{$b$};
        \draw (0,.1) -- (0,1.9);
        \draw[YaleMidBlue] (1,1) -- (-1,1);
        \filldraw[fill = white] (-.2,.8) rectangle node {$T_s$} (.2,1.2);
    }
    =
    \diagram{1}{
        \draw[gray] (0,0) node[below]{$c$} -- (0,2) node[above]{$a$};
        \draw[gray] (1,0) node[below]{$d$} -- (1,2) node[above]{$b$};
        \draw[YaleMidBlue] (2,1) -- (-1,1);
        \filldraw[fill = white] (-.2,.8) rectangle node{$L_s$} (.2,1.2);
        \filldraw[fill = white] (1-.2,.8) rectangle node{$R_s$} (1.2,1.2);
    }.
\]
More concretely,
invertible symmetry operators can be factorized as
\begin{gather*}
    (L_e)^{11} = (R_e)^{11} =(L_b)^{11} = (R_b)^{11}= \one,\\
    (L_a)^{11} = (R_a)^{11} =(L_c)^{11} = (R_c)^{11}= Z.
\end{gather*}
and non-zero tensor elements for non-invertible symmetry $\sigma$ are
\begin{alignat*}{2}
    &(L_\sigma)_{01} = \frac{1}{\sqrt{2}}(\one - \ii X),\quad &&(R_\sigma)_{0'1'} = \frac{1}{\sqrt{2}}(\one + \ii X),\\
    &(L_\sigma)_{10} = \frac{1}{\sqrt{2}}(-\ii \one + X),\quad && (R_\sigma)_{1'0'}=\frac{1}{\sqrt{2}}(\ii\one + X).
\end{alignat*}

Let
\begin{equation}\label{eq:maximally entangled}
    |\psi\ra = \frac{1}{\sqrt{2}}\Big(|0'0\ra + |1'1\ra\Big) =\frac{1}{\sqrt{2}}\times
    \diagram{.7}{
        \draw[gray,string] (0,1) -- (0,0) -- (1,0) -- (1,1);
    }
    \in W_{(a,c)}^*\ot W_{(a,c)}
\end{equation}
be a maximally entangled state,
the effective Hamiltonian \eqref{eq:effective Hamiltonian 1} can then be simplified to
\[
    H^{(1)}_{\text{eff}} = -\sum_{i}|\psi\ra\la \psi|_{(2i-1,2i)}.
\]
It can be directly checked that each local term in the Hamiltonian is symmetric by
\[
    \diagram{1}{
        \draw[gray,string] (0,1) -- (0,0) -- (1,0) -- (1,1);
        \draw[YaleMidBlue,string] (1.5,.5) -- (-.5,.5);
        \filldraw[fill = white] (-.2,.3) rectangle node{$R_s$} (.2,.7);
        \filldraw[fill = white] (.8,.3) rectangle node{$L_s$} (1.2,.7);
    }
    =
    \diagram{1}{
        \draw[gray,string] (0,1) -- (0,0) -- (1,0) -- (1,1);
        \draw[YaleMidBlue,string] (1.5,-.2) -- (-.5,-.2);
    }.
\]

On a closed chain with $N$-sites, where site $N+1$ and site $1$ are identified, and the Hamiltonian is in the similar form.
The unique ground state on an infinite chain is similar to the standard injective MPS:
\[
    |\Psi\ra = \otimes_{i}|\psi\ra_{(2i-1,2i)} \propto 
    \diagram{.7}{
        \foreach \i in {1,2,...,5}
        {
        \draw[gray,string] (\i+.1,.8) -- (\i+.1,0) -- (\i+.9,0) -- (\i+.9,.8);
        }
        \node at (.5,0) {$\cdots$};
        \node at (6.5,0) {$\cdots$};
    }.
\]
When defined on an open chain with free boundary condition, 
two edge states of which are $2$-fold degenerate. 
\[
    \diagram{1}{
        \foreach \i in {0,1,...,5}
        {
        \fill[YaleLightBlue] (\i*.5,0) ellipse [x radius=.2,y radius=0.15];
        \fill[gray] (\i*.5 -.1,0) circle [radius = .05];
        \fill[gray] (\i*.5 +.1,0) circle [radius = .05];
        }
        \foreach \j in {0,1,...,4}
        {
        \draw (\j*.5+.1,0) -- (\j*.5+.4,0);
        }
    }
\]
We can see the projective algebra on the edge state by computing the compositions of factorized MPOs $L_s$ or $R_s$. As $T_s$ is decomposed into two parts, we need to be careful with the labeling of the \textit{internal} virtual bonds, i.e. the virtual bonds that are pulled out during the factorization process. We denote the internal bonds of the projective MPOs as
\[
    \begin{tikzpicture}[baseline=(current bounding box),scale=.6]
        \draw[gray, string] (0,0) -- (0,.7);
        \draw[gray, string] (0,1.3) -- (0,2);
        \draw[YaleMidBlue,dashed, string] (-.3,1)  -- node[above]{$e$}  (-1,1);
        \draw[YaleMidBlue,dashed, string] (1,1) -- node[above]{$e$} (.3,1) ;
        \filldraw[fill = white] (-.3,.7) rectangle node {$L_e$} (.3,1.3);
      \end{tikzpicture},\ 
    \begin{tikzpicture}[baseline=(current bounding box),scale=.6]
        \draw[gray, string] (0,0) -- (0,.7);
        \draw[gray, string] (0,1.3) -- (0,2);
        \draw[YaleMidBlue,dashed, string] (-.3,1)  -- node[above]{$a$}  (-1,1);
        \draw[YaleMidBlue, dashed,string] (1,1) -- node[above]{$c$} (.3,1) ;
        \filldraw[fill = white] (-.3,.7) rectangle node {$L_a$} (.3,1.3);
    \end{tikzpicture},\ 
    \begin{tikzpicture}[baseline=(current bounding box),scale=.6]
        \draw[gray, string] (0,0) -- (0,.7);
        \draw[gray, string] (0,1.3) -- (0,2);
        \draw[YaleMidBlue,dashed, string] (-.3,1)  -- node[above]{$b$}  (-1,1);
        \draw[YaleMidBlue,dashed, string] (1,1) -- node[above]{$b$} (.3,1) ;
        \filldraw[fill = white] (-.3,.7) rectangle node {$L_b$} (.3,1.3);
    \end{tikzpicture},\ 
    \begin{tikzpicture}[baseline=(current bounding box),scale=.6]
        \draw[gray, string] (0,0) -- (0,.7);
        \draw[gray, string] (0,1.3) -- (0,2);
        \draw[YaleMidBlue,dashed, string] (-.3,1)  -- node[above]{$c$}  (-1,1);
        \draw[YaleMidBlue,dashed, string] (1,1) -- node[above]{$a$} (.3,1) ;
        \filldraw[fill = white] (-.3,.7) rectangle node {$L_c$} (.3,1.3);
    \end{tikzpicture},\  
    \begin{tikzpicture}[baseline=(current bounding box),scale=.6]
        \draw[gray, string] (0,0) -- (0,.7);
        \draw[gray, string] (0,1.3) -- (0,2);
        \draw[YaleMidBlue, string] (-.3,1)  -- node[above]{$\sigma$}  (-1,1);
        \draw[YaleMidBlue, string] (1,1) -- node[above]{$\sigma$} (.3,1) ;
        \filldraw[fill = white] (-.3,.7) rectangle node {$L_\sigma$} (.3,1.3);
    \end{tikzpicture}
\]

The invertible ones satisfy the group multiplication rule. The compositions between the invertible ones and the non-invertible one are
\begin{align*}
    \begin{tikzpicture}[baseline=(current bounding box),scale=.4]
        \draw[gray, string] (0,0) -- (0,.7);
        \draw[gray, string] (0,1.3) -- (0,2);
        \draw[YaleMidBlue, string] (-.3,1)  -- node[above]{$\sigma$} (-2,1);
        \draw[YaleMidBlue, string] (2,1) -- node[above]{$\sigma$} (.3,1);
        \filldraw[fill = white] (-.3,.7) rectangle node[font = \tiny] {$L_\sigma$} (.3,1.3);
        \draw[gray, string] (0,2) -- (0,2.7);
        \draw[gray, string] (0,3.3) -- (0,4);
        \draw[YaleMidBlue,dashed, string] (-.3,3)  -- node[above]{$a$} (-2,3);
        \draw[YaleMidBlue,dashed, string] (2,3) -- node[above]{$c$} (.3,3);
        \filldraw[fill = white] (-.3,2.7) rectangle node[font = \tiny] {$L_a$} (.3,3.3);
  \end{tikzpicture}
  &=
    \begin{tikzpicture}[baseline=(current bounding box),scale=1.5]
        \draw[gray, string] (0,1) -- (0,1.7);
        \draw[gray, string] (0,2.3) -- (0,3);
        \draw[YaleMidBlue,string] (1,2) -- node[above]{$\sigma$} (0,2);
        \draw[YaleMidBlue,string] (0,2) --node[above]{$\sigma$} (-1,2);
        \draw[YaleMidBlue,dashed,string] (-1,2) -- node[above]{$a$} (-2,3);
        \draw[YaleMidBlue,string] (-1,2) -- node[below]{$\sigma$} (-2,1);
        \draw[YaleMidBlue,dashed,string] (2,3) -- node[above]{$c$} (1,2);
        \draw[YaleMidBlue,string] (2,1) -- node[below]{$\sigma$} (1,2);
        \filldraw[fill = white] (-.3,1.7) rectangle node {$L_\sigma$} (.3,2.3);
        \filldraw[fill = white] (-1.2,1.8) rectangle node {\small $Z$} (-.8,2.2);
        \filldraw[fill = white] (.8,1.8) rectangle node {\small $Y$} (1.2,2.2);
    \end{tikzpicture},\\
    \begin{tikzpicture}[baseline=(current bounding box),scale=.4]
        \draw[gray, string] (0,0) -- (0,.7);
        \draw[gray, string] (0,1.3) -- (0,2);
        \draw[YaleMidBlue, string] (-.3,1)  -- node[above]{$\sigma$} (-2,1);
        \draw[YaleMidBlue, string] (2,1) -- node[above]{$\sigma$} (.3,1) ;
        \filldraw[fill = white] (-.3,.7) rectangle node[font = \tiny] {$L_\sigma$} (.3,1.3);
        \draw[gray, string] (0,2) -- (0,2.7);
        \draw[gray, string] (0,3.3) -- (0,4);
        \draw[YaleMidBlue,dashed, string] (-.3,3)  -- node[above]{$b$} (-2,3);
        \draw[YaleMidBlue,dashed, string] (2,3) -- node[above]{$b$} (.3,3);
        \filldraw[fill = white] (-.3,2.7) rectangle node[font = \tiny] {$L_b$} (.3,3.3);
    \end{tikzpicture}
  &=
    \begin{tikzpicture}[baseline=(current bounding box),scale=1.5]
        \draw[gray, string] (0,1) -- (0,1.7);
        \draw[gray, string] (0,2.3) -- (0,3);
        \draw[YaleMidBlue,string] (1,2) -- node[above]{$\sigma$} (0,2);
        \draw[YaleMidBlue,string] (0,2) --node[above]{$\sigma$} (-1,2);
        \draw[YaleMidBlue,dashed,string] (-1,2) -- node[above]{$b$} (-2,3);
        \draw[YaleMidBlue,string] (-1,2) -- node[below]{$\sigma$} (-2,1);
        \draw[YaleMidBlue,dashed,string] (2,3) -- node[above]{$b$} (1,2);
        \draw[YaleMidBlue,string] (2,1) -- node[below]{$\sigma$} (1,2);
        \filldraw[fill = white] (-.3,1.7) rectangle node {$L_\sigma$} (.3,2.3);
        \filldraw[fill = white] (-1.2,1.8) rectangle node {\small $X$} (-.8,2.2);
        \filldraw[fill = white] (.8,1.8) rectangle node {\small $X$} (1.2,2.2);
    \end{tikzpicture},\\
    \begin{tikzpicture}[baseline=(current bounding box),scale=.4]
        \draw[gray, string] (0,0) -- (0,.7);
        \draw[gray, string] (0,1.3) -- (0,2);
        \draw[YaleMidBlue, dashed,string] (-.3,1)  -- node[above]{$a$} (-2,1) ;
        \draw[YaleMidBlue,dashed, string] (2,1) -- node[above]{$c$} (.3,1) ;
        \filldraw[fill = white] (-.3,.7) rectangle node[font = \tiny] {$L_a$} (.3,1.3);
        \draw[gray, string] (0,2) -- (0,2.7);
        \draw[gray, string] (0,3.3) -- (0,4);
        \draw[YaleMidBlue, string] (-.3,3)  -- node[above]{$\sigma$} (-2,3);
        \draw[YaleMidBlue, string] (2,3) -- node[above]{$\sigma$} (.3,3) ;
        \filldraw[fill = white] (-.3,2.7) rectangle node[font = \tiny] {$L_\sigma$} (.3,3.3);
    \end{tikzpicture}
  &=
    \begin{tikzpicture}[baseline=(current bounding box),scale=1.5]
        \draw[gray, string] (0,1) -- (0,1.7);
        \draw[gray, string] (0,2.3) -- (0,3);
        \draw[YaleMidBlue,string] (1,2) -- node[above]{$\sigma$} (0,2);
        \draw[YaleMidBlue,string] (0,2) --node[above]{$\sigma$} (-1,2);
        \draw[YaleMidBlue,string] (-1,2) -- node[above]{$\sigma$} (-2,3);
        \draw[YaleMidBlue,dashed,string] (-1,2) -- node[below]{$a$} (-2,1);
        \draw[YaleMidBlue,string] (2,3) -- node[above]{$\sigma$} (1,2);
        \draw[YaleMidBlue,dashed,string] (2,1) -- node[below]{$c$} (1,2);
        \filldraw[fill = white] (-.3,1.7) rectangle node {$L_\sigma$} (.3,2.3);
        \filldraw[fill = white] (-1.2,1.8) rectangle node {\small $Z$} (-.8,2.2);
        \filldraw[fill = white] (.8,1.8) rectangle node {\small $-Y$} (1.2,2.2);
    \end{tikzpicture},\\
    \begin{tikzpicture}[baseline=(current bounding box),scale=.4]
        \draw[gray, string] (0,0) -- (0,.7);
        \draw[gray, string] (0,1.3) -- (0,2);
        \draw[YaleMidBlue, dashed,string] (-.3,1)  -- node[above]{$b$} (-2,1);
        \draw[YaleMidBlue, dashed,string] (2,1) -- node[above]{$b$} (.3,1) ;
        \filldraw[fill = white] (-.3,.7) rectangle node[font = \tiny] {$L_b$} (.3,1.3);
        \draw[gray, string] (0,2) -- (0,2.7);
        \draw[gray, string] (0,3.3) -- (0,4);
        \draw[YaleMidBlue, string] (-.3,3)  -- node[above]{$\sigma$} (-2,3);
        \draw[YaleMidBlue, string] (2,3) -- node[above]{$\sigma$} (.3,3);
        \filldraw[fill = white] (-.3,2.7) rectangle node[font = \tiny] {$L_\sigma$} (.3,3.3);
    \end{tikzpicture}
  &=
    \begin{tikzpicture}[baseline=(current bounding box),scale=1.5]
        \draw[gray, string] (0,1) -- (0,1.7);
        \draw[gray, string] (0,2.3) -- (0,3);
        \draw[YaleMidBlue,string] (1,2) -- node[above]{$\sigma$} (0,2);
        \draw[YaleMidBlue,string] (0,2) --node[above]{$\sigma$} (-1,2);
        \draw[YaleMidBlue,string] (-1,2) -- node[above]{$\sigma$} (-2,3);
        \draw[YaleMidBlue,dashed,string] (-1,2) -- node[below]{$b$} (-2,1);
        \draw[YaleMidBlue,string] (2,3) -- node[above]{$\sigma$} (1,2);
        \draw[YaleMidBlue,dashed,string] (2,1) -- node[below]{$b$} (1,2);
        \filldraw[fill = white] (-.3,1.7) rectangle node {$L_\sigma$} (.3,2.3);
        \filldraw[fill = white] (-1.2,1.8) rectangle node {\small $X$} (-.8,2.2);
        \filldraw[fill = white] (.8,1.8) rectangle node {\small $X$} (1.2,2.2);
    \end{tikzpicture}.
    \end{align*}
And the non-invertible tensor composes with itself as
\begin{align*}
    \begin{tikzpicture}[baseline=(current bounding box),scale=.5]
        \draw[gray, string] (0,0) -- (0,.8);
        \draw[gray, string] (0,1.2) -- (0,2);
        \draw[YaleMidBlue, string] (-.3,1)  -- node[above]{$\sigma$} (-2,1);
        \draw[YaleMidBlue, string] (2,1) -- node[above]{$\sigma$} (.3,1);
        \filldraw[fill = white] (-.3,.7) rectangle node[font = \tiny] {$L_\sigma$} (.3,1.3);
        \draw[gray, string] (0,2) -- (0,2.7);
        \draw[gray, string] (0,3.3) -- (0,4);
        \draw[YaleMidBlue, string] (-.3,3)  -- node[above]{$\sigma$} (-2,3);
        \draw[YaleMidBlue, string] (2,3) -- node[above]{$\sigma$} (.3,3);
        \filldraw[fill = white] (-.3,2.7) rectangle node[font = \tiny] {$L_a$} (.3,3.3);
  \end{tikzpicture}
  &=
    \begin{tikzpicture}[baseline=(current bounding box),scale=1.5]
        \draw[gray, string] (0,1) -- (0,1.8);
        \draw[gray, string] (0,2.2) -- (0,3);
        \draw[YaleMidBlue,dashed,string] (1,2) -- node[above]{$e$} (0,2);
        \draw[YaleMidBlue,dashed,string] (0,2) --node[above]{$e$} (-1,2);
        \draw[YaleMidBlue,string] (-1,2) -- node[above]{$\sigma$} (-2,3);
        \draw[YaleMidBlue,string] (-1,2) -- node[below]{$\sigma$} (-2,1);
        \draw[YaleMidBlue,string] (2,3) -- node[above]{$\sigma$} (1,2);
        \draw[YaleMidBlue,string] (2,1) -- node[below]{$\sigma$} (1,2);
        \filldraw[fill = white] (-.2,1.8) rectangle node {$L_e$} (.2,2.2);
        \filldraw[fill = white] (-1.2,1.8) rectangle node {$\psi_e$} (-.8,2.2);
        \node[draw = black,fill = white] at (1,2){$-\ii\psi_e^\dagger$};
    \end{tikzpicture}\\
      &+
    \begin{tikzpicture}[baseline=(current bounding box),scale=1.5]
        \draw[gray, string] (0,1) -- (0,1.8);
        \draw[gray, string] (0,2.2) -- (0,3);
        \draw[YaleMidBlue,dashed,string] (1,2) -- node[above]{$c$} (0,2);
        \draw[YaleMidBlue,dashed,string] (0,2) --node[above]{$a$} (-1,2);
        \draw[YaleMidBlue,string] (-1,2) -- node[above]{$\sigma$} (-2,3);
        \draw[YaleMidBlue,string] (-1,2) -- node[below]{$\sigma$} (-2,1);
        \draw[YaleMidBlue,string] (2,3) -- node[above]{$\sigma$} (1,2);
        \draw[YaleMidBlue,string] (2,1) -- node[below]{$\sigma$} (1,2);
        \filldraw[fill = white] (-.2,1.8) rectangle node {$L_a$} (.2,2.2);
        \filldraw[fill = white] (-1.2,1.8) rectangle node {$\psi_a$} (-.8,2.2);
        \node[draw = black,fill = white] at (1,2) {$-\psi_c^\dagger$};
    \end{tikzpicture}\\
      &+
    \begin{tikzpicture}[baseline=(current bounding box),scale=1.5]
        \draw[gray, string] (0,1) -- (0,1.8);
        \draw[gray, string] (0,2.2) -- (0,3);
        \draw[YaleMidBlue,dashed,string] (1,2) -- node[above]{$b$} (0,2);
        \draw[YaleMidBlue,dashed,string] (0,2) --node[above]{$b$} (-1,2);
        \draw[YaleMidBlue,string] (-1,2) -- node[above]{$\sigma$} (-2,3);
        \draw[YaleMidBlue,string] (-1,2) -- node[below]{$\sigma$} (-2,1);
        \draw[YaleMidBlue,string] (2,3) -- node[above]{$\sigma$} (1,2);
        \draw[YaleMidBlue,string] (2,1) -- node[below]{$\sigma$} (1,2);
        \filldraw[fill = white] (-.2,1.8) rectangle node {$L_b$} (.2,2.2);
        \filldraw[fill = white] (-1.2,1.8) rectangle node {$\psi_b$} (-.8,2.2);
        \node[draw = black,fill = white] at (1,2) {$-\ii\psi_b^\dagger$} (1.2,2.2);
    \end{tikzpicture}\\
      &+
    \begin{tikzpicture}[baseline=(current bounding box),scale=1.5]
        \draw[gray, string] (0,1) -- (0,1.8);
        \draw[gray, string] (0,2.2) -- (0,3);
        \draw[YaleMidBlue,dashed,string] (1,2) -- node[above]{$a$} (0,2);
        \draw[YaleMidBlue,dashed,string] (0,2) --node[above]{$c$} (-1,2);
        \draw[YaleMidBlue,string] (-1,2) -- node[above]{$\sigma$} (-2,3);
        \draw[YaleMidBlue,string] (-1,2) -- node[below]{$\sigma$} (-2,1);
        \draw[YaleMidBlue,string] (2,3) -- node[above]{$\sigma$} (1,2);
        \draw[YaleMidBlue,string] (2,1) -- node[below]{$\sigma$} (1,2);
        \filldraw[fill = white] (-.2,1.8) rectangle node {$L_c$} (.2,2.2);
        \filldraw[fill = white] (-1.2,1.8) rectangle node {$\psi_c$} (-.8,2.2);
        \node[draw = black,fill = white] at (1,2) {$-\psi_a^\dagger$};
    \end{tikzpicture}.
\end{align*}
We can see that when MPO passes through the projective charge, types of the virtual bond $a$ and $c$ are swapped, which we denote by $(a,c)$.

\subsubsection{Phase 2: \texorpdfstring{$W_{(a,b)} \ot W^*_{(a,b)}$}{Wab x Wab*}}
When $(\lambda_0,\lambda_1,\lambda_2) = (0,0,1)$, the Hamiltonian is effectively defined on the low energy effective Hilbert space
\[
    \cH^{(2)}_{\text{eff}}:=\cdots \otimes A_2\ot A_2\ot \cdots \ot A_2\otimes \cdots,
\]
and the effective Hamiltonian is
\begin{equation}\label{eq:effective Hamiltonian 2}
    H^{(2)}_{\text{eff}} = -\sum_{i}\left(m_{2}^{(i,i+1)}\right)^\dagger m_2^{(i,i+1)}.
\end{equation}
Similarly, we separate the local Hilbert space into projective charge spaces
\begin{alignat}{2}\label{eq:phase2 fractionalization}
    &|e\ra = \frac{1}{\sqrt{2}}\left(|00'\ra + |11'\ra\right), \quad &&|r^2\ra = \frac{1}{\sqrt{2}}\left(|00'\ra - |11'\ra\right),\\
    &|sr\ra =\frac{1}{\sqrt{2}}\left( |01'\ra + |10'\ra\right),\quad && |sr^3\ra = \frac{1}{\sqrt{2}}\left(|01'\ra - |10'\ra\right),
\end{alignat}
such that $\C[\la r^2,sr\ra] =W_{(a,b)} \otimes W_{(a,b)}^*$ with $W_{(a,b)} = \spann(|0\ra,\,|1\ra )$ and $W_{(a,b)}^* = \spann(|0'\ra,\,|1'\ra )$, where $\{|0'\ra,\,|1'\ra \}$ is the dual basis of $\{|0'\ra,\,|1'\ra \}$.
\[
    \diagram{1}{
        \foreach \i in {0,1,...,5}
        {
            \fill (\i*.5,0) circle [radius = .05];
        }
    }
    \longrightarrow 
    \diagram{1}{
        \foreach \i in {0,1,...,5}
        {
            \fill[YaleLightBlue] (\i*.5,0) ellipse [x radius=.2,y radius=0.15];
            \fill[gray] (\i*.5 -.1,0) circle [radius = .05];
            \fill[gray] (\i*.5 +.1,0) circle [radius = .05];
        }
    }
\]
More concretely,
invertible symmetry operators can be factorized as
\begin{gather*}
    (L_e)^{11} = (R_e)^{11} =(L_c)^{11} = (R_c)^{11}= \one,\\
    (L_a)^{11} = (R_a)^{11} =(L_b)^{11} = (R_b)^{11}= Z.
\end{gather*}
and non-zero tensor elements for non-invertible symmetry $\sigma$ are
\begin{alignat*}{2}
    &(L_\sigma)_{01} = \frac{1}{\sqrt{2}}(\one + \ii Y),\quad &&(R_\sigma)_{0'1'} = \frac{1}{\sqrt{2}}(\one - \ii Y),\\
    &(L_\sigma)_{10} = \frac{1}{\sqrt{2}}(\ii \one  + Y),\quad && (R_\sigma)_{1'0'}=\frac{1}{\sqrt{2}}(-\ii\one + Y).
\end{alignat*}

Similarly, we denote the \textit{internal} virtual bonds of local tensors on projective charges as
\[
    \begin{tikzpicture}[baseline=(current bounding box),scale=.6]
        \draw[gray, string] (0,0) -- (0,.7);
        \draw[gray, string] (0,1.3) -- (0,2);
        \draw[YaleMidBlue,dashed, string] (-.3,1)  -- node[above]{$e$}  (-1,1);
        \draw[YaleMidBlue,dashed, string] (1,1) -- node[above]{$e$} (.3,1) ;
        \filldraw[fill = white] (-.3,.7) rectangle node {$L_e$} (.3,1.3);
    \end{tikzpicture},\ 
    \begin{tikzpicture}[baseline=(current bounding box),scale=.6]
        \draw[gray, string] (0,0) -- (0,.7);
        \draw[gray, string] (0,1.3) -- (0,2);
        \draw[YaleMidBlue,dashed, string] (-.3,1)  -- node[above]{$a$}  (-1,1);
        \draw[YaleMidBlue, dashed,string] (1,1) -- node[above]{$b$} (.3,1) ;
        \filldraw[fill = white] (-.3,.7) rectangle node {$L_a$} (.3,1.3);
    \end{tikzpicture},\ 
    \begin{tikzpicture}[baseline=(current bounding box),scale=.6]
        \draw[gray, string] (0,0) -- (0,.7);
        \draw[gray, string] (0,1.3) -- (0,2);
        \draw[YaleMidBlue,dashed, string] (-.3,1)  -- node[above]{$b$}  (-1,1);
        \draw[YaleMidBlue,dashed, string] (1,1) -- node[above]{$a$} (.3,1) ;
        \filldraw[fill = white] (-.3,.7) rectangle node {$L_b$} (.3,1.3);
    \end{tikzpicture},\ 
    \begin{tikzpicture}[baseline=(current bounding box),scale=.6]
        \draw[gray, string] (0,0) -- (0,.7);
        \draw[gray, string] (0,1.3) -- (0,2);
        \draw[YaleMidBlue,dashed, string] (-.3,1)  -- node[above]{$c$}  (-1,1);
        \draw[YaleMidBlue,dashed, string] (1,1) -- node[above]{$c$} (.3,1) ;
        \filldraw[fill = white] (-.3,.7) rectangle node {$L_c$} (.3,1.3);
    \end{tikzpicture},\  
    \begin{tikzpicture}[baseline=(current bounding box),scale=.6]
        \draw[gray, string] (0,0) -- (0,.7);
        \draw[gray, string] (0,1.3) -- (0,2);
        \draw[YaleMidBlue, string] (-.3,1)  -- node[above]{$\sigma$}  (-1,1);
        \draw[YaleMidBlue, string] (1,1) -- node[above]{$\sigma$} (.3,1) ;
        \filldraw[fill = white] (-.3,.7) rectangle node {$L_\sigma$} (.3,1.3);
    \end{tikzpicture}
\]
for later consideration.

We compute the composition rules for $\Rep^\dagger(D_8)$-MPOs on the projective charges. The compositions of invertible ones satisfy the group multiplication rule while those between the invertible ones and the non-invertible one are
\begin{align*}
    \begin{tikzpicture}[baseline=(current bounding box),scale=.4]
        \draw[gray, string] (0,0) -- (0,.7);
        \draw[gray, string] (0,1.3) -- (0,2);
        \draw[YaleMidBlue, string] (-.3,1)  -- node[above]{$\sigma$} (-2,1);
        \draw[YaleMidBlue, string] (2,1) -- node[above]{$\sigma$} (.3,1);
        \filldraw[fill = white] (-.3,.7) rectangle node[font = \tiny] {$L_\sigma$} (.3,1.3);
        \draw[gray, string] (0,2) -- (0,2.7);
        \draw[gray, string] (0,3.3) -- (0,4);
        \draw[YaleMidBlue,dashed, string] (-.3,3)  -- node[above]{$a$} (-2,3);
        \draw[YaleMidBlue,dashed, string] (2,3) -- node[above]{$b$} (.3,3);
        \filldraw[fill = white] (-.3,2.7) rectangle node[font = \tiny] {$L_a$} (.3,3.3);
    \end{tikzpicture}
  &=
    \begin{tikzpicture}[baseline=(current bounding box),scale=1.5]
        \draw[gray, string] (0,1) -- (0,1.7);
        \draw[gray, string] (0,2.3) -- (0,3);
        \draw[YaleMidBlue,string] (1,2) -- node[above]{$\sigma$} (0,2);
        \draw[YaleMidBlue,string] (0,2) --node[above]{$\sigma$} (-1,2);
        \draw[YaleMidBlue,dashed,string] (-1,2) -- node[above]{$a$} (-2,3);
        \draw[YaleMidBlue,string] (-1,2) -- node[below]{$\sigma$} (-2,1);
        \draw[YaleMidBlue,dashed,string] (2,3) -- node[above]{$b$} (1,2);
        \draw[YaleMidBlue,string] (2,1) -- node[below]{$\sigma$} (1,2);
        \filldraw[fill = white] (-.3,1.7) rectangle node {$L_\sigma$} (.3,2.3);
        \filldraw[fill = white] (-1.2,1.8) rectangle node {$Z$} (-.8,2.2);
        \filldraw[fill = white] (.8,1.8) rectangle node {$X$} (1.2,2.2);
    \end{tikzpicture},\\
    \begin{tikzpicture}[baseline=(current bounding box),scale=.4]
        \draw[gray, string] (0,0) -- (0,.7);
        \draw[gray, string] (0,1.3) -- (0,2);
        \draw[YaleMidBlue, string] (-.3,1)  -- node[above]{$\sigma$} (-2,1);
        \draw[YaleMidBlue, string] (2,1) -- node[above]{$\sigma$} (.3,1) ;
        \filldraw[fill = white] (-.3,.7) rectangle node[font = \tiny] {$L_\sigma$} (.3,1.3);
        \draw[gray, string] (0,2) -- (0,2.7);
            \draw[gray, string] (0,3.3) -- (0,4);
        \draw[YaleMidBlue,dashed, string] (-.3,3)  -- node[above]{$b$} (-2,3);
        \draw[YaleMidBlue,dashed, string] (2,3) -- node[above]{$a$} (.3,3);
        \filldraw[fill = white] (-.3,2.7) rectangle node[font = \tiny] {$L_b$} (.3,3.3);
    \end{tikzpicture}
  &=
    \begin{tikzpicture}[baseline=(current bounding box),scale=1.5]
        \draw[gray, string] (0,1) -- (0,1.7);
        \draw[gray, string] (0,2.3) -- (0,3);
        \draw[YaleMidBlue,string] (1,2) -- node[above]{$\sigma$} (0,2);
        \draw[YaleMidBlue,string] (0,2) --node[above]{$\sigma$} (-1,2);
        \draw[YaleMidBlue,dashed,string] (-1,2) -- node[above]{$b$} (-2,3);
        \draw[YaleMidBlue,string] (-1,2) -- node[below]{$\sigma$} (-2,1);
        \draw[YaleMidBlue,dashed,string] (2,3) -- node[above]{$a$} (1,2);
        \draw[YaleMidBlue,string] (2,1) -- node[below]{$\sigma$} (1,2);
        \filldraw[fill = white] (-.3,1.7) rectangle node {$L_\sigma$} (.3,2.3);
        \filldraw[fill = white] (-1.2,1.8) rectangle node {$X$} (-.8,2.2);
        \filldraw[fill = white] (.8,1.8) rectangle node {$-Z$} (1.2,2.2);
    \end{tikzpicture},\\
    \begin{tikzpicture}[baseline=(current bounding box),scale=.4]
        \draw[gray, string] (0,0) -- (0,.7);
        \draw[gray, string] (0,1.3) -- (0,2);
        \draw[YaleMidBlue, dashed,string] (-.3,1)  -- node[above]{$a$} (-2,1) ;
        \draw[YaleMidBlue,dashed, string] (2,1) -- node[above]{$b$} (.3,1) ;
        \filldraw[fill = white] (-.3,.7) rectangle node[font = \tiny] {$L_a$} (.3,1.3);
        \draw[gray, string] (0,2) -- (0,2.7);
        \draw[gray, string] (0,3.3) -- (0,4);
        \draw[YaleMidBlue, string] (-.3,3)  -- node[above]{$\sigma$} (-2,3);
        \draw[YaleMidBlue, string] (2,3) -- node[above]{$\sigma$} (.3,3) ;
        \filldraw[fill = white] (-.3,2.7) rectangle node[font = \tiny] {$L_\sigma$} (.3,3.3);
    \end{tikzpicture}
  &=
    \begin{tikzpicture}[baseline=(current bounding box),scale=1.5]
        \draw[gray, string] (0,1) -- (0,1.7);
        \draw[gray, string] (0,2.3) -- (0,3);
        \draw[YaleMidBlue,string] (1,2) -- node[above]{$\sigma$} (0,2);
        \draw[YaleMidBlue,string] (0,2) --node[above]{$\sigma$} (-1,2);
        \draw[YaleMidBlue,string] (-1,2) -- node[above]{$\sigma$} (-2,3);
        \draw[YaleMidBlue,dashed,string] (-1,2) -- node[below]{$a$} (-2,1);
        \draw[YaleMidBlue,string] (2,3) -- node[above]{$\sigma$} (1,2);
        \draw[YaleMidBlue,dashed,string] (2,1) -- node[below]{$b$} (1,2);
        \filldraw[fill = white] (-.3,1.7) rectangle node {$L_\sigma$} (.3,2.3);
        \filldraw[fill = white] (-1.2,1.8) rectangle node {$Z$} (-.8,2.2);
        \filldraw[fill = white] (.8,1.8) rectangle node {$-X$} (1.2,2.2);
    \end{tikzpicture},\\
    \begin{tikzpicture}[baseline=(current bounding box),scale=.4]
        \draw[gray, string] (0,0) -- (0,.7);
        \draw[gray, string] (0,1.3) -- (0,2);
        \draw[YaleMidBlue, dashed,string] (-.3,1)  -- node[above]{$b$} (-2,1);
        \draw[YaleMidBlue, dashed,string] (2,1) -- node[above]{$a$} (.3,1) ;
        \filldraw[fill = white] (-.3,.7) rectangle node[font = \tiny] {$L_b$} (.3,1.3);
        \draw[gray, string] (0,2) -- (0,2.7);
        \draw[gray, string] (0,3.3) -- (0,4);
        \draw[YaleMidBlue, string] (-.3,3)  -- node[above]{$\sigma$} (-2,3);
        \draw[YaleMidBlue, string] (2,3) -- node[above]{$\sigma$} (.3,3);
        \filldraw[fill = white] (-.3,2.7) rectangle node[font = \tiny] {$L_\sigma$} (.3,3.3);
    \end{tikzpicture}
  &=
    \begin{tikzpicture}[baseline=(current bounding box),scale=1.5]
        \draw[gray, string] (0,1) -- (0,1.7);
        \draw[gray, string] (0,2.3) -- (0,3);
        \draw[YaleMidBlue,string] (1,2) -- node[above]{$\sigma$} (0,2);
        \draw[YaleMidBlue,string] (0,2) --node[above]{$\sigma$} (-1,2);
        \draw[YaleMidBlue,string] (-1,2) -- node[above]{$\sigma$} (-2,3);
        \draw[YaleMidBlue,dashed,string] (-1,2) -- node[below]{$b$} (-2,1);
        \draw[YaleMidBlue,string] (2,3) -- node[above]{$\sigma$} (1,2);
        \draw[YaleMidBlue,dashed,string] (2,1) -- node[below]{$a$} (1,2);
        \filldraw[fill = white] (-.3,1.7) rectangle node {$L_\sigma$} (.3,2.3);
        \filldraw[fill = white] (-1.2,1.8) rectangle node {$X$} (-.8,2.2);
        \filldraw[fill = white] (.8,1.8) rectangle node {$Z$} (1.2,2.2);
    \end{tikzpicture}.
\end{align*}
    
And the non-invertible tensor composes with itself as
\begin{align*}
    \begin{tikzpicture}[baseline=(current bounding box),scale=.5]
        \draw[gray, string] (0,0) -- (0,.8);
        \draw[gray, string] (0,1.2) -- (0,2);
        \draw[YaleMidBlue, string] (-.3,1)  -- node[above]{$\sigma$} (-2,1);
        \draw[YaleMidBlue, string] (2,1) -- node[above]{$\sigma$} (.3,1);
        \filldraw[fill = white] (-.3,.7) rectangle node[font = \tiny] {$L_\sigma$} (.3,1.3);
        \draw[gray, string] (0,2) -- (0,2.7);
        \draw[gray, string] (0,3.3) -- (0,4);
        \draw[YaleMidBlue, string] (-.3,3)  -- node[above]{$\sigma$} (-2,3);
        \draw[YaleMidBlue, string] (2,3) -- node[above]{$\sigma$} (.3,3);
        \filldraw[fill = white] (-.3,2.7) rectangle node[font = \tiny] {$L_a$} (.3,3.3);
    \end{tikzpicture}
  &=
    \begin{tikzpicture}[baseline=(current bounding box),scale=1.5]
        \draw[gray, string] (0,1) -- (0,1.8);
        \draw[gray, string] (0,2.2) -- (0,3);
        \draw[YaleMidBlue,dashed,string] (1,2) -- node[above]{$e$} (0,2);
        \draw[YaleMidBlue,dashed,string] (0,2) --node[above]{$e$} (-1,2);
        \draw[YaleMidBlue,string] (-1,2) -- node[above]{$\sigma$} (-2,3);
        \draw[YaleMidBlue,string] (-1,2) -- node[below]{$\sigma$} (-2,1);
        \draw[YaleMidBlue,string] (2,3) -- node[above]{$\sigma$} (1,2);
        \draw[YaleMidBlue,string] (2,1) -- node[below]{$\sigma$} (1,2);
        \filldraw[fill = white] (-.2,1.8) rectangle node {$L_e$} (.2,2.2);
        \filldraw[fill = white] (-1.2,1.8) rectangle node {$\psi_e$} (-.8,2.2);
        \filldraw[fill = white] (.8,1.8) rectangle node {$\ii\psi_e^\dagger$} (1.2,2.2);
    \end{tikzpicture}\\
      &+
    \begin{tikzpicture}[baseline=(current bounding box),scale=1.5]
        \draw[gray, string] (0,1) -- (0,1.8);
        \draw[gray, string] (0,2.2) -- (0,3);
        \draw[YaleMidBlue,dashed,string] (1,2) -- node[above]{$b$} (0,2);
        \draw[YaleMidBlue,dashed,string] (0,2) --node[above]{$a$} (-1,2);
        \draw[YaleMidBlue,string] (-1,2) -- node[above]{$\sigma$} (-2,3);
        \draw[YaleMidBlue,string] (-1,2) -- node[below]{$\sigma$} (-2,1);
        \draw[YaleMidBlue,string] (2,3) -- node[above]{$\sigma$} (1,2);
        \draw[YaleMidBlue,string] (2,1) -- node[below]{$\sigma$} (1,2);
        \filldraw[fill = white] (-.2,1.8) rectangle node {$L_a$} (.2,2.2);
        \filldraw[fill = white] (-1.2,1.8) rectangle node {$\psi_a$} (-.8,2.2);
        \node[draw = black,fill = white] at (1,2) {$-\ii\psi_b^\dagger$} (1.2,2.2);
    \end{tikzpicture}\\
      &+
    \begin{tikzpicture}[baseline=(current bounding box),scale=1.5]
        \draw[gray, string] (0,1) -- (0,1.8);
        \draw[gray, string] (0,2.2) -- (0,3);
        \draw[YaleMidBlue,dashed,string] (1,2) -- node[above]{$a$} (0,2);
        \draw[YaleMidBlue,dashed,string] (0,2) --node[above]{$b$} (-1,2);
        \draw[YaleMidBlue,string] (-1,2) -- node[above]{$\sigma$} (-2,3);
        \draw[YaleMidBlue,string] (-1,2) -- node[below]{$\sigma$} (-2,1);
        \draw[YaleMidBlue,string] (2,3) -- node[above]{$\sigma$} (1,2);
        \draw[YaleMidBlue,string] (2,1) -- node[below]{$\sigma$} (1,2);
        \filldraw[fill = white] (-.2,1.8) rectangle node {$L_b$} (.2,2.2);
        \filldraw[fill = white] (-1.2,1.8) rectangle node {$\psi_b$} (-.8,2.2);
        \filldraw[fill = white] (.8,1.8) rectangle node {$\ii\psi_a^\dagger$} (1.2,2.2);
    \end{tikzpicture}\\
      &+
    \begin{tikzpicture}[baseline=(current bounding box),scale=1.5]
        \draw[gray, string] (0,1) -- (0,1.8);
        \draw[gray, string] (0,2.2) -- (0,3);
        \draw[YaleMidBlue,dashed,string] (1,2) -- node[above]{$c$} (0,2);
        \draw[YaleMidBlue,dashed,string] (0,2) --node[above]{$c$} (-1,2);
        \draw[YaleMidBlue,string] (-1,2) -- node[above]{$\sigma$} (-2,3);
        \draw[YaleMidBlue,string] (-1,2) -- node[below]{$\sigma$} (-2,1);
        \draw[YaleMidBlue,string] (2,3) -- node[above]{$\sigma$} (1,2);
        \draw[YaleMidBlue,string] (2,1) -- node[below]{$\sigma$} (1,2);
        \filldraw[fill = white] (-.2,1.8) rectangle node {$L_c$} (.2,2.2);
        \filldraw[fill = white] (-1.2,1.8) rectangle node {$\psi_c$} (-.8,2.2);
        \filldraw[fill = white] (.8,1.8) rectangle node {$\ii\psi_c^\dagger$} (1.2,2.2);
    \end{tikzpicture}.
\end{align*}
We can see that the projective charge effectively swaps $a$ and $b$ MPOs, which we denote by $(a,b)$.

Similarly, the effective Hamiltonian can be simplified to
\[
    H^{(2)}_{\text{eff}} = -\sum_{i}|\psi\ra\la \psi|_{(2i-1,2i)}.
\]
Again, one can check that each local term in the Hamiltonian is symmetric by
\[
    \diagram{1}{
        \draw[gray,string] (0,1) -- (0,0) -- (1,0) -- (1,1);
        \draw[YaleMidBlue,string] (1.5,.5) -- (-.5,.5);
        \filldraw[fill = white] (-.2,.3) rectangle node{$R_s$} (.2,.7);
        \filldraw[fill = white] (.8,.3) rectangle node{$L_s$} (1.2,.7);
    }
    =
    \diagram{1}{
        \draw[gray,string] (0,1) -- (0,0) -- (1,0) -- (1,1);
        \draw[YaleMidBlue,string] (1.5,-.2) -- (-.5,-.2);
    }.
\]
The unique ground state on an infinite chain is
\[
    |\Psi\ra = \otimes_{i}|\psi\ra_{(2i-1,2i)} \propto 
    \diagram{.7}{
        \foreach \i in {1,2,...,5}
        {
            \draw[gray,string] (\i+.1,.8) -- (\i+.1,0) -- (\i+.9,0) -- (\i+.9,.8);
        }
        \node at (.5,0) {$\cdots$};
        \node at (6.5,0) {$\cdots$};
    }.
\]
Similarly, when defined on an open chain with free boundary conditions, the edge state exhibits a 2-fold degeneracy.

\subsection{Other projective charges}
We observe that $W_{(a,c)}$ and $W_{(a,b)}$ are two projective charges that realize the permutations $(a, c)$ and $(a, b)$ respectively. Since any $S_3$-permutation of $\{a, b, c\}$ is generated by $(a, c)$ and $(a, b)$, we can tensor the projective charges $W_{(a,c)}$ and $W_{(a,b)}$ and then reduce it into irreducible component such that all possible $S_3$-permutations of $\{a, b, c\}$ can be realized via those projective charges.

In our construction, the projective charges are the input for the commuting projector Hamiltonian of the SPT phase. As we will demonstrate, these newly constructed projective charges do not give rise to new $\Rep^\dagger(D_8)$-SPT phases but must realize one of the three previously identified phases.
The subgroup $K:= \langle (b,c) \rangle \subset S_3$ realizes the trivial phase, and the left cosets in the quotient $S_3 / K = \{K, \;(a,c)K,\; (a,b)K\}$ represent three distinct phases. We label the trivial phase as phase 0  $:= K$, two non-trivial phases as phase 1 $:= (a,c)K$, and phase 2 $:= (a,b)K$ in the following.

\subsubsection{\texorpdfstring{$(c,b,a)\sim (a,c)$}{(c, b, a) ~ (a, c)} as phase}
The tensor product space $W_{(a,b)}\ot W_{(a,c)}$ is equipped with the  projective $\Rep^\dagger(D_8)$-actions
\[
    \diagram{2}{
        \draw[gray,string] (0,0) -- node [left] {$W_{(a,b)}$}(0,.8);
        \draw[gray,string] (0,1.2) -- node[left]{$W_{(a,b)}$} (0,2);
        \draw[gray,string] (1.5,0) --  node [left] {$W_{(a,c)}$}(1.5,.8);
        \draw[gray,string] (1.5,1.2) -- node[left]{$W_{(a,c)}$} (1.5,2);
        \draw[YaleMidBlue,string] (2.5,1) -- node[above]{$s$}(1.7,1);
        \draw[YaleMidBlue,string] (1.3,1) -- node[above]{$(a,c)(s)$} (.2,1);
        \draw[YaleMidBlue,string] (-.2,1) -- node[above left]{$(c,b,a)(s)$} (-1,1);
        \node[fill = white, draw = black] at (0,1) {\small $L_{(c,b,a)(s)}$};
        \node[fill = white, draw = black] at (1.5,1) {\small $L_{(a,c)(s)}$};
    },
\]

The two invariant subspaces of the projective MPOs are $\spann(|00\ra,|11\ra)$ and $\spann(|01\ra, |10\ra)$. When restricted to $W_{(c,b,a)}:=\spann(|00\ra,|11\ra)$, 
the projective actions are reduced to
\begin{align*}
    (\tilde{L}_{a})^{11} &= \tilde{Z},\quad (\tilde{L}_b)^{11} = \one,\\
    (\tilde{L}_\sigma)_{01} &= \frac{1}{\sqrt{2}}(\one + \ii Y)\frac{1}{\sqrt{2}}(\one - \ii X),\\
    (\tilde{L}_\sigma)_{10} &= \frac{1}{\sqrt{2}}(\ii \one +  Y)\frac{1}{\sqrt{2}}(-\ii \one+ X),
\end{align*}
where we denote $|\tilde{0}\ra = |00\ra$, $|\tilde{1}\ra = |11\ra$. The reduced local tensors are
\begin{align*}
    &(T_{a})^{11} = \tilde{Z}\ot \tilde{Z},\quad  (T_{b})^{11} = \one \ot \one,\\
    &(T_\sigma)_{(\tilde{0}\tilde{0})(\tilde{1}\tilde{1})} = (T_\sigma)_{(\tilde{1}\tilde{1})(\tilde{0}\tilde{0})} = \one,\\
    &(T_\sigma)_{(\tilde{0}\tilde{1})(\tilde{1}\tilde{0})}= -\ii X,\quad (T_\sigma)_{(\tilde{1}\tilde{0})(\tilde{0}\tilde{1})}= \ii X.
\end{align*}
We identify the basis as
\begin{alignat*}{2}
    &|\tilde{0}\tilde{0}\ra = \frac{1}{\sqrt{2}}\left(|\tilde{e}\ra + |\tilde{r^2}\ra\right),\quad && |\tilde{1}\tilde{1}\ra = \frac{1}{\sqrt{2}}\left(|\tilde{e}\ra - |\tilde{r^2}\ra\right),\\
    &|\tilde{0}\tilde{1}\ra = \frac{1}{\sqrt{2}}\left(|\tilde{s}\ra + |\tilde{sr^2}\ra\right),\quad &&
    |\tilde{1}\tilde{0}\ra = \frac{-\ii}{\sqrt{2}}\left(|\tilde{s}\ra - |\tilde{sr^2}\ra\right).
\end{alignat*}
The induced rank-3 tensor is
\[
    \widetilde{m}\,|\tilde{i}\tilde{j}\ra\,|\tilde{k}\tilde{l}\ra = \frac{1}{\sqrt{2}}\,\delta_{jk}\,|\tilde{i}\tilde{l}\ra.
\]
This choice of gauge differs from the standard gauge of $m_1$ in table \ref{tab:omega} up to a unitary transformation
\[
    |\tilde{g}\ra = \varphi(g)|g\ra,
\]
with
\begin{alignat*}{2}
    &\varphi(e) = 1,\quad &&\varphi(r^2) = -1,\\
    &\varphi(s) = \ee^{\frac{\pi}{4}\ii},\quad &&\varphi(sr^2) = -\ee^{\frac{\pi}{4}\ii}.
\end{alignat*}
Thus the projective charge $W_{(c,b,a)}$ realizes the same phase as $W_{(a,c)}$.

\subsubsection{\texorpdfstring{$(a,b,c)\sim (a,b)$}{(a, b, c) ~ (a, b)} as phase}
The tensor product space $W_{(a,c)}\ot W_{(a,b)}$ is equipped with the  projective $\Rep^\dagger(D_8)$-actions
\[
\diagram{2}{
    \draw[gray,string] (0,0) -- node[left]{$W_{(a,c)}$} (0,.8);
    \draw[gray,string] (0,1.2) -- node[left]{$W_{(a,c)}$} (0,2);
    \draw[gray,string] (1.5,0) --  node[left]{$W_{(a,b)}$} (1.5,.8);
    \draw[gray,string] (1.5,1.2) -- node[left]{$W_{(a,b)}$} (1.5,2);
    \draw[YaleMidBlue,string] (2.5,1) -- node[above]{$s$}(1.7,1);
    \draw[YaleMidBlue,string] (1.3,1) -- node[above]{$(a,b)(s)$} (.2,1);
    \draw[YaleMidBlue,string] (-.2,1) -- node[above left]{$(a,b,c)(s)$} (-1,1);
    \node[fill = white, draw = black] at (0,1) {\small $L_{(a,b,c)(s)}$};
    \node[fill = white, draw = black] at (1.5,1) {\small $L_{(a,b)(s)}$};
},
\]

The invariant subspaces of the projective MPOs are $\spann(|00\ra,|11\ra)$ and $ \spann(|01\ra ,|10\ra)$. When restricted to $W_{(a,b,c)} = \spann(|00\ra,|11\ra)$,  
the projective actions $L_s$ are reduced to
\begin{align*}
    &(\tilde{L}_a)^{11} = \tilde{Z},\quad  (\tilde{L}_b)^{11} = \tilde{Z},\\
    &(\tilde{L}_\sigma)_{01} = \frac{1}{\sqrt{2}}(\one -\ii X)\frac{1}{\sqrt{2}}(\one + \ii Y),\\
    &(\tilde{L}_\sigma)_{10} =\frac{1}{\sqrt{2}}(-\ii \one + X)\frac{1}{\sqrt{2}}(\ii\one +  Y),
\end{align*}
where we denote $|\tilde{0}\ra = |00\ra$ and $|\tilde{1}\ra = |11\ra$. The reduced $\Rep^\dagger(D_8)$-MPOs are
\begin{align*}
    &(T_{a})^{11} = (T_{b})^{11} = \tilde{Z}\ot \tilde{Z},\\
    &(T_\sigma)_{(\tilde{0}\tilde{0})(\tilde{1}\tilde{1})} = (T_\sigma)_{(\tilde{1}\tilde{1})(\tilde{0}\tilde{0})} = \one,\\
    &(T_\sigma)_{(\tilde{0}\tilde{1})(\tilde{1}\tilde{0})} =\ii Y,\quad  (T_\sigma)_{(\tilde{1}\tilde{0})(\tilde{0}\tilde{1})} = -\ii Y.
\end{align*}
Thus the space is $\C[\la r^2,sr\ra]$ with the basis defined as
\begin{alignat*}{2}
    &|\tilde{0}\tilde{0}\ra = \frac{1}{\sqrt{2}}\left(|e\ra + |r^2\ra\right),\quad &&|\tilde{1}\tilde{1}\ra = \frac{1}{\sqrt{2}}\left(|e\ra - |r^2\ra\right),\\
    &|\tilde{0}\tilde{1}\ra = \frac{1}{\sqrt{2}}\left(|sr\ra + |sr^3\ra\right),\quad &&
    |\tilde{1}\tilde{0}\ra = \frac{\ii}{\sqrt{2}}\left(|sr\ra -|sr^3\ra\right).
\end{alignat*}
The induced rank-3 tensor is
\[
    \widetilde{m}\,|\tilde{i}\tilde{j}\ra\,|\tilde{k}\tilde{l}\ra = \frac{1}{\sqrt{2}}\,\delta_{jk}\,|\tilde{i}\tilde{l}\ra.
\]
This choice of gauge differs from the standard gauge of $m_2$ in table \ref{tab:omega} up to a unitary transformation
\[
    |\tilde{g}\ra = \varphi(g)|g\ra,
\]
with
\begin{alignat*}{2}
    &\varphi(e) = 1,\quad &&\varphi(r^2) = -1,\\
    &\varphi(sr) = \ee^{\frac{\pi}{4}\ii},\quad &&\varphi(sr^3) = -\ee^{\frac{\pi}{4}\ii}.
\end{alignat*}
Thus the projective charge $W_{(a,b,c)}$ realizes the same phase as $W_{(a,b)}$.

\subsubsection{\texorpdfstring{$(b,c)\sim \id$}{(b, c) ~ id} as phase}
Now we consider the tensor product $W_{(c,b,a)}\ot W_{(a,b)}$ and its reduction. Similarly, when restricted to the invariant subspace $W_{(b,c)} = \spann(|00\ra,|11\ra)$,
the reduced projective MPOs on $W_{(b,c)} = \spann(|00\ra,|11\ra)$ are
\begin{align*}
    &\tilde{L}_{a} = \tilde{L}_b = \one,\\
    &(\tilde{L}_\sigma)_{01} = \frac{\ii}{\sqrt{2}}(-X+Y),\quad (\tilde{L}_\sigma)_{10} = \frac{1}{\sqrt{2}}(-X+Y),
\end{align*}
and the reduced MPOs are
\begin{align*}
    &(T_a)^{11} =(T_b)^{11} = \one \ot \one,\\
    &(T_{a})^{11} = (T_{b})^{11} = \tilde{Z}\ot \tilde{Z},\\
    &(T_\sigma)_{(\tilde{0}\tilde{0})(\tilde{1}\tilde{1})} = (T_\sigma)_{(\tilde{1}\tilde{1})(\tilde{0}\tilde{0})} = \one,\\
    &(T_\sigma)_{(\tilde{0}\tilde{1})(\tilde{1}\tilde{0})} =\ii \one,\quad  (T_\sigma)_{(\tilde{1}\tilde{0})(\tilde{0}\tilde{1})} = -\ii \one.
\end{align*}
Thus the space $W_{(b,c)}\ot W_{(b,c)}^*$ is isomorphic to $\C_e^2\oplus \C_{r^2}^2$, with the basis transformation 
\begin{alignat*}{2}
    &|\tilde{0}\tilde{0}\ra = \frac{1}{\sqrt{2}}(|e,1\ra + |r^2,1\ra),\quad && |\tilde{1}\tilde{1}\ra = \frac{1}{\sqrt{2}}(|e,1\ra - |r^2,1\ra),\\
    &|\tilde{0}\tilde{1}\ra = \frac{1}{\sqrt{2}}(|e,2\ra +\ii |r^2,2\ra),\quad &&|\tilde{1}\tilde{0}\ra = \frac{1}{\sqrt{2}}(\ii|e,2\ra + |r^2,2\ra).\\
\end{alignat*}
This space can be further reduced to $\C_e$ using a similar approach, which is spanned by
\[
    |e\ra = \frac{1}{\sqrt{2}}\Big(|e,1\ra + \ee^{\frac{\pi}{4}\ii}|e,2\ra \Big).
\]
We conclude that the $W_{(b,c)}$ realizes the same phase as $\C_{e}$.

\subsection{\texorpdfstring{$S_3$}{S3}-duality between three SPT phases}\label{sec: S3 duality}
In this section, we propose a duality procedure to relate three distinct SPT phases with the $\Rep^\dagger({D_8})$ symmetry. This duality transformation fills the blank of the permutations between different SPTs that could have been realized via stacking structure and physically realizes the monoidal equivalence connecting different fiber functors:
\begin{itemize}
    \item the stacking structure of $G$-SPTs naturally facilitates permutations between different SPT phases, however, SPTs with non-invertible symmetries lack a natural stacking structure;
    \item three fiber functors of $\Rep^\dagger({D_8})$ are related by the $S_3$-automorphism. We would like to understand the microscopic realization of such automorphism.
\end{itemize}

Given a $\Rep^\dagger(D_8)$-symmetric lattice model $(\mathcal{H}, H)$, the duality procedure produces another $\Rep^\dagger(D_8)$-symmetric lattice model $(\mathcal{H}', H')$, which is parameterized by a projective $\Rep^\dagger(D_8)$-charge $W_\pi$, where $\pi\in S_3$. More explicitly, given the Hilbert space
\[
    \cH = \cdots \ot V_n\ot V_{n+1}\ot \cdots \ot V_{n+m}\ot \cdots,
\]
and Hamiltonian
\[
    H = \sum_{X\subset \Z}\Phi(X),
\]
we will perform the duality between SPTs parameterized by $W_\pi$.

In the first step, we enlarge the Hilbert space by introducing extra degrees of freedom (defects) on the links. A similar approach used in gauging generalized symmetries has been discussed in \cite{Seifnashri_2024}. In our language, we sandwich the original lattice site with additional sublattice sites. We separate the extra degrees of freedom into left and right components. The left sublattice site, labeled by $(i,L)$, hosts a projective $\Rep^\dagger(D_8)$-charge $W_\pi$, while the right sublattice site, labeled by $(i,R)$, carries an opposite projective $\Rep^\dagger(D_8)$-charge $W_\pi^*$. This construction is diagrammatically represented as
\[
    \diagram{4}{
        \fill[YaleLightBlue] (0,0) ellipse [x radius=.3,y radius=0.2];
        \fill[gray] (-.15,0) node[above]{$(i,L)$} circle [radius = .03];
        \fill (0,0) node[below] at (0,-.05) {$i$} circle [radius = .04];
        \fill[gray] (+.15,0) node[above]{$(i,R)$} circle [radius = .03];
    }\,.
\]
This sandwich construction ensures that the total charge in the extended unit cell remains a $\Rep^\dagger(D_8)$-charge, meanwhile the $\Rep^\dagger(D_8)$-MPO is permuted by $\pi$
\[
    \diagram{1}{
        \foreach \i in {0,1,...,5}
        {
            \fill (\i*.5,0) circle [radius = .05];
        }
    }
    \longrightarrow 
    \diagram{1}{
        \foreach \i in {0,1,...,5}
        {
            \fill[YaleLightBlue, ] (\i*.5,0) ellipse [x radius=.2,y radius=0.15];
            \fill[gray] (\i*.5 -.15,0) circle [radius = .03];
            \fill (\i*.5,0) circle [radius = .05];
            \fill[gray] (\i*.5 +.15,0) circle [radius = .03];
        }
    }.
\]
More concretely, the enlarged Hilbert space on site $i$ is $W_\pi\ot V_i\ot W_\pi^*$, acted upon by the permuted $\Rep^\dagger(D_8)$-MPO 
\[
    \diagram{2}{
        \draw[string, gray] (-.5,0) -- node[left]{$W_\pi$} (-.5,1-.2);
        \draw[string, gray] (-.5,1+.2) -- node[left]{$W_\pi$} (-.5,2);
        \draw[string,gray] (1.5,1-.2) -- node[right]{$W_\pi$} (1.5,0);
        \draw[string,gray] (1.5,2) -- node[right]{$W_\pi$} (1.5,1+.2);
        \draw[string] (.5,0) -- node[right]{$V_i$} (.5,1-.2);
        \draw[string] (.5,1+.2) -- node[right]{$V_i$} (.5,2);
        \draw[YaleMidBlue,string] (2,1) -- node[above]{$\pi(s)$} (1.7,1);
        \draw[YaleMidBlue,string] (1.3,1) -- node[above]{$s$} (.7,1);
        \draw[YaleMidBlue,string] (.3,1) -- node[above]{$s$} (-.3,1);
        \draw[YaleMidBlue,string] (-.7,1) -- node[above]{$\pi(s)$}  (-1,1);
        \filldraw[fill = white] (-.7,.8) rectangle node{$L_{\pi(s)}$} (-.3,1.2);
        \filldraw[fill = white] (1.3,.8) rectangle node{$R_{\pi(s)}$} (1.7,1.2);
        \filldraw[fill = white] (.3,.8) rectangle node{$T_s$} (.7,1.2);
    }.
\]
Clearly, the permuted MPOs still satisfy the composition rules of the original $\Rep^\dagger(D_8)$-MPOs.

For each local $D_8$-grading preserving operator $O$ acting on multiple sites, the duality transformation maps
\begin{equation}\label{eq:dual mini coup}
    \diagram{1}{
        \draw[string] (0,0) -- (0,1-.2);
        \draw[string] (0,1+.2) -- (0,2);
        \draw[string] (1,0) -- (1,1-.2);
        \draw[string] (1,1+.2) -- (1,2);
        \draw[string] (2,0) -- (2,1-.2);
        \draw[string] (2,1+.2) -- (2,2);
        \filldraw[fill = white] (-.1,.8) rectangle node{$O$} (2.1,1.2);
    }\mapsto
    \diagram{1}{
        \draw[string] (0,0) -- (0,1-.2);
        \draw[string] (0,1+.2) -- (0,2);
        \draw[string] (1,0) -- (1,1-.2);
        \draw[string] (1,1+.2) -- (1,2);
        \draw[string] (2,0) -- (2,1-.2);
        \draw[string] (2,1+.2) -- (2,2);
        \draw[string, gray] (.2, 2) -- (.2,1.4) -- (.8,1.4) -- (.8,2);
        \draw[string, gray] (1.2, 2) -- (1.2,1.4) -- (1.8,1.4) -- (1.8,2);
        \draw[string, gray] (.8, 0) -- (.8,.6) -- (.2,.6) -- (.2,0);
        \draw[string, gray] (1.8, 0) -- (1.8,.6) -- (1.2,.6) -- (1.2,0);
        \draw[string, gray] (-.2,0) -- (-.2,2);
        \draw[string, gray] (2.2,2) -- (2.2,0);
        \filldraw[fill = white] (-.1,.8) rectangle node{$O$} (2.1,1.2);
    }:=\tilde{O}.
\end{equation}
Note that such a map, when ignoring the symmetry, maps $O$ to $O\ot |\psi\ra\la \psi|\ot |\psi\ra\la \psi|\ot \cdots $, i.e. the original operator tensor-producting with the interactions on the projective charges, where $|\psi\ra$ is the maximally entangled state defined in Equation \eqref{eq:maximally entangled}.

In the second step, we project the extended Hilbert space into the invariant subspace, analogous to imposing the Gauss law constraint locally. To ensure a tensor product Hilbert space, we impose the Gauss law energetically \footnote{Note that the Gauss law is a kinematic constraint. Strictly speaking, we should impose the constraint at the Hilbert space level and identify the gauge-invariant generators to recover the tensor product Hilbert space after gauging. Here for convenience, we impose the Gauss law energetically to achieve the same effect.}
\[
    \tilde{H} = \sum_{X\subset \Z}\widetilde{\Phi(X)} -J\sum_{i\in\Z}|\psi\ra\la \psi|_{(i,R),(i+1,L)},
\]
where $\widetilde{\Phi(X)}$ is obtained from $\Phi(X)$ via Equation \eqref{eq:dual mini coup}, and $J\gg 1$ is a parameter that enforces $|\psi\ra\la \psi|_{(i,R),(i+1,L)} = 1$ for any $i\in\Z$ at the low energy subspace. The transformed Hamiltonian $\tilde{H}$ remains to be $\Rep^\dagger(D_8)$-symmetric, but with the $\Rep^\dagger(D_8)$-MPO permuted.
In the low energy subspace, where the constraint $|\psi\ra\la \psi|_{(i,R),(i+1,L)} = 1$ is satisfied, the Hamiltonian $\tilde{H}$ effectively reduces to the original one $H$ when the symmetry is ignored. Thus the reduced Hamiltonian shares the same energy spectrum as $H$. Consequently, if $H$ is gapped and has the unique ground state, $\tilde{H}$ inherits its properties and remains gapped with unique ground state.

Moreover, the adiabatic path in the original model can be mapped to the one in the dual lattice model such that no phase transition occurs along that path. If $H$ and $H'$ belong to the same phase, which means that there exists a symmetric adiabatic path connecting these two Hamiltonians, then their corresponding lattice models $\tilde{H}$ and $\tilde{H}'$ after the duality transformation belong to the same phase as well. 

Note that the deformed MPO differs from the one that might act diagonally on two layers of $\Rep^\dagger(D_8)$ SPTs, thus it does not work as the stacking. The newly-introduced projective charges cannot be located in another layer of the system as we are not allowed to introduce interactions between two projective charges without surpassing the symmetry charge sandwiched in between.

Now, we can examine how the phases in $S_3/K$ are permuted under the duality parameterized by the projective charge $W_\pi$.
Suppose the original lattice model is constructed from $W_{\sigma}$, corresponding to the phase described by the coset $\sigma K$. After the duality transformation, the resulting model is equivalent to the one constructed from $W_{\pi} \otimes W_{\sigma}$. The projection of the projective charge onto the invariant subspace can then be added to the Hamiltonian adiabatically, we discuss this in more detail in \ref{section: deformation}. After the projection, this model is equivalent to the one constructed by the projective charge $W_{\pi\sigma}$, corresponding to the phase described by the coset $\pi\sigma K$.
Thus, the duality parameterized by $W_\pi$ maps the phase described by the coset $\sigma K$ to the phase described by $\pi\sigma K$. The $S_3$ duality acts on phases in the same way as the group acts on cosets via group multiplication.

\subsection{Comparison with the cluster state SPT}
When comparing our construction with the models in \cite{seifnashri2024clusterstatenoninvertiblesymmetry}, we notice that the $\Rep^\dagger(D_8)$ MPO $D$ proposed there is not in its onsite form. The virtual bond dimension of $D$ is four. In fact, we can conjugate $D$ by CZ to reduce its virtual bond dimension to two. 
We group two nearby sites of the cluster model as a unit cell and identify the basis as
\[
    |++\ra = |e\ra,\quad |-+\ra = |s\ra,\quad |--\ra = |r\ra,\quad |+-\ra = |sr\ra,\,
\]
thus the local Hilbert space in \cite{seifnashri2024clusterstatenoninvertiblesymmetry} is
\[
    \C^2\ot \C^2 = \mathrm{span}( |e\ra,\ |s\ra,\ |r\ra,\ |sr\ra )\,.
\]

Upon conjugating the cluster Hamiltonian with CZ, cluster state becomes the product state, suggesting that if we transform the MPO $D$ into its onsite form, we can choose the produce state as a representative of the trivial phase. In fact, the trivial phase is constructed from $\C_e$.

In their second model, after applying the CZ gate, the ground state is stabilized by the generator
\[
    -X_{2n} = 1,\quad -Z_{2n-1}X_{2n+1}Z_{2n+3} = 1
\]
Imposing the constraint $X_{2n} = -1$, the local Hilbert space on a unit cell (two nearby sites) is $\C\{r, sr\}$. The third model is similar. After applying the CZ gate, the ground state is stabilized by the generator
\[
    -X_{2n-1} = 1,\quad -Z_{2n-2}X_{2n}Z_{2n+2} = 1
\]
Imposing the constraint $X_{2n-1} = -1$, the local Hilbert space on a unit cell (two nearby sites) is $\C\{s,r\}$.
We further group two unit cells together to form the final unit cell, and the two non-trivial SPT phases are constructed from 
\[
    \C\{r, sr\}\ot\C\{r, sr\} = \C\la r^2,s\ra\,,
\]
and
\[
    \C\{s,r\}\ot\C\{s,r\} = \C\la r^2,sr\ra\,.
\]

In Appendix \ref{sec:reduction}, we explicitly show the equivalence between our Q-system construction with the above input algebras and those in \cite{seifnashri2024clusterstatenoninvertiblesymmetry} (up to a symmetry-preserving local unitary).

\section{Microscopic realization of anomaly-free fusion category symmetry} \label{sec:MPO}

Similar to imposing an internal group-like symmetry on the lattice, one should not only assign a global symmetry acting on the total Hilbert space but also a local action on each site and each local finite region.
Thus we need to provide a local version of the anomaly-free fusion category symmetry and its action on each lattice site in a 1D lattice model \cite{Inamura_2022,molnar2022matrixproductoperatoralgebras,inamura202411dsptphasesfusion,Garre_Rubio_2023}. The locality of the fusion category symmetry is realized by the fiber functor of a fusion category if there exists one. Specifically, a fiber functor ensures that the virtual bond of the fusion category symmetry matrix product operator (MPO) can be coherently assigned to a Hilbert space \cite{molnar2022matrixproductoperatoralgebras,Garre_Rubio_2023,Inamura_2022,inamura202411dsptphasesfusion}. This assignment allows the MPO to be decomposed into local tensors, where each tensor on a single site encodes the local data of the fusion category symmetry. Mathematically, a fusion category symmetry $(\cC,f)$ naturally gives a finite-dimensional Hopf C$^\star $-algebra from the coend \cite{Kitaev_2012,kong2012universalpropertieslevinwenmodels,lan2024tubecategorytensorrenormalization,etingof2015tensor} 
\[
    \sH := \int^{X\in\cC}\Hilb(fX,fX).
\]

We start with the Hopf algebra $\sH$ as the local version of the fusion category symmetry and define the charge category \cite{Lan_2024,lan2024categorysetorders} of the fusion category symmetry as the representation category of the Hopf algebra, isomorphic to $\cC_{\Hilb_f}^\vee$.
A generic lattice model with fusion category symmetry $(\cC, f)$ consists of: local Hilbert spaces that carry representations of the Hopf algebra and local interactions given by Hermitian intertwiners between representations on the local Hilbert space \cite{Inamura_2022}, which directly generalizes the construction with finite group symmetry in \cite{Lan_2024}.

Then we construct the fusion category symmetry operators as MPOs, which are derived from the action of Hopf algebra on local symmetry charges \cite{molnar2022matrixproductoperatoralgebras,Garre_Rubio_2023}. We emphasize two essential properties of fusion category symmetry MPOs: the ``onsiteness'' and dualizability \cite{molnar2022matrixproductoperatoralgebras,Garre_Rubio_2023,inamura202411dsptphasesfusion}. An onsite MPO means the MPO is in its minimal presentation. We define an \textit{onsite} fusion category symmetry MPO to be the one, the virtual bond dimension $v_s\in\Z_+$ of $s\in\cC$ is the same as the quantum dimension $d_s\in\Z_+$. Fixing the symmetry operator ``onsite'' is crucial to defining a trivial phase for SPTs with either invertible or non-invertible symmetries. Only after the symmetry operator is fixed onsite, a trivial state can be properly defined.

\subsection{Finite-dimensional Hopf C\texorpdfstring{$^\star $}{*}-algebra}
Here we briefly review Hopf C$^\star $-algebra \cite{bohm1999weakhopfalgebrasi} and the representation category of a finite-dimensional Hopf C$^\star $-algebra.

A Hopf alegbra $(\sH,m,\iota,\Delta,\epsilon)$ is a bialgebra over the field $\mathbb{C}$ together with an $\mathbb{C}$-linear map $S: \sH \rightarrow \sH$, called the antipode, satisfying the condition
\[
    S(a_{(1)})a_{(2)} = \epsilon(a)\one = a_{(1)}S(a_{(2)}).
\]

In a physical system, in order to represent a Hopf algebra on the Hilbert space, we should require a Hermitian representation. Thus we consider Hopf C$^\star $-algebra and require its representation to be a $\star$-representation.

\begin{definition}[C$^\star $-algebra]
    Let $(A,m,\iota)$ be a unital algebra over $\C$. The algebra $A$ is called a complex $\star$-algebra if there is an involution map $\star: A\rightarrow A$ that is conjugate-linear and anti-automorphic, meaning
    \begin{align*}
        &x^{\star\star} = x,\quad (\lambda x + \mu y)^\star = \overline{\lambda}x^\star  + \overline{\mu}y^\star ,\\ &(xy)^\star  = y^\star x^\star ,\quad \one^\star  = \one.
    \end{align*}
    Then $A$ a C$^\star $-algebra if there exists a norm $\|\cdot\|$ on $A$ which is submultiplicative ($\|ab\|\leq \|a\|\|b\|$) such that
    \[
        \|a^\star a\| =\|a\|^2,\quad \forall a\in A. 
    \]
\end{definition}

\begin{definition}
    Let $A, B$ be two complex $\star$-algebras. Then $f:A\rightarrow B$ is a $\star$-homomorphism if it is an algebra homomorphism and is compatible with the $\star$-structure: $f\circ \star_A = \star_B \circ f$.
\end{definition}

\begin{definition}
    The finite-dimensional Hopf algebra $(\sH, m, \iota, \Delta, \epsilon)$ is a Hopf C$^\star $-algebra if the algebra $(\sH, m, \iota)$ is a C$^\star $-algebra and $\Delta$ is a $\star$-homomorphism.
\end{definition}

\begin{remark}
    If $\sH$ is a Hopf C$^\star $-algebra, one has the property that for all $x\in \sH$ \cite{nikshych2000finitequantumgroupoidsapplications},
    \[
        \one^\star  = \one,\quad  \epsilon(x^\star ) = \epsilon(x)^\star ,\quad (S\circ \star)^2 = \id.
    \]
    Moreover, by the property of the finite-dimensional unital C$^\star $-algebra, $\sH$ is automatically semisimple.   
\end{remark}

Given a finite-dimensional Hopf algebra $(\sH, m, \iota, \Delta, \epsilon)$, its dual Hopf algebra is a tuple $(\sH^*, \Delta^*, \epsilon^*, m^*, \iota^*)$, where $\sH^* = \hom(\sH,\C)$ is the dual vector space of $\sH$, and the multiplication $\Delta^*$, the unit $\epsilon^*$, the comultiplication $m^*$, the counit $\iota^*$ are dual linear maps of $\Delta$, $\epsilon$, $m$, $\iota$, respectively. Its antipode $S^*$ is uniquely fixed by $S$ . If $\sH$ is a Hopf C$^\star $-algebra with involution $\star$, the C$^\star $ structure of $\sH^*$ is defined by
\[
    (f^\star )(x):= \overline{f(S(x)^\star )}
\]
for all $f\in \sH^*$ and $x\in \sH$ \cite{nikshych2000finitequantumgroupoidsapplications}.

\begin{example}
    Given a finite group $G$, the group algebra $(\C[G],\cdot, e, \Delta, \epsilon, \star)$ is a $\star$-Hopf algebra:
    \begin{itemize}
        \item[--] multiplication $\cdot$ : $g\cdot h = gh$ for $g,h\in G$;
        \item[--] unit: $e\in G$;
        \item[--] comultiplication: $\Delta g = g\ot g$ for $g \in G$;
        \item[--] counit: $\epsilon (g) = \delta_{ge}$ for $g \in G$;
        \item[--] antipode $S$: $S(g) = g^{-1}$ for $g\in G$;
        \item[--] $\star$-structure: $g^\star  = g^{-1}$ for $g\in G$.
    \end{itemize}
\end{example}
\begin{example}
    Given a finite group $G$ and $(\C[G],\cdot,e,\Delta,\epsilon)$ the group algebra of $G$, the tuple $(\Fun(G), \Delta^*, \epsilon^*,\cdot ^*, e^*)$, called the function algebra, is the dual Hopf algebra of $\C[G]$. 
    Fixing a dual basis $\{\delta_g\}_{g\in G}$ of $\Fun(G)$, where $\delta_g(h) = \delta_{g,h}$, then we have
    \begin{itemize}
        \item[--] multiplication: $\delta_g\delta_h = \delta_{g,h}\delta_g$;
        \item[--] comultiplication: $\Delta(\delta_g) = \sum_{kl = g}\delta_k\ot \delta_l$;
        \item[--] antipode: $S(\delta_g) = \delta_{g^{-1}}$;
        \item[--] $\star$-structure: given by the antilinear map $\delta_g^\star  = \delta_g$.
    \end{itemize}
\end{example}

\begin{example}
    The Kac-Paljutkin $\sH_8$ Hopf algebra is an $8$-dimensional algebra
    \begin{multline*}
        \sH_8 = \la x,y,z|x^2 = y^2 = z^2 = 1, \\xz = zx, yz = zy, xyz = yx\ra
    \end{multline*}
    along with 
    \begin{itemize}
        \item[--] comultiplication: 
            \begin{align*}
                &\Delta x = xe_0\ot x + xe_1\ot y,\\
                &\Delta y = ye_1\ot x + ye_0\ot y,\\
                &\Delta z = z\ot z;
            \end{align*}
        \item[--] counit: $\epsilon(x) = \epsilon(y) = \epsilon(z) = 1$;
        \item[--] antipode:
            \begin{align*}
                &S(x) = xe_0 + y e_1,\\
                &S(y) = x e_1 + y e_0,\\
                &S(z) = z.
            \end{align*}
        \item[--] $\star$-structure: $x^\star  = x,\quad y^\star  = y, \quad z^\star  = z$.
    \end{itemize}
    Thus $\sH_8$ is a Hopf C$^\star $-algebra.
\end{example}

\begin{definition}
    The representation category $\Rep^\dagger(\sH)$ of a finite-dimensional Hopf C$^\star $-algebra $(\sH, m, \iota, \Delta, \epsilon,\star)$ with antipode $S$ on finite-dimensional Hilbert spaces, is a strict unitary fusion category with the following data:
    \begin{itemize}
        \item objects: $(V,\rho)$, where $V$ is a finite-dimensional Hilbert space and $\rho:\sH\rightarrow \End(V)$ is a $\star$-homomorphism, i.e., an algebra homomorphism, such that the $\star$-structure and the inner product are compatible: $\rho(x)^\dagger = \rho(x^\star )$ for any $x\in\sH$;
        \item morphisms from $(V,\rho)$ to $(W,\sigma)$: linear maps $f:V\rightarrow W$, such that $f\rho(x) = \sigma(x)f$ for all $x\in\sH$. Moreover, the Hermitian conjugation $f^\dagger:W\rightarrow V$ is a morphism from $(W,\sigma)$ to $(V,\rho)$, by the $\star$-representation structure
            \[
                f^\dagger \sigma(x) = (\sigma(x^\star )f)^\dagger = (f\rho(x^\star ))^\dagger = \rho(x)f^\dagger
            \]
        for any $x\in\sH$;
        \item tensor product $\ot$: given two $\star$-representations $A = (V,\rho_V)$ and $B = (W,\rho_W)$, their tensor product $A\ot B$ is a pair $(V\ot W,\rho_{V\ot W})$, with
            \[
                \rho_{V\ot W} = (\rho_{V}\ot \rho_{W})\circ \Delta;
            \]
        \item the tensor unit $(C, \epsilon)$;
        \item the left dual $(V^*,(\rho\circ S)^*)$ of $(V,\rho)$;
        \item the right dual $(V^*,(\rho\circ S^{-1})^*)$ of $(V,\rho)$.
    \end{itemize}
\end{definition}
\begin{example}
    Let $\C[G]$ be the finite-dimensional C$^\star $ group algebra, then $\Rep^\dagger(\C[G]) = \Rep^\dagger(G)$, i.e. the category of finite-dimensional unitary $G$-representations, is a unitary fusion category.
\end{example}

\begin{example} \label{ex:Fun(G)}
    Let $\Fun(G)$ be the finite-dimensional C$^\star $ function algebra, then $\Rep^\dagger(\Fun(G))$ is monoidally equivalent to $\Hilb_G$, the category of finite-dimensional $G$-graded Hilbert spaces. To show this, take a $\Fun(G)$-representation $(V,\rho)$, where $V$ is a finite dimensional Hilbert space and $\rho:\Fun(G)\rightarrow \End(V)$ is a $\star$-homomorphism. Then for each $g\in G$, $\rho(\delta_g):V\rightarrow V$ is an orthogonal projector, making $V$ a $G$-graded Hilbert space $V = \oplus_{g\in G}V_g$. The intertwiners from $(V,\rho_V)$ to $(W,\rho_W)$ are the $G$-grading preserving linear maps from $V$ to $W$. Moreover, we can check that the tensor product space $V \ot W$ of two $G$-graded Hilbert spaces is $G$-grading preserving. Take $V = \oplus_g V_g$ and $W = \oplus_g W_g$, we have
    \begin{align*}
        \rho_{V\ot W}(\delta_g) &=(\rho_V\ot\rho_W) (\Delta\delta_g)\\ &= \sum_{kl = g}\rho_{V}(\delta_k)\ot \rho_{W}(\delta_l),
    \end{align*}
     meaning
    \[
        (V\ot W)_g = \oplus_{kl = g} V_k\ot W_l.
    \]
    The representation of the left dual space $V^*$ is given by
    \[
        \rho^*(\delta_g) = \rho(S(\delta_g))^* =\rho(\delta_{g^{-1}}).
    \]
    Thus $(V^*)_g = (V_{g^{-1}})^*$. The right dual space is defined similarly.
\end{example}

\subsection{\texorpdfstring{$\sH$}{H}-symmetric 1D quantum system} \label{section:lattice definition}
 With a properly defined local symmetry, we can now assign how the symmetry acts on a 1D quantum system. We define an abstraction of a 1D Hopf algebra symmetric quantum system independent of any specific model. Our definition is based on \cite{ogata2021classificationgappedgroundstate,Lan_2024} but with an additional emphasis on the ordering structure of lattice sites. Such an ordering is crucial to a quantum system with non-invertible symmetries as the Hopf algebra representations in general do not admit a braiding structure. For example, the charge category $\Hilb_G$ of a $\Rep^\dagger(G)$-symmetric lattice system is not a braided category. Therefore, the lattice sites should not be labeled by elements of a set, but a one with a relational structure.
\begin{definition}
    A linearly order set $(S,\prec)$ is a set $S$ with a binary relation $\prec$ satisfying the following conditions:
    \begin{itemize}
        \item[--] $x \prec x$ for $x \in S$ (reflexive);
        \item[--] for each $x, y, z \in S$, $x \prec y$ and $y \prec z \Rightarrow x \prec z$;
        \item[--] for each $x,y \in S$, at least one of the following holds: $x \prec y$, $x=y$, $y \prec x$.
    \end{itemize}
\end{definition}
Then we define an abstraction of a 1D lattice system together with their relations and operations, which inherit from the properties of a linearly ordered set.

\begin{definition}\label{def:chain}
    A 1D finite chain $\BI$, the elements of which are viewed as lattice sites, is a linearly ordered set $(\{0,1,\cdots, N-1\},\leq)$ \footnote{For simplicity, we omit the ordering ``$\leq$'' label when there is no ambiguity.}. We further define
    \begin{itemize}
        \item[--] region: a region $X$ of $ \BI$ is a subset of $\BI$, either being $\emptyset$ or the one of the form $\{m,m+1,m+2,\cdots, m+n\}$. The set of regions of $\BI$ is denoted by $\fS_{\BI}$; 
        \item[--] adjacency: given $X, Y \in \fS_{\BI}$, the region $X$ is left-adjacent to $Y$ if $\max(X)+1 = \min(Y)$, and is right-adjacent if $Y$ is left-adjacent to $X$. We further define that the $\emptyset$ is adjacent to any region.
    \end{itemize}
\end{definition}
\begin{remark}
    For an empty chain, the underlying linearly order set is $\emptyset$, and $(\Z,\leq)$ for an infinite chain, $(\N,\leq)$ for a right-infinite chain, $(-\N,\leq)$ for a left-infinite chain. In the infinite cases, the set of finite regions is denoted by $\fS^f_{\BI}$.
\end{remark}

Given a 1D chain $\BI$ and a finite-dimensional Hopf C$^\star $-algebra $\sH$, a 1D quantum system with ``onsite'' symmetry $\sH$ is labeled by the tuple $(\BI,\sH, \Psi, \Phi)$, which consists of the following data: 
\begin{itemize}
    \item [--] symmetry charge/local Hilbert space $\Psi(X) = (V_{X},\rho_{X}) \in \Rep^\dagger(\sH)$ for each $X\in\fS^f_{\BI}$;
    \item [--] an interaction $\Phi(X)\in\End_{\Rep^\dagger(\sH)}((V_{X},\rho_{X}))$ (a Hermitian symmetric operator);
\end{itemize}
such that for $X,Y\in\fS^f_{\BI}$ with $X$ left-adjacent to $Y$, $\Psi(X\cup Y) = \Psi(X)\ot \Psi(Y)$.
   
The empty region $\emptyset\in \fS^f_{\BI}$ is necessarily associated with the trivial $\sH$-representation. If the single-site region $ \{i\}\in \fS^f_{\BI}$ is associated with the representation $(V_{\{i\}},\rho_{\{i\}})$, then the Hilbert space defined over the finite region $X = \{m+1\}\cup \{m+2\} \cup \cdots \cup\{m+p\}$ is
\[
    V_{X} = V_{\{m+1\}}\ot V_{\{m+2\}}\ot \cdots \ot V_{\{m+n\}},
\]
and the corresponding representation is
\[
    \rho_{X} = (\rho_{\{m+1\}}\ot \rho_{\{m+2\}}\ot \cdots \ot \rho_{\{m+n\}})\circ \Delta^{n}.
\]
The interaction $\Phi$ is called $m$-local if there is a universal bound $m\in \N$ such that $\Phi(X) = 0$ for all $|X|>m$.
For an $m$-local interaction $\Phi$ and finite region $\Lambda\in\fS^f_{\BI}$, let $\cJ = \{\Xi\in\fS^f_{\BI}:\Phi(\Xi)\neq 0,\ \Xi\cap \Lambda\neq \emptyset\}$, we call
\[
    \overline{\Lambda}:=\bigcup_{\Xi\in \cJ}\Xi
\]
the surrounding of $\Lambda$.

The total Hamiltonian is not mathematically defined for an infinite chain, thus we only define the local Hamiltonian on a finite region $\Lambda\in\fS^f_{\BI}$ as
\[
    (H_{\Phi})_{\Lambda} := \sum_{X\in\fS^f_{\Lambda}}\Phi(X).
\]

\subsection{Hopf algebra symmetry MPO}\label{sec:Hopf algebra symmetry MPO}
Tannaka duality in Theorem \ref{thm:tannaka} relates the data between 1) a fusion category together with a fiber functor and 2) a Hopf algebra. Here we use a dual version of the Tannaka duality. More precisely, given a unitary fusion category $\cC$ and a unitary fiber functor $f:\cC\rightarrow \Hilb$, the coend 
\[
    \sH := \int^{X\in\cC}\Hilb(fX,fX)\simeq \bigoplus_{s\in\Irr(\cC)}f(s)\otimes f(s)^*
\]
gives a finite-dimensional Hopf C$^\star $-algebra and conversely $\cC\cong \Rep^\dagger(\sH^*)$ as a unitary fusion category, see 
Theorem 5.4.1 in \cite{etingof2015tensor} and applications in \cite{Kitaev_2012,molnar2022matrixproductoperatoralgebras,couvreur2022matrixquantumgroupsmatrix,Garre_Rubio_2023,inamura202411dsptphasesfusion}.

The isomorphism above decomposes $\sH$ into direct sums of matrix coalgebras.
We introduce the MPO basis for $\sH$ as the basis of the multi-matrix coalgebra \cite{Kitaev_2012}, labeled by the triple $(s,\, \alpha,\, \beta)$:
\begin{multline*}
    \Gamma = \bigg\{ s^{\alpha\beta}|s \in\Irr(\cC),\ \alpha,\beta = 1,2,\cdots, \dim(fs)\bigg\}.
\end{multline*}
This choice of basis makes the coalgebra structure explicit:
\[
    \Delta(s^{\alpha\beta}) =\sum_{\gamma}s^{\alpha\gamma}\ot  s^{\gamma\beta}.
\]
Let $\{e_i\}$ be a basis of $\sH$ and $\{e^i\}$ its dual basis in $\sH^*$, the basis vector $s^{\alpha\beta}\in\Gamma$ can be obtained by the transformation
\[
    s^{\alpha\beta} := \sum_{i}\tau_{s}(e^i)^{\alpha\beta}\,e_i\,,
\]
where $(V_s,\tau_s)$ is an irreducible representation of $\sH^*$, $\alpha$ and $\beta$ index the rows and columns of the irreducible representation $\tau_s$, and $\tau_s(e_i)^{\alpha\beta}$ is the basis transformation coefficient.

For any object $X = (V_X,\tau_X) \in \Rep^\dagger(\sH^*)$, we define a corresponding element in $\sH$:
\begin{equation*}
    X^{\alpha\beta}:=\sum_{i}\tau_X\left(e^i\right)^{\alpha\beta}e_i \quad \alpha,\beta = 1,\cdots, \dim(V).       
\end{equation*}
And they satisfy the identity equation
\begin{align*}
    &\Delta(X^{\alpha\beta}) =\sum_{\gamma}X^{\alpha\gamma}\ot  X^{\gamma\beta},\\
    &X^{\alpha\beta}Y^{\gamma\delta} = (X\ot Y)^{(\alpha\gamma)(\beta\delta)},\\
    &S(X^{\alpha\beta}) = (X^*)^{\beta\alpha}.
\end{align*}

We use the MPO basis to construct the $\Rep^\dagger(\sH^*)$ symmetry operator. For $\Lambda \in \fS^f_{\BI}$, let $\Psi(\Lambda) = (V_\Lambda, \rho_\Lambda)$. The local tensor of the $X$-MPO, denoted as $T^\Lambda_X$, $X \in \Rep^\dagger(\sH^*)$, on region $\Lambda$ is
\[
    (T^\Lambda_X)^{\alpha\beta}  :=  \rho_\Lambda\left(X^{\alpha\beta}\right) = \sum_{i}\tau_X(e^i)^{\alpha\beta}\rho_\Lambda(e_i).
\]
When $X$ is simple, we call such an MPO simple.   

Furthermore, the $\Rep^\dagger(\sH^*)$-symmetry MPO satisfies the following properties
\begin{enumerate}
    \item Extension: for $X\in\Rep^\dagger(\sH^*)$ and $\Lambda,\Xi\in\fS^f_{\BI}$, if $\Lambda$ is left-adjacent to $\Xi$, then
        \[
        \sum_{\beta}\left(T^{\Lambda}_X\right)^{\alpha\beta}\ot \left(T^{\Xi}_X\right)^{\beta\gamma} = \left(T^{\Lambda\cup \Xi}_X\right)^{\alpha\gamma};
        \]
    \item Composition: for $X,Y\in\Rep^\dagger(\sH^*)$ and $\Lambda\in\fS^f_{\BI}$,
        \[
        (T_{X}^\Lambda)^{\alpha\beta} (T_{Y}^\Lambda)^{\gamma\delta} = (T_{X\ot Y}^{\Lambda})^{(\alpha\gamma)(\beta\delta)}.
        \]
    \item Naturality: for $O:U\rightarrow V$ in $\Rep^\dagger(\sH)$, $f:X\rightarrow Y$ in $\Rep^\dagger(\sH^*)$, we have
        \[
        \sum_{\beta}(T_{Y}^{V})^{\alpha\beta}O \times f^{\beta\gamma} = \sum_{\beta} f^{\alpha\beta}\times O(T_{X}^{U})^{\beta\gamma}.
        \]
    \item Dualizability:
        \[
            \sum_{\beta}(T^V_{X^*})^{\beta\alpha}(T^V_X)^{\beta\gamma} = \delta_{\alpha\gamma}\one_V = \sum_{\beta}(T^V_{X})^{\alpha\beta}(T^V_{X^*})^{\gamma\beta}.
        \]
    \item The virtual bond dimension of the $X$-MPO equals the quantum dimension $d_X$ in $\Rep^\dagger(\sH^*)$\,,
\end{enumerate}
where we use $T_X^V$ to denote the local tensor of MPO on space $V$. The above properties can be proved by direct calculations. See a detailed discussion on the relation between MPOs and weak Hopf algebra in \cite{molnar2022matrixproductoperatoralgebras}.
On the other hand, these properties align with the intuition that a $\Rep^\dagger(\sH^*)$-symmetry operator should be a global operator indexed by objects in $\Rep^\dagger(\sH^*)$ satisfying the same fusion rule as objects of $\Rep^\dagger(\sH^*)$. 
\begin{example}
     For $\Rep^\dagger(G)$-symmetry MPO, i.e. $\sH=\Fun(G)$, $\sH^* = \mathbb{C}[G]$, recall Example \ref{ex:Fun(G)}, $\Rep^\dagger(\Fun(G))$ is monoidally equivalent to $\Hilb_G$.
     For any $X = (W,\tau)\in\Rep^\dagger(G)$, the local tensor $T_{X}$ on the charge space $V$ is \cite{inamura202411dsptphasesfusion}
    \[
        (T^V_X)^{\alpha\beta}  := \rho_V(X^{\alpha\beta}) = \sum_{g}\tau_X\left(g\right)^{\alpha\beta} \times \rho_V(\delta_g).
    \]
\end{example}
We give explicit local tensors for $\Rep^\dagger(\sH_8)$-MPOs in Appendix \ref{app:H8}. 

\section{Q-system model of \texorpdfstring{$\Rep^\dagger(\sH^*)$}{Rep(H*)} symmetry protected topological phases}\label{section:QSys}
In this section, we review the fixed-point lattice model constructed in \cite{Lan_2024}, see also \cite{Inamura_2022}.

Let $(A,m,\iota)$ be a Q-system in $\mathcal{C}_{\Hilb_f}^\vee$, $(M,m_M)$ an isometric left $A$-module, $(N,{}_{N}m)$ an isometric right $A$-module, 
\begin{itemize}
    \item $(\Z, \sH, A, 1- m^\dagger m)$ is called the Q-system model on an infinite chain, where the local Hilbert space is $\Psi(\{i\}) = A$ and the nearest-neighbor interaction is $\Phi(\{i,i+1\}) = 1-m^\dagger m$;
    \item $(\N, \sH, A, 1-m^\dagger m, M, 1-m_{M}^\dagger \circ m_{M})$ is called the Q-system model on a right half-infinite chain, where the local Hilbert space is $\Psi(\{0\}) = M$, $\Psi(\{i\}) = A$ for all $i>0$, and the nearest neighbor interaction is $\Phi(\{0,1\}) = 1-m_{M}^\dagger \circ m_{M}$ and $\Phi(\{i,i+1\}) = 1-m^\dagger \circ m$ for all $i>0$. And the definition of the left half-infinite chain is similar;
    \item $(\{1,2,\cdots,n\}, \Psi, \Phi)$ with 
    \begin{itemize}
        \item $\Psi(\{1\}) = M$, $\Psi(\{i\}) = A,\ \forall 2\leq i\leq n-1$ and $\Psi(\{n-1\}) = N$;
        \item $\Phi(\{0,1\}) = 1-m_{M}^\dagger \circ m_{M}$, $\Phi(\{i,i+1\}) = 1- m^\dagger m,\ \forall 2\leq i\leq n-2$ and $\Phi(\{n-1,n\}) = 1-{}_{M}m^\dagger \circ{}_{M}m$
    \end{itemize}
    defines a Q-system model on a finite chain. The Q-system $A$ is called its bulk, the left $A$-module $M$ is called the left boundary condition, and the right $A$-module $N$ is called the right boundary condition.
\end{itemize}
We may directly write the total Hilbert space as $\cH = \bigotimes_{i \in \Z} A$ and the total Hamiltonian as $H = \sum_{i \in \Z} \left(1 - m_{i,i+1}^\dagger m_{i,i+1}\right)$ for the Q-system model on an infinite chain if the mathematical rigor is not emphasized, and the case for the half-infinite chain is similar. We will also drop $\sH$ when there is no symmetry imposed.

By \cite{Lan_2024}, each local Hamiltonian of a Q-system model is a commuting projector. Moreover, for the Q-system model on a finite chain with bulk $A$, left boundary condition $M$, and right boundary condition $N$, the ground state subspace of it is $M\ot[A]N$.

For the ground state subspace of the infinite chain, Theorem 5.14 of \cite{Lan_2024} claims that the ground state subspace of the Q-system model on an infinite chain is the Q-system $A$. However, this approach has issues. Firstly, an infinite chain is fundamentally different from a finite chain; the product of projectors doesn't make sense in an infinite context and requires careful handling. Secondly, the ground state degeneracy on an infinite chain is an invariant of a phase, classified by the Morita class of the Q-system. Therefore, the ground state subspace should not depend on the choice of the representative of the Morita class.

\section{Characterization of SPT phases} \label{sec:gsd}
In this section, we address the characterization of an SPT phase with fusion category symmetry $(\mathcal{C},f)$. The defining feature of an SPT phase is the unique ground state on an infinite chain. In the operator algebraic framework, we formalize its defining feature by showing that a Q-system model describes a symmetric phase only if the Q-system in the charge category is a matrix algebra in $\Hilb$. 

\subsection{Ground state on an open chain} \label{sec:O-type gs}
A defining feature of the 1D SPT phase is the unique symmetric ground state on the infinite chain. The operator algebraic framework \cite{ogata2021classificationgappedgroundstate,Kapustin_2021,Haag:1963dh, bratteli1979operator,BratteliRobinson1981} allows a rigorous treatment of infinite lattices at the thermodynamic limit by using Heisenberg picture to encode a system's dynamics and calculate the time evolution of local observables. This locality ensures that only the time evolution near the region where the operator acts is significant, making the treatment of infinite lattice dynamics rigorous. In the operator algebraic framework, a quantum system is described by a pair $(\fA_{\BI},\tau)$, where $\fA_{\BI}$ is the C$^\star $-algebra formed by the physical observables and $\tau$, the strongly continuous one-parameter group of automorphisms, defines the dynamics on $\fA_{\BI}$. A state $\omega$ on $\fA_{\BI}$ is a positive linear functional, i.e. $\omega(A^\dagger A) \geq 0$ for $A \in \fA_{\BI}$ and is normalized $\omega(\mathbb{1})=1$.

\begin{definition}[$O$-type ground state, see for example \cite{ogata2021classificationgappedgroundstate}]\label{def:groundstate}
    Let $(\BI,\sH, \Psi, \Phi)$ be a 1D quantum system. If $\Phi$ is a uniformly bounded and finite-range interaction, the limit
    \[
        \tau_{\Phi}^t(A) := \lim_{\Lambda\rightarrow \BI}\ee^{\ii t(H_{\Phi})_{\Lambda}}A\ee^{-\ii t(H_{\Phi})_{\Lambda}},\quad \forall t \in \R,\quad A\in \fA_{\BI}
    \]
    exists and defines a dynamics $\tau_{\Phi}$ on $\fA_{\BI}$. Let $\delta_{\Phi}$ be the generator of $\tau_{\Phi}$. A state on $\fA_{\BI}$ is called a $\tau_{\Phi}$-ground state if the inequality 
    \[
        -\ii \,\omega(A^\dagger \delta_{\Phi}(A))\geq 0
    \]
    holds for all $A$ in the domain $\cD(\delta_{\Phi})$ of the generator $\delta_{\Phi}$.
\end{definition}

Note that the operator algebraic ground state, denoted as $O$-type ground state, can differ from the state that minimizes the energy of the system. The latter is a more well-known definition of the ground state, which we refer to as $H$-type ground state. These two types of ground state differ in the infinite-dimensional system, where superselection can be sharply defined \cite{Kong_2022}. $O$-type ground state only measures the lowest energy state within a certain superselection sector. For example, consider the 1D quantum Ising model in the symmetry breaking phase
\[
    H_{\Lambda} = \sum_{\{i,i+1\}\subset\Lambda}\frac{1}{2}\left(\one - Z_iZ_{i+1}\right)
\]
for any finite region $\Lambda \in\fS_{\Z}^f$. One can verify that $\omega_0(O)=\la \cdots 000 \cdots|O|\cdots 000\cdots\ra$, $\omega_1(O)=\la \cdots 111 \cdots|O|\cdots 111\cdots\ra$ and the domain wall state $\omega_{01}(O)=\la \cdots 000111 \cdots|O|\cdots 000111\cdots\ra$ each satisfies the $O$-type ground state definition, while contradicts the $H$-type ground state definition as the domain wall state is not the so-called lowest-energy state of $H_\Lambda$ when compared with the other two. For a detailed discussion, see Appendix \ref{app:O-type ground state}.

In fact, for a 1D quantum system with commuting projector interaction, i.e.
\begin{alignat*}{2}
    &[\Phi(X),\Phi(Y)] = 0,\quad &&\forall X, Y\in \fS^f_\BI,\\
    &\Phi(X)^2 = \Phi(X),\quad &&\forall X \in \fS^f_\BI,
\end{alignat*}
the $H$-type ground state is defined to be the one satisfying $\omega(\Phi(X)) = 0$ for any $X\in \fS^f_\BI$. In this case, one can prove that the $H$-type ground state is an $O$-type ground state. 
By the Cauchy-Schwarz inequality: for any $B\in \fA_{\Lambda}$ with $\Lambda \in \fS_{\Z}^{f}$, one has \cite{Alicki_2007}
\begin{gather*}
    |\omega(\Phi(X)B)|^2\leq \omega(\Phi(X) \Phi(X)^\dagger)\omega(B^\dagger B) = 0,\\
    |\omega(B\Phi(X))|^2\leq \omega(\Phi(X)^\dagger \Phi(X))\omega(B B^\dagger) = 0.
\end{gather*}

Thus
\begin{equation}\label{eq:rg}
    \omega(\Phi(X)B) = \omega(B\Phi(X)) = 0.
\end{equation}
The ground state condition reduces to
\[
    \omega(B^\dagger H_{\overline{\Lambda}}B) - \omega(B^\dagger BH_{\overline{\Lambda}}) = \omega(B^\dagger H_{\overline{\Lambda}}B) \geq 0,
\]
which holds by the positivity of $H_{\overline{\Lambda}}$.

For the purpose of this paper, we are only interested in the SPT phase, therefore $O$-type and $H$-type ground states are identical\footnote{Hereinafter, we do not distinguish between $O$-type and $H$-type ground states when referring to ``unique ground state'' associated with an SPT phase.}. Let $\chargef:\Rep^\dagger(\sH)\rightarrow \Hilb$ be the forgetful functor. Below, we will make this statement rigorous by showing that for the fusion category symmetry $(\cC,f)$
\begin{itemize}
    \item[--] when the Q-system $(\chargef A,\chargef m,\chargef \iota)$ is the $n$-by-$n$ dimensional matrix algebra in $\Hilb$, $O$-type and $H$-type ground states for the Q-system model $(\Z, \sH, A, 1- m^\dagger m)$ are identical;
    \item[--] the Q-system model $(\Z, \sH, A, 1- m^\dagger m)$ describes a symmetric phase only if $\chargef A$ is a matrix algebra in $\Hilb$.
\end{itemize}

\begin{proposition}\label{prop:matrix algebra}
    Let $(A,m,\iota)$ be the $n$-by-$n$ dimensional matrix algebra in $\Hilb$ and $(\Z, A, p)$ with $p = 1- m^\dagger m$ a quantum system on infinite chain without symmetry, the state $\omega:\fA_{\Z}\rightarrow \C$ is an $O$-type ground state if and only if $\omega(p_{i,i+1})=0$ for any $i\in\Z$.
\end{proposition}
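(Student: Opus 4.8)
The two implications have very different characters, so I would treat them separately. \emph{The easy direction} ($\omega(p_{i,i+1})=0$ for all $i\Rightarrow\omega$ is an $O$-type ground state) uses nothing about $A$ beyond its being a Q-system. First I would record that the local terms $p_{i,i+1}=1-m^\dagger m$ are mutually commuting projectors: $mm^\dagger=\id_A$ forces $(m^\dagger m)^2=m^\dagger m$, and commutativity of neighbouring terms is the commuting-projector property of Q-system models cited from \cite{Lan_2024}. Once this is in place, the claim is exactly the general commuting-projector statement already established in \eqref{eq:rg}: if $\omega(p_{i,i+1})=0$ for every bond, then Cauchy--Schwarz gives $\omega(p_{i,i+1}B)=\omega(Bp_{i,i+1})=0$ for all local $B$, so the ground-state inequality collapses to $\omega(B^\dagger H_{\overline\Lambda}B)\ge 0$, which holds by positivity of $H_{\overline\Lambda}$. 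Since $\Phi$ is supported only on nearest-neighbour bonds, the hypothesis $\omega(p_{i,i+1})=0$ for all $i$ is precisely what is needed.

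\emph{The hard direction} ($\omega$ an $O$-type ground state $\Rightarrow\omega(p_{i,i+1})=0$) is where the matrix-algebra hypothesis is essential, and the plan is to exploit the concrete structure $A=\M_n=W\otimes W^*$. First I would make the local geometry explicit: writing each site as $W\otimes W^*$, the multiplication $m$ contracts the right leg $W^*$ of site $i$ against the left leg $W$ of site $i+1$, so $m^\dagger m$ is the projector onto the maximally entangled pair on those two inner legs, tensored with the identity on the outer legs, and $p_{i,i+1}$ projects onto its orthogonal complement. The key geometric point is that the inner legs of consecutive bond terms are disjoint and together dimerise all legs, so a single depth-one layer of commuting two-leg unitaries $U=\prod_i V_i$, each $V_i$ disentangling the maximally entangled pair on the inner legs of bond $(i,i+1)$ into a fixed product vector $|00\rangle$, conjugates $H$ into $\widetilde H=\sum_i(1-|00\rangle\langle00|)$, a sum of projectors on pairwise-disjoint dimers, i.e. a product Hamiltonian. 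Because $U$ is a product of commuting local unitaries it defines a quasi-local automorphism of $\fA_{\Z}$ intertwining the two dynamics, so it suffices to prove the statement for $\widetilde H$.

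For the product Hamiltonian the argument is an explicit energy-lowering computation. For a fixed term $P_k=1-|00\rangle\langle00|$ and any vector $w$ orthogonal to $|00\rangle$, set $A=|00\rangle\langle w|$; then $[H_{\overline\Lambda},A]=[P_k,A]=-A$, so the $O$-type inequality $-\ii\,\omega(A^\dagger\delta_\Phi(A))=\omega(A^\dagger[H_{\overline\Lambda},A])\ge0$ becomes $-\omega(|w\rangle\langle w|)\ge 0$, forcing $\omega(|w\rangle\langle w|)=0$; summing over a basis of the excited subspace gives $\omega(P_k)=0$, and transporting back through $U$ yields $\omega(p_{i,i+1})=0$. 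I expect the main obstacle to be exactly this direction: making rigorous that an $O$-type ground state cannot carry localised energy. This is where simplicity of $\M_n$ enters, and I would stress the contrast — for a non-simple $A$ such as $\C\oplus\C$, which reproduces the ferromagnetic Ising terms, the statement is \emph{false}, since the domain-wall state $\omega_{01}$ recalled above is an $O$-type ground state with $\omega_{01}(p_{k,k+1})\neq0$ on one bond. The matrix-algebra (simple) structure is precisely what makes the disentangled ground vector $|00\rangle$ carry no residual superselection label, so there are no domain-wall sectors to exploit. A secondary point demanding care is the operator-algebraic bookkeeping on the infinite chain: replacing the ill-defined global Hamiltonian by $H_{\overline\Lambda}$, and verifying that $U$ and the lowering operators $A$ are genuinely (quasi-)local so that $\delta_\Phi$ and the $O$-type inequality apply to them.
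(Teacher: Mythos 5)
Your proof is correct, and its core mechanism --- feeding local energy-lowering operators into the $O$-type ground-state inequality to force the weight of excited states to vanish --- is exactly the paper's; the difference is in the scaffolding. The paper works directly with the original Hamiltonian: it decomposes $\id_{A\ot A}=\sum_{\alpha=0}^{n^2-1}P_\alpha^\dagger P_\alpha$ into bimodule isometries $P_\alpha\in\hom_{A|A}(A\ot A,A)$ with $P_0=m$ and $P_\alpha P_\beta^\dagger=\delta_{\alpha\beta}\id_A$, takes $O_\alpha=m^\dagger P_\alpha$ as the lowering operators, derives $-\omega(\tau_{i,i+1}(P_\alpha^\dagger P_\alpha))\geq 0$ from the ground-state condition, and sums over $\alpha\neq 0$ to get $\omega(p_{i,i+1})=0$. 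Your dimerization $A=W\ot W^*$ is the concrete face of the same construction: bimodule endomorphisms of $A\ot A$ are precisely the operators supported on the inner legs $W_i^*\ot W_{i+1}$, and under this identification the paper's $m^\dagger m$ is your maximally-entangled projector and the $O_\alpha$ are your rank-one operators (with the maximally entangled vector in place of $|00\rangle$). Two remarks on the trade-off. First, your dimer picture supplies a justification the paper leaves implicit: in the paper's displayed computation only the commutator with $p_{i,i+1}$ appears, while the adjacent terms $p_{i-1,i}$ and $p_{i+1,i+2}$ are silently dropped --- this is legitimate precisely because $O_\alpha$, being a bimodule map, is supported on a dimer disjoint from theirs, which is the disjointness you state outright. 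Second, your conjugation by the depth-one circuit $U=\prod_i V_i$ is avoidable overhead: since your lowering argument is basis-free, you can run it directly with the maximally entangled vector instead of rotating it to $|00\rangle$, eliminating the need to verify that the circuit automorphism intertwines the two dynamics and transports $O$-type ground states (true, but bookkeeping the paper never incurs). Finally, your closing observation that the statement fails for non-simple $A$ (e.g.\ $\C\oplus\C$, reproducing the Ising terms and their domain-wall ground state) matches the paper's Example~\ref{example:ising} in Appendix~\ref{app:O-type ground state} and correctly pinpoints where the matrix-algebra hypothesis enters.
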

\begin{proof}
    It suffices to prove the necessity. Given the decomposition $\{P_{\alpha}\}_{\alpha = 0}^{n^2-1}$:
     \begin{align*}
        &P_{\alpha}\in\hom_{A|A}(A\ot A, A),\quad \alpha = 0,1,\cdots, n^2-1,\\
        &P_{\alpha}P_{\beta}^\dagger = \delta_{\alpha\beta}\id_{A}, \quad P_0= m,
    \end{align*}
    such that
    \[
        \id_{A\ot A} = \sum_{\alpha = 0}^{n^2 - 1}P_{\alpha}^\dagger P_{\alpha},
    \] 
    we consider $O_\alpha = m^\dagger P_\alpha$. The ground state condition gives
    \begin{align*}
        &\omega\left(\tau_{i,i+1}(P_\alpha^\dagger m(1-m^\dagger m)m^\dagger P_\alpha)\right)\\
        &- \omega\left(\tau_{i,i+1}(P_\alpha^\dagger mm^\dagger P_\alpha(1-m^\dagger m))\right)\\
        &=-\omega(\tau_{i,i+1}(P_\alpha^\dagger P_\alpha))\geq 0, \quad \forall \alpha = 1,2,\cdots, n^2-1, 
    \end{align*}
    where $\tau_{i,i+1}:\End(A^2)\rightarrow \fA_{\{i,i+1\}}$. 
    Here we use $m(1-m^\dagger m)m^\dagger = 0$, $P_\alpha m^\dagger = 0$ and $mm^\dagger = \id_A$.
    By the positivity of $P^\dagger_\alpha P_\alpha$, we have
    \[
        \omega(\tau_{i,i+1}(P^\dagger_\alpha P_\alpha)) = 0,\quad \forall \alpha = 1,2,\cdots, n^2-1.
    \]
    By the decomposition condition, we have
    \begin{align*}
        \omega(p_{i,i+1}) &= \omega(1-m_{i,i+1}^\dagger m_{i,i+1})\\ &= \sum_{\alpha = 1}^{n^2-1}\omega\left(\tau_{i,i+1}(P_{\alpha}^\dagger P_\alpha)\right) = 0.
    \end{align*}
\end{proof}

Before establishing the connection between the condition that $(\Z, \sH, A, 1 - m^\dagger m)$ realizes the symmetric phase and $\chargef A$ being the matrix algebra in $\Hilb$, we first demonstrate the bijection between the space of $H$-type ground states and the space of states on the algebra of fixed-point operators. Such a connection arises from the property of the $H$-type ground state, which helps define a renormalization step on the space of local operators. 

Take a finite subregion $\Xi \in \fS_{\Z}^f$ and denote $m_\Xi:\,\chargef A^{\otimes|\Xi|} \rightarrow \chargef A$ as the multiplication map from $|\Xi|$ numbers of $\chargef A$ to a single $\chargef A$. By definition of the $H$-type ground state, we have
\[
    \omega(O) = \omega(m_{\Xi}^\dagger\, m_{\Xi}\,O \,m_{\Xi}^\dagger\, m_{\Xi}),\quad \forall O \in \fA_\Lambda,\quad \forall\Lambda\in\fS^f_\Z.
\]
Thus we define the screening map
\begin{align*}
    \scr:\fA_\Lambda &\rightarrow \fA_{\{0\}},\\
    O&\mapsto \scr(O):=m_{\Xi}\, O\, m_{\Xi}^\dagger,
\end{align*}
where $\Xi$ is the smallest finite region that includes $\overline{\Lambda}$ and $\overline{\{0\}}$ \footnote{
The fixed-point nature of the Q-system model ensures that the screening by any number of $\chargef A$ can be reduced to a single $\chargef A$ screening, thus it suffices for us to define the single $\chargef A$ screening. },
which is graphically represented as
\[
    \scr(O) = \diagram{1}{
    \draw (0,0) -- (0,2.5);
        \foreach \i in {1,...,7}
        {
            \draw (0,.5) -- (\i*.25, 1) -- (\i*.25,1.5) -- (0,2);
        }
        \filldraw[fill = white] (.5,1) rectangle node{$O$} (1.5,1.5);
    },\quad 
    \diagram{1}{
        \draw (0,0) -- (0,2.5);
        \foreach \i in {1,...,7}
        {
            \draw (0,.5) -- (-\i*.25, 1) -- (-\i*.25,1.5) -- (0,2);
        }
        \filldraw[fill = white] (-.5,1) rectangle node{$O$} (-1.5,1.5);
    },\quad \text{or}\quad
    \diagram{1}{
        \draw (0,0) rectangle node{$O$} (1,.5);
        \draw (0,.5) -- (.5,1);
        \draw (.25,.5) -- (.5,1);
        \draw (.5,.5) -- (.5,1);
        \draw (.75,.5) -- (.5,1);
        \draw (1,.5) -- (.5,1);
        \draw (-.25,.5) -- (.5,1);
        \draw (1.25,.5) -- (.5,1);
        \draw (0,0) -- (.5,-.5);
        \draw (.25,0) -- (.5,-.5);
        \draw (.5,0) -- (.5,-.5);
        \draw (.75,0) -- (.5,-.5);
        \draw (1,0) -- (.5,-.5);
        \draw (-.25,0) -- (.5,-.5);
        \draw (1.25,0) -- (.5,-.5);
        \draw (-.25,0) -- (-.25,.5);
        \draw (1.25,0) -- (1.25,.5);
        \draw (.5,1) -- (.5,1.5);
        \draw (.5,-1) -- (.5,-.5);
    },
\]
depending on the position of $\Lambda$.

This map physically corresponds to the screening of the local operators by the condensed symmetry charges in the low energy subspace, such a screening defines a renormalization step on the space of local operators. The insertion of observables to the tensor network representation of the partition function is the so-called impurity tensor in the literature \cite{Evenbly_2016,wei2023tensornetworkrenormalizationapplication}. Along the tensor network renormalization process, the impurity tensor is constantly acted by tensors in the environment, and finally most details are screened out and only left with some finite data. Here, for a gapped system, the screening map is an idempotent $\scr^2 = \scr$, whose image $\im (\scr)=:\fZ(\chargef A)$ is called the algebra of fixed-point operators in the bulk.
Using the property of the Q-system, for $H$-type ground state $\omega$, we have
\[
    \omega = \omega\circ \scr.
\]
Thus the $H$-type ground state space is isomorphic to the space of states on $\fZ(\chargef A)$.

An explicit form of space of states on $\fZ(\chargef A)$ can be given.
In Proposition \ref{prop:fpo eq bimod map}, we show
$\fZ(\chargef A)\simeq \hom_{\chargef A|\chargef A}(\chargef A,\chargef A)\simeq Z(\chargef A)$.
We thus decompose the algebra as $\fZ(\chargef A)\simeq \oplus_{\alpha}\C p_\alpha$, with $\{p_\alpha\}$ being the set of orthogonal projectors. Suppose for any $O \in \fA_{\Z}$, $\scr(O) = \sum_\alpha c_\alpha p_\alpha$, then $\omega(O)$ is fully determined by $\{\omega\left(p_{\alpha,\{0\}}\right)\}_{\alpha}$, i.e. $\omega(O) = \sum_\alpha c_\alpha\, \omega (p_{\alpha,\{0\}})$. On the other hand, the $H$-type ground state of the Q-system model $(\Z, \sH, A, 1-m^\dagger m)$ form a simplex, the extreme points of which are $\omega_{\alpha}$, satisfying $\omega_{\alpha}(m_{i,i+1}^\dagger m_{i,i+1}) = 1$ for any $i$, and $\omega_{\alpha}\left(p_{\beta,\{0\}}\right) = \delta_{\alpha\beta}$. This establishes the bijection between the space of $H$-type ground states and the space of states on $\fZ(\chargef A)$ (See Appendix \ref{app:properties of fixed-point operators} for detailed proofs).

\begin{proposition}
    Let $\sH$ be a finite dimensional Hopf C$^\star $-algebra and $(A,m,\iota)$ a Q-system in $\Rep^\dagger(\sH)$. The Q-system model $(\Z,\sH, A, 1-m^\dagger m)$ has the unique ground state if and only if $\fZ(\chargef A)\simeq \C$, or equivalently, $\chargef A$ is a matrix algebra in $\Hilb$.
\end{proposition}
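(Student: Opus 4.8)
The plan is to reduce the statement to the already-established bijection between the $H$-type ground-state space and the space of states on the fixed-point algebra $\fZ(\chargef A)$, and then read off uniqueness from the representation theory of finite-dimensional C$^\star$-algebras. First I would invoke the structure set up just above the statement: since a Q-system model is a commuting-projector model \cite{Lan_2024}, every $H$-type ground state is an $O$-type ground state (the Cauchy--Schwarz argument around \eqref{eq:rg}), while conversely the space of $H$-type ground states is in bijection with the space of states on $\fZ(\chargef A)$. Combining this with Proposition \ref{prop:fpo eq bimod map}, which identifies $\fZ(\chargef A)\simeq Z(\chargef A)$, the problem becomes purely algebraic.

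Next I would use that $\chargef A$ is a finite-dimensional C$^\star$-algebra, hence semisimple, so $\chargef A \simeq \bigoplus_{\alpha=1}^{k} \M_{n_\alpha}$ and its center is $Z(\chargef A)\simeq \C^{k}$, where $k$ is the number of simple blocks. The space of states on the commutative algebra $\C^{k}$ is the $(k-1)$-simplex, whose extreme points correspond to the $\omega_\alpha$ constructed above; it reduces to a single point precisely when $k=1$. Thus $\fZ(\chargef A)\simeq \C$ if and only if $k=1$, which is exactly the statement that $\chargef A$ is a single matrix algebra $\M_n$ in $\Hilb$, giving the ``or equivalently'' clause.

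For the ``if'' direction, suppose $\chargef A$ is a matrix algebra. Then $k=1$, so there is a unique $H$-type ground state; by Proposition \ref{prop:matrix algebra} the $O$-type and $H$-type ground states coincide in this case, so the ground state is unique. For the ``only if'' direction I would argue by contraposition: if $\chargef A$ is not a matrix algebra then $k\geq 2$, so the simplex of states on $\fZ(\chargef A)$ has at least two extreme points $\omega_1\neq\omega_2$, giving at least two distinct $H$-type ground states; since each $H$-type ground state is also an $O$-type ground state, the ground state fails to be unique in either sense.

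The step requiring the most care is the bookkeeping between the $O$-type and $H$-type notions: the bijection with states on $\fZ(\chargef A)$ is phrased for $H$-type states, whereas the ``unique ground state'' of an SPT is morally the $O$-type notion. The two are reconciled by the inclusion $H\text{-type}\subseteq O\text{-type}$ (valid for any commuting-projector model) together with Proposition \ref{prop:matrix algebra} (equality in the matrix-algebra case): the former ensures that non-uniqueness propagates from $H$-type to $O$-type in the ``only if'' direction, while the latter rules out spurious extra $O$-type states in the ``if'' direction. Everything else is a direct appeal to the semisimplicity of finite-dimensional C$^\star$-algebras and the simplex structure of their state spaces.
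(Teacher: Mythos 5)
Your proof is correct and follows essentially the same route as the paper's: both directions reduce to the bijection between $H$-type ground states and states on $\fZ(\chargef A)$ (Proposition \ref{prop:lowest energy state}), with Proposition \ref{prop:matrix algebra} reconciling the $O$-type and $H$-type notions in the matrix-algebra case. Your only departures are stylistic — you argue the ``only if'' direction by contraposition and explicitly spell out the semisimple-decomposition argument for the ``or equivalently'' clause, which the paper leaves implicit via Proposition \ref{prop:fpo eq bimod map}.
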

\begin{proof}
    Suppose the Q-system model has a unique ground state, then the $O$-type ground state should just be the $H$-type ground state. By proposition \ref{prop:lowest energy state}, the space of $H$-type ground states is isomorphic to the state space of $\fZ(\chargef A)$. Thus $\fZ(\chargef A)\simeq \C$.

    Conversely, suppose $\chargef A$ is a matrix algebra in $\Hilb$, by propositions \ref{prop:lowest energy state} and \ref{prop:matrix algebra}, its $H$-type ground state is unique and is also the unique $O$-type ground state.
\end{proof}

We obtain:
\begin{mdframed}
    The Q-system model $(\Z, \sH, A, 1-m^\dagger m)$ is in a symmetric phase only if $\chargef A$ is a matrix algebra in $\Hilb$.
\end{mdframed}

In our construction in Section \ref{sec:D8 lattices}, we choose $\cC_{\Hilb_f}^\vee = \Hilb_{D_8}$, $\omega = -1 \in H^2(\Z_2\times \Z_2, U(1))$ and define the Q-systems as:
\begin{itemize}
    \item $A_0 = \C e $,
    \item $A_1 = \C^\omega\langle r^2, s \rangle$,
    \item $A_2 = \C^\omega\langle r^2, sr \rangle$.
\end{itemize}
These Q-systems are all simple in $\Hilb$. In fact, they represent only three Q-systems (up to Morita equivalence) in $\Hilb_{D_8}$ and become simple Q-systems in $\Hilb$, i.e. when the $D_8$-grading structure is forgotten.

\subsection{Examples of matrix algebras in \texorpdfstring{$\Hilb_{D_8}$ and $\Hilb_{S_3 \times \Z_3}$}{Rep(D8) and Rep(S3xZ3)}} \label{sec:matrix}
Q-systems in $\Hilb_G$ are classified into Morita classes, each determined by a pair $(H, \psi)$, where $H \subset G$ is a subgroup up to conjugation and $\psi \in H^2(H, U(1))$ is a 2-cocycle up to conjugation $\psi(g,h)\sim \psi(kgk^{-1},khk^{-1})$ \cite{etingof2015tensor}, and can be represented by the twisted group algebra $\C^\psi[H]$.

The subgroups of $D_8$ up to conjugation are:
\begin{align*}
     &\la e\ra = \Z_1,\ \la r^2\ra = \Z_2^{(0)},\ \la s\ra = \Z_2^{(1)},\ \la sr\ra = \Z_2^{(2)},\\
     &\la r^2,sr\ra = (\Z_2\times \Z_2)^{(2)},\ \la r^2,s\ra = (\Z_2\times \Z_2)^{(1)},\\
     & \la r\ra = \Z_4,\ D_8\,.
\end{align*}
Among these, two subgroups isomorphic to $\Z_2 \times \Z_2$ admit non-trivial cocycles, thus can be matrix algebras in $\Hilb$. Additionally, $D_8$ itself admits non-trivial cocycles since $H^2(D_8, U(1)) \simeq \Z_2$. However, the twisted group algebra $\chargef\C^\psi[D_8]$ cannot be simple in $\Hilb$, even in the presence of a non-trivial cocycle, simply because there is no matrix algebra with dimension $8$. By central extension of $D_8$ to $D_{16}$, we know that $\chargef\C^{\psi}[D_8] = \mathbb{M}_2\oplus\mathbb{M}_2$ \cite{groupprops_d8}.
Thus, the algebras $A_0$, $A_1$, and $A_2$ are the only ones that satisfy the condition of the symmetric phase. These $\Rep^\dagger(D_8)$-symmetric phases can, in fact, be understood as gauging the $D_8$ symmetry of different SSB phases with $D_8$ symmetry: the $D_8$ fully SSB phase, the partial SSB partial SPT phase with a residual $(\mathbb{Z}_2 \times \mathbb{Z}_2)^{(1)}$ symmetry, and the partial SSB partial SPT phase with a residual $(\mathbb{Z}_2 \times \mathbb{Z}_2)^{(2)}$ symmetry.

More generally, we consider the twisted group algebra $\mathbb{C}^\psi[\mathbb{Z}_p \times \mathbb{Z}_p]$, where $\psi \in H^2(\mathbb{Z}_p \times \mathbb{Z}_p, U(1)) \simeq \mathbb{Z}_p$. We choose the ``bi-character gauge'' for the cocycles:
\[
    \psi(g,h) = \lambda^{g_2 h_1},
\]
where $\lambda$ is a $p$-th root of unity. For $\lambda \neq 1$ and $p$ being prime, it is straightforward to verify that 
\[
    Z(\chargef \mathbb{C}^\psi[\mathbb{Z}_p \times \mathbb{Z}_p]) \simeq \mathbb{C}.
\]
Thus, $\chargef\mathbb{C}^\psi[\mathbb{Z}_p \times \mathbb{Z}_p]$ is a matrix algebra in $\Hilb$ whenever $p$ is prime and $\psi$ is non-trivial.

Next we consider $\Rep^\dagger(S_3 \times \mathbb{Z}_3)$-symmetric phases, which has been discussed in the talk \cite{wen2024spt}. Up to conjugation, the possible subgroups are:
\begin{align*}
    &\mathbb{Z}_1, \quad \mathbb{Z}_2, \quad \mathbb{Z}_3^{(1)}, \quad \mathbb{Z}_3^{(2)}, \quad \mathbb{Z}_3^{(\mathrm{diag})}, \\
    &\mathbb{Z}_2\times \Z_3, \quad S_3, \quad \mathbb{Z}_3 \times \mathbb{Z}_3, \quad S_3 \times \mathbb{Z}_3\,.
\end{align*}
Since all cohomology groups of these subgroups are trivial except for $\mathbb{Z}_3 \times \mathbb{Z}_3$, the only possibility for symmetric phases arises when we consider the following twisted group algebras:
\[
    A_0 := \mathbb{C}_e, \quad A_\pm := \mathbb{C}^{\psi_\pm}[\mathbb{Z}_3 \times \mathbb{Z}_3],
\]
where
\[
    \psi_\pm(g,h) = e^{\pm \frac{2\pi i}{3} g_2 h_1}.
\]

While we notice that two non-trivial 2-cocycles are identified as they are related by conjugating the non-trivial $\mathbb{Z}_2$ element $s$. The 2-cocycle $\psi_-$ is related to $\psi_+$ via conjugation
\[
\psi_-(g,h) = \psi_+(sgs^{-1},shs^{-1}).
\]
Thus
\[
\mathbb{C}^{\psi_+}[\mathbb{Z}_3 \times \mathbb{Z}_3] \simeq_{\mathrm{Mor}}\mathbb{C}^{\psi_-}[\mathbb{Z}_3 \times \mathbb{Z}_3].
\]
There are two symmetric phases corresponding to:
\[
    A_0, \quad A_{+/-},
\]
where $A_0$ describes the trivial SPT phase and $A_{+/-}$ for the non-trivial SPT.

\subsection{Edge mode}
Edge modes are associated with the ``ground state subspace'' of a half-infinite chain. In this section, we focus on the $H$-type ground state subspace of the Q-system model on a half-infinite chain. Specifically, we consider the right half-infinite chain, noting that the analysis for the left half-infinite chain is analogous.

Similar to the discussion in the previous section, we consider the map of the $H$-type ground state on any local operator for any $\Lambda\in \fS^f_{\N}$. Take a finite subregion $\Xi = \{0,1,\cdots,\max(\Lambda)+1\}$ and denote $m_{\Xi}:M\ot A^{\otimes (|\Xi|-1)}\rightarrow M$ as the multiplication from $M\ot A^{\otimes (|\Xi|-1)}$ to $M$, the module associativity implies the notation $m_\Xi$ is independent of the multiplication order and thus well-defined. Let $O\in\fA_{\Lambda}$, by subsequently apply the Equation \eqref{eq:rg}, we have
\begin{equation*}\label{eq:fpobdy}
    \omega(O) = \omega(m_\Xi^\dagger\, m_\Xi\, O\, m_\Xi^\dagger\, m_\Xi).
\end{equation*}
We thus define the screening map
\begin{align*}
    \scr:\fA_{\Lambda}&\rightarrow \fA_{\{0\}},\\
    O &\mapsto \scr(O):= m_\Xi Om_\Xi^\dagger,
\end{align*}
and is graphically represented as
\[
    \scr(O) = 
    \diagram{1}{
        \draw (0,0) -- (0,.5);
        \foreach \i in {0,1,...,5}
        {
            \draw (0,.5) -- (\i*.25, 1) -- (\i*.25,1.5) -- (0,2);
        }
        \draw (0,2) -- (0,2.5);
        \filldraw[fill = white] (0,1) rectangle node{$O$} (1,1.5);
    }\quad 
    \text{or}\quad
    \diagram{1}{
        \draw (0,0) -- (0,.5);
        \foreach \i in {0,1,...,7}
        {
            \draw (0,.5) -- (\i*.25, 1) -- (\i*.25,1.5) -- (0,2);
        }
        \draw (0,2) -- (0,2.5);
        \filldraw[fill = white] (.5,1) rectangle node{$O$} (1.5,1.5);
    }
\]
depending on the position of $\Lambda$.

The screening map physically corresponds to the screening of the boundary local operators by the condensed symmetry charges in the low energy subspace in the bulk, such a screening defines a renormalization step on the space of boundary local operators. The $\scr$ is an idempotent and its image $\im(\scr)=:\fZ_{\chargef A}(\chargef M)$ is called the algebra of fixed-point operators on the edge. 
By the property of the Q-system model, we also have
\[
    \omega = \omega \circ \scr.
\]
Thus the $H$-type ground state space is isomorphic to the space of states on $\fZ_{\chargef A}(\chargef M)$.

\begin{proposition}\label{prop:bdy fpo eq rmod map}
    Let $(A,m,\iota)$ be a Q-system in $\Hilb$ and $(M,m_M)$ an isometric right $A$-module. Then $\fZ_{\chargef A}(\chargef M):=\im(\scr)\simeq \End_{\chargef A}(\chargef M)$.
\end{proposition}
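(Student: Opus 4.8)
The plan is to run the boundary analogue of the bulk Proposition~\ref{prop:fpo eq bimod map}, replacing bimodule endomorphisms of $A$ with right-module endomorphisms of $M$; throughout I write $A,M$ and $m,m_M$ for the images $\chargef A,\chargef M$ and their structure maps in $\Hilb$. Restricting the screening to the boundary site gives the map $E:\End(M)\to\End(M)$, $E(O)=m_M\,(O\ot\id_A)\,m_M^\dagger$, and since $\scr(\fA_\Lambda)\subseteq\End(M)$ with $\scr$ idempotent, one has $\fZ_{\chargef A}(\chargef M)=\im(\scr)=\{P\in\End(M):E(P)=P\}$. First I would record that $E$ is genuinely idempotent: expanding $E^2$ and applying module associativity $m_M(\id_M\ot m)=m_M(m_M\ot\id_A)$ together with the special condition $mm^\dagger=\id_A$ collapses $E^2$ to $E$, exactly as in the bulk case.

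The core is to identify this fixed-point set with $\End_{\chargef A}(\chargef M)$ by two inclusions. For $\End_A(M)\subseteq\fZ_{\chargef A}(\chargef M)$: a right-module map $\phi$ obeys $\phi\,m_M=m_M(\phi\ot\id_A)$, so $E(\phi)=m_M(\phi\ot\id_A)m_M^\dagger=\phi\,m_Mm_M^\dagger=\phi$ by the isometry $m_Mm_M^\dagger=\id_M$; hence every module endomorphism is a fixed point. For the converse, I would take $P$ with $E(P)=P$ and compute $m_M(P\ot\id_A)=m_M(E(P)\ot\id_A)$, pushing $P$ through the multiplications by module associativity to reduce the desired identity $P\,m_M=m_M(P\ot\id_A)$ to the single relation $(\id_M\ot m)(m_M^\dagger\ot\id_A)=m_M^\dagger m_M$. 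Granting this relation, $m_M(P\ot\id_A)=m_M(P\ot\id_A)\,m_M^\dagger m_M=E(P)\,m_M=P\,m_M$, so $P\in\End_A(M)$. Having shown $\fZ_{\chargef A}(\chargef M)=\End_{\chargef A}(\chargef M)$ as subsets of $\End(M)$, and since $\End_A(M)$ is closed under operator composition, the multiplication on the fixed-point algebra is just composition, yielding the isomorphism (indeed equality as subalgebras of $\End(M)$). As a consistency check in $\Hilb$, where $A=\oplus_k\M_{n_k}$ acts on $M$ via a $\star$-representation $\rho$, the map $E$ is the conditional expectation (``twirl'') $O\mapsto\sum_\mu\rho(u_\mu)\,O\,\rho(u_\mu)^\dagger$ built from the Q-system coform, whose image is precisely the commutant $\rho(A)'=\End_A(M)$.

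I expect the main obstacle to be the module Frobenius relation $(\id_M\ot m)(m_M^\dagger\ot\id_A)=m_M^\dagger m_M$, equivalently the statement that the coaction $m_M^\dagger$ is itself a morphism of right $A$-modules. This is the module analogue of the Frobenius relation for $A$ used in the bulk proposition, and it does \emph{not} follow from module associativity and the isometry condition alone: a direct computation shows that $X:=(\id_M\ot m)(m_M^\dagger\ot\id_A)$ and $m_M^\dagger m_M$ are both idempotents satisfying $X\,(m_M^\dagger m_M)=(m_M^\dagger m_M)\,X=m_M^\dagger m_M$, so equality of the two requires the extra input of the symmetric special Frobenius (Q-system) structure of $A$ to force their images to coincide. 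I would supply this either by citing the corresponding property from \cite{Lan_2024} or by the graphical/$\Hilb$ twirl argument above, verifying the self-adjointness of $X$ and that $\im X=\im(m_M^\dagger m_M)$.
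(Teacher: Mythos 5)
Your strategy is correct and is, in substance, the route the paper itself relies on: the paper never writes out a proof of this proposition at all --- it is stated as the boundary analogue of Proposition \ref{prop:fpo eq bimod map}, whose graphical proof appears in Appendix \ref{app:properties of fixed-point operators}, and the domain-wall generalization in the appendix is likewise only asserted. Your two inclusions, $\End_{\chargef A}(\chargef M)\subseteq\im(\scr)$ via $m_Mm_M^\dagger=\id_M$, and $\im(\scr)\subseteq\End_{\chargef A}(\chargef M)$ via pushing the fixed-point property through module associativity, are exactly the algebraic rendering of that graphical argument. Where you add genuine value is in isolating the module Frobenius relation $(\id_M\ot m)(m_M^\dagger\ot\id_A)=m_M^\dagger m_M$ as the crux of the reverse inclusion, and in observing that it is not formal: module associativity and the two isometry conditions only show that $X:=(\id_M\ot m)(m_M^\dagger\ot\id_A)$ and $P:=m_M^\dagger m_M$ are idempotents with $XP=PX=P$, i.e.\ $P\le X$. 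The paper's by-analogy proof silently uses this relation when sliding a bulk leg through the boundary multiplication; in the bulk case the corresponding slide is licensed by the Frobenius axiom of $A$ itself, whereas the paper's module axioms only demand a partial isometry, so the module-level relation is an honest lemma that the paper glosses over.

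Your deferred lemma is true and your sketch can be completed; it is worth recording how, since one of your two proposed escape routes is shaky. First, using only the Frobenius relation of $A$, module associativity, and $m_Mm_M^\dagger=\id_M$, a direct computation gives $X^\dagger X=X$, whence $X=X^\dagger$ is an orthogonal projection; with $XP=PX=P$ this yields $P\le X$. Equality then follows from a trace count: the balanced-duality axiom (condition 3 of the paper's Q-system definition) gives $\mathrm{tr}_A\bigl(m(a\ot-)\bigr)=\iota^\dagger(a)$, hence $(\id_M\ot\mathrm{tr}_A)(X)=(\id_M\ot\iota^\dagger)m_M^\dagger=\bigl(m_M(\id_M\ot\iota)\bigr)^\dagger=\id_M$, so $\mathrm{tr}(X)=\dim(\chargef M)=\mathrm{tr}(P)$ and therefore $X=P$. (Equivalently, decompose $\chargef A$ into matrix blocks and check that every coisometric unital module is a unitary twist of a standard one.) By contrast, your ``twirl'' consistency check is mildly circular as stated: writing $\scr$ as $O\mapsto\sum_\mu\rho(u_\mu)O\rho(u_\mu)^\dagger$ with image the commutant $\rho(A)'$ presupposes that the action is a $\star$-representation and that $m_M^\dagger$ takes the standard form built from the $\star$-structure, which is essentially the content of the lemma being proved. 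So either finish with the trace argument above or cite the Q-system module literature; with that, your proof is complete and is more rigorous than the paper's own treatment.
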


When $\chargef A\simeq W\ot W^*$ is a matrix algebra, take $A$ itself as the boundary condition, then the algebra of fixed-point operators on the edge is $\End_{\chargef A}(\chargef A) = \End(W)$. The Hilbert space $W$ is the so-called edge mode. In other words, the edge mode is the effective ground state subspace of the half-infinity chain, which can be reconstructed from the representation of the algebra of boundary fixed-point operators:
\begin{mdframed}
    The edge mode of $(\N,\sH, A, 1-m^\dagger m, M, 1-m_M^\dagger m_M)$ is the minimal faithful module over the algebra of fixed-point operators on edge $\fZ_{\chargef A}(\chargef M)$.
\end{mdframed}
There is a natural bulk-to-edge map that sends the fixed-point operators in the bulk to that on the edge
\begin{align*}
    C_r: \fZ(\chargef A)&\rightarrow \fZ_{\chargef A}(\chargef M),\\
    f&\mapsto m_M(\id_M\ot f)m_M^\dagger,
\end{align*}
graphically represented as
\[
    \diagram{1}{
        \draw (0,-1) -- (0,1);
        \filldraw[fill = white] (-.2,-.2) rectangle node{$f$} (.2,.2);
    }
    \mapsto 
    \diagram{1}{
        \draw (0,-1) -- (0,1);
        \draw (0,-.75) -- (.5,-.25) -- (.5,.25) --  (0,.75);
        \filldraw[fill = white] (-.2+.5,-.2) rectangle node{$f$} (.2+.5,.2);
    }.
\]

We can prove that $C_r$ factors through the center of $\fZ_{\chargef A}(\chargef M)$, i.e. all the elements in image of $C_r$ commute with those in $\fZ_{\chargef A}(\chargef M)$:
\[
    C_r = \bigg(\fZ(\chargef A)\rightarrow Z(\fZ_{\chargef A}(\chargef M))\hookrightarrow \fZ_{\chargef A}(\chargef M)\bigg).
\]
We prove it by considering the graph for all $f\in\fZ(\chargef A)$ and $g\in \fZ_{\chargef A}(\chargef M)$:
\[
    \diagram{1}{
        \draw (0,-2) -- (0,1);
        \draw (0,-.75) -- (.5,-.25) -- (.5,.25) --  (0,.75);
        \filldraw[fill = white] (-.2+.5,-.2) rectangle node{$f$} (.2+.5,.2);
        \filldraw[fill = white] (-.2,-1.5) rectangle node{$g$} (.2,-1.1);
    }
    =
    \diagram{1}{
        \draw (0,-2) -- (0,1);
        \draw (0,-1.75) -- (.5,-1.25) -- (.5,.25) --  (0,.75);
        \filldraw[fill = white] (-.2+.5,-.7) rectangle node{$f$} (.2+.5,-.3);
        \filldraw[fill = white] (-.2,-.7) rectangle node{$g$} (.2,-.3);
    }
    =
    \diagram{1}{
        \draw (0,-2) -- (0,1);
        \draw (0,-.75-1) -- (.5,-.25-1) -- (.5,.25-1) --  (0,.75-1);
        \filldraw[fill = white] (-.2+.5,-.2-1) rectangle node{$f$} (.2+.5,.2-1);
        \filldraw[fill = white] (-.2,.1) rectangle node{$g$} (.2,.5);
    }.
\]
This is a special case of the general mathematical principle called the bulk-edge correspondence for topological phases proposed in \cite{Kitaev_2012,Kong_2017}.

In Appendix \ref{app:generic edge modes}, the general edge mode $M$ between $A$ and $B$ phases is also considered.

The stability of the ground-state subspace requires that a symmetric operator, when restricted to this subspace, be proportional to the identity operator. 
Therefore, the edge mode of $(\N, \sH, A, 1-m^\dagger m, M,1-m_M^\dagger m_M)$ is stable against symmetric perturbations if and only if $\dim \left(\cC_{\Hilb_f}^\vee\right)_A\left(M,M\right) = 1$, i.e. $M$ is simple in $\left(\cC_{\Hilb_f}^\vee\right)_{A}$. And the edge mode $N$ between $A$-phase and $B$-phase is stable against symmetric perturbations if and only if $N$ is simple in ${}_{A}\left(\cC_{\Hilb_f}^\vee\right)_{B}$.

\section{Categorical aspects} \label{sec:auto}
Here, we investigate how the monoidal equivalence in the symmetry category $\cC$ induces an algebra in the charge category $\cC_{\Hilb_f}^\vee$, realizing a phase with the same charge category. Then we specialize to the case of an SPT phase and propose that to characterize an SPT phase with non-invertible symmetry, we need to consider its charge category in addition to the fiber functor of the symmetry category.

\begin{theorem} \label{thm:autoequivalence}
Let $\cC$ be a fusion category and $\cM$ be an indecomposable finite semisimple left $\cC$-module category. Let the dual category be ${\cC_\cM^\vee} :=\Fun_\cC(\cM,\cM)$. Suppose $\pi:\cC\to\cC$ is a monoidal equivalence, then $\pi$ induces an algebra $A_\pi$ in $\cC_\cM^\vee$ such that $_{A_\pi}(\cC_\cM^\vee)_{A_\pi} \cong \cC_\cM^\vee$;
\end{theorem}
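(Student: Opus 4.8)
The plan is to produce $A_\pi$ by twisting the module category $\cM$ along $\pi$, and to deduce the bimodule equivalence from the invertibility of an associated category of module functors, rather than from any ad hoc algebra identity.

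First I would twist the action. Define $\cM^\pi$ to be the abelian category $\cM$ equipped with the left $\cC$-action $c\triangleright_\pi m:=\pi(c)\triangleright m$. Since $\pi$ is a monoidal equivalence and $\cM$ is indecomposable finite semisimple, $\cM^\pi$ is again an indecomposable finite semisimple left $\cC$-module category; when $\cM$ is the rank-one module category attached to the fiber functor $f$, the twist $\cM^\pi$ is the one attached to $f\circ\pi$. The next point is that twisting does not change the dual category: a $\cC$-module endofunctor for the action $\triangleright_\pi$ is the same as one for $\triangleright$, the defining constraint merely being reindexed through the equivalence $\pi$, so one obtains a monoidal equivalence $\cC_{\cM^\pi}^\vee\cong\cC_\cM^\vee=:\cD$. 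This is exactly the monoidal self-equivalence of the charge category that the theorem promises to realize on the lattice.

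Then I would form the functor category $\cP:=\Fun_\cC(\cM,\cM^\pi)$. Post-composition by $\cC_{\cM^\pi}^\vee$ and pre-composition by $\cC_\cM^\vee$ make $\cP$ a $(\cC_{\cM^\pi}^\vee,\cC_\cM^\vee)$-bimodule category, i.e. a $(\cD,\cD)$-bimodule category after the identification above. The crucial claim is that $\cP$ is an \emph{invertible} $\cD$-bimodule category, with inverse $\Fun_\cC(\cM^\pi,\cM)$: composition of module functors should induce equivalences $\Fun_\cC(\cM,\cM^\pi)\boxtimes_\cD\Fun_\cC(\cM^\pi,\cM)\xrightarrow{\ \circ\ }\Fun_\cC(\cM,\cM)=\cD$ and its mirror on the other side, which hold because $\cM$ and $\cM^\pi$ are indecomposable and hence realize Morita equivalences between $\cC$ and $\cD$ (standard module-category theory, \cite{etingof2015tensor}). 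Invertibility is precisely where the hypothesis that $\pi$ is an equivalence, not merely a functor, is used, and I expect this to be the main obstacle: one must coherently produce the unit and counit of the inversion, equivalently show that the canonical functor $\cD\to\Fun_\cD(\cP,\cP)$ given by the right action is an equivalence of fusion categories.

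Finally I would define $A_\pi$ and conclude. By Ostrik's theorem every indecomposable semisimple module category over a fusion category is the category of modules over an algebra; applied to $\cP$ viewed as a one-sided (say left) $\cD$-module category this yields $A_\pi:=\underline{\End}_\cD(G)$, the internal endomorphism algebra of a generator $G$ of $\cP$, with $\Mod_\cD(A_\pi)\cong\cP$. This is the algebra induced by $\pi$, well defined up to Morita equivalence, which is all the statement requires. Its bimodule category is the dual of $\cD$ relative to $\cP$, namely ${}_{A_\pi}(\cD)_{A_\pi}=\BMod_{A_\pi}(\cD)\cong\Fun_\cD(\cP,\cP)=\cD_\cP^\vee$; because $\cP$ is an invertible $\cD$-bimodule its right $\cD$-action exhibits $\cD_\cP^\vee\cong\cD^\rev$, and since $\cD$ is unitary one has $\cD^\rev\cong\cD$ via the duality functor, giving ${}_{A_\pi}(\cC_\cM^\vee)_{A_\pi}\cong\cC_\cM^\vee$. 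I would close by recording compatibility with the rest of the paper: under the forgetful functor $\chargef A_\pi$ is a matrix algebra (the projective charge $W_\pi$ giving $A_\pi\cong W_\pi\otimes W_\pi^*$ once the grading is forgotten), so by the earlier propositions $A_\pi$ realizes a symmetric (SPT) phase whose defect category is again $\cC_\cM^\vee$.
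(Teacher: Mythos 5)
Your proposal follows essentially the same route as the paper's proof: twist the module category to ${}_\pi\cM$ (your $\cM^\pi$), observe that the dual category is unchanged, realize $\cP=\Fun_\cC(\cM,{}_\pi\cM)$ as an invertible $\cC_\cM^\vee$-bimodule via categorical Morita theory, and extract $A_\pi$ as an internal endomorphism algebra so that $\cP\cong{}_{A_\pi}(\cC_\cM^\vee)$. The paper packages the invertibility through the 2-equivalence $F_\cM:\cN\mapsto\Fun_\cC(\cM,\cN)$ between $\cC$-modules and right $\cC_\cM^\vee$-modules, which is the same content as your $\boxtimes_\cD$-composition equivalences.

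The one genuine problem is your endgame. You identify ${}_{A_\pi}(\cD)_{A_\pi}\cong\Fun_\cD(\cP,\cP)$, then use invertibility to get $\cD^\rev$, and finally invoke unitarity to conclude $\cD^\rev\cong\cD$. This is not available under the stated hypotheses: the theorem is for an arbitrary fusion category $\cC$, so $\cD=\cC_\cM^\vee$ need not be unitary, and for a general fusion category $\cD^\rev\cong\cD^{\mathrm{op}}$ is \emph{not} monoidally equivalent to $\cD$. The appeal to unitarity is in fact masking a bookkeeping slip: with a consistent composition convention, the equivalence ${}_{A}\cD_{A}\to\Fun_\cD(\cD_A,\cD_A)$, $B\mapsto(-\ot_A B)$, is itself monoidal only up to $\rev$ (since $F_{B'}\circ F_B=F_{B\ot_A B'}$), so the two reversals cancel and you land on $\cD$ with no unitarity needed. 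The paper sidesteps this entirely by working on the right-module side: ${}_{A_\pi}(\cC_\cM^\vee)\cong F_\cM({}_\pi\cM)$ as right $\cC_\cM^\vee$-modules gives ${}_{A_\pi}(\cC_\cM^\vee)_{A_\pi}\cong\Fun_{(\cC_\cM^\vee)^\rev}\bigl(F_\cM({}_\pi\cM),F_\cM({}_\pi\cM)\bigr)$, and the 2-equivalence identifies this with $\Fun_\cC({}_\pi\cM,{}_\pi\cM)\cong\cC_\cM^\vee$ on the nose. You should either adopt that route or redo your final step tracking both reversals; as written, your proof only covers the unitary case and, read literally, proves an equivalence with $\cD^\rev$ rather than $\cD$.
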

\begin{proof}
$\cM$ induces a 2-equivalence between the 2-category of left $\cC$-module categories and the 2-category of right $\cC_\cM^\vee$-module categories, known as the categorical Morita equivalence. More explicitly, denote the 2-equivalence functor by $F_\cM$, then on objects, given a $\cC$-module $\cN$, 
$F_\cM(\cN)=\Fun_\cC(\cM,\cN)$ is a right $\cC_\cM^\vee$-module where the module action is simply the composition of module functors. On 1-morphisms and 2-morphisms (i.e. $\cC$-module functors and natural transformations), for any pair of $\cC$-modules $\cP,\cQ$, we have equivalence functors $\Fun_\cC(\cP,\cQ)\cong \Fun_{(\cC_\cM^\vee)^\rev}(F_\cM(\cP),F_\cM(\cQ))$, where $(\cC_\cM^\vee)^\rev$ is the same category as $\cC_\cM^\vee$ but with opposite tensor product. This can be expanded to
\begin{multline}\label{eq:functor identity}
    \Fun_\cC(\cP,\cQ)\\\cong \Fun_{\Fun_\cC(\cM,\cM)^\rev}(\Fun_\cC(\cM,\cP),\Fun_\cC(\cM,\cQ))
\end{multline}
and the equivalence is again induced by the post-composition of module functors. 

Now consider the monoidal equivalence $\pi:\cC\to\cC$, we define a new left $\cC$-module $_\pi\cM$ the module action of which is twisted by $\pi$: $c\rt[_\pi\cM] m:=\pi(c) \rt[\cM] m$. Automatically, we have $\Fun_\cC(\cM,\cM)\cong \Fun_\cC({}_\pi\cM,{}_\pi\cM)$, where module functors $(F,s_{-,-})$ are mapped to $(F,s_{\pi(-),-})$ and module natural transformations are mapped to themselves. Under the categorical Morita equivalence $F_\cM$, $F_\cM(_\pi\cM)=\Fun_\cC(\cM, {}_\pi\cM)$ and consequently
\begin{align*}
    \cC_\cM^\vee&=\Fun_\cC(\cM,\cM)
    \\&
    \cong \Fun_\cC({}_\pi\cM,{}_\pi\cM)
    \\&
    \cong \Fun_{(\cC_\cM^\vee)^\rev}(F_\cM(_\pi\cM),F_\cM(_\pi\cM)).
\end{align*} 
Therefore, $F_\cM(_\pi\cM)$ is an invertible $\cC_\cM^\vee$-$\cC_\cM^\vee$-bimodule. 

For any module functor $Y\in \Fun_\cC(\cM,{}_\pi\cM)$, making use of the following internal hom adjunction
\[ \Fun_\cC(\cM,\cM)(X,[Y,Y])\cong \Fun_\cC(\cM,{}_\pi\cM)(YX,Y),\]
we can construct an algebra $[Y,Y]$ in $\cC_\cM^\vee$ such that $_{[Y,Y]}(\cC_\cM^\vee)\cong F_\cM(_\pi\cM)$. Choose any algebra $A_\pi$ Morita equivalent to $[Y,Y]$ in $\cC_\cM^\vee$, we then have
$_{A_\pi}(\cC_\cM^\vee)\cong F_\cM(_\pi\cM)$ and $_{A_\pi}(\cC_\cM^\vee)_{A_\pi}\cong \Fun_{(\cC_\cM^\vee)^\rev}(F_\cM(_\pi\cM),F_\cM(_\pi\cM))\cong \cC_\cM^\vee$. 
\end{proof}

\begin{figure}
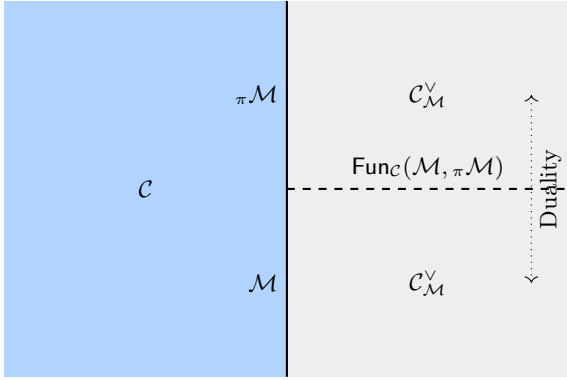

    \centering
    \diagram{2.5}{
        \fill[YaleLightBlue] (0,0) rectangle node[black]{$\cC$} (1.5,2);
        \fill[YaleLightGrey] (1.5,0) rectangle node[black]{$\cC_\cM^\vee$} (3,1);
        \fill[YaleLightGrey] (1.5,1) rectangle node[black]{$\cC_{\cM}^\vee$} (3,2);
        \draw[thick] (1.5,0) -- node[left]{$\cM$} (1.5,1);
        \draw[thick] (1.5,1) -- node[left]{${}_{\pi}\cM$} (1.5,2);
        \draw[dashed,thick] (1.5,1) -- node[above]{$\Fun_{\cC}(\cM,{}_{\pi}\cM)$} (3,1);
        \draw[dotted, <->] (2.8,.5) -- node[below,rotate = 90]{Duality} (2.8,1.5);
    }
    \caption{Illustration of the categorical relationship of SPT phases. $\cC$ represents the symmetry category, $\cC_\cM^\vee$ represents the charge category, and $\Fun_\cC(\cM,{}_{\pi}\cM)$ represents the category of projective charges. Left $\cC$ module $\cM$ describes the lattice construction. For tensor product Hilbert space, $\cM = \Hilb$.}
    \label{fig:2-moreq}
\end{figure}

Next, we prove a parallel result when considering $\Hilb$ as a $\cC$-module.
Firstly, we have monoidal equivalence
\begin{align*}
    \Hilb &\cong \Fun(\Hilb,\Hilb)\\
    W &\mapsto W\otimes -\\
    \beta &\mapsto \beta\otimes \id_-,
\end{align*}
which also means that any natural transformation $W\otimes -\Rightarrow W'\otimes -$ must be of the form $\beta\otimes\id_-$ for some linear map $\beta:W\to W'$.

A $\cC$-module structure on $\Hilb$ is the same as a monoidal functor $f:\cC\to\Hilb$. To emphasize the role of the monoidal functor, we denote its corresponding $\cC$-module by $\Hilb_f$. Then note that there is a natural functor that forgets the module functor structure.
\begin{align*}
    \mathrm{fgt}:\cC_{\Hilb_f}^\vee&=\Fun_\cC(\Hilb_f,\Hilb_f)\\&\to \Fun(\Hilb,\Hilb)\cong \Hilb.
\end{align*}

\begin{theorem} \label{lem:nontrivial fb}
    Consider two monoidal functors $f,h:\cC\to\Hilb$. There exist an algebra $A$ in $\cC^\vee_{\Hilb_f}$ such that
    \begin{enumerate}
       \item $_{A}(\cC_{\Hilb_f}^\vee)_{A}\cong \cC_{\Hilb_{h}}^\vee$;
        \item $\mathrm{fgt}(A)$ is a matrix algebra, or equivalently $\dim Z(\mathrm{fgt}(A))=1$.
    \end{enumerate}
\end{theorem}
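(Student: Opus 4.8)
The plan is to run the argument of Theorem \ref{thm:autoequivalence} almost verbatim, with the twisted module ${}_\pi\cM$ replaced by the second fiber functor's module $\Hilb_h$, and then to adjoin a separate computation establishing the matrix-algebra property (2), which is the genuinely new content here. Throughout I write $\cD:=\cC^\vee_{\Hilb_f}=\Fun_\cC(\Hilb_f,\Hilb_f)$ and recall that a fiber functor $h:\cC\to\Hilb$ is the same datum as a $\cC$-module structure on $\Hilb$, giving the indecomposable module category $\Hilb_h$; both $\Hilb_f$ and $\Hilb_h$ have a single simple object.

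First I would treat claim (1). Applying the categorical Morita equivalence $F_{\Hilb_f}(-)=\Fun_\cC(\Hilb_f,-)$ to $\Hilb_h$ produces $\cK:=\Fun_\cC(\Hilb_f,\Hilb_h)$, a $\cD$-module. Specializing the 2-functoriality identity \eqref{eq:functor identity} to $\cP=\cQ=\Hilb_h$ and $\cM=\Hilb_f$ gives
\[
\cC^\vee_{\Hilb_h}=\Fun_\cC(\Hilb_h,\Hilb_h)\cong \Fun_{\cD^\rev}(\cK,\cK),
\]
so $\cK$ is an invertible $\cD$-$\cC^\vee_{\Hilb_h}$-bimodule. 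Choosing a generator $Y\in\cK$ and forming the internal end $A:=[Y,Y]$ in $\cD$ (which is a Q-system in the unitary setting), one gets $\cK\cong\Mod_\cD(A)$ and, exactly as in Theorem \ref{thm:autoequivalence},
\[
{}_A(\cC^\vee_{\Hilb_f})_A\cong \Fun_{\cD^\rev}(\cK,\cK)\cong \cC^\vee_{\Hilb_h},
\]
which is (1).

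For claim (2) the key observation is that $\mathrm{fgt}:\cD\to\Hilb$ is the fiber functor of the charge category given by evaluation at the generating object $\C\in\Hilb_f$, i.e.\ the monoidal functor whose associated $\cD$-module is $\Hilb_{\mathrm{fgt}}$. I would then compute the module category of the forgotten algebra by base change along $\mathrm{fgt}$,
\[
\Mod_\Hilb\!\big(\mathrm{fgt}(A)\big)\cong \cK\boxtimes_\cD \Hilb_{\mathrm{fgt}}.
\]
The point is that $\Hilb_{\mathrm{fgt}}$ corresponds, under $F_{\Hilb_f}$, to the regular $\cC$-module $\cC$, so that $\Hilb_{\mathrm{fgt}}\cong\Fun_\cC(\cC,\Hilb_f)$ as a $\cD$-module. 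Applying the composition law for functor-category bimodules, $\Fun_\cC(\Hilb_f,\Hilb_h)\boxtimes_\cD\Fun_\cC(\cC,\Hilb_f)\cong\Fun_\cC(\cC,\Hilb_h)$, together with $\Fun_\cC(\cC,\Hilb_h)\cong\Hilb_h\cong\Hilb$ (module functors out of the regular module are evaluation at the unit), this collapses to a category with a single simple object. Hence $\mathrm{fgt}(A)$ is a finite-dimensional semisimple $\star$-algebra with a unique simple module, i.e.\ a matrix algebra, equivalently $\dim Z(\mathrm{fgt}(A))=1$.

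The hard part will be claim (2), and within it the identification of $\Hilb_{\mathrm{fgt}}$ with the regular $\cC$-module under $F_{\Hilb_f}$ and the precise bookkeeping of the base-change equivalence $\Mod_\Hilb(\mathrm{fgt}(A))\cong \cK\boxtimes_\cD\Hilb_{\mathrm{fgt}}$, where one must track the left/right actions and the $\rev$'s carefully so that the composition law applies with matching middle variable. One should also confirm that $A=[Y,Y]$ forgets to a semisimple $\star$-algebra so that the equivalences ``single simple module'' $\Leftrightarrow$ ``matrix algebra'' $\Leftrightarrow$ ``$\dim Z=1$'' are available; this follows from $A$ being a Q-system in the unitary category $\cD$. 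As a cross-check I would also pursue the more hands-on route of showing directly that evaluating the internal end at the generator yields the ordinary endomorphism algebra, $\mathrm{fgt}(A)=[Y,Y](\C)\cong\End\big(Y(\C)\big)$, which is manifestly a matrix algebra and sidesteps the Morita bookkeeping at the cost of a direct internal-hom computation.
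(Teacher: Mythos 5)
Your proposal is correct, and its part (1) is essentially the paper's own argument: both instantiate the Morita-theoretic proof of Theorem \ref{thm:autoequivalence} with $\Hilb_h$ in place of ${}_\pi\cM$, taking the internal end $A=[Y,Y]$ of a generator $Y$ of $\cK=\Fun_\cC(\Hilb_f,\Hilb_h)$.

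For part (2), however, you take a genuinely different route. The paper makes the algebra completely explicit: writing an object of $\Fun_\cC(\Hilb_f,\Hilb_h)$ as $W\otimes -$ with module structure $\beta_c$, it verifies by hand that $W^*\otimes -$ with structure maps $(\beta_{c^*})^*$ is a simultaneous left and right adjoint in $\Fun_\cC(\Hilb_h,\Hilb_f)$, so the internal end is literally $A=W^*\otimes W\otimes -$; then $\mathrm{fgt}(A)=W^*\otimes W\cong\End(W)$ is a matrix algebra with no further machinery, and an arbitrary Morita representative is handled by the observation that the linear monoidal functor $\mathrm{fgt}$ preserves Morita equivalence, so its image remains Morita trivial in $\Hilb$, i.e.\ a matrix algebra. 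Your closing ``cross-check''---$\mathrm{fgt}([Y,Y])\cong\End(Y(\C))$---is therefore not a side remark: it \emph{is} the paper's proof of (2). Your primary route instead computes $\Mod_{\Hilb}(\mathrm{fgt}(A))\cong\cK\boxtimes_{\cD}\Hilb_{\mathrm{fgt}}\cong\Fun_\cC(\cC,\Hilb_h)\cong\Hilb$; this is valid, but it leans on two imported theorems (the base-change equivalence identifying ${}_{A}\cD\boxtimes_\cD\cM$ with $A$-modules internal to $\cM$, and the composition law $\Fun_\cC(\cM,\cN)\boxtimes_{\cC_\cM^\vee}\Fun_\cC(\cL,\cM)\cong\Fun_\cC(\cL,\cN)$ for semisimple module categories over a fusion category), plus the identification of $\Hilb_{\mathrm{fgt}}$ with $\Fun_\cC(\cC,\Hilb_f)$, and it still needs semisimplicity of $\mathrm{fgt}(A)$---which, as you correctly note, follows from $A$ being a Q-system---before ``unique simple module'' upgrades to ``matrix algebra.'' What each buys: the paper's computation is elementary and self-contained, and it exhibits $W$ concretely as the projective charge (edge mode) used throughout the rest of the paper; yours avoids the explicit adjunction check, is independent of the specific presentation $Y=W\otimes-$ of objects of $\cK$, and is the form of the argument that would survive in settings where the acted-on module category is not $\Hilb$.
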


\begin{proof}
    Let the linear dual of $W$ be $W^*$, and $W^*\ot -$ is automatically a left and right adjoint to $W\otimes-$ as linear functors $\Hilb\to\Hilb$. We want to show that such adjointness lifts to $\cC$-module functors. Suppose 
    $W\ot -$ equipped with natural isomorphisms $\beta_c\ot\id_-:W\otimes f(c)\otimes -\cong h (c)\otimes W \otimes -$ is a $\cC$-module functor in $\Fun_\cC(\Hilb_f,\Hilb_h)$. Then it is straightforward to check that $W^*\ot -$ equipped with natural isomorphisms $(\beta_{c^*})^*\ot\id_-: W^*\otimes h(c) \otimes -\cong f(c)\otimes W^*\otimes -$ is a $\cC$-module functor in $\Fun_\cC(\Hilb_h,\Hilb_f)$ and is moreover  left and right adjoint to $W\otimes -$. This fact further implies the adjunction
    \begin{multline*}
        \Fun_\cC(\Hilb_f,\Hilb_f)(X,W^*\otimes W\otimes -)\cong \\\Fun_\cC(\Hilb_f,\Hilb_h)(W\otimes X(-),W\otimes -).
    \end{multline*}
    We then know that 
    \begin{equation*}
        \Fun_\cC(\Hilb_f,\Hilb_h)\cong {}_{W^*\ot W\ot -} (\cC_{\Hilb_f}^\vee).
    \end{equation*}
    
    Clearly $\mathrm{fgt} (W^*\otimes W\otimes -)=W^*\otimes W\cong \End(W)$ is a matrix algebra. Since Morita equivalence is preserved by the linear monoidal functor $\mathrm{fgt}$, we conclude that for any algebra $A$ Morita equivalent to $W^*\ot W\ot -$ in $\cC_{\Hilb_f}^\vee$, $\mathrm{fgt}{A}$ is Morita equivalent to a matrix algebra in $\Hilb$, thus must also be a matrix algebra.
\end{proof}

\begin{corollary}
    Given a fusion category symmetry $\cC$ equipped with a fiber functor $f:\cC\to \Hilb$, and a monoidal equivalence $\pi:\cC\to\cC$. We can take an algebra $A_\pi$ in $\cC_{\Hilb_f}^\vee$ via $_{A_\pi}(\cC_{\Hilb_f}^\vee)\cong\Fun_\cC(\Hilb_f,\Hilb_{f \pi})$. Then we have the following
    \begin{itemize}
        \item[--] $_{A_\pi}(\cC_{\Hilb_f}^\vee)_{A_\pi}\cong \cC_{\Hilb_{f\pi}}^\vee\cong \cC_{\Hilb_f}^\vee$;
        \item[--] $\mathrm{fgt}(A_\pi)$ is a matrix algebra, or equivalently $\dim Z(\mathrm{fgt}(A_\pi))=1$.
    \end{itemize}
\end{corollary}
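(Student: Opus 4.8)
The plan is to obtain this corollary as a direct specialization of Theorem \ref{lem:nontrivial fb}, taking the second monoidal functor to be $h := f\circ\pi$. The first step is to verify that $f\pi$ is again a fiber functor: since $\pi:\cC\to\cC$ is a monoidal equivalence and $f:\cC\to\Hilb$ is a monoidal (fiber) functor, the composite $f\pi$ is monoidal, and it inherits exactness and faithfulness from $f$ because $\pi$ is an equivalence. Hence $(\cC, f\pi)$ is a legitimate anomaly-free fusion category symmetry, and Theorem \ref{lem:nontrivial fb} applies with $f$ and $h=f\pi$.

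Applying the theorem then produces an algebra $A_\pi$ in $\cC_{\Hilb_f}^\vee$ with the defining property $_{A_\pi}(\cC_{\Hilb_f}^\vee)\cong \Fun_\cC(\Hilb_f,\Hilb_{f\pi})$, and simultaneously yields both conclusions in the form $_{A_\pi}(\cC_{\Hilb_f}^\vee)_{A_\pi}\cong \cC_{\Hilb_{f\pi}}^\vee$ and $\mathrm{fgt}(A_\pi)$ a matrix algebra (equivalently $\dim Z(\mathrm{fgt}(A_\pi))=1$). The only piece not immediately supplied by the theorem is the final identification $\cC_{\Hilb_{f\pi}}^\vee\cong \cC_{\Hilb_f}^\vee$ appearing in the first bullet.

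To establish that identification I would reuse the observation from the proof of Theorem \ref{thm:autoequivalence}: the twisted module $_\pi\cM$, defined by $c\rt[_\pi\cM] m := \pi(c)\rt[\cM]m$, satisfies $\Fun_\cC(\cM,\cM)\cong\Fun_\cC(_\pi\cM,_\pi\cM)$, with module functors $(F,s_{-,-})$ sent to $(F,s_{\pi(-),-})$ and module natural transformations sent to themselves. Specializing $\cM=\Hilb_f$ and noting that the twisted module $_\pi\Hilb_f$ is precisely $\Hilb_{f\pi}$ (both carry the action $c\cdot m = f(\pi(c))\otimes m$), this gives $\cC_{\Hilb_f}^\vee = \Fun_\cC(\Hilb_f,\Hilb_f)\cong \Fun_\cC(\Hilb_{f\pi},\Hilb_{f\pi}) = \cC_{\Hilb_{f\pi}}^\vee$, completing the chain of equivalences.

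The construction is thus essentially bookkeeping layered on top of Theorem \ref{lem:nontrivial fb}, and I do not anticipate a genuine obstacle. The one point demanding care is keeping the two roles of $\pi$ straight: once as the reparametrization of the fiber functor $h=f\pi$ fed into the theorem, and once as the module-twisting automorphism underlying $_\pi\Hilb_f$. Verifying that these two uses refer to the same module action $c\cdot m = f(\pi(c))\otimes m$ is what makes the final isomorphism $\cC_{\Hilb_{f\pi}}^\vee\cong\cC_{\Hilb_f}^\vee$ valid, and is the one spot I would write out explicitly rather than leave implicit.
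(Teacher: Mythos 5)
Your proposal is correct and follows essentially the same route the paper intends: the paper states this corollary without a separate proof, treating it as the specialization $h=f\pi$ of Theorem \ref{lem:nontrivial fb} supplemented by the identification $\cC^\vee_{\Hilb_{f\pi}}\cong\cC^\vee_{\Hilb_f}$, which comes from the twisted-module observation ${}_\pi\Hilb_f=\Hilb_{f\pi}$ together with $\Fun_\cC(\cM,\cM)\cong\Fun_\cC({}_\pi\cM,{}_\pi\cM)$ in the proof of Theorem \ref{thm:autoequivalence}. Your explicit verification that the two roles of $\pi$ (reparametrizing the fiber functor versus twisting the module action) yield the same action $c\cdot m=f(\pi(c))\otimes m$ is precisely the bookkeeping the paper leaves implicit.
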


\begin{figure}
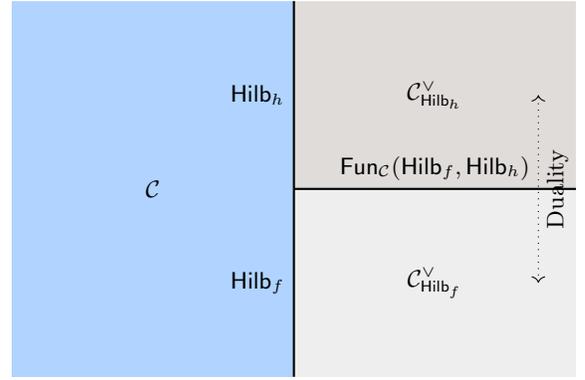

    \centering
    \diagram{2.5}{
        \fill[YaleLightBlue] (0,0) rectangle node[black]{$\cC$} (1.5,2);
        \fill[YaleLightGrey] (1.5,0) rectangle node[black]{$\cC_{\Hilb_f}^\vee$} (3,1);
        \fill[YaleMidGrey] (1.5,1) rectangle node[black]{$\cC_{\Hilb_h}^\vee$} (3,2);
        \draw[thick] (1.5,0) -- node[left]{$\Hilb_f$} (1.5,1);
        \draw[thick] (1.5,1) -- node[left]{$\Hilb_h$} (1.5,2);
        \draw[thick] (1.5,1) -- node[above]{$\Fun_{\cC}(\Hilb_{f},\Hilb_h)$} (3,1);
        \draw[dotted, <->] (2.8,.5) -- node[below,rotate = 90]{Duality} (2.8,1.5);
    }
    \caption{Illustration of the categorical relationship of SPT phases. Here the pair $(\cC,f)$ defines a symmetry as well as the trivial phase, and $\cC_{\Hilb_f}^\vee$ the charge category corresponding to this symmetry. The pair $(\cC,h)$ defines another symmetry, whose charge category is $\cC_{\Hilb_h}^\vee$, which may or may not be the same as $\cC_{\Hilb_f}^\vee$.}
    \label{fig:isocat}
\end{figure}

We show the converse by starting from the charge category $\cD:=\cC_\cM^\vee$ and the forgetful functor:
\begin{theorem}\label{thm:alg2fb}
    Fix a fiber functor $f:\cD\to\Hilb$. Any algebra $A\in \cD$ such that $fA$ is a matrix algebra in $\Hilb$, corresponds to a fiber functor $\cD_{\Hilb_{f}}^\vee\to \Hilb$.
\end{theorem}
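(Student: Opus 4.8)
The plan is to realize the sought-after fiber functor as the one attached to a \emph{rank-one} module category over $\cE:=\cD_{\Hilb_f}^\vee$, built out of $A$ by Morita transport. I will use two inputs: (i) the categorical Morita double-duality already used in the proof of Theorem \ref{thm:autoequivalence}, which identifies $\cD$ with $\Fun_\cE(\Hilb_f,\Hilb_f)$ via $X\mapsto X\otimes-$, where $\Hilb_f$ is regarded as a (rank-one) left $\cE$-module category; and (ii) the fact that a rank-one module category over a fusion category is the same datum as a fiber functor on it.

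First I would set up this identification and check that, under it, the given fiber functor $f:\cD\to\Hilb$ matches the forgetful functor $\fgt:\Fun_\cE(\Hilb_f,\Hilb_f)\to\Fun(\Hilb,\Hilb)\cong\Hilb$ that evaluates a module endofunctor at the generating simple object $\C\in\Hilb_f$. Indeed $f(X)=X\otimes\C$ is exactly the value of the $\cD$-action endofunctor $X\otimes-$ on $\C$, so $\fgt\circ(X\mapsto X\otimes-)=f$. Consequently the algebra $A\in\cD$ becomes an algebra object in $\Fun_\cE(\Hilb_f,\Hilb_f)$, i.e.\ a monad $(A,m,\iota)$ on $\Hilb_f$ whose structure maps are $\cE$-module natural transformations, and the hypothesis that $fA$ is a matrix algebra says precisely that the underlying monad on $\Hilb\cong\Hilb_f$ is $\End(W)\otimes-$ for some $W\in\Hilb$, with the matrix-algebra structure.

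Next I would form the Eilenberg--Moore category $\cL:=(\Hilb_f)^A$ of $A$-modules. Because $A$ is a monad internal to $\cE$-module endofunctors, the $\cE$-action on $\Hilb_f$ lifts to $\cL$, making $\cL$ a left $\cE$-module category (equivalently $\cL\cong\Fun_\cD(\Hilb_f,\Mod_\cD(A))$ after Morita transport). The decisive computation is that, forgetting the $\cE$-structure, the underlying abelian category of $\cL$ is the EM category of the plain monad $\End(W)\otimes-$ on $\Hilb$, namely the category of left $\End(W)$-modules; by Morita triviality of the matrix algebra this has a unique simple object $W$, so $\cL\cong\Hilb$ as an abelian category, i.e.\ $\cL$ is a rank-one $\cE$-module category. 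Semisimplicity is automatic: since $fA\cong\End(W)$ is semisimple and $f$ is faithful and exact, $A$ is separable and $\cL$ is finite semisimple.

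Finally I would extract the fiber functor: the action of $\cE$ on the rank-one module category $\cL$ is a monoidal functor $\cE\to\Fun(\cL,\cL)\cong\Hilb$, which is exact by semisimplicity and faithful because the annihilator $\{e:e\otimes L\cong 0\}$ of the generator $L\in\cL$ is a tensor ideal not containing $\one$, hence is zero in the fusion category $\cE$. This is the desired fiber functor $\cD_{\Hilb_f}^\vee\to\Hilb$ corresponding to $A$, and it is the precise converse of Theorem \ref{lem:nontrivial fb}. The hard part will be the bookkeeping of the second step: correctly transporting $A$ through the double duality so that it genuinely becomes a monad of $\cE$-module endofunctors, and matching the forgetful and fiber functors (there may be a harmless opposite-tensor / $\rev$ ambiguity as in Theorem \ref{thm:autoequivalence}, which does not affect the rank-one conclusion since $\End(W)$ is Morita trivial on either side). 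Once $A$ is presented as the monad $\End(W)\otimes-$, the rank-one conclusion and the passage to a fiber functor are routine.
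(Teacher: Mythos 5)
Your proposal is correct, and it takes a genuinely different route from the paper's proof. The paper argues via centers of monoidal functors: it forms the induced functor $\underline{f}:{}_{A}\cD_{A}\to{}_{fA}\Hilb_{fA}$, imports from its companion reference the canonical equivalence $Z(\underline{f})\cong Z(f)$, implicitly identifies $Z(f)$ with $\cD_{\Hilb_f}^\vee$, and then uses the monoidal equivalence ${}_{fA}\Hilb_{fA}\cong\Hilb$ (valid precisely because $fA$ is a matrix algebra) to obtain the fiber functor as the composite $Z(f)\cong Z(\underline{f})\xrightarrow{\fgt}{}_{fA}\Hilb_{fA}\cong\Hilb$. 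You instead transport $A$ through Morita double duality to an $\cE$-module monad on $\Hilb_f$ (writing $\cE:=\cD_{\Hilb_f}^\vee$), take the Eilenberg--Moore category $\cL=\Mod_A(\Hilb_f)$, note that its underlying category is $\LMod_{\End(W)}(\Hilb)\simeq\Hilb$ by Morita triviality of matrix algebras, and read off the fiber functor from the resulting rank-one $\cE$-module structure on $\cL$. Both arguments are sound, and after unwinding the equivalence ${}_{fA}\Hilb_{fA}\cong\Hilb$ they produce naturally isomorphic functors. What each buys: your argument is self-contained, resting only on standard module-category facts (EM categories of module monads, Morita triviality, rank-one module categories are the same as fiber functors), and it explicitly exhibits the rank-one $\cE$-module category $\cL$ realizing the new fiber functor, which makes the inverse relationship with Theorem \ref{lem:nontrivial fb} manifest; the paper's argument is shorter modulo the cited center-invariance theorem and sits naturally in the holographic picture used throughout its companion framework, making clear that both fiber functors bound the same bulk $Z(\cD)$. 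One simplification you could adopt: the double-duality bookkeeping you worry about (and the $\rev$ ambiguity) is avoidable entirely, since $A$ is already an algebra in $\cD$ and $\Hilb_f$ a left $\cD$-module, so $\Mod_A(\Hilb_f)$ is defined directly, and $\cE=\Fun_\cD(\Hilb_f,\Hilb_f)$ acts on it because $\cD$-module endofunctors preserve internal $A$-modules; the rest of your argument then goes through verbatim.
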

\begin{proof}
    Let
    \[
        \underline{f}:{}_{A}\cD_{A}\rightarrow{}_{fA}\Hilb_{fA}
    \]
    be the monoidal functor induced from $f$.
    By the Lemma 4.2, Theorem 4.9, and Remark 4.10 of \cite{lan2024categorysetorders},
    \[
        \begin{tikzpicture}[baseline=(current bounding box.center),scale=1]
              \fill[YaleLightBlue] (0,0) rectangle (4,4);
              \fill[YaleLightGrey] (0,0)--(3,0) arc [start angle=-90, end angle=-180, radius=1] arc [start angle=0, end angle=90, radius=1]
              arc [start angle=-90, end angle=-180, radius=1] -- cycle;
              \draw[thick,YaleMidBlue] (3,0) arc [start angle=-90, end angle=-180, radius=1] node[above right]{$Z(f)$} arc  [start angle=0, end angle=90, radius=1] 
              arc [start angle=-90, end angle=-180, radius=1];
              \draw[thick,YaleMidBlue] (0,0)--node[left]{${}_{A}\cD_{A}$} (0,3) node[below left]{$\underline{f}$} --node[left]{${}_{fA}\Hilb_{fA}$} (0,4);
              \draw[thick] (4,0)--node[below]{$\Hilb$} (3,0)node[below left]{$f$} --node[below]{$\cD$} (0,0)node[below left]{${}_{A}\cD$};
              \node[YaleMidBlue] at (3,3) {$\Hilb$};
              \node[YaleMidBlue] at (1,1) {$Z(\cD)$};
        \end{tikzpicture}
    \]
    we have the canonical equivalence
    \[
        Z(\underline{f})\cong Z(f).
    \]
    Since $fA\in\Hilb$ is a simple algebra, there is a monoidal equivalence ${}_{fA}\Hilb_{fA}\cong \Hilb$.
    Thus we have another fiber functor
    \[
        Z(f)\cong Z(\underline{f})\xrightarrow{\fgt}{}_{fA}\Hilb_{fA}\cong \Hilb.
    \]
\end{proof}
An on-site symmetry is necessarily described by the pair $(\cC,f:\cC\to \Hilb)$, which determines the microscopic realization of the fusion category symmetry along with a trivial phase.
More specifically, $\cC$ is some abstract fusion category and $f$ describes its local structure. $\cC_{\Hilb_f}^\vee$ is the corresponding charge category. For an object $(V\otimes -,\, \beta)\in\cC_{\Hilb_f}^\vee$,  $V$ is the underlying Hilbert space of the symmetry charge, and $\beta_{c}: V \ot f(c) \cong f(c) \ot V$ is the local tensor of the $c$-MPO acting on the space $V$. In \cite{Lan_2024}, the defects between two such pairs $(\cC,f)$ and $(\cC,h)$ form the category $\Hom_{\Fun(\Sigma\cC,2\Hilb)}(\Sigma f,\Sigma h)=\Fun_\cC(\Hilb_f,\Hilb_h)$. Fixing $(\cC,f)$ as the trivial phase, $\Fun_\cC(\Hilb_f,\Hilb_h)$ is the category of edge-modes of $(\cC,h)$. For a module functor $(W\ot- , \gamma)\in \Fun_\cC(\Hilb_f,\Hilb_h)$, $W$ is the edge mode and $\gamma_c: W \ot f(c) \rightarrow h(c) \ot W$ is the local tensor of the $c$-MPO acting on the projective charge $W$. The internal hom tensors two opposite projective charges hence gives an algebra $A = (W^* \ot W\otimes -, \lambda)$ in the charge category $\cC_{\Hilb_f}^\vee$. And $\dim Z(\mathrm{fgt}(A)) = 1$ describes a symmetric state with a unique ground state in the thermodynamic limit.
In particular, when $\Hilb_f$ and $\Hilb_h$ are equivalent as $\cC$-modules, i.e., $W$ is taken to be $\C$, then there is no edge mode, suggesting that $f$ and $h$ realize the same SPT phase, the trivial phase. Its corresponding algebra $\End (W)$ is the trivial algebra in the charge category $\cC^\vee_{\Hilb_f}$. Thus we define

\begin{definition}
    Two pairs $(\cC,f)$ and $(\cC,h)$, or two fiber functors $f,h$ from $\cC$, are called phase-equivalent if the corresponding $\cC$-modules $\Hilb_f$ and $\Hilb_h$ are equivalent.
\end{definition}
Then there exists a one-to-one correspondence between 
\begin{itemize}
    \item the Morita classes of algebras in $\cC^\vee_{\Hilb_f}$ whose image under $\fgt$ is Morita trivial (i.e. matrix algebras) in $\Hilb$,
    \item and the phase-equivalent classes of fiber functors from $\cC$.
\end{itemize}

However, there are further subtleties on the macroscopic observables associated with the pairs $(\cC,f)$ and $(\cC,h)$. Two factors need to be taken into consideration: 
\begin{itemize}
    \item whether $f$ and $h$ are related by a monoidal equivalence $\pi:\cC\to \cC$, i.e. $h=f\pi$;
    \item whether the charge category is preserved $\cC^\vee_{\Hilb_f}\cong \cC^\vee_{\Hilb_h}$.
\end{itemize}
There are three possible scenarios:
\begin{enumerate}
    \item $h=f\pi$ induces a monoidal equivalence $\cC^\vee_{\Hilb_f}\cong \cC^\vee_{\Hilb_h}$ and $(\cC,f)$ and $(\cC,f\pi)$ are not necessarily phase-equivalent. All the $G$-SPT phases and three $\Rep^\dagger(D_8)$ SPTs belong to this case;
    \item $\cC^\vee_{\Hilb_f}\not\cong \cC^\vee_{\Hilb_h}$, where $h$ and $f$ are not related by monoidal equivalences. By Theorem \ref{lem:nontrivial fb}, there exists an algebra $A$ in $\cC^\vee_{\Hilb_f}$ such that $_A(\cC^\vee_{\Hilb_f})_A\cong \cC^\vee_{\Hilb_h}$. We can construct a Q-system model with such algebra $A$ and the fusion of defects is described by $\cC^\vee_{\Hilb_h}$. In such a model, the Hamiltonian is $(\cC,f)$ symmetric with a unique ground state. However, the fusion rules of the defects differ from the trivial phase, meaning that the charge category is not preserved. We view such a model as in a \textit{locally} spontaneous symmetry-breaking phase. Unlike invertible symmetries, where a symmetry is broken to a subgroup, here the symmetry $(\cC,f)$ is broken to $(\cC,h)$: the fusion category remains the same with the fiber functor changed. We emphasize that a fiber functor corresponds to the microscopic realization of the symmetry, consistent with the fact that in the locally SSB phase, only the fusion of defects is changed, which are local physical observables.

    
    \item There might be a situation landing in the middle of 1 and 2. It is possible that when $f$ and $h$ are not related by monoidal equivalences, it happens to be $\cC^\vee_{\Hilb_f}\cong \cC^\vee_{\Hilb_h}$. 
    It remains unclear what physical observables detect such an exotic case.
\end{enumerate}

\bigskip
In general, the charge category of a fusion category symmetry is not unique, as discussed in scenario 2. $\cC=\Rep^\dagger(S_3 \times \Z_3)$ symmetry in Wen's talk \cite{wen2024spt} is one such example; the charge category $\cC_{\Hilb_f}^\vee$ given by the forgetful functor $f$ is $\Hilb_{S_3 \times \Z_3}$. Following Theorems \ref{lem:nontrivial fb} and \ref{thm:alg2fb}, take $A_+ =  \mathbb{C}^{\psi_+}[\mathbb{Z}_3 \times \mathbb{Z}_3]$, we obtain a different charge category $_{A_+}(\Hilb_{S_3 \times \Z_3})_{A_+}\cong  \mathrm{TY}_{\Z_3 \times \Z_3}^{\chi, \epsilon}$, where $\chi(g,h) = \exp{\left[\frac{2\pi\ii}{3}\left(g_1h_2+g_2h_1\right)\right]},\;\epsilon = 1$ \footnote{Part of the result originated from a discussion with Conghuan Luo.}. See Appendix \ref{sect:bimodS3xZ3} for a detailed calculation. Another known example of scenario 2 is the isocategorical groups \cite{etingof2000isocategoricalgroups}. It states that when $\cC$ is a group representation category, its corresponding charge categories might have different underlying group structures, meaning that $\cC_{\Hilb_f}^\vee\cong \Hilb_G$, $\cC_{\Hilb_h}^\vee\cong \Hilb_{G'}$ for non-isomorphic groups $G,G'$. 

In this paper, we focus on the case of $\cC=\Rep^\dagger(D_8)$, where three fiber functors are related by monoidal equivalence. The module action of $\cC$ on $\Hilb$ is fixed by the forgetful functor $\mathrm{fgt}: \Rep^\dagger(D_8)\to \Hilb$. The charge category is $\Hilb_{D_8}$ and is preserved. Indeed we can show that $_{A_\pi}(\Hilb_{D_8})_{A_\pi} \cong \Hilb_{D_8}$. In Section \ref{sec:RepD8}, we give an explicit lattice realization of the monoidal equivalence $\pi$ via $_{A_\pi}(\Hilb_{D_8})_{A_\pi} \cong \Hilb_{D_8}$ and show that three SPT phases of $\Rep^\dagger(D_8)$ symmetry are permuted under such a construction.

By the classification result of $\Rep(G)$-module categories, the module category concerned in our $\Rep^\dagger(D_8)$ example is also equivalent to the category of projective representations $\Rep^{\omega\dagger}((\Z_2 \times \Z_2)^{(1),(2)}) \cong \Hilb$. Here, $\omega = -1 \in H^2(\Z_2 \times \Z_2, U(1))$ denotes the non-trivial cocycle.
The left $\Rep^\dagger(D_8)$-module structures on $\Rep^{\omega\dagger}((\Z_2 \times \Z_2)^{(1),(2)})$ are defined by $- \otimes W^{(1),(2)}$, where $W^{(1),(2)}$ is the irreducible projective representation of $(\Z_2 \times \Z_2)^{(1),(2)}$, respectively.

\section{Conclusion and discussion}
In this work, we develop a framework to classify and construct (1+1)d SPT phases with fusion category symmetries. We focus on the kinematic role of the fiber functor, meaning how it defines an on-site symmetry operator, the charge category, and the trivial phase. We emphasize that a trivial phase is an integral component of the microscopic realization of an anomaly-free fusion category symmetry. We provide an alternative characterization of an SPT phase using the Q-system in the charge category. As an example, we study the $\Rep^\dagger(D_8)$-SPTs and realize the $S_3$-automorphism between three fiber functors via a duality transformation. Our construction applies to spontaneous symmetry breaking phases of a fusion category symmetry as well.

Our construction provided a generic framework for understanding non-invertible symmetry in (1+1)d and in higher dimensions. In particular, there is still much to explore in topological phases enriched by fusion category symmetries. One natural generalization is anomaly-free fermionic fusion category symmetry (for example \cite{bhardwaj2024fermionicnoninvertiblesymmetries11d, wen2024topologicalholographyfermions, inamura2023fermionizationfusioncategorysymmetries}). Symmetry charges of a fermionic fusion category symmetry form a fusion category $\cD$ together with a monoidal functor to the category of super vector spaces \cite{Kong_2024,kelly1982enriched}:
\[
    \chargef:\cD\rightarrow {\sVec}.
\]

To characterize the fermonic SPTs in terms of Q-systems, a subtlety lies in the fact that fermion parity is a preserved symmetry thus the ground state degeneracy is observed by operators with even fermion parity \cite{Chen_2011_spinchain}. Therefore, a fermionic SPT is described by a Q-system $A$ in $\cD$, such that $\chargef(A)\in\sVec$ is simple.

We could also consider higher-dimensional generalizations \cite{Inamura_2024,Delcamp_2024}. In \cite{lan2024categorysetorders}, the charge category of a 2D system is a fusion 2-category \cite{douglas2018fusion2categoriesstatesuminvariant} $\mathfrak{D}$. However, it remains unclear how to categorically describe the underlying 2D bosonic spin system and to understand local symmetry charges of $\mathfrak{D}$. A 2D bosonic spin system should be described by $2\Ve$, and the local symmetry charge realization is given by the 2-fiber functor $\mathfrak{D}\rightarrow 2\Ve$. However, in tensor product Hilbert space lattice models, we only have $\Ve$, the looping of the $2\Ve$. Thus how to incorporate the higher structure in $2\Ve$ in a lattice model remains unclear.

On the other hand, if we relax the anomaly-free fusion category to those that do not admit a fiber functor, the anomalous symmetry category plays an essential role in understanding gapless phases. For example, anyonic chain construction can carry anomalous fusion category symmetries and can be used to explore critical phases \cite{Feiguin_2007, ning2023building1dlatticemodels,Kong_2020_gapless,Kong_2021_gapless}.

\section*{Acknowledgement}
CQM would like to express heartfelt gratitude to Liang Kong for many insightful discussions and for providing invaluable mental support throughout this work. Special thanks to David Penneys for introducing the Q-system and for his elegant and comprehensive lecture notes, which have been an indispensable resource. CQM also thanks Ansi Bai for patiently answering numerous questions about Hopf algebras, Gen Yue for instructive discussions on the category of SET orders, and Yizhou Ma for discussions on operator algebra. X.Y. thanks Meng Cheng, Da-Chuan Lu for helpful discussions, and is especially grateful to Conghuan Luo for discussions on $\Rep^\dagger(S_3 \times \Z_3)$ SPTs. This work is supported by funding from Hong Kong’s Research Grants Council (RGC Research Fellow Scheme (RFS) 2023/24,  No. RFS2324-4S02). X.Y. acknowledges the support from NSF under the grant number DMR-2424315. TL is supported
by start-up funding from The Chinese University of Hong Kong, and by funding from Research Grants Council, University Grants Committee of Hong Kong (ECS No. 24304722).

\textit{Note added}: as we were finalizing this work for posting, we became aware of a separate work that shares similar results as ours \cite{warman2024categoricalsymmetriesspinmodels}.

\onecolumngrid

\appendix

\section{Comparison with the cluster state SPT}\label{sec:reduction}
In \cite{seifnashri2024clusterstatenoninvertiblesymmetry,Tantivasadakarn_2024, Seiberg_2024,Seiberg_2024_Majorana}, the Kramers-Wannier operator $D_{KW}$ written in MPO form is
\begin{gather*}
    (D_{KW})_{00} = \frac{1}{\sqrt{2}}\begin{pmatrix}
        1 & 0\\ 1 & 0
    \end{pmatrix},\quad 
    (D_{KW})_{10} = \frac{1}{\sqrt{2}}\begin{pmatrix}
        1 & 0\\ -1 & 0
    \end{pmatrix},\\
    (D_{KW})_{01} = \frac{1}{\sqrt{2}}\begin{pmatrix}
        0 & 1\\ 0 & 1
    \end{pmatrix},\quad
    (D_{KW})_{11} = \frac{1}{\sqrt{2}}\begin{pmatrix}
        0 & -1\\ 0 & 1
    \end{pmatrix}.
\end{gather*}

Following \cite{seifnashri2024clusterstatenoninvertiblesymmetry}, the non-invertible symmetry operator $D$ of $\Rep^\dagger(D_8)$ is composed by
\[
D = T_LD^e_{KW}D^o_{KW}.
\]
We group two nearby sites to form a unit cell. A direction composition of three MPO's gives an MPO with an eight-dimensional virtual bond, which is not injective and can be further reduced \cite{Cirac_2021,perezgarcia2007matrixproductstaterepresentations}.

After reduction, the MPO is written as
\begin{gather*}
D_{(00)(00)} = 
\frac{1}{2}\left(
\begin{array}{cccc}
 1 & 0 & 0 & 0 \\
 1 & 0 & 0 & 0 \\
 1 & 0 & 0 & 0 \\
 1 & 0 & 0 & 0 \\
\end{array}
\right),\quad 
D_{(00)(01)} =
\frac{1}{2}\left(
\begin{array}{cccc}
 0 & 1 & 0 & 0 \\
 0 & 1 & 0 & 0 \\
 0 & 1 & 0 & 0 \\
 0 & 1 & 0 & 0 \\
\end{array}
\right),\\
D_{(00)(10)} =
\frac{1}{2}\left(
\begin{array}{cccc}
 1 & 0 & 0 & 0 \\
 1 & 0 & 0 & 0 \\
 -1 & 0 & 0 & 0 \\
 -1 & 0 & 0 & 0 \\
\end{array}
\right),\quad
D_{(00)(11)} =
\frac{1}{2}\left(
\begin{array}{cccc}
 0 & 1 & 0 & 0 \\
 0 & 1 & 0 & 0 \\
 0 & -1 & 0 & 0 \\
 0 & -1 & 0 & 0 \\
\end{array}
\right),\\
D_{(01)(00)} = 
\frac{1}{2}\left(
\begin{array}{cccc}
 0 & 0 & 1 & 0 \\
 0 & 0 & 1 & 0 \\
 0 & 0 & 1 & 0 \\
 0 & 0 & 1 & 0 \\
\end{array}
\right),\quad
D_{(01)(01)} =
\frac{1}{2}\left(
\begin{array}{cccc}
 0 & 0 & 0 & 1 \\
 0 & 0 & 0 & 1 \\
 0 & 0 & 0 & 1 \\
 0 & 0 & 0 & 1 \\
\end{array}
\right),\\
D_{(01)(10)} =
\frac{1}{2}\left(
\begin{array}{cccc}
 0 & 0 & -1 & 0 \\
 0 & 0 & -1 & 0 \\
 0 & 0 & 1 & 0 \\
 0 & 0 & 1 & 0 \\
\end{array}
\right),\quad 
D_{(01)(11)} =
\frac{1}{2}\left(
\begin{array}{cccc}
 0 & 0 & 0 & -1 \\
 0 & 0 & 0 & -1 \\
 0 & 0 & 0 & 1 \\
 0 & 0 & 0 & 1 \\
\end{array}
\right),\\
D_{(10)(00)} =
\frac{1}{2}\left(
\begin{array}{cccc}
 1 & 0 & 0 & 0 \\
 -1 & 0 & 0 & 0 \\
 1 & 0 & 0 & 0 \\
 -1 & 0 & 0 & 0 \\
\end{array}
\right),\quad
D_{(10)(01)} =
\frac{1}{2}\left(
\begin{array}{cccc}
 0 & -1 & 0 & 0 \\
 0 & 1 & 0 & 0 \\
 0 & -1 & 0 & 0 \\
 0 & 1 & 0 & 0 \\
\end{array}
\right),\\
D_{(10)(10)} =
\frac{1}{2}\left(
\begin{array}{cccc}
 1 & 0 & 0 & 0 \\
 -1 & 0 & 0 & 0 \\
 -1 & 0 & 0 & 0 \\
 1 & 0 & 0 & 0 \\
\end{array}
\right),\quad
D_{(10)(11)} =
\frac{1}{2}\left(
\begin{array}{cccc}
 0 & -1 & 0 & 0 \\
 0 & 1 & 0 & 0 \\
 0 & 1 & 0 & 0 \\
 0 & -1 & 0 & 0 \\
\end{array}
\right),\\
D_{(11)(00)} =
\frac{1}{2}\left(
\begin{array}{cccc}
 0 & 0 & 1 & 0 \\
 0 & 0 & -1 & 0 \\
 0 & 0 & 1 & 0 \\
 0 & 0 & -1 & 0 \\
\end{array}
\right),\quad
D_{(11)(01)} =
\frac{1}{2}\left(
\begin{array}{cccc}
 0 & 0 & 0 & -1 \\
 0 & 0 & 0 & 1 \\
 0 & 0 & 0 & -1 \\
 0 & 0 & 0 & 1 \\
\end{array}
\right),\\
D_{(11)(10)} =
\frac{1}{2}\left(
\begin{array}{cccc}
 0 & 0 & -1 & 0 \\
 0 & 0 & 1 & 0 \\
 0 & 0 & 1 & 0 \\
 0 & 0 & -1 & 0 \\
\end{array}
\right)
,\quad
D_{(11)(11)} =
\frac{1}{2}\left(
\begin{array}{cccc}
 0 & 0 & 0 & 1 \\
 0 & 0 & 0 & -1 \\
 0 & 0 & 0 & -1 \\
 0 & 0 & 0 & 1 \\
\end{array}
\right).
\end{gather*}
Since the virtual bond dimension of $D$ is four instead of two, not equal to the quantum dimention of $\sigma\in\Rep^\dagger(D_8)$, this MPO $D$ is not in the onsite form based on our definition \eqref{eq:onsite}. We can further reduce this MPO by conjugating $D$ with the CZ gate. Simultaneously, we conjugate the cluster Hamiltonian with the CZ gate. After conjugation, $D$ MPO can be reduced to its minimal form while the cluster Hamiltonian becomes $-\sum_i X_i$. Explicitly, $D$ is written as

\begin{gather*}
    D_{(00)(00)} = \frac{1}{2}\begin{pmatrix}
    1 & 0\\ 1 & 0
\end{pmatrix},\quad 
D_{(00)(01)} = \frac{1}{2}\begin{pmatrix}
    0 & 1\\ 0 & 1
\end{pmatrix}, \quad
D_{(00)(10)} = \frac{1}{2}\begin{pmatrix}
    1 & 0\\ -1 & 0
\end{pmatrix},\quad
D_{(00)(11)} = \frac{1}{2}\begin{pmatrix}
    0 & -1\\ 0 & 1
\end{pmatrix},\\
D_{(01)(00)} = \frac{1}{2}\begin{pmatrix}
    0 & 1\\ 0 & 1
\end{pmatrix},\quad
D_{(01)(01)} = \frac{1}{2}\begin{pmatrix}
    1 & 0\\ 1 & 0
\end{pmatrix},\quad
D_{(01)(10)} = \frac{1}{2}\begin{pmatrix}
    0 & -1\\ 0 & 1
\end{pmatrix},\quad
D_{(01)(11)} = \frac{1}{2}\begin{pmatrix}
    1 & 0\\ -1 & 0
\end{pmatrix},
\\
D_{(10)(00)} = \frac{1}{2}\begin{pmatrix}
    1 & 0\\ -1 & 0
\end{pmatrix},\quad 
D_{(10)(01)} = \frac{1}{2}\begin{pmatrix}
    0 & -1\\ 0 & 1
\end{pmatrix},\quad
D_{(10)(10)} = \frac{1}{2}\begin{pmatrix}
    1 & 0\\ 1 & 0
\end{pmatrix},\quad
D_{(10)(11)} = \frac{1}{2}\begin{pmatrix}
    0 & 1\\ 0 & 1
\end{pmatrix},\\
D_{(11)(00)} = \frac{1}{2}\begin{pmatrix}
    0 & -1\\ 0 & 1
\end{pmatrix},\quad
D_{(11)(01)} = \frac{1}{2}\begin{pmatrix}
    1 & 0\\ -1 & 0
\end{pmatrix},\quad
D_{(11)(10)} = \frac{1}{2}\begin{pmatrix}
    0 & 1\\ 0 & 1
\end{pmatrix},\quad
D_{(11)(11)} = \frac{1}{2}\begin{pmatrix}
    1 & 0\\ 1 & 0
\end{pmatrix}.
\end{gather*}
Under the basis $\{|+\ra,|-\ra\}$, non-zero components of $D$ are
\begin{gather*}
    D_{(++)(++)} = \begin{pmatrix}
    1 & 0\\ 0 & 1
\end{pmatrix},\quad
D_{(+-)(+-)} = \begin{pmatrix}
    1 & 0\\ 0 & -1
\end{pmatrix} = Z,\\
D_{(-+)(-+)} = \begin{pmatrix}
    0 & 1\\ 1 & 0
\end{pmatrix} = X,\quad
D_{(--)(--)} = \begin{pmatrix}
    0 & -1\\ 1 & 0
\end{pmatrix} = -\ii Y.
\end{gather*}
Meanwhile, we can check that the invertible symmetry operators of $\Rep^\dagger(D_8)$
\[
    \eta^e = \prod_{j:\,even} X_j,\quad \eta^o = \prod_{j:\,odd} X_j
\]
remain onsite on the closed chain. The Hamiltonian is transformed to
\[
    H = -\sum_{i}X_i,
\]
unique ground state of which is a product state. Upon fixing the symmetry operators onsite, the cluster state can be chosen to be the trivial phase. Without the stacking structure of non-invertible SPTs, this choice of trivial phase is not canonical. However, we can still define a trivial phase for non-invertible SPTs once we assign a proper onsite symmetry operator.
Denoting
\[
    S = \frac{1}{\sqrt{2}}\begin{pmatrix}
    1 & -\ii\\ -\ii & 1
\end{pmatrix},
\]
we perform a similar transformation to $\sigma$, such that
\[
    \rho'_\sigma(r) =  S\rho_\sigma(r) S^{-1}= S\ii Z S^{-1} = -\ii Y,\quad \rho'_\sigma(s) = X.
\]
We identify
\[
    |++\ra = |e\ra,\quad |-+\ra = |s\ra,\quad |--\ra = |r\ra,\quad |+-\ra = |sr\ra.
\]
Thus the local Hilbert space in \cite{seifnashri2024clusterstatenoninvertiblesymmetry} is
\[
    V = \mathrm{span}( |e\ra,\ |s\ra,\ |r\ra,\ |sr\ra ).
\]
The Hamiltonian for the trivial phase is
\[
 H = -\sum_{i}P^1_i.
\]
where $P_i$ is the local projection to $\C_e$.

In the second model (see Eqn.16 of \cite{seifnashri2024clusterstatenoninvertiblesymmetry}), after applying the CZ gate, the ground state is stabilized by the generator
\[
    -X_{2i} = 1,\quad -Z_{2n-1}X_{2n+1}Z_{2n+3} = 1
\]
Imposing the constraint $X_{2n} = -1$, the local Hilbert space on a two-site unit cell is $\C\{r, sr\}$.
In $|r\ra, |sr\ra$ basis, the second operator can be written as
\begin{gather*}
    Z_1X_2Z_3|g,h,k\ra = (-1)^{n_s(h)+1}|sg,h,sk\ra.
\end{gather*}
 We further group 2 cells into a single one, such that the local Hilbert space is
\[
    \C\{r, sr\}\ot\C\{r, sr\} = \C\{e, s, r^2, sr^2\}
\]

\[
    e = sr\,sr,\quad s = r\,sr,\quad r^2 = r\,r,\quad sr^2 = sr\,r.
\]
Thus the final unit cell is composed with 4 original sites, labeled by the capital letters $I, J, K,\cdots$.
The Hamiltonian is equivalent to
\[
    H = -\sum_{I}\frac{1}{4} (1 - Z_{4I-3}X_{4I-1}Z_{4I+1})(1 - Z_{4I-1}X_{4I+1}Z_{4I+3}) = -\sum_{I}\tilde{P}_{I, I+1}\,.
\]
with the local projector 
\[
    \tilde{P}_{I, I+1} =  \frac{1}{4}(1 - Z_{4I-3}X_{4I-1}Z_{4I+1})(1 - Z_{4I-1}X_{4I+1}Z_{4I+3}).
\]
This model in \cite{seifnashri2024clusterstatenoninvertiblesymmetry} is equivalent to our construction when the sign is flipped.
Notice that 
\[
    U_{i} = Z_{8i+1}Z_{8i+3}Z_{8i+5}Z_{8i+7}
\]
is a symmetric operator, we have
\[
    \left(-\sum_{I}\tilde{P}_{I, I+1}\right)\prod_{i = 1}^{[L/8]}U_{i} =
    \begin{cases}
        \prod_{i = 1}^{[L/8]}U_{i}\left(-\sum_{I}P_{I, I+1}\right) & L \equiv 0\mod{8}\\
        \prod_{i = 1}^{[L/8]}U_{i}\left(-\sum_{I}P_{I, I+1} - \bar{P}_{L/4-1,L/4} - \bar{P}_{L/4,L/4+1}\right) & L\equiv 4\mod{8}
    \end{cases}
\]
where
\begin{gather*}
    P_{I,I+1} = \frac{1}{4}(1 + Z_{4I-3}X_{4I-1}Z_{4I+1})(1 + Z_{4I-1}X_{4I+1}Z_{4I+3}),\\
    \bar{P}_{L/4-1,L/4} = \frac{1}{4}(1 + Z_{L-7}X_{L-5}Z_{L-3})(1 - Z_{L-5}X_{L-3}Z_{L-1}),\\
    \bar{P}_{L/4,1} = \frac{1}{4}(1 - Z_{L-3}X_{L-1}Z_{1})(1 + Z_{L-1}X_{1}Z_{3}).
\end{gather*}
Thus the local projector is $P_{I, I+1}$ with exceptions on the boundary, which correspond to local symmetric defects. Since a phase is at the thermodynamic limit, the phase realized in \cite{seifnashri2024clusterstatenoninvertiblesymmetry} is equivalent to
\[
    H = - \sum_{I}P_{I,I+1}.
\]
The component of $P_{I,I+1}$ in each sector can be expanded as
\begin{gather*}
    P_e = \frac{1}{4}
    \begin{array}{c|cccc}
        & e\,e & s\,s & r^2\,r^2 & sr^2\,sr^2\\
        \hline
        e\,e &  1 & 1 & -1 & 1\\
        s\,s & 1 & 1 & -1 & 1\\
        r^2\,r^2 & -1 & -1 & 1 & -1\\
        sr^2\,sr^2 & 1 & 1 & -1 & 1\\
    \end{array},\quad 
    P_{s} = \frac{1}{4}
    \begin{array}{c|cccc}
         &  e\,s & s\,e & sr^2\,r^2 & r^2\,sr^2\\
         \hline
          e\,s &   1 & 1 & -1 & 1\\
          s\,e & 1 & 1 & -1 & 1\\
          sr^2\,r^2 & -1 & -1 & 1 & -1\\
          r^2\,sr^2 & 1 & 1 & -1 & 1
    \end{array},\\
    P_{r^2} = \frac{1}{4}
    \begin{array}{c|cccc}
         & e\,r^2 & r^2\,e & s\,sr^2 & sr^2\,s \\
         \hline
        e\,r^2 & 1 & 1 & 1 & -1\\
        r^2\,e & 1 & 1 & 1 & -1\\
        s\,sr^2 & 1 & 1 & 1 & -1\\
        sr^2\,s & -1 & -1 & -1 & 1
    \end{array},\quad
    P_{sr^2} = \frac{1}{4}
    \begin{array}{c|cccc}
         & e\,sr^2 & sr^2e\, & s\,r^2 & r^2\,s \\
         \hline
        e\,sr^2 & 1 & 1 & 1 & -1\\
        sr^2\,e & 1 & 1 & 1 & -1\\
        s\,r^2 & 1 & 1 & 1 & -1\\
        r^2\,s & -1 & -1 & -1 & 1
    \end{array}
\end{gather*}
Define
\begin{gather*}
    m_e=
    \begin{array}{c|cccc}
         & e\,e & s\,s & r^2\,r^2 & sr^2\,sr^2\\
         \hline
         e & 1 & 1 & -1 & 1
    \end{array}, \quad 
    m_s=
    \begin{array}{c|cccc}
         & e\,s & s\,e & sr^2\,r^2 & r^2\,sr^2\\
         \hline
         s & 1 & 1 & -1 & 1
    \end{array},\\
    m_{r^2}=
    \begin{array}{c|cccc}
         & e\,r^2 & r^2\,e & s\,sr^2 & sr^2\,s \\
         \hline
         r^2 & 1 & 1 & 1 & -1
    \end{array},\quad
        m_{sr^2}=
    \begin{array}{c|cccc}
         & e\,sr^2 & sr^2\,e & s\,r^2 & r^2\,s \\
         \hline
         sr^2 & 1 & 1 & 1 & -1
    \end{array}.
\end{gather*}
Then
\[
    P = \frac{1}{4}m^\dagger m.
\]
We can then extract the cocycle gauge by $m$:
\[
    \omega(r^2,r^2) = \omega(sr^2,r^2) = \omega(r^2,s) = \omega(sr^2, s) = -1,\quad \omega(\text{others}) = 1.
\]
Thus the Hamiltonian of \cite{seifnashri2024clusterstatenoninvertiblesymmetry}, for either $L = 0\mod{8}$ or $L = 4\mod{8}$, matches our model locally
\[
    H  = -\frac{1}{4}\sum_{i}m_{i,i+1}^\dagger m_{i,i+1}\,.
\]
And the MPO $A_\sigma$ acts on ground state as $\Tr(A_\sigma)|e\ra = 2|e\ra$.

The third model is constructed similarly. After applying the CZ gate, the ground state is stabilized by the generator
\[
    -X_{2n-1} = 1,\quad -Z_{2n-2}X_{2n}Z_{2n+2} = 1\,.
\]
Imposing the constraint $X_{2n-1} = -1$, the local Hilbert space on a two-site unit cell is $\C\{s,r\}$.
In $|s\ra, |r\ra$ basis, 
\[
    Z = |s\ra\la s| - |r\ra\la r|,\quad X = |s\ra\la r| + |r\ra\la s|\,.
\]
We then group 2 cells to form the final unit cell and
\[
    \C\{s,r\}\ot\C\{s,r\} = \C\{e,sr,r^2, sr^3\}
\]
with
\[
    e = s\,s,\quad sr = s\,r,\quad r^2 = r\,r,\quad sr^3 = r\,s.
\]
After reversing the sign, the local projector between two nearby sites to the unique ground state is
\[
    P_{I,I+1} = \frac{1}{4}\left(1 + Z_{4I}X_{4I+2}Z_{4I+4}\right)\left(1 + Z_{4I+2}X_{4I+4}Z_{4I+6}\right)\,,
\]
with
\begin{gather*}
    P_e = \frac{1}{4}
    \begin{array}{c|cccc}
        & e\,e & sr\,sr & r^2\,r^2 & sr^3\,sr^3\\
        \hline
        e\,e &  1 & 1 & -1 & 1\\
        sr\,sr & 1 & 1 & -1 & 1\\
        r^2\,r^2 & -1 & -1 & 1 & -1\\
        sr^3\,sr^3 & 1 & 1 & -1 & 1\\
    \end{array},\quad 
    P_{sr} = \frac{1}{4}
    \begin{array}{c|cccc}
         &  e\,sr & sr\,e & sr^3\,r^2 & r^2\,sr^3\\
         \hline
          e\,sr &   1 & 1 & 1 & -1\\
          sr\,e & 1 & 1 & 1 & -1\\
          sr^3\,r^2 & 1 & 1 & 1 & -1\\
          r^2\,sr^3 & -1 & -1 & -1 & 1
    \end{array},\\
    P_{r^2} = \frac{1}{4}
    \begin{array}{c|cccc}
         & e\,r^2 & r^2\,e & sr^3\,sr & sr\,sr^3 \\
         \hline
        e\,r^2 & 1 & 1 & 1 & -1\\
        r^2\,e & 1 & 1 & 1 & -1\\
        sr^3\,sr & 1 & 1 & 1 & -1\\
        sr\,sr^3 & -1 & -1 & -1 & 1
    \end{array},\quad
    P_{sr^3} = \frac{1}{4}
    \begin{array}{c|cccc}
         & e\,sr^3 & sr^3\,e & sr\,r^2 & r^2\,sr \\
         \hline
        e\,sr^3 & 1 & 1 & -1 & 1\\
        sr^3\,e & 1 & 1 & -1 & 1\\
        sr\,r^2 & -1 & -1 & 1 & -1\\
        r^2\,sr & 1 & 1 & -1 & 1
    \end{array}
\end{gather*}
Define
\begin{gather*}
    m_e=
    \begin{array}{c|cccc}
         & e\,e & s\,s & r^2\,r^2 & sr^2\,sr^2\\
         \hline
         e & 1 & 1 & -1 & 1
    \end{array}, \quad 
    m_{sr}=
    \begin{array}{c|cccc}
         &  e\,sr & sr\,e & sr^3\,r^2 & r^2\,sr^3\\
         \hline
         s & 1 & 1 & 1 & -1
    \end{array},\\
    m_{r^2}=
    \begin{array}{c|cccc}
         & e\,r^2 & r^2\,e & sr^3\,sr & sr\,sr^3 \\
         \hline
         r^2 & 1 & 1 & 1 & -1
    \end{array},\quad
    m_{sr^3}=
    \begin{array}{c|cccc}
         & e\,sr^3 & sr^3\,e & sr\,r^2 & r^2\,sr \\
         \hline
         sr^3 & 1 & 1 & -1 & 1
    \end{array}.
\end{gather*}
Then
\[
    P = \frac{1}{4}m^\dagger m.
\]
We can then extract the cocycle gauge by $m$:
\[
    \omega(r^2,r^2) = \omega(r^2,sr^3) = \omega(sr,sr^3) = \omega(r^2, sr) = -1,\quad \omega(\text{others}) = 1.
\]

\section{\texorpdfstring{$\Rep^\dagger(\sH^*_8)$}{Rep(H8)}-MPOs on irreducible charges}\label{app:H8}
We first list the data for $\sH_8$:
\begin{equation*}
    \sH_8= \Big \la x,y,x|x^2=y^2=z^2=\one,\, xz=zx,\,yz=zy,\,xyz=yx \Big \ra \,,
\end{equation*}
with basis
\begin{equation*}
    \Big \{\one,\,x,\,y,\,z,\,xz,\,yz,\,xy,\,yx \Big\},
\end{equation*}
where $z$ is the central element. Thus we identify two central orthogonal idempotents
\begin{equation*}
    p_0 = \frac{1}{2}(1+z),\quad p_1 = \frac{1}{2}(1-z).
\end{equation*}
The counits are $\epsilon(x) = \epsilon(y) = \epsilon (z)=1$ and the comultiplications are given by
\begin{align*}
    & \Delta 1 = 1 \ot 1,\quad \Delta x = x p_0 \ot x+ x p_1 \ot y,\quad \Delta y = yp_1 \ot x + y p_0 \ot y,\quad \Delta z = z \ot z,\\
    & \Delta(xz) = (\Delta x)(\Delta z) = x p_0 \ot xz-x p_1 \ot yz, \quad \Delta(yz) = (\Delta y)(\Delta z) = -yp_1 \ot xz+y p_0 \ot yz, \\
    &\Delta (xy) = (\Delta x)(\Delta y)=xyp_0 \ot xy + xyp_1 \ot yx,\quad \Delta(yx) = yxp_1 \ot xy+yxp_0 \ot yx.
\end{align*}
The antipode $S$ is defined as
\begin{align*}
    &S(x) = xp_0 + yp_1,\quad S(y) = x p_1+yp_0,\quad S(z)=z,\\
    &S(xy) = yx,\quad S(xz) = \frac{1}{2}(x-y+xz+yz),\quad S(yz) = \frac{1}{2}(-x+y+yz+xz), \quad S(yx)=xy.
\end{align*}

The MPO basis $\Gamma=\Big\{\bl{\one},\, \bl{a},\, \bl{b},\, \bl{c},\, \bl{\sigma}^{11},\, \bl{\sigma}^{12},\,\bl{\sigma}^{21},\,\bl{\sigma}^{22} \Big\}$ is explicitly given by the basis transformation
\begin{align*}
    &\bl{\one}=\one,\quad
    \bl{a}=\frac{1+\ii}{2}(xy -\ii yx),\quad
    \bl{b}=\frac{1+\ii}{2}(-\ii xy + yx),\quad
    \bl{c}=z,\\
    &\bl{\sigma}^{11} = xp_0,\quad \bl{\sigma}^{12} = xp_1\quad \bl{\sigma}^{21} = yp_1,\quad \bl{\sigma}^{22} = yp_0.\\
\end{align*}
Here we use hat to denote objects in $\Rep^\dagger(\sH_8^*)$ and those without hat to denote objects in $\Rep^\dagger(\sH_8)$.
As a sanity check, we can show that the above basis satisfies MPO basis properties in \ref{sec:Hopf algebra symmetry MPO}, meaning
\begin{itemize}
    \item [--] multiplication: $\bl{a}\bl{a} = \bl{\one},\; \bl{b} = \bl{a}\bl{c} = \bl{c}\bl{a}$;
    \item [--] comultiplication: $\Delta \bl{a} = \bl{a}\ot \bl{a},\; \Delta \bl{c} = \bl{c}\ot \bl{c}, \; \Delta \bl{\sigma}^{\alpha\beta} = \sum_{\gamma}\bl{\sigma}^{\alpha\gamma}\ot \bl{\sigma}^{\gamma\beta}$;
    \item [--] antipode: $S(\bl{a}) = \bl{a},\; S(\bl{c}) = \bl{c},\; S(\bl{\sigma}^{\alpha\beta}) = \bl{\sigma}^{\beta\alpha}$.
\end{itemize}
Representation category of $\sH_8$ is isomorphic to the Tambara-Yamagami category $\TY_{\Z_2\xt \Z_2}^{\chi(g,h)=(-1)^{g_1h_1+g_2h_2},\epsilon=1}$. The irreducible representations are listed in Table \ref{tab:RepH8}.
\begin{table}
    \centering
    \begin{tabular}{|c|c|c|c|c|c|}
    \hline
        \diagbox{$\sH_8$}{Irr} & $\one$ & $a$ & $b$ & $c$ & $\sigma$  \\
        \hline
       $x$  & $1$ & $-1$ & $1$ & $-1$ & $X$\\
       \hline
       $y$  & $1$ & $1$ & $-1$ & $-1$ & $Y$\\
       \hline
       $z$  & $1$ & $1$ & $1$ & $1$ & $-\one$\\
       \hline
    \end{tabular}
    \caption{Irreducible representations of $\sH_8$ on generators.}
    \label{tab:RepH8}
\end{table}
We then calculate $\Rep^\dagger(\sH_8^*)$-MPOs on irreducible $\sH_8$-charges. Invertible MPOs are:
\begin{align*}
    &(T_\bl{\one})^{11}_\one = (T_\bl{\one})^{11}_a = (T_\bl{\one})^{11}_b = (T_\bl{\one})^{11}_c = 1,\quad (T_\bl{\one})^{11}_\sigma = \one,\\
    &(T_\bl{a})^{11}_\one=(T_\bl{a})^{11}_c = 1,\quad  (T_\bl{a})^{11}_a =  (T_\bl{a})^{11}_b = -1,\quad (T_\bl{a})^{11}_\sigma = -Z,\\
    &(T_\bl{b})^{11}_\one=(T_\bl{b})^{11}_c = 1,\quad  (T_\bl{b})^{11}_a =  (T_\bl{b})^{11}_b = -1,\quad (T_\bl{b})^{11}_\sigma = Z,\\
    &(T_\bl{c})^{11}_\one=(T_\bl{c})^{11}_a = (T_\bl{c})^{11}_b =  (T_\bl{c})^{11}_c = 1,\quad (T_\bl{c})^{11}_\sigma = -\one.
\end{align*}
And components of the non-invertible MPO are:
\begin{align*}
    &(T_\bl{\sigma})^{11}_\one =(T_\bl{\sigma})^{11}_b = \one,\quad (T_\bl{\sigma})^{11}_a =(T_\bl{\sigma})^{11}_c =  -\one,\quad (T_\bl{\sigma})^{11}_\sigma = \mathbf{0},\\
    &(T_\bl{\sigma})^{12}_\one =(T_\bl{\sigma})^{12}_a = (T_\bl{\sigma})^{12}_b =(T_\bl{\sigma})^{12}_c =  \mathbf{0},\quad (T_\bl{\sigma})^{12}_\sigma = X,\\
    &(T_\bl{\sigma})^{21}_\one =(T_\bl{\sigma})^{21}_a = (T_\bl{\sigma})^{21}_b =(T_\bl{\sigma})^{21}_c =  \mathbf{0},\quad (T_\bl{\sigma})^{21}_\sigma = Y,\\
    &(T_\bl{\sigma})^{22}_\one =(T_\bl{\sigma})^{22}_a = \one,\quad (T_\bl{\sigma})^{22}_b =(T_\bl{\sigma})^{22}_c =  -\one,\quad (T_\bl{\sigma})^{22}_\sigma = \mathbf{0}.
\end{align*}

\section{Some basics of category theory}
\begin{definition}[Module Category] 
Let $(\cC,\ot,\mathbf{1},\alpha,\lambda,\rho)$ be a linear monoidal category, where $\mathbf{1}$ is the tensor unit, $\alpha,\,\lambda,\,\rho$ are natural isomorphisms called associator, left unitor and right unitor respectively. A left module category over $\cC$ is a linear category $\cM$ equipped with a linear monoidal functor $\cC \rightarrow \Fun(\cM,\cM)$, or equivalently, a bilinear functor $\rhd:\cC \times \cM \rightarrow \cM$ with
\begin{itemize}
    \item An associator: a natural transformation $m$: $(-\ot-)\rhd- \Rightarrow - \rhd(- \rhd -)$ $\forall \,X,\,Y,\,Z \in \cC$, $M \in \cM$ such that the functor $M \mapsto \mathbf{1} \rhd M: \cM \rightarrow \cM$ is an autoequivalence, and such that the pentagon diagram commutes
    \[
    \begin{tikzcd}[cramped]
	{((X \otimes Y)\otimes)\rhd M} &&&& {(X \otimes Y) \rhd (Z \rhd M)} \\
	\\
	{(X \otimes(Y \otimes Z)) \rhd M} && {X \rhd((Y \otimes Z) \rhd M)} && {X          \rhd(Y \rhd(Z \rhd M))}
	\arrow["{m_{X \otimes Y,Z,M}}", from=1-1, to=1-5]
	\arrow["{\alpha_{X,Y,Z} \rhd \text{id}_M}"', from=1-1, to=3-1]
	\arrow["{m_{X,Y,Z \rhd M}}", from=1-5, to=3-5]
	\arrow["{m_{X,Y\otimes Z, M}}"', from=3-1, to=3-3]
	\arrow["{\text{id}_X \rhd m_{Y,Z,M}}"', from=3-3, to=3-5]
    \end{tikzcd} \,;\]
    \item A unitor: a natural transformation $\mu:\mathbf{1} \rhd - \rightarrow \text{id}_{\cM}$ such that $\forall\,X \in \cC$, $M \in \cM$,
    \[\begin{tikzcd}[cramped]
	{(\mathbf{1} \otimes X) \rhd M} && {\mathbf{1} \rhd (X \rhd M)} \\
	& {X \rhd M}
	\arrow["{m_{\mathbf{1},X,M}}", from=1-1, to=1-3]
	\arrow["{\lambda_X \rhd \text{id}_M}"', from=1-1, to=2-2]
	\arrow["{\mu_{X \rhd M}}", from=1-3, to=2-2]
    \end{tikzcd} \,,\]
    \[\begin{tikzcd}[cramped]
	{(X \otimes \mathbf{1}) \rhd M} && {X \rhd (\mathbf{1} \rhd M)} \\
	& {X \rhd M}
	\arrow["{m_{X,\mathbf{1},M}}", from=1-1, to=1-3]
	\arrow["{\rho_X \rhd \text{id}_M}"', from=1-1, to=2-2]
	\arrow["{\text{id}_X \rhd \mu_M}", from=1-3, to=2-2]
    \end{tikzcd} \,.\] 
\end{itemize}
A right $\cC$-module is defined similarly.
    
\end{definition}
\begin{definition}[$\cC$-module functor]
    Let $\cM$ and $\cN$ be two module categories over $\cC$ with associativity constraints $m$ and $n$, respectively. A $\cC$-module functor from $\cM$ to $\cN$ consists of a linear functor $F:\cM \rightarrow \cN$ and a natural isomorphism
\begin{equation*}
    s_{X,M}:F(X \rhd M) \rightarrow X \rhd F(M),\quad X \in \cC,\; M \in\cM \,,
\end{equation*}
such that the following diagrams
\[\begin{tikzcd}[cramped]
	{F(X \rhd (Y \rhd M))} &&& {F((X \otimes Y)\rhd M)} &&& {(X \otimes Y) \rhd F(M)} \\
	\\
	{X \rhd F(Y \rhd M)} &&&&&& {X \rhd (Y \rhd F(M))}
	\arrow["{F(m_{X,Y,M})}"', from=1-4, to=1-1]
	\arrow["{s_{X \otimes Y,M}}", from=1-4, to=1-7]
	\arrow["{n_{X,Y,F(M)}}", from=1-7, to=3-7]
	\arrow["{s_{X,Y\rhd M}}", from=3-1, to=1-1]
	\arrow["{\text{id}_X \rhd s_{Y,M}}", from=3-1, to=3-7]
\end{tikzcd}\,,\]
\[\begin{tikzcd}[cramped]
	{F(\mathbf{1}\rhd M)} && {\mathbf{1} \rhd F(M)} \\
	& {F(M)}
	\arrow["{s_{\mathbf{1},M}}", from=1-1, to=1-3]
	\arrow["{F(\mu_M)}"', from=1-1, to=2-2]
	\arrow["{\mu_{F(M)}}", from=1-3, to=2-2]
\end{tikzcd}\]
commute $\forall\,X,\,Y\in \cC$ and $M \in \cM$.
\end{definition}

\begin{definition}[Internal Hom]
    Let $\cC$ be a monoidal category and $\cM$ be a $\cC$-module category. Given $M,\,N \in \cM$. If the functor $X \mapsto \cM(X \rhd M,N):\cC \rightarrow \Ve$ is representable, i.e., there exists an object $[M,N] \in \cC$ and a natural isomorphism
    \begin{equation*}
       \cM(X \rhd M,N) \cong {\cC}(X,[M,N])\,,
    \end{equation*}
    then $[M,N]$ is called the internal Hom.
\end{definition}

\begin{definition}[$\Hilb_G$]\label{def:HilbG}
    Let $G$ be a finite group. The category of finite-dimensional $G$-graded Hilbert spaces, denoted by $\Hilb_G$, is a unitary fusion category consists of the following data:
    \begin{itemize}
        \item[--] objects: $G$-graded Hilbert spaces $V \simeq \oplus_{g\in G}V_g$, and for all $u_g\in V_g$ and $v_h\in V_h$, $\la u_g|v_h \ra =0$ if $g\neq h$;
        \item[--] morphisms: $G$-grading preserving linear maps;
        \item[--] the tensor product: given $G$-graded Hilbert spaces $V = \oplus_{g\in G}V_g$ and $W = \oplus_{g\in G}W_g$, the $g$-graded subspace of $V\ot W$ is $(V\ot W)_{g} = \oplus_{h\in G}V_h\ot W_{h^{-1}g}$;
        \item[--] dual: the $g$-graded subspace of $V^*$ is $(V^*)_{g} = \left(V_{g^{-1}}\right)^*$.
    \end{itemize}
\end{definition}

\begin{definition}
    Let $(\cC,\ot)$ and $(\cD,\odot)$ be unitary fusion categories. A monoidal functor $F:\cC\rightarrow \cD$ with monoidal structure $\eta:\odot\circ (F\times F) \Rightarrow F\circ\ot$ is called unitary if $F$ is a $\star$-functor (i.e., $F$ is linear and $F(f^\dag)=F(f)^\dag$), $F(\one_\cC)$ is unitarily isomorphic to $\one_\cD$ and $\eta_{X,Y}$ is unitary for every $X,Y\in\cC$.
\end{definition}

\begin{definition}[Unitary fiber functor]\label{def:fiber functor} For a tensor category $\cC$, a fiber functor is an exact faithful monoidal functor $\cC\to\Ve$.
   For a unitary fusion category $\cC$, we focus on unitary fiber functor which is a unitary monoidal functor $F:\cC\rightarrow \Hilb$ (note that the exactness and faithfulness are automatic for a linear functor from a semisimple category).
\end{definition}
We now expand the definition of the fiber functor. A unitary fiber functor is a unitary monoidal functor $F:\cC\rightarrow \Hilb$, with
\begin{itemize}
     \item a map from objects of $\cC$ to objects of $\Hilb$;
    \item for each $X,Y$ in $\cC$, an injective linear map from $\cC(X,Y)$ to $\Hilb(F(X), F(Y))$, ;
    \item for each $X, Y$ in $\cC$, a unitary natural isomorphism $F_{X,Y}$ from $F(X)\ot F(Y)$ to $F(X\ot Y)$,
\end{itemize}
satisfying the following conditions:
\begin{itemize}
    \item for any $X\xrightarrow{f}Y\xrightarrow{g}Z$ in $\cC$, we have $F(g)F(f) = F(gf)$ and $F(\id_X) = \id_{F(X)}$;
    \item $F(\one)\simeq \C$;
    \item for any $X, Y, Z$ in $\cC$, the following diagram commutes:
\[\begin{tikzcd}
	& {F(X)\ot F(Y)\ot F(Z)} \\
	{F(X\ot Y)\ot F(Z)} && {F(X)\ot F(Y\ot Z)} \\
	{F((X\ot Y)\ot Z)} && {F(X\ot (Y\ot Z))}
	\arrow["{F_{X,Y}\ot \id_{F(Z)}}"', from=1-2, to=2-1]
	\arrow["{\id_{F(X)}\ot F_{Y,Z}}", from=1-2, to=2-3]
	\arrow["{F_{X\ot Y, Z}}", from=2-1, to=3-1]
	\arrow["{F_{X, Y\ot Z}}"', from=2-3, to=3-3]
	\arrow["{F(\alpha_{X, Y,Z})}", from=3-1, to=3-3]
\end{tikzcd}\]
\end{itemize}
In a word, \textit{a fiber functor is a consistent assignment of Hilbert spaces to objects}. This interpretation is used throughout the paper: each virtual bond of the fusion category MPO assumes a \textit{dual role}—it simultaneously serves as an object in the symmetry category and as a Hilbert space. Similarly, each Hilbert space on a site also plays a \textit{dual role}—it is both an object in the charge category and a Hilbert space.

\begin{definition}[Balanced dual]\label{def:balanced dual}
    Let $\cC$ be a unitary fusion category and $X\in\cC$, $(X^*,\ev_X,\coev_X)$ is called a balanced dual of $X$ if
    \[
    \ev_X(\id_{X^*}\ot f)\ev_X^\dagger  = \coev^\dagger_X (f\ot \id_{X^*})\coev_X
    \]
    is satisfied for any $f:X\rightarrow X$.
\end{definition}

\begin{theorem}[Tannaka Duality, see \cite{etingof2015tensor}]\label{thm:tannaka}
    The assignments
    \begin{equation*}
        (\cC, F) \mapsto \sH = \categoricalEnd(F),\qquad \sH \mapsto(\Rep(\sH),\mathrm{fgt})
    \end{equation*}
    are mutually inverse bijections between
    \begin{enumerate}
        \item equivalence classes of finite tensor categories $\cC$ with a fiber functor $F$, up to tensor equivalence and isomorphism of tensor functors, and
        \item isomorphism classes of finite-dimensional Hopf algebras.
    \end{enumerate}
\end{theorem}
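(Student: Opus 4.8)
The plan is to prove the two assignments are well-defined and mutually inverse, realizing the classical Tannakian reconstruction. First I would make precise the reconstruction $\sH=\categoricalEnd(F)$: since $\cC$ is a finite tensor category, the natural endomorphisms of $F$ form a finite-dimensional vector space, and on a fusion category the block decomposition reads $\sH\cong\bigoplus_{s\in\Irr(\cC)}\End(F(s))$. I would equip $\sH$ with three compatible structures. The algebra structure is composition of natural transformations. The comultiplication and counit come from the monoidal constraint $J_{X,Y}:F(X)\otimes F(Y)\xrightarrow{\sim}F(X\otimes Y)$ together with $F(\one)\cong\C$: these encode how $\sH$ acts on tensor products of the $F(X)$, giving $\Delta:\sH\to\sH\otimes\sH$ compatible with multiplication, so that $\sH$ is a bialgebra. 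Finally rigidity of $\cC$ supplies the antipode: the duality data $(X^*,\ev_X,\coev_X)$ transported through $F$ yields identifications $F(X)^*\cong F(X^*)$ assembling into $S:\sH\to\sH$, and I would verify the antipode axiom $S(a_{(1)})a_{(2)}=\epsilon(a)\one=a_{(1)}S(a_{(2)})$ from the zig-zag identities. This shows the first assignment lands in finite-dimensional Hopf algebras.

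Second, I would prove $\sH\mapsto(\Rep(\sH),\fgt)$ inverts this, producing a tensor equivalence $\Phi:\cC\xrightarrow{\sim}\Rep(\sH)$ with $\fgt\circ\Phi\cong F$. The functor $\Phi$ sends $X$ to the Hilbert space $F(X)$ with its tautological $\sH$-action, and a morphism $g:X\to Y$ to $F(g)$, which is an intertwiner by naturality; monoidality of $\Phi$ follows from the comultiplication built above. The crux is that $\Phi$ is an equivalence. Faithfulness is immediate from faithfulness of $F$; fullness and essential surjectivity are the genuine content. In the semisimple (fusion) case relevant here, I would reduce this to a bijection on simple objects: through the block decomposition $\sH\cong\bigoplus_s\End(F(s))$, the simple $\sH$-modules are exactly the $F(s)$, so $s\mapsto F(s)$ matches $\Irr(\cC)$ with $\Irr(\Rep(\sH))$, and fullness then follows by comparing $\Hom$-spaces block by block, since both sides compute the same matrix-coefficient data.

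Third, I would verify the reverse round trip on objects of type (2): starting from a finite-dimensional Hopf algebra $\sH$, the forgetful functor $\fgt:\Rep(\sH)\to\Hilb$ satisfies $\categoricalEnd(\fgt)\cong\sH$ as Hopf algebras. The key lemma is that every natural endomorphism of $\fgt$ is left multiplication by a unique element of $\sH$; I would prove it by evaluating such a transformation on the regular representation and using naturality against right-multiplication intertwiners to show it is determined by its value on $\one$, thereby recovering an element of $\sH$. I would then check that this identification respects the algebra, coalgebra, and antipode structures, so the isomorphism is one of Hopf algebras. Finally I would confirm both assignments descend to the stated equivalence classes (tensor equivalences and tensor-natural isomorphisms of fiber functors on one side, Hopf isomorphisms on the other), making the bijections well-defined and mutually inverse.

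I expect the main obstacle to be the reconstruction equivalence in the second step, namely fullness and essential surjectivity of $\Phi$. In the general finite (non-semisimple) tensor setting this requires a faithful-flatness or density argument controlling that the coend $\sH$ has exactly the right size so $\Rep(\sH)$ carries no objects beyond the image of $\cC$; the delicate point is ruling out such extra objects. Since the application needs only the unitary fusion case, where $\cC$ is semisimple, I would streamline the argument by replacing this flatness argument with the explicit simple-object matching via $\sH\cong\bigoplus_s\End(F(s))$, which makes essential surjectivity transparent and reduces fullness to a finite dimension count.
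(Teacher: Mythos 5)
The paper does not prove this theorem: it is quoted verbatim as background (Theorem 5.3.12/5.4.1 territory of \cite{etingof2015tensor}), so the only meaningful comparison is with the standard proof in that reference, which your outline tracks faithfully. Your three steps — equipping $\sH=\categoricalEnd(F)$ with multiplication by composition, comultiplication from the monoidal constraint $J_{X,Y}$, and antipode from rigidity transported through $F$; the tautological functor $\Phi:\cC\to\Rep(\sH)$; and the regular-representation lemma showing every natural endomorphism of $\fgt$ is left multiplication by a unique element of $\sH$ — are exactly the standard reconstruction argument, and within the semisimple setting your block decomposition $\sH\cong\bigoplus_{s\in\Irr(\cC)}\End(F(s))$, the matching of simples $s\mapsto F(s)$, and the block-by-block dimension count for fullness are all correct. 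For the unitary fusion case that this paper actually uses, your proof is sound.

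However, as a proof of the statement as written there is a genuine gap, and you have located it but then argued it away incorrectly. The theorem asserts a bijection with \emph{all} isomorphism classes of finite-dimensional Hopf algebras, and class (1) consists of finite tensor categories, not fusion categories. Non-semisimple Hopf algebras — Taft algebras, small quantum groups — sit in class (2), and their representation categories are finite tensor but not semisimple; your streamlined argument cannot produce them, and the structural facts you lean on fail for them: $\sH$ is then not a multi-matrix algebra, there is no decomposition $\bigoplus_s\End(F(s))$, simple $\sH$-modules do not exhaust $\Rep(\sH)$ up to direct sums, and fullness cannot be settled by a dimension count on simples. Consequently the ``round trip'' on class (2) is not even addressed for the non-semisimple members, so the claimed mutually inverse bijections are not established. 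The deferred exactness/faithfulness argument (in \cite{etingof2015tensor} this is the comodule reconstruction over the coend, where exactness of $F$ is used to show $\cC\simeq\Rep(\categoricalEnd(F))$ with nothing extra on the right-hand side) is therefore not an optional refinement but the core of the general theorem. Your proposal proves the special case the paper needs; to prove the quoted statement you must either carry out that step or restate the theorem for fusion categories and semisimple Hopf algebras.
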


\section{Algebra}
\begin{definition}(Algebra)
  Let $\cC$ be a monoidal category. An (associative unital) algebra in $\cC$ is a triple $(A,m,\iota)$, consisting of an object $A \in \cC$ together with a multiplication $m:A \ot A \rightarrow A$ and a unit morphism $\iota:\mathbf{1} \rightarrow A$ satisfying associativity and identity:
  \[\begin{tikzcd}[cramped]
	{(A \otimes A) \otimes A} && {A \otimes(A \otimes A)} \\
	{A \otimes A} && {A \otimes A} \\
	& A
	\arrow["{\alpha_{A,A,A}}", from=1-1, to=1-3]
	\arrow["{m \otimes \text{id}_A}"', from=1-1, to=2-1]
	\arrow["{\text{id}_A\otimes m}", from=1-3, to=2-3]
	\arrow["m"', from=2-1, to=3-2]
	\arrow["m", from=2-3, to=3-2]
\end{tikzcd}\,,\]
\[\begin{tikzcd}[cramped]
	{\mathbf{1} \otimes A} && {A \otimes \mathbf{1}} \\
	{A \otimes A} & A & {A \otimes A}
	\arrow["{\iota \otimes \text{id}_A}"', from=1-1, to=2-1]
	\arrow["{\text{id}_A}", from=1-1, to=2-2]
	\arrow["{\text{id}_A}"', from=1-3, to=2-2]
	\arrow["{\text{id}_A \otimes \iota}", from=1-3, to=2-3]
	\arrow["m"', from=2-1, to=2-2]
	\arrow["m", from=2-3, to=2-2]
\end{tikzcd}\]
where $\text{id}_A:A \rightarrow A$ is the unique identity map on $A$.
\end{definition}

\begin{definition}[Module over an algebra]
Given an algebra $(A,m,\iota)$ in $\cC$, a right $A$-module is a pair $(M,\rho)$, where $M \in \cC$ and $\rho: M \otimes A \rightarrow M$ is an action morphism such that
\[\begin{tikzcd}[cramped]
	{(M \otimes A)\otimes A} && {M \otimes (A \otimes A)} \\
	{M \otimes A} && {M \otimes A} \\
	& M
	\arrow["{\alpha_{M,A,A}}", from=1-1, to=1-3]
	\arrow["{\rho \otimes \text{id}_A}"', from=1-1, to=2-1]
	\arrow["{\text{id}_M \otimes m}", from=1-3, to=2-3]
	\arrow["\rho"', from=2-1, to=3-2]
	\arrow["\rho", from=2-3, to=3-2]
\end{tikzcd}\,,\]
\[\begin{tikzcd}[cramped]
	{M \otimes \mathbf{1}} \\
	\\
	{M \otimes A} && M
	\arrow["{\text{id}_M \otimes \iota}"', from=1-1, to=3-1]
	\arrow["{\text{id}_M}", from=1-1, to=3-3]
	\arrow["\rho"', from=3-1, to=3-3]
\end{tikzcd}\,.\]
A left $A$-module is defined similarly.
\end{definition}

\begin{remark}
    Given $A$-modules $M,\,N$ in $\cC$, module homomorphisms between them form a subspace $\cC_A(M,N)$ of the vector space $\cC(M,N)$. The composition of homomorphisms is a homomorphism, thus right $A$-modules in $\cC$ form a category $\cC_A$.
\end{remark}

\begin{definition}
    Given a fusion category $\cC$, and an indecomposible semisimple $\cC$-module $\cM$,  denote category $\mathsf{Fun}_\cC(\cM,\cM)$ as $\cC^\vee_\cM$ and call it the dual fusion category to $\cC$ with respect to $\cM$.
\end{definition}

\begin{remark}
    Let $\cC$ be a fusion category. The collection of all $\cC$-module categories forms a 2-category $\mathsf{Mod}(\cC)$.
\end{remark}

\begin{theorem}
    Let $\cM$ be a faithful exact module category over a multitensor category $\cC$. The 2-functor
    \begin{equation*}
        \cN \mapsto \mathsf{Fun}_\cC(\cM,\cN):\mathsf{Mod}(\cC) \rightarrow \mathsf{Mod}((\cC_\cM^\vee)^{\text{rev}})
    \end{equation*}
    is a 2-equivalence.
\end{theorem}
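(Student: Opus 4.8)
The plan is to treat $\cM$ as a $(\cC,\cD)$-bimodule category, where $\cD:=\cC_\cM^\vee=\Fun_\cC(\cM,\cM)$ acts on the right by evaluation $m\triangleleft F:=F(m)$ (the reversal in $(\cC_\cM^\vee)^\rev$ being exactly the bookkeeping that turns this evaluation into a right action), and to exhibit an explicit quasi-inverse to the 2-functor $\Phi:\cN\mapsto\Fun_\cC(\cM,\cN)$. Writing $\Psi:\cP\mapsto \cP\boxtimes_{\cD}\cM$ for the candidate inverse, with the left $\cC$-action coming from the left action on $\cM$, I would prove that $\Phi$ is a 2-equivalence by verifying its two defining properties: that it is a \emph{local equivalence} (an equivalence on each Hom-category) and that it is \emph{2-essentially surjective}. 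Equivalently, I must produce 2-natural equivalences $\Psi\Phi\cong\id$ and $\Phi\Psi\cong\id$, after which compatibility with $2$-morphisms is automatic since $\Phi$ is a strict $2$-functor.

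First I would establish the local equivalence, i.e.\ for $\cC$-modules $\cP,\cQ$,
\[
    \Fun_\cC(\cP,\cQ)\;\cong\;\Fun_{\cD^\rev}\big(\Fun_\cC(\cM,\cP),\Fun_\cC(\cM,\cQ)\big),
\]
with the equivalence induced by post-composition. The natural source of this is the adjunction between $\Fun_\cC(\cM,-)$ and the relative tensor product $-\boxtimes_{\cD}\cM$: for an exact module category the internal Hom $\underline{\Hom}(m,m')\in\cC$ exists and is exact in each variable, so the coend computing $\boxtimes_{\cD}$ converges, giving natural equivalences $\Fun_{\cD^\rev}(\Fun_\cC(\cM,\cP),\cR)\cong\Fun_\cC(\cP,\cR\boxtimes_\cD\cM)$. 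Specializing $\cR=\Fun_\cC(\cM,\cQ)$ and using the counit equivalence $\Fun_\cC(\cM,\cQ)\boxtimes_\cD\cM\cong\cQ$ (again an equivalence by exactness) yields the displayed isomorphism; this is precisely the statement \eqref{eq:functor identity} used earlier, now proved rather than assumed.

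Next I would prove 2-essential surjectivity, namely that every right $\cD$-module $\cP$ is equivalent to $\Phi(\cN)$ for some $\cC$-module $\cN$; taking $\cN=\Psi(\cP)$, this reduces to the reconstruction equivalences $\Phi\Psi(\cP)=\Fun_\cC(\cM,\cP\boxtimes_\cD\cM)\cong\cP$ and its mirror $\Psi\Phi(\cN)\cong\cN$. The linchpin is the \emph{double-dual theorem}
\[
    \cC\;\xrightarrow{\ \sim\ }\;\Fun_{\cD}(\cM,\cM)=(\cC_\cM^\vee)_\cM^\vee,\qquad c\mapsto(c\triangleright-),
\]
which is where faithful exactness is essential. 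I would prove it by choosing a (projective) generator $m\in\cM$ and identifying $\cM\cong\Mod_\cC(A)$ for the internal endomorphism algebra $A=\underline{\End}(m)$, so that $\cC_\cM^\vee\cong{}_A\cC_A$; the double dual then becomes the classical statement that forming bimodules over the Morita-dual algebra recovers $\cC$, and the reconstruction isomorphisms become Morita reciprocity for algebras internal to $\cC$ (see \cite{etingof2015tensor}).

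The main obstacle is exactly this reduction to internal-algebra Morita theory together with the verification that the internal Hom and relative tensor product are suitably exact; everything hinges on $\cM$ being \emph{faithful and exact}. Faithfulness forces the double-dual functor $c\mapsto(c\triangleright-)$ to be faithful, since no nonzero object of $\cC$ acts as $0$, and handles the possibly decomposable case by allowing a finite family of generators rather than a single one. Exactness supplies the required adjoints and makes the coends defining $\boxtimes_\cD$ and the internal Homs behave as in the semisimple setting, so that the candidate unit and counit morphisms are genuine equivalences. Once the object-level equivalences $\Phi\Psi\cong\id$, $\Psi\Phi\cong\id$ and the local equivalence are secured, the 2-naturality of these isomorphisms completes the proof that $\cN\mapsto\Fun_\cC(\cM,\cN)$ is a 2-equivalence.
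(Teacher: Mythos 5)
First, a point of comparison: the paper does not actually prove this statement. It appears in the appendix as recalled background---it is the categorical Morita equivalence of Etingof--Nikshych--Ostrik (Theorem 7.12.16 of \cite{etingof2015tensor})---and is invoked as a black box at the start of the proof of Theorem \ref{thm:autoequivalence}. So your proposal can only be measured against the literature proof, and by that measure you have reproduced the standard architecture: present $\cM\simeq\Mod_\cC(A)$ with $A=\underline{\End}(m)$, identify $\cC_\cM^\vee\simeq{}_A\cC_A$, and reduce the unit and counit equivalences to Morita theory internal to $\cC$, with faithful exactness entering through the double-dual theorem $\cC\simeq(\cC_\cM^\vee)_\cM^\vee$ (EGNO Theorem 7.12.11). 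That is the right skeleton, and your identification of where faithfulness and exactness are used is accurate.

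Two things in your write-up nevertheless deserve flags. (i) There is a circularity in how you obtain the local equivalence: you derive \eqref{eq:functor identity} from the counit equivalence $\Fun_\cC(\cM,\cQ)\boxtimes_\cD\cM\simeq\cQ$, justified only by ``again an equivalence by exactness''; but that counit equivalence is literally half of the theorem ($\Psi\Phi\cong\id$), so it cannot be quoted as an input to the other half. Your plan is consistent only if the internal-Morita computation (e.g. ${}_B\cC_A\boxtimes_{{}_A\cC_A}{}_A\cC_{B'}\simeq{}_B\cC_{B'}$ for algebras $B,B'$ in $\cC$, plus the double-dual statement) is carried out first, with both the local equivalence and 2-essential surjectivity deduced from it afterwards; as written, the order of implications is backwards. (ii) Your quasi-inverse is built from the relative tensor product $\boxtimes_\cD$. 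In the stated generality (multitensor $\cC$, exact but possibly non-semisimple module categories) the existence and exactness of $\boxtimes_\cD$ is itself a nontrivial theorem, which is precisely why the EGNO proof avoids it and instead takes the quasi-inverse to be another functor category, $\cP\mapsto\Fun_{(\cC_\cM^\vee)^{\rev}}(\cM,\cP)$, combined with the double-dual theorem. Relatedly, $\Mod(\cC)$ must be read as the 2-category of \emph{exact} module categories, since the classification $\cN\simeq\Mod_\cC(B)$ you rely on fails for arbitrary ones. None of this is fatal---in the fusion-category setting the paper actually uses, everything is semisimple and these issues evaporate---but a complete proof in the stated multitensor generality must address them rather than cite them.
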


\begin{definition}[Categorical Morita equivalence]
    Let $\cC$, $\cD$ be fusion categories. $\cC$ and $\cD$ are categorically Morita equivalent if there is an indecomposible semisimple $\cC$-module category $\cM$ and a monoidal equivalence $\cD^{\text{rev}} \cong \cC^\vee_\cM$.
\end{definition}
    
\begin{definition}[Hopf Algebra]
    Let $\sH$ be a finite-dimensional vector space over $\C$, $m:\sH\ot \sH\rightarrow \sH$, $\iota:\C\rightarrow\sH$, $\Delta:\sH\rightarrow\sH\ot \sH$, and $\epsilon:\sH\rightarrow \C$ be linear maps, such that $(\sH,m,\iota)$ is an algebra with multiplication $m$ and unit $\iota$ and $(\sH,\Delta, \epsilon)$ is a coalgebra with comultiplication $\Delta$ and counit $\epsilon$. If they further satisfy the conditions: 
    \begin{itemize}
        \item $\Delta$ and $\epsilon$ are algebra homomorphisms with the multiplication on $\sH\ot \sH$ be defined as $(m\ot m)(\id\ot \mathrm{swap} \ot \id)$;
        \item Comultiplication of the unit $\one$ gives $\one \ot \one$;
        \item The counit of the unit is $1$: $\epsilon(\one) = 1$;
    \end{itemize}
    the tuple $(\sH, m,\iota, \Delta, \epsilon)$ is called a bialgebra. If there exists an invertible map (necessarily unique) $S:\sH\rightarrow \sH$ called the antipode satisfying the condition
    \[
        S(a_{(1)})a_{(2)} = \epsilon(a)\one = a_{(1)}S(a_{(2)}),
    \]
    the bialgebra $(\sH, m, \iota, \Delta, \epsilon)$ is called a Hopf algebra.
\end{definition}

\begin{theorem}[See for example, theorem 1.6.6 of \cite{Penneys2023}]\label{thm:Cstar}
    The following conditions are equivalent for a finite-dimensional unital complex $\star$-algebra $A$:
    \begin{enumerate}
        \item $A$ is a C$^\star $-algebra;
        \item There exists a $\star$-homomorphism $A\simeq \oplus_{i = 1}^{n}\M_{a_i}$, where each summand has the usual conjugate transpose $\dagger$-operation;
        \item For every $a\in A$, $aa^\star  = 0$ implies $a = 0$.
    \end{enumerate}
\end{theorem}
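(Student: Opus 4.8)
The plan is to prove the three conditions equivalent by the cycle $(2)\Rightarrow(1)\Rightarrow(3)\Rightarrow(2)$, with only the last implication carrying genuine content. For $(2)\Rightarrow(1)$ I would equip each block $\M_{a_i}$ with its operator norm (viewing $\M_{a_i}$ as operators on $\C^{a_i}$) and give $A\simeq\bigoplus_i\M_{a_i}$ the norm $\|a\|=\max_i\|a_i\|$; submultiplicativity and the identity $\|a^\star a\|=\|a\|^2$ hold blockwise precisely because the involution is the conjugate transpose, hence pass to the direct sum. For $(1)\Rightarrow(3)$ I would apply the C$^\star$-identity to $a^\star$ in place of $a$: this gives $\|aa^\star\|=\|a^\star\|^2$, so $aa^\star=0$ forces $\|a^\star\|=0$, hence $a^\star=0$ and $a=a^{\star\star}=0$. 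Both of these are formal.

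The substance is $(3)\Rightarrow(2)$, which I would split into two stages, the first being semisimplicity. Since $A$ is finite-dimensional its Jacobson radical $J(A)$ is a nilpotent ideal, and $\star$, being an involutive anti-automorphism, maps $J(A)$ into itself. The key observation is that a self-adjoint nilpotent element must vanish: if $h^\star=h$ and $h^{2^m}=0$, then each $g=h^{2^k}$ is self-adjoint with $gg^\star=g^2=h^{2^{k+1}}$, so condition $(3)$ lets me descend from $h^{2^{k+1}}=0$ to $h^{2^k}=0$, and iterating down to $k=0$ gives $h=0$. Applying this to $h=nn^\star$ for any $n\in J(A)$ (which is self-adjoint and nilpotent) yields $nn^\star=0$, whence $n=0$ by $(3)$. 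Thus $J(A)=0$ and $A$ is semisimple.

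The second stage uses Artin--Wedderburn: over the algebraically closed field $\C$, a finite-dimensional semisimple algebra satisfies $A\simeq\bigoplus_i\M_{a_i}(\C)$ as an algebra. I would first check that $\star$ preserves each block. The map $\star$ sends primitive central idempotents to primitive central idempotents, and if it sent $z_i$ to a distinct $z_j$ then $z_iz_i^\star=z_iz_j=0$ would force $z_i=0$ by $(3)$; hence each $z_i$ is fixed and $\star$ restricts to a proper involution on each $\M_{a_i}(\C)$. On a single block, the composite $\phi(x)=(x^\star)^\dagger$ is a $\C$-linear algebra automorphism, so by Skolem--Noether $x^\star=G^{-1}x^\dagger G$ for some invertible $G$; involutivity $\star^2=\id$ forces $G^\dagger=\lambda G$ with $|\lambda|=1$, and after a phase rescaling I may take $G=G^\dagger$ Hermitian.

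The final and most delicate point is to show $G$ is definite, which is exactly where condition $(3)$ is used in full strength. Since $G$ is invertible, $xx^\star=0$ reduces to $xG^{-1}x^\dagger=0$, and on a rank-one element $x=vw^\dagger$ this equals $(w^\dagger G^{-1}w)\,vv^\dagger$; an isotropic vector $w$ of the Hermitian form $G^{-1}$ would thus produce a nonzero $x$ with $xx^\star=0$, so $(3)$ forbids indefiniteness and forces $G$ (hence $G^{-1}$) to be positive or negative definite. Writing $G=\pm H^2$ with $H>0$, the inner automorphism $\psi(x)=HxH^{-1}$ satisfies $\psi(x^\star)=\psi(x)^\dagger$, i.e. it is a $\star$-isomorphism of the block onto $(\M_{a_i}(\C),\dagger)$; assembling the blocks proves $(2)$. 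I expect the main obstacle to be the bookkeeping in this last stage---correctly tracking conjugate-linearity through the Skolem--Noether step and extracting definiteness of $G$ from $(3)$---rather than any single hard estimate, since the semisimplicity argument and the two easy implications are essentially formal.
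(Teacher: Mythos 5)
The paper does not actually prove this theorem --- it is quoted as a standard fact with a citation to Penneys' notes --- so there is no internal proof to compare against; your argument must be judged on its own, and it is correct. The implications $(2)\Rightarrow(1)$ (max of blockwise operator norms) and $(1)\Rightarrow(3)$ (apply the C$^\star$-identity to $a^\star$) are routine and handled properly. In the substantive implication $(3)\Rightarrow(2)$, every step checks out: a self-adjoint nilpotent $h$ dies by your $h^{2^k}$ descent since $g=h^{2^k}$ is self-adjoint with $gg^\star=g^2$; for $n\in J(A)$ the element $nn^\star$ is self-adjoint, nilpotent, and lies in $J(A)$ --- note you do not even need the $\star$-stability of $J(A)$ that you invoke, since $J(A)$ is a two-sided ideal, so $nn^\star\in J(A)$ automatically --- hence $nn^\star=0$, forcing $n=0$ and $J(A)=0$; the primitive central idempotents are $\star$-fixed by your $z_iz_j=0$ argument; Skolem--Noether applies because $\M_{a_i}$ is central simple over $\C$ and $x\mapsto(x^\star)^\dagger$ is a $\C$-linear automorphism; and the rank-one computation $x=vw^\dagger$ giving $xG^{-1}x^\dagger=(w^\dagger G^{-1}w)\,vv^\dagger$ correctly converts condition $(3)$ into definiteness of $G^{-1}$, provided you supply the (standard, one-line) fact that an indefinite nondegenerate Hermitian form in dimension at least two admits a nonzero isotropic vector.

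Two small remarks. First, the sign in $G=\pm H^2$ is irrelevant: $x^\star=G^{-1}x^\dagger G$ is invariant under $G\mapsto -G$, so you may always take $G=H^2$ with $H>0$, after which $\psi(x)=HxH^{-1}$ intertwines $\star$ with $\dagger$ exactly as you verify. Second, for context: a common route to this result in the literature builds a faithful $\star$-representation directly (e.g., the left regular representation equipped with a tracial inner product whose positive-definiteness is extracted from condition $(3)$) and then invokes the structure of $\dagger$-closed matrix subalgebras, whereas your Wedderburn-plus-Skolem--Noether treatment handles the involution block by block in purely classical-algebra terms; the trade is the definiteness analysis of $G$ in place of the representation-theoretic step, and both are legitimate.
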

Hopf C$^\star $-algebra is defined in \cite{vaes1999hopfcalgebras}, and a version of C$^\star $-quantum groupoid is defined in \cite{nikshych2000finitequantumgroupoidsapplications}.

\section{Operator algebraic description of ground state} \label{app:O-type ground state}
\begin{example}\label{example:ising}
Consider the 1D quantum Ising model in symmetry breaking phase, characterized by the local interaction:
\[
    H_{\Lambda} = \sum_{\{i,i+1\}\subset\Lambda}\frac{1}{2}\left(\one - Z_iZ_{i+1}\right)
\]
for any finite region $\Lambda \in\fS_{\Z}^f$. Let $p_{i,i+1} = \frac{1}{2}\left(\one - Z_iZ_{i+1}\right)$. It can be verified that the $H$-type ground state $\omega$ satisfying $\omega(p_{i,i+1}) = 0$ is an $O$-type ground state \cite{Alicki_2007} and the space of $H$-type ground states forms a simplex $[0,1]$ whose extreme points $0$ and $1$ are two superselection sectors, corresponding to $\la \cdots 00000\cdots|(-)|\cdots 00000\cdots\ra$ and $\la \cdots 11111\cdots|(-)|\cdots 11111\cdots\ra$, respectively. 
What might be counterintuitive is that the single domain wall state, with the domain wall located at the link $( 0,1)$, defined by
\[\omega = \la \cdots 00001111\cdots|(-)|\cdots 00001111\cdots\ra\]
is also a ground state in \ref{def:groundstate}. We now verify it.

For an arbitrary $O\in\fA_{\Lambda}$, if $\Lambda \subset \Z_{>1}$ (or $\Lambda \subset \Z_{<0}$), the ground state condition is given by
\[
    \omega(O^\dagger H_{\overline{\Lambda}}O)-0\geq 0.
\]
which is trivially satisfied by the positivity of $H_{\overline{\Lambda}}$. When $\overline{\Lambda}$ crosses the regions $\Z_{\leq 0}$ and $\Z_{>0}$, the ground state condition becomes
\[
    \omega\left(O^\dagger [H_{\overline{\Lambda}},O]\right) 
     = \omega\left(O^\dagger H_{\overline{\Lambda}}O\right) - \omega\left(O^\dagger O\right)\geq 0.
\]
If $\omega\left(O^\dagger O\right) = 0$, the inequality is satisfied by the positivity of $H_{\overline{\Lambda}}$. If $\omega\left(O^\dagger O\right) > 0$, define 
\[|\eta\ra := O\bigg|0[00\cdots 0011\cdots 11]1\bigg\ra \in (\C^2)^{\otimes |\overline{\Lambda}|},\]
above condition is then equivalent to
\[
    \frac{\la \eta| H_{\overline{\Lambda}}|\eta\ra}{\la \eta|\eta\ra}\geq 1.
\]
We use the square bracket to denote the region $\Lambda$.
This inequality is satisfied because operators supported on $\Lambda$ can not annihilate a single domain wall; they can only move or create a pair of the domain walls within $\overline{\Lambda}$, thus the energy of $|\eta\ra$ can only be larger or equal to the energy of $\bigg|0[0\cdots 01\cdots 1]1\bigg\ra$. This ground state property is fundamentally linked to the topological nature of the single domain wall state.
\end{example}

\section{Q-system}
Following \cite{chen2024qsystemcompletionc2categories}, the Q-system is defined as follows:
\begin{definition}[Q-system]
    Let $\cC$ be a unitary fusion category. A Q-system in $\cC$ is a triple $(A,m,\iota)$, where
    \begin{enumerate}
        \item $(A,m,\iota)$ is a unital associative isometric algebra in $\cC$ and;
        \item $m$ and $m^\dagger$ satisfy the Frobenius condition;
        \item $\ev_A:=\iota^\dagger m:A\ot A\rightarrow\C$ and $\coev_A:=m^\dagger \iota:\C\rightarrow A\ot A$ is the balanced dual (see \ref{def:balanced dual}) for $A$.
    \end{enumerate}
\end{definition}
It has been proved in \cite{K-th,zito20052ccategoriesnonsimpleunits,Qsys,Lan_2024} that 1$\Rightarrow$2.
And condition 3 implies that $A$ is symmetrically self-dual, see for example \cite{penneys2018unitarydualfunctorsunitary,chen2024qsystemcompletionc2categories}. We denote $m_{(n)}$ by multiplying $n$ copies of $A$'s to a single $A$.

\begin{definition}[Irreducible Q-system, see for example Lemma 3.13 of \cite{Qsys}]
    A Q-system $(A,m,\iota)$ in a unitary fusion category $\cC$ is called irreducible if $\dim \cC(\one, A) = 1$.
\end{definition}

Motivated by Lemma 3.16 of \cite{Qsys}, we define
\begin{definition}\label{def:unit alg}
    Let $(A,m,\iota)$ be a simple Q-system in a unitary fusion category $\cC$, then the tuple $(\cC(\one, A),\cdot,\iota,\star)$ is a finite-dimensional unital associative $\star$-algebra with
\begin{itemize}
    \item multiplication: $f\cdot g:= m(f\ot g)$ for any $f,g\in \cC(\one, A)$;
    \item unit: $\iota\cdot f = f\cdot\iota = f$ for any $f$;
    \item star: $f^\star := (f^\dagger\ot \id_A)m^\dagger\iota = (\id_A\ot f^\dagger)m^\dagger\iota$ for any $f$.
\end{itemize}
\end{definition}

It is a C$^\star $-algebra by the theorem \ref{thm:Cstar}
\[
    f\cdot f^\star = 0\Rightarrow \iota^\dagger m(f\ot \id_A)(\iota^\dagger m(f\ot \id_A))^\dagger = 0\Rightarrow \iota^\dagger m(f\ot \id_A) = 0\Rightarrow f = 0.
\]
By Lemma 3.17 of \cite{Qsys}:
\begin{proposition}
    There is a $\star$-isomorphism
    \begin{gather*}
        \cC(\one,A)\simeq \cC_{\one|A}(A,A)\\
        f\mapsto m(f\ot \id_A)\\
        g\circ \iota \mapsfrom g .    
    \end{gather*}
    Similarly, there is also a $*$-isomorphism $\cC(\one,A)\simeq \cC_{A|\one}(A,A)$.
\end{proposition}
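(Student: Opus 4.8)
The plan is to realize the stated correspondence as a pair of mutually inverse $\star$-algebra homomorphisms, verifying each defining property in turn. I define $\Phi:\cC(\one,A)\to\cC_{\one|A}(A,A)$ by $\Phi(f)=L_f:=m\circ(f\ot\id_A)$, i.e. left multiplication by the ``state'' $f$, and $\Psi:\cC_{\one|A}(A,A)\to\cC(\one,A)$ by $\Psi(g)=g\circ\iota$. First I would check that $\Phi$ actually lands in right $A$-module endomorphisms: the identity $L_f\circ m=m\circ(L_f\ot\id_A)$ is exactly associativity of $m$ together with the interchange law, so $L_f$ commutes with the right $A$-action. That $\Psi(g)=g\circ\iota$ is a morphism $\one\to A$ is immediate.

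Next I would show $\Phi$ and $\Psi$ are mutually inverse. For $\Psi\circ\Phi=\id$ one computes $L_f\circ\iota=m\circ(f\ot\iota)=f$ using the right unit axiom $m\circ(\id_A\ot\iota)=\id_A$ and the interchange law. For $\Phi\circ\Psi=\id$ one uses that $g$ is a right module map: $L_{g\circ\iota}=m\circ((g\circ\iota)\ot\id_A)=m\circ(g\ot\id_A)\circ(\iota\ot\id_A)=g\circ m\circ(\iota\ot\id_A)=g$, the last step being the left unit axiom. Multiplicativity $\Phi(f\cdot g)=\Phi(f)\circ\Phi(g)$, with $f\cdot g=m(f\ot g)$ on the source and composition on the target, follows from one application of associativity, and $\Phi(\iota)=m\circ(\iota\ot\id_A)=\id_A$ gives unitality. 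At this point $\Phi$ is an algebra isomorphism.

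The crux is compatibility with $\star$, i.e. $\Phi(f^\star)=\Phi(f)^\dagger$, where $f^\star=(f^\dagger\ot\id_A)m^\dagger\iota=(\id_A\ot f^\dagger)m^\dagger\iota$ and $\dagger$ is the adjoint of the unitary category $\cC$. I would first verify that $L_f^\dagger=(f^\dagger\ot\id_A)\circ m^\dagger$ is again a right $A$-module endomorphism; this is where the Frobenius relation $m^\dagger\circ m=(\id_A\ot m)\circ(m^\dagger\ot\id_A)$ of the Q-system enters, letting one slide $m$ past $m^\dagger$ to obtain $L_f^\dagger\circ m=m\circ(L_f^\dagger\ot\id_A)$. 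Since $L_f^\dagger\in\cC_{\one|A}(A,A)$ and $\Phi$ is a bijection, it then suffices to match the two candidates under $\Psi$: $\Psi(L_f^\dagger)=L_f^\dagger\circ\iota=(f^\dagger\ot\id_A)\circ m^\dagger\circ\iota=(f^\dagger\ot\id_A)\circ\coev_A$, and this equals $f^\star$ precisely because $\coev_A=m^\dagger\iota$ is symmetric, i.e. $(f^\dagger\ot\id_A)\coev_A=(\id_A\ot f^\dagger)\coev_A$—the balanced/symmetric-dual property of a Q-system already recorded in Definition \ref{def:unit alg}. Hence $L_f^\dagger=L_{f^\star}=\Phi(f^\star)$, so $\Phi$ is a $\star$-isomorphism.

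I expect the module-map verification of $L_f^\dagger$ through the Frobenius identity to be the main obstacle, since it is the only place where the full Q-system structure (beyond associativity and the unit) is genuinely used; the symmetry of $\coev_A$ is then exactly what forces the two displayed expressions for $f^\star$ to coincide, closing the argument. Finally, the second isomorphism $\cC(\one,A)\cong\cC_{A|\one}(A,A)$ is obtained by the mirror construction $f\mapsto R_f:=m\circ(\id_A\ot f)$ with inverse $g\mapsto g\circ\iota$, the verifications being entirely analogous after exchanging the two tensor factors; the only change is that right multiplication reverses the order of composition, which is absorbed by the opposite-algebra convention on the algebra of left $A$-module endomorphisms.
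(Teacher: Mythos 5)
Your proof is correct, and it is worth noting that the paper itself does not actually prove this proposition: it is stated as an immediate consequence of Lemma 3.17 of \cite{Qsys}, with no argument given. Your direct verification therefore supplies precisely what the paper outsources, and it is the standard one: the algebra-isomorphism part ($\Phi$, $\Psi$ mutually inverse, multiplicative, unital) uses only associativity, the unit axioms and the interchange law, while the $\star$-compatibility is the only place the genuine Q-system structure enters --- the Frobenius relation $m^\dagger m=(\id_A\ot m)(m^\dagger\ot\id_A)$ is needed to show that $L_f^\dagger=(f^\dagger\ot\id_A)m^\dagger$ is again a right $A$-module map, after which bijectivity of $\Phi$ reduces everything to $\Psi(L_f^\dagger)=f^\star$. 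You correctly identified that as the crux, and your logic (first bijectivity without $\star$, then closure of $\cC_{\one|A}(A,A)$ under $\dagger$ via Frobenius, then matching under $\Psi$) is sound.

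Two minor refinements. First, $\Psi(L_f^\dagger)=(f^\dagger\ot\id_A)m^\dagger\iota$ equals $f^\star$ already by the first displayed expression in Definition \ref{def:unit alg}; the balanced-dual property of $\coev_A=m^\dagger\iota$ is what makes the two expressions in that definition consistent with each other, so invoking it at this step is harmless but not logically required. Second, in the mirror case the map $f\mapsto R_f:=m(\id_A\ot f)$ satisfies $R_f R_g=R_{g\cdot f}$, i.e.\ it is an anti-homomorphism; since the proposition asserts an isomorphism onto $\cC_{A|\one}(A,A)$, you should make the fix explicit rather than absorb it into a convention --- for instance take $f\mapsto R_{f^\star}$, which is multiplicative because $\star$ is anti-multiplicative and is $\star$-preserving because $R_g^\dagger=R_{g^\star}$ by the mirror of your Frobenius computation.
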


\begin{definition}[Modules]
    Let $\cC$ be a unitary fusion category and $(A,m,\iota)$ a Q-system in $\cC$. The left (right) $A$-module $(M,m_M)$ ($(N,{}_{N}m)$) is called a Q-system module if $m_M$ (${}_{N}m$) is partial isometric. 
\end{definition}
By the Definition 3.37 of \cite{Qsys}
\begin{definition}[Relative tensor product]
    Let $A, B,C$ be Q-systems in $\cC$ and $(M,\mu_A,\mu_B)$ is an $A$-$B$-bimodule, $(N,\nu_B,\nu_C)$ is a $B$-$C$-bimodule. Suppose
    \[
        P_\otimes:=
        \diagram{1}{
            \draw (0,0) -- (0,1.5);
            \draw (1,0) -- (1,1.5);
            \draw (0,.9) node[left]{$M$} -- (.5,.6) -- (1,.9) node[right]{$N$};
            \draw (.5,.4) -- (.5,.6) node[above]{$B$};
            \filldraw[fill = white, draw = black] (.5,.4) circle [radius = .05];
        }
    \]
    splits as
    \[
        P_\otimes = S^\dagger S,\quad S:M\ot N\rightarrow M\ot[B]N,
    \]
    then $M\ot[B]N$ is an $A$-$C$-bimodule, whose left $A$-action is
    \[
        m_{L}:=S(\mu_A\ot \id_N)(\id_A\ot S^\dagger)
    \]
    and right $C$-action is
    \[
        m_R := S(\id_M\ot \nu_C)(S^\dagger\ot \id_C).
    \]
\end{definition}
We now review the Morita equivalence between Q-systems.
\begin{definition}[Morita equivalence]
    Let $A,B$ be Q-systems in $\cC$. Then $A$ and $B$ are called Morita equivalent if there is an $A$-$B$-bimodule $M$ and a $B$-$A$-bimodule $N$ such that $M\ot[B]N\simeq A$ and $N\ot[A]M\simeq B$.
\end{definition}
\begin{remark}[\cite{kong2019semisimple}]\label{remark:matrix algebra}
    Suppose $A$ and $B$ are Morita equivalent by an $A$-$B$-bimodule $M$ and a $B$-$A$-bimodule $N$, then one can construct a larger Q-system
    \[
    X = \begin{pmatrix}
        A & M\\ N & B
    \end{pmatrix}:=A\oplus M\oplus N\oplus B,
    \]
    whose algebra structure is the multiplication by block.
\end{remark}
We review the definition of the simple Q-system.
\begin{definition}[Simple Q-system]
    The Q-system $(A,m,\iota)$ in $\cC$ is called simple if $A$ is simple in $\cC_{A|A}$.
\end{definition}
As a special case, we consider Q-systems in $\Hilb$:
\begin{definition}[Center]
    Let $(A,m,\iota)$ be the Q-system in $\Hilb$, the center of $A$ is a subalgebra of $\hom(\C,A)\simeq A$ defined by
    \[
        Z(A):=\{a\in A: ab = ba,\ \forall b\in A\}.
    \]
\end{definition}
It can be proved that 
\begin{proposition}
    The C$^\star $-structure of $\hom(\C,A)$ induces a C$^\star $-structure of $Z(A)$.
    And $Z(A)$ is $\star$-isomorphic to $\hom_{A|A}(A,A)$.
\end{proposition}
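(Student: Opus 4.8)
The plan is to prove the two assertions in turn, using the unital $\star$-algebra structure on $\cC(\one,A)\cong\hom(\C,A)$ (Definition \ref{def:unit alg}) together with the $\star$-isomorphism $\cC(\one,A)\cong\cC_{\one|A}(A,A)$, $f\mapsto m(f\ot\id_A)$, recorded above following Lemma 3.17 of \cite{Qsys}. The two claims are essentially independent: the first is an inheritance statement about C$^\star$-structures, and the second is the standard identification of central elements with bimodule endomorphisms, transported through a map we already know to be a $\star$-isomorphism.

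For the first claim I would show that $Z(A)$ is a unital $\star$-subalgebra of $\cC(\one,A)$ and then invoke Theorem \ref{thm:Cstar}. Closure under multiplication is immediate, and $\iota\in Z(A)$ since the unit is central, so $Z(A)$ is a unital subalgebra. For closure under $\star$ I would use that $\star$ is a conjugate-linear anti-automorphism: if $a\in Z(A)$ then $ab=ba$ for every $b$, and applying $\star$ gives $b^\star a^\star = a^\star b^\star$; since $b\mapsto b^\star$ is a bijection of $A$, the element $a^\star$ commutes with everything, hence $a^\star\in Z(A)$. Thus $(Z(A),\cdot,\iota,\star)$ is a finite-dimensional unital $\star$-algebra, and since $aa^\star=0\Rightarrow a=0$ already holds in the ambient C$^\star$-algebra $A$, the same implication holds on $Z(A)$; by condition (3) of Theorem \ref{thm:Cstar}, $Z(A)$ is a C$^\star$-algebra.

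For the second claim I would restrict the $\star$-isomorphism $\Phi:\cC(\one,A)\xrightarrow{\sim}\cC_{\one|A}(A,A)$, $a\mapsto L_a:=m(a\ot\id_A)$, to the center. Every $L_a$ is automatically a right $A$-module map by associativity of $m$; I would show that $L_a$ is in addition a left $A$-module map (hence lies in $\hom_{A|A}(A,A)$) if and only if $a\in Z(A)$. Indeed $L_a(c\cdot b)=a(cb)=(ac)b$ while $c\cdot L_a(b)=(ca)b$, so $L_a$ is a bimodule map iff $(ac)b=(ca)b$ for all $b,c$; evaluating at $b=\iota$ forces $ac=ca$ for all $c$, i.e. $a\in Z(A)$, and conversely centrality gives the bimodule property. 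Since $\Phi$ is already a bijection onto $\cC_{\one|A}(A,A)$ and every bimodule map is in particular a right-module map, this iff shows that $\Phi$ carries $Z(A)$ bijectively onto $\hom_{A|A}(A,A)$; because $\Phi$ intertwines the $\star$-structures, its restriction is a $\star$-isomorphism $Z(A)\cong\hom_{A|A}(A,A)$.

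The step I expect to require the most care is verifying that $\hom_{A|A}(A,A)$ is genuinely a $\star$-subalgebra of $\cC_{\one|A}(A,A)$, i.e. that the dagger of an $A$-$A$-bimodule endomorphism is again a bimodule endomorphism, so that the $\star$ inherited from $\cC_{\one|A}(A,A)$ matches the one transported from $Z(A)$. This is not formal and is exactly where the Frobenius (Q-system) structure enters, since the adjoint interchanges the left and right actions through $m^\dagger$ and the balanced-dual condition. Once this closure is in hand, the identification of the two $\star$-structures and the transport of the C$^\star$-norm along $\Phi$ are routine, and the proposition follows.
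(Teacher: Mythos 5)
Your proof is correct, but it reaches the result by a genuinely different route than the paper. The paper's proof is entirely graphical: it characterizes $a\in Z(A)$ by the string-diagram identity $m(a\ot \id_A)=m(\id_A\ot a)$ (Equation \eqref{eq:central}), proves closure of the center under $\star$ by a diagrammatic manipulation that invokes the Frobenius/duality moves of the Q-system, and then simply observes that the isomorphism $Z(A)\to\hom_{A|A}(A,A)$ is given by that same equation, with inverse $g\mapsto g\circ\iota$. You instead argue algebraically: $\star$-closure of $Z(A)$ follows from the abstract fact that $\star$ is a conjugate-linear anti-automorphism (no diagrams and no Frobenius moves needed at this step, since Definition \ref{def:unit alg} already asserts the $\star$-algebra structure); the C$^\star$ claim is settled by inheriting condition (3) of Theorem \ref{thm:Cstar} from the ambient algebra, a step the paper leaves implicit; and the isomorphism is obtained by restricting the $\star$-isomorphism $\Phi:\cC(\one,A)\to\cC_{\one|A}(A,A)$ of the cited Lemma 3.17 to the center, after checking via elements that $\Phi(a)$ is a bimodule map iff $a$ is central. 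The underlying maps ($a\mapsto m(a\ot\id_A)$ and $g\mapsto g\circ\iota$) coincide with the paper's, so the content is the same; what your packaging buys is economy and explicitness about which cited results carry the load, while the paper's graphical proof is more self-contained and shows exactly where the Q-system axioms are used.

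One remark on your final paragraph: the closure of $\hom_{A|A}(A,A)$ under the involution of $\cC_{\one|A}(A,A)$, which you single out as the delicate step, is in fact automatic from what you have already proved. You establish that $\hom_{A|A}(A,A)=\Phi(Z(A))$, that $Z(A)$ is $\star$-closed, and that $\Phi$ intertwines the involutions; hence for $g=\Phi(a)$ with $a\in Z(A)$ one has $g^\star=\Phi(a^\star)\in\Phi(Z(A))=\hom_{A|A}(A,A)$. The Frobenius structure you anticipate needing is indeed present, but it is hidden inside Lemma 3.17 (it is what identifies the transported involution on $\cC_{\one|A}(A,A)$ with the Hermitian adjoint $g\mapsto g^\dagger$); given that lemma, no additional argument is required, and your proof is complete as written.
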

\begin{proof}
    We denote $a\in A$ by
    \[
        a = 
        \diagram{1}{
            \draw (0,0) -- (0,.5);
            \fill[black] (-.1,-.1) rectangle (.1,.1);
        },\quad a^\dagger = 
        \diagram{1}{
            \draw (0,0) -- (0,.5);
            \fill[black] (-.1,.4) rectangle (.1,.6);
        }.
    \]
    
    Then $a$ is in $Z(A)$ if and only if
    \begin{equation}\label{eq:central}
        \diagram{1}{
            \draw (0,0) -- (0,1);
            \fill[black] (-.1,-.1) rectangle (.1,.1);
            \draw (0,.5) -- (.5,0) -- (.5,-1);
        }\quad =\quad 
        \diagram{1}{
            \draw (0,0) -- (0,1);
            \fill[black] (-.1,-.1) rectangle (.1,.1);
            \draw (0,.5) -- (-.5,0) -- (-.5,-1);
        }.
    \end{equation}
    Then $a^\star$ is central:
    \[
        \diagram{1}{
            \fill[black] (-.1,-.1) rectangle (.1,.1);
            \draw (0,0) -- (.25,-.25) -- (.5,0) -- (.5,1);
            \draw (.5,.5) -- (1,0) -- (1,-1);
            \draw (.25,-.5) -- (.25,-.25);
            \filldraw[fill = white,draw = black] (.25,-.5) circle [radius = .05];
        }=
        \diagram{1}{
            \fill[black] (-.1,-.1) rectangle (.1,.1);
            \draw (0,0) -- (.5,-.5);
            \draw (.5,-1) -- (.5,1);
        }
        =
        \diagram{1}{
            \fill[black] (-.1,-.1) rectangle (.1,.1);
            \draw (0,0) -- (-.5,-.5);
            \draw (-.5,-1) -- (-.5,1);
        }
        =
        \diagram{1}{
            \fill[black] (-.1,-.1) rectangle (.1,.1);
            \draw (0,0) -- (-.25,-.25) -- (-.5,0) -- (-.5,1);
            \draw (-.5,.5) -- (-1,0) -- (-1,-1);
            \draw (-.25,-.5) -- (-.25,-.25);
            \filldraw[fill = white,draw = black] (-.25,-.5) circle [radius = .05];
        }
    \]
    And the isomorphism $Z(A)\rightarrow \hom_{A|A}(A,A)$ is exactly by Equation \eqref{eq:central} and the converse direction is by $-\circ \iota$.
\end{proof}

By Lemma 4.1 of \cite{Qsys},
\begin{proposition}\label{prop:qsys reduction}
    Let $(A,m,\iota)$ be a simple Q-system and $p$ a projector in $(\cC(\one, A),\cdot,\iota,\star)$. Suppose $m(p\ot \id_A)$ splits as $r^\dagger r$ with $r:A\rightarrow A_r$, $m(\id_A\ot p)$ splits as $l^\dagger l$ with $l:A\rightarrow A_l$, and $m_{(2)}(p\ot\id_A\ot p)$ splits as $P^\dagger P$ with $P:A\rightarrow A_P$. Then
    \begin{itemize}
        \item $(A_P,m_P,\iota_P)$ is a Q-system in $\cC$, where
            \[
                m_P:=Pm(P^\dagger\ot P^\dagger)\times \sqrt{\frac{d_A}{|p^\dagger p|}},\quad \iota_P := P\iota\times \sqrt{\frac{|p^\dagger p|}{d_A}}.
            \]
        \item  $(A_l,\mu,\mu_l)$ is an $A$-$A_P$-bimodule, $(A_r,\nu_r,\nu)$ is an $A_P$-$A$-bimodule, where
            \[
                \mu:=lm(\id_A\ot l^\dagger),\quad \mu_r:=lm(l^\dagger\ot P^\dagger)\times \sqrt{\frac{d_A}{|p^\dagger p|}},\quad \nu_r:=rm(P^\dagger\ot r^\dagger)\times \sqrt{\frac{d_A}{|p^\dagger p|}},\quad \nu:= rm(r^\dagger\ot \id_A).
            \]
        \item $A$ and $A_P$ are Morita equivalent by $A_r$ and $A_l$.
    \end{itemize}
\end{proposition}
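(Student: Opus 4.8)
The plan is to treat Proposition~\ref{prop:qsys reduction} as the categorical incarnation of the classical corner-algebra Morita equivalence: for an ordinary algebra $A$ with a self-adjoint idempotent $p$, the corner $pAp$ is Morita equivalent to $A$ via the bimodules $Ap$ and $pA$, provided $p$ is full. Here $p\in\cC(\one,A)$ is read as an ``element'' of $A$; the hypotheses that $p$ is a projector in the $\star$-algebra of Definition~\ref{def:unit alg} (i.e. $p\cdot p=p$ and $p^\star=p$) encode self-adjoint idempotence, and $A_P,A_l,A_r$ are the images of the compressions $b\mapsto pbp$, $b\mapsto bp$, $b\mapsto pb$. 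I would run the whole argument in the graphical calculus of the unitary fusion category $\cC$, taking as basic moves the Q-system axioms of $(A,m,\iota)$: associativity, the Frobenius relations for $m$ and $m^\dagger$, isometry of $m$, and the balanced duality. (Abstractly one could instead invoke the Q-system completion of \cite{Qsys}, in which $p$ is a splittable higher projection and the equivalence is formal; I prefer the explicit diagrammatic route.)

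First I would translate the data into diagrams. Drawing $p$ as a vertex $\one\to A$, the relation $p\cdot p=p$ is $m(p\ot p)=p$, while $p^\star=p$ is the bending identity $(p^\dagger\ot\id_A)m^\dagger\iota=p=(\id_A\ot p^\dagger)m^\dagger\iota$ supplied by the balanced dual. From these, together with associativity and Frobenius, I would check that each of $m(p\ot\id_A)$, $m(\id_A\ot p)$ and $m_{(2)}(p\ot\id_A\ot p)$ is a self-adjoint idempotent on $A$ (idempotence collapses the extra $p$'s via $m(p\ot p)=p$; self-adjointness is the standard fact that left/right multiplication by $p$ has Frobenius-adjoint equal to multiplication by $p^\star=p$). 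Hence the stated splittings $r^\dagger r$, $l^\dagger l$, $P^\dagger P$ exist, with $rr^\dagger=\id_{A_r}$, $ll^\dagger=\id_{A_l}$, $PP^\dagger=\id_{A_P}$.

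For the first bullet I would define $m_P,\iota_P$ as stated and verify the Q-system axioms for $A_P$. Associativity of $m_P$ reduces, after inserting $P^\dagger P=m_{(2)}(p\ot\id_A\ot p)$ and sliding the inner $p$'s through with Frobenius, to associativity of $m$; unitality expresses that $p$ is the unit of the compressed algebra; and the isometry and Frobenius conditions for $m_P$ follow from those of $m$ together with $PP^\dagger=\id_{A_P}$. The one genuinely computational point is the scalars: I would evaluate $\iota_P^\dagger\iota_P$ and $m_P m_P^\dagger$ and choose the normalizations $\sqrt{d_A/|p^\dagger p|}$ and $\sqrt{|p^\dagger p|/d_A}$ (with $|p^\dagger p|$ the scalar $p^\dagger p\in\cC(\one,\one)=\C$) precisely so that $A_P$ is isometric in the Q-system sense. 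The bimodule axioms for $(A_l,\mu,\mu_l)$ and $(A_r,\nu_r,\nu)$ are then routine diagrammatic checks of associativity and unitality of the actions, again absorbing auxiliary $p$'s with the splittings and Frobenius.

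The main obstacle is the third bullet. Using the relative tensor product of the Q-system definition, I would show $A_l\ot[A_P]A_r\cong A$ and $A_r\ot[A]A_l\cong A_P$ by exhibiting the splitting idempotents $P_\otimes$ and identifying the resulting objects. The isomorphism $A_r\ot[A]A_l\cong A_P$ is essentially formal, being the categorical version of $pA\ot[A]Ap\cong pAp$. The delicate step is $A_l\ot[A_P]A_r\cong A$: the relative tensor product computes the two-sided ideal $ApA$ generated by $p$, so the claim is exactly that $p$ is \emph{full}. This is where the hypothesis that $A$ is a \emph{simple} Q-system is used: since $A$ is simple in $\cC_{A|A}$ (so its only sub-bimodules are $0$ and $A$), the nonzero sub-bimodule $A_l\ot[A_P]A_r\cong ApA$ must be all of $A$. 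I would therefore deduce fullness of the nonzero projector $p$ from simplicity and conclude the equivalence. I expect this fullness argument, together with reconciling all normalization constants so that the two composite isomorphisms come out unitary, to be the hardest part; everything else is mechanical once the projection identities of the second step are established.
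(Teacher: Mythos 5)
The paper does not actually prove this proposition — it imports it wholesale as Lemma 4.1 of \cite{Qsys} — so the comparison here is with the standard corner-compression argument, which is indeed the route you chose. Your dictionary ($A_P\simeq pAp$, $A_l\simeq Ap$, $A_r\simeq pA$), your observation that $L_p:=m(p\ot\id_A)$, $R_p:=m(\id_A\ot p)$ and $L_pR_p$ are self-adjoint idempotents (hence split), and your reduction of the third bullet to fullness of $p$, deduced from simplicity of $A$ in $\cC_{A|A}$, are all correct; the unit, associativity and Frobenius checks for $m_P$ and the bimodule axioms are also genuinely mechanical, since the normalization constants there appear on both sides and cancel.

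There is, however, a genuine gap, located exactly at the step you declared routine. You claim that ``the isometry \ldots for $m_P$ follow[s] from those of $m$ together with $PP^\dagger=\id_{A_P}$'', with the scalar merely to be ``chosen'', and you assert that simplicity of $A$ is used only for fullness in the third bullet. That is not so. Pushing the $p$'s through with the Frobenius relations and using $PL_pR_p=P$, one finds
\begin{equation*}
    m_Pm_P^\dagger \;=\; \frac{d_A}{|p^\dagger p|}\;P\,\Bigl[\,m(\id_A\ot L_p)\,m^\dagger\,\Bigr]\,P^\dagger ,
\end{equation*}
and the bracketed morphism is an $A$-$A$-bimodule endomorphism of $A$, i.e.\ an element of $\End_{A|A}(A)$ — a ``central element''. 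For non-simple $A$ it need not be a scalar multiple of $\id_A$, and when it is not, \emph{no} choice of normalization makes $m_P$ co-isometric: the obstruction is a non-scalar operator, not a wrong constant. One must first invoke simplicity, $\End_{A|A}(A)\cong\C$, to conclude the bracket is scalar, and then a sphericality/trace computation to identify that scalar as $|p^\dagger p|/d_A$, which is precisely what cancels the prefactor and gives $m_Pm_P^\dagger=PP^\dagger=\id_{A_P}$. This identity — $m(\id_A\ot L_p)m^\dagger=\tfrac{|p^\dagger p|}{d_A}\id_A$ for simple $A$ — is exactly the computation the paper carries out in its Deformation appendix ($m(f)m(f)^\dagger=\lambda_f\id_A$ with $\lambda_f=|f^\dagger f|/d_A$, ``since $A$ is simple''). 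So your proof stalls at the first bullet until you import this averaging identity; once it is in hand, the remainder of your outline (bimodule axioms, $A_r\ot[A]A_l\cong A_P$, and the simplicity-based fullness argument for $A_l\ot[A_P]A_r\cong A$) goes through as written.
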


\section{More properties of Q-system models}
\subsection{Proofs of properties associated to fixed-point operators}\label{app:properties of fixed-point operators}
\begin{proposition}\label{prop:fpo eq bimod map}
    Let $(A,m,\iota)$ be a Q-system in $\Hilb$. The space of fixed-point operators $\fZ(A)$ is isomorphic to the space of $A$-$A$-bimodule operators $\End_{A|A}(A)$.
\end{proposition}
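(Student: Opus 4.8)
The plan is to identify both $\fZ(A)$ and $\End_{A|A}(A)$ as the \emph{same} subspace of $\End(A)=\fA_{\{0\}}$, namely the image of one projection. Recall that screening by a single copy of $A$ on each side is
\[
    \scr(O)=m_{(2)}(\id_A\ot O\ot\id_A)\,m_{(2)}^\dagger,\qquad m_{(2)}=m(m\ot\id_A)=m(\id_A\ot m),
\]
and that, by the reduction-to-single-$A$ property noted in the footnote, this already computes $\fZ(A)=\im(\scr)$. The only structural inputs I expect to need are the isometric/special normalization $mm^\dagger=\id_A$ of the Q-system and the Frobenius compatibility of $m$ with $\Delta:=m^\dagger$, i.e. $\Delta(ab)=(a\ot\one)\Delta(b)=\Delta(a)(\one\ot b)$. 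Since $\scr$ is idempotent (established earlier), its image coincides with its fixed-point set, so it suffices to prove the two inclusions $\End_{A|A}(A)\subseteq\fZ(A)$ and $\im(\scr)\subseteq\End_{A|A}(A)$.

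First I would show $\End_{A|A}(A)\subseteq\fZ(A)$. If $f$ is an $A,A$-bimodule map, then $m(f\ot\id_A)=fm=m(\id_A\ot f)$, and pushing $f$ through the two multiplications inside $m_{(2)}=m(m\ot\id_A)$ using these two relations gives $m_{(2)}(\id_A\ot f\ot\id_A)=f\,m_{(2)}$. Hence
\[
    \scr(f)=f\,m_{(2)}m_{(2)}^\dagger=f\,m\big((mm^\dagger)\ot\id_A\big)m^\dagger=f\,mm^\dagger=f,
\]
where I used $m_{(2)}^\dagger=(m^\dagger\ot\id_A)m^\dagger$ and $mm^\dagger=\id_A$. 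Thus every bimodule map is a fixed point of $\scr$, so it lies in $\im(\scr)=\fZ(A)$.

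Conversely I would show $\im(\scr)\subseteq\End_{A|A}(A)$, i.e. that $\scr(O)$ is a bimodule map for \emph{every} $O\in\End(A)$. In Sweedler notation $\scr(O)(a)=\sum a_{(1)}\,O(a_{(2)})\,a_{(3)}$, where $m_{(2)}^\dagger a=\Delta^{(2)}a=\sum a_{(1)}\ot a_{(2)}\ot a_{(3)}$. Iterating the Frobenius relation gives $\Delta^{(2)}(ab)=\Delta^{(2)}(a)(\one\ot\one\ot b)$, from which $\scr(O)(ab)=\sum a_{(1)}O(a_{(2)})(a_{(3)}b)=\scr(O)(a)\,b$ (right $A$-linearity); the mirror identity $\Delta^{(2)}(ab)=(a\ot\one\ot\one)\Delta^{(2)}(b)$ yields $\scr(O)(ab)=a\,\scr(O)(b)$ (left $A$-linearity). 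Therefore $\im(\scr)=\End_{A|A}(A)$ as subspaces of $\End(A)$, which is the claimed isomorphism (in fact an equality). As a consistency check, paragraph two shows $\scr$ restricts to the identity on $\End_{A|A}(A)$, so $\scr$ is exactly the projection onto the bimodule endomorphisms and no stray normalization survives — this is also forced by the idempotency of $\scr$.

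The main obstacle is the second inclusion: the whole argument rests on sliding $m_{(2)}^\dagger$ past a multiplication via the Frobenius moves, and on confirming that the single-$A$-each-side screening (rather than screening by several condensed charges) is the correct model of $\scr$ in the bulk. Once the reduction-to-single-$A$ statement and the normalization $mm^\dagger=\id_A$ are granted, the remaining content is a short diagrammatic computation, and presenting it graphically — sliding the trivalent $m$ and $m^\dagger$ vertices using the Frobenius and unit relations — is likely cleaner than the Sweedler bookkeeping.
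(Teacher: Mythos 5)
Your proposal is correct and follows essentially the same route as the paper: both establish the two inclusions $\End_{A|A}(A)\subseteq\fZ(A)$ (bimodule maps are fixed by $\scr$, using $mm^\dagger=\id_A$) and $\im(\scr)\subseteq\End_{A|A}(A)$ (screened operators are bimodule maps, via Frobenius sliding of the multiplication past $m_{(2)}^\dagger$). The only differences are presentational — you work in Sweedler notation where the paper uses graphical calculus, and you write out explicitly the direction the paper dismisses as ``straightforward to check.''
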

\begin{proof}
    Suppose $f\in\fZ(A)$ is a fixed-point operator, then $f$ is a left $A$-module map:
    \[
        \diagram{.5}{
            \draw (0,-.5) -- (0,2.5);
            \filldraw[fill = white] (-.3,.7) rectangle node{$f$} (.3,1.3);
            \draw (-1.5,.5) -- (0,2);
        } = 
        \diagram{.5}{
            \draw (0,-.5) -- (0,2.5);
            \filldraw[fill = white] (-.3,.7) rectangle node{$f$} (.3,1.3);
            \draw (-1.5,.5) -- (0,2);
            \draw (0,1.75) -- (-.75,1) -- (0,.25);
            \draw (0,1.75) -- (.75,1) -- (0,.25);
        } 
        =
        \diagram{.5}{
            \draw (0,-.5) -- (0,2.5);
            \filldraw[fill = white] (-.3,.7) rectangle node{$f$} (.3,1.3);
            \draw (-.5,-.5) -- (0,0);
            \draw (0,1.75) -- (-.75,1) -- (0,.25);
            \draw (0,1.75) -- (.75,1) -- (0,.25);
        } 
        =
        \diagram{.5}{
            \draw (0,-.5) -- (0,2.5);
            \filldraw[fill = white] (-.3,.7) rectangle node{$f$} (.3,1.3);
            \draw (-.5,-.5) -- (0,0);
        }.
    \]
    The right $A$-module condition can be proved similarly. Conversely, given $g\in\End_{A|A}(A)$, it is straightforward to check for the fixed-point equation.
\end{proof}

For any fixed-point operators $f,g\in \fZ(A)$, the commutativity of their composition $f\circ g = g \circ f$ can be proved by studying the following diagram:
\[
    \diagram{1}{
        \draw (0,0) -- (0,2);
        \filldraw[fill = white] (-.2, .3) rectangle node{$g$} (.2,.7);
        \filldraw[fill = white] (-.2, 1.3) rectangle node{$f$} (.2,1.7);
    } 
    =
    \diagram{1}{
        \draw (0,0) -- (0,.5) -- (.5,1) -- (0,1.5) -- (0,2);
        \draw (0,.5) -- (-.5,1) -- (0,1.5);
        \filldraw[fill = white] (-.7, .8) rectangle node{$g$} (-.3,1.2);
        \filldraw[fill = white] (.7, .8) rectangle node{$f$} (.3,1.2);
    } 
    =
    \diagram{1}{
        \draw (0,0) -- (0,2);
        \filldraw[fill = white] (-.2, .3) rectangle node{$f$} (.2,.7);
        \filldraw[fill = white] (-.2, 1.3) rectangle node{$g$} (.2,1.7);
    }
    =
    \diagram{1}{
        \draw (0,0) -- (0,.5) -- (.5,1) -- (0,1.5) -- (0,2);
        \draw (0,.5) -- (-.5,1) -- (0,1.5);
        \filldraw[fill = white] (-.7, .8) rectangle node{$f$} (-.3,1.2);
        \filldraw[fill = white] (.7, .8) rectangle node{$g$} (.3,1.2);
    }
\]
Such a circular commutativity does not hold for any local operators but only for those that are in the fixed point already.
Moreover, given any $f\in \fZ(A)$, its Hermitian conjugate $f^\dagger$ is also in $\fZ(A)$. Thus $\fZ(A)$ is a finite-dimensional commutative C$^\star $-algebra.

Since any finite-dimensional commutative C$^\star $-algebra is isomorphic to direct sums of $\C$, we decompose the algebra to $\fZ(A)\simeq \oplus_{\alpha}\C p_\alpha$, with the set $\{p_\alpha\}$ be the set of orthogonal projectors satisfying $p_\alpha p_\beta = \delta_{\alpha\beta}p_\alpha$ and $p_\alpha^\dagger = p_\alpha$. These projectors satisfy the property that they can freely proliferate on any $A$-mesh:
\[
    \diagram{1.3}{
        \draw (0,0) -- (0,2);
        \draw (1,0) -- (1,2);
        \draw (2,0) -- (2,2);
        \draw (3,0) -- (3,2);
        \draw (0,.25) -- (1,.75);
        \draw (1,1.25) -- (2,1.75);
        \draw (2,.25) -- (3,.75);
        \filldraw[fill = white] (1-.15, 1-.15) rectangle node{$p_{\alpha}$} (1+.15, 1+.15);
    }
    =
    \diagram{1.3}{
        \draw (0,0) -- (0,2);
        \draw (1,0) -- (1,2);
        \draw (2,0) -- (2,2);
        \draw (3,0) -- (3,2);
        \draw (0,.25) -- (1,.75);
        \draw (1,1.25) -- (2,1.75);
        \draw (2,.25) -- (3,.75);
        \filldraw[fill = white] (1-.15, 1-.15) rectangle node{$p_{\alpha}$} (1+.15, 1+.15);
        \filldraw[fill = white] (2-.15, 1-.15) rectangle node{$p_{\alpha}$} (2+.15, 1+.15);
        \filldraw[fill = white] (0-.15, 1-.15) rectangle node{$p_{\alpha}$} (0+.15, 1+.15);
        \filldraw[fill = white] (3-.15, 1-.15) rectangle node{$p_{\alpha}$} (3+.15, 1+.15);
        \filldraw[fill = white] (1-.15, .4-.15) rectangle node{$p_{\alpha}$} (1+.15, .4+.15);
        \filldraw[fill = white] (1-.15, 1.6-.15) rectangle node{$p_{\alpha}$} (1+.15, 1.6+.15);
        \filldraw[fill = white] (3-.15, .4-.15) rectangle node{$p_{\alpha}$} (3+.15, .4+.15);
        \filldraw[fill = white] (.5-.15, .5-.15) rectangle node{$p_{\alpha}$} (.5+.15, .5+.15);
        \filldraw[fill = white] (2.5-.15, .5-.15) rectangle node{$p_{\alpha}$} (2.5+.15, .5+.15);
        \filldraw[fill = white] (1.5-.15, 1.5-.15) rectangle node{$p_{\alpha}$} (1.5+.15, 1.5+.15);
    }
\]
\begin{corollary}\label{prop:lowest energy state}
    The $H$-type ground state space $\mathcal{G}_\Z$ of the Q-system model $(\Z, A, 1-m^\dagger m)$ is a simplex, whose extreme points are $\omega_{\alpha}$, satisfying $\omega_{\alpha}(m_{i,i+1}^\dagger m_{i,i+1}) = 1$ for any $i$ and $\omega_{\alpha}\left(p_{\beta}\right) = \delta_{\alpha\beta}$ for $p_\beta$ on any site.
\end{corollary}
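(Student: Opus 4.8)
The plan is to realize the $H$-type ground state space $\mathcal{G}_\Z$ as the state space of the finite-dimensional commutative C$^\star$-algebra $\fZ(\chargef A)\simeq\bigoplus_\alpha\C p_\alpha$ of bulk fixed-point operators, whose state space is exactly the standard simplex. By the preceding results I may assume that for every $H$-type ground state $\omega$ one has $\omega=\omega\circ\scr$, where $\scr:\fA_\Lambda\to\fZ(\chargef A)\subseteq\fA_{\{0\}}$ is the idempotent screening map; that $\{p_\alpha\}$ are orthogonal minimal projectors with $\sum_\alpha p_\alpha=\id_{\chargef A}$; and that, by the proliferation property, $\scr$ normalizes any single-site insertion of $p_\alpha$ to the copy $p_{\alpha,\{0\}}$, so $\omega(p_{\alpha,\{i\}})$ is independent of $i$. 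First I would define the restriction map $\Theta:\mathcal{G}_\Z\to\mathrm{St}(\fZ(\chargef A))$, $\omega\mapsto(\omega(p_{\alpha,\{0\}}))_\alpha$, and record that it lands in the standard simplex: positivity of $\omega$ gives $\omega(p_{\alpha,\{0\}})\ge0$, and $\sum_\alpha\omega(p_{\alpha,\{0\}})=\omega(\one)=1$.

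Next I would show $\Theta$ is affine and injective. For any local $O\in\fA_\Lambda$, write $\scr(O)=\sum_\alpha c_\alpha(O)\,p_{\alpha,\{0\}}$; then $\omega(O)=\omega(\scr O)=\sum_\alpha c_\alpha(O)\,\omega(p_{\alpha,\{0\}})$, so $\omega$ is completely determined by $\Theta(\omega)$ and depends affinely on it. Since the $H$-type ground state conditions $\omega(\Phi(X))=0$ are linear, $\mathcal{G}_\Z$ is convex, and $\Theta$ is an affine injection of a convex set into the simplex.

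The core of the argument is surjectivity, i.e. constructing the extreme points. For each $\alpha$ I would set $\omega_\alpha:=\mathrm{ev}_\alpha\circ\scr$, where $\mathrm{ev}_\alpha(\sum_\beta c_\beta p_\beta)=c_\alpha$ is the $\alpha$-th vertex state of $\fZ(\chargef A)$. One checks this is a well-defined state: normalization follows from $\scr(\one)=\sum_\beta p_\beta$; positivity follows from $\scr(O^\dagger O)=(O\,m_\Xi^\dagger)^\dagger(O\,m_\Xi^\dagger)\ge0$ together with positivity of $\mathrm{ev}_\alpha$ on the commutative algebra $\fZ(\chargef A)$; and $\scr^2=\scr$ gives $\omega_\alpha=\omega_\alpha\circ\scr$ and the consistency of the definition as $\Lambda$ grows, so that $\omega_\alpha$ extends by continuity to the quasi-local algebra $\fA_\Z$. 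It is a ground state because the Q-system identities $mm^\dagger=\id$ and associativity force $\scr\big((\one-m^\dagger m)_{i,i+1}\big)=m_\Xi m_\Xi^\dagger-m_\Xi(m^\dagger m)_{i,i+1}m_\Xi^\dagger=0$, hence $\omega_\alpha(\Phi(X))=0$. Since $p_{\beta,\{0\}}$ is already a fixed point, $\omega_\alpha(p_{\beta,\{0\}})=\mathrm{ev}_\alpha(p_{\beta,\{0\}})=\delta_{\alpha\beta}$, so $\Theta(\omega_\alpha)$ is the $\alpha$-th vertex. By convexity the image of $\Theta$ then contains the whole simplex.

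Putting these together, $\Theta$ is an affine isomorphism from $\mathcal{G}_\Z$ onto $\mathrm{St}(\fZ(\chargef A))$, so $\mathcal{G}_\Z$ is itself a simplex whose extreme points are the preimages of the vertices, namely the $\omega_\alpha$. Finally $\omega_\alpha(m_{i,i+1}^\dagger m_{i,i+1})=\omega_\alpha(\one)-\omega_\alpha(\Phi(\{i,i+1\}))=1$ holds for every ground state, in particular the extreme ones. I expect the main obstacle to be the last verification in the surjectivity step, namely confirming through the Q-system bookkeeping that $\omega_\alpha=\mathrm{ev}_\alpha\circ\scr$ is genuinely a ground state and extends consistently to $\fA_\Z$, since this is exactly where the idempotence of $\scr$, the annihilation of the interaction terms, and the complete positivity of the screening all have to be used together carefully.
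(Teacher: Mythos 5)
Your proposal is correct and rests on the same core mechanism as the paper's proof: the screening identity $\omega=\omega\circ\scr$ together with the decomposition $\fZ(\chargef A)\simeq\oplus_\alpha\C p_\alpha$, which reduces any ground state to its values $\omega(p_{\alpha,\{0\}})$ and yields the convex decomposition $\omega=\sum_\alpha\lambda_\alpha\omega_\alpha$. The genuine difference is one of completeness rather than route: the paper's proof \emph{presupposes} the existence of states $\omega_\beta$ with $\omega_\beta(p_{\alpha,\{0\}})=\delta_{\alpha\beta}$ and only carries out the decomposition step, whereas you actually construct the extreme points as $\omega_\alpha=\mathrm{ev}_\alpha\circ\scr$ and verify they are states (positivity via $\scr(O^\dagger O)=(Om_\Xi^\dagger)^\dagger(Om_\Xi^\dagger)\geq 0$, normalization via $m_\Xi m_\Xi^\dagger=\id$) and $H$-type ground states (via $\scr\bigl((\one-m^\dagger m)_{i,i+1}\bigr)=0$, using $mm^\dagger=\id$ and associativity). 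This surjectivity step is exactly what the paper leaves implicit, so your version buys a self-contained argument at the cost of the extra Q-system bookkeeping; packaging everything as an affine bijection $\Theta$ onto the state space of the finite-dimensional commutative C$^\star$-algebra $\fZ(\chargef A)$ is also a cleaner way to conclude that $\mathcal{G}_\Z$ is literally a simplex with the $\omega_\alpha$ as its vertices.
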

\begin{proof}
     Denote $p_{\alpha,\{0\}}$ the projector on site $0$, let $\lambda_{\alpha} = \omega(p_{\alpha, \{0\}})\geq 0$, for any finite support operator $O\in\fA_{\Lambda}$ and $\omega$ a ground state,
    \begin{align*}
        \omega(O) &= \sum_{\alpha}c_{\alpha}\omega(p_{\alpha,\{0\}})\\
        &=\sum_{\alpha\beta}c_{\alpha}\lambda_{\beta}\omega_{\beta}(p_{\alpha,\{0\}})\\
        &= \sum_{\beta}\lambda_{\beta}\omega_\beta(O).
    \end{align*}
    Thus any ground state $\omega$ can be expressed as a finite convex combination of $\omega_\alpha$:
    \[
        \omega = \sum_{\alpha}\lambda_\alpha \omega_\alpha,\quad \lambda_\alpha\geq 0,\quad \sum_{\alpha} \lambda_\alpha = 1.
    \]
\end{proof}

Similar to the $G$-SPT case, the ground state of the $(\cC, f)$-symmetric phase can also be represented by a matrix product state (MPS) \cite{inamura202411dsptphasesfusion}. 
Suppose $(A,m,\iota)$ is a Q-system in $\cC_{\Hilb_f}^\vee$ and $\chargef A\simeq W\ot W^*$ is a matrix algebra in $\Hilb$. The ground state $\omega$ of the Q-system model $(\Z, \sH, A, 1-m^\dagger m)$ is an MPS, whose local tensor in a unit cell is 
\[
    \diagram{1}{
        \draw[string] (0,0) -- node[above right]{$A$} (0,1);
        \draw[string] (0,0) -- node[above]{$W$} (1.1,0);
        \draw[string] (-1.1,0) -- node[above]{$W$} (0,0);
        \filldraw[fill = white] (-.15,-.15) rectangle node {$T$} (.15,.15);
    }
     =\frac{1}{\sqrt{\dim W}}\times
    \diagram{1}{
        \draw[string] (.1,1) -- node[right]{$W$} (.1,0);
        \draw[string] (.1,0) -- (1.1,0);
        \draw[string] (-1.1,0) -- (-.1,0);
        \draw[string] (-.1,0) -- node[left]{$W$} (-.1,1);
    },
\]
and its map on a local operator is \cite{Fannes1992}
\[
    \omega(O) = \frac{1}{\dim W}\times 
    \diagram{0.7}{
        \foreach \i in {0,1,...,6}
        {
            \draw[string] (\i,0) -- (\i+1,0);
            \draw[string] (\i+1,0) -- (\i+1,1);
            \draw[string] (\i+1,1) -- (\i,1);
        }
        \draw[string] (0,0) -- (0,1);
        \filldraw[fill = white] (.75,.25) rectangle node{$O$} (6.25,.75);
        \foreach \j in {1,2,...,6}
        {
            \filldraw[fill = white] (\j-.1,-.1) rectangle (\j+.1,.1);
            \filldraw[fill = white] (\j-.1,.9) rectangle (\j+.1,1.1);
        }
    }
\]
It is straightforward to check the axiom of a quantum state in the operator algebra framework. We check it is the $H$-type ground state
\[
    \omega(m^\dagger m) = \frac{1}{\left(\dim W\right)^{4}}\times
    \diagram{.5}{
        \draw (.1,0) -- (.5,.4) -- (.9,0) -- cycle;
        \draw (.1,2) -- (.5,1.6) -- (.9,2) -- cycle;
        \draw (-1,0) -- (-.1,0) -- (.4,.5) -- (.4,1.5) -- (-.1,2) -- (-1,2) -- cycle;
        \draw (2,0) -- (1.1,0) -- (.6,.5) -- (.6,1.5) -- (1.1,2) -- (2,2) -- cycle;
    } = 1.
\]
For simplicity, we drop the arrow of strings here.

\subsection{Generic edge modes}\label{app:generic edge modes}
Suppose $(A,m_A,\iota_A), (B,m_B,\iota_B)$ are Q-systems in $\Hilb$ and $(M,\mu_A,\mu_B)$ a $A$-$B$ bimodule. Let
\[
    \Psi(\{i\}) = \begin{cases}
        A & i<0\\
        M & i = 0\\
        B & i>0
    \end{cases}
\]
and
\[
    \Phi(\{i,i+1\}) = \begin{cases}
        1- m_A^\dagger m_A & i<-1\\
        1- \mu_A^\dagger \mu_A & i = -1\\
        1- \mu_B^\dagger \mu_B & i = 0\\
        1- m_B^\dagger m_B & i > 0
    \end{cases}.
\]
The triple $(\Z, \Psi, \Phi)$ defines a 1D quantum system on an infinite chain describing the domain wall $M$ between the phase $A$ and $B$.

Similar to the bulk and edge case, the screening map
\begin{gather*}
    \scr:\fA_{\Lambda}\rightarrow \fA_{\{0\}},\\
    O\mapsto \scr(O):= m_\Xi Om_\Xi^\dagger,
\end{gather*}
can be defined for the following three cases of supporting regions of $O$:
\[
    \scr(O)=
    \diagram{1}{
    \draw (0,0) -- (0,2.5);
        \foreach \i in {1,...,7}
        {
            \draw[YaleBlue] (0,.5) -- (\i*.25, 1) -- (\i*.25,1.5) -- (0,2);
        }
        \filldraw[draw = YaleBlue,fill = white] (.5,1) rectangle node{$O$} (1.5,1.5);
    },\quad 
    \diagram{1}{
        \draw (0,0) -- (0,2.5);
        \foreach \i in {1,...,7}
        {
            \draw[YaleGreen] (0,.5) -- (-\i*.25, 1) -- (-\i*.25,1.5) -- (0,2);
        }
        \filldraw[draw = YaleGreen,fill = white] (-.5,1) rectangle node{$O$} (-1.5,1.5);
    }\quad \text{or}\quad 
    \diagram{1}{
        \draw (0,0) -- (0,2.5);
        \foreach \i in {1,2,3}
        {
            \draw[YaleBlue] (0,.5) -- (\i*.25, 1) -- (\i*.25,1.5) -- (0,2);
        }
        \foreach \i in {1,2,3}
        {
            \draw[YaleGreen] (0,.5) -- (-\i*.25, 1) -- (-\i*.25,1.5) -- (0,2);
        }
        \filldraw[fill = white] (-.5,1) rectangle node{$O$} (.5,1.5);
    }.
\]
The map $\scr$ is an idempotent and its image $\im(\scr) =: \fZ_{A|B}(M)$ is called the algebra of fixed-point operators on the domain wall. By the property of the Q-system model, we have
\[
    \omega = \omega \circ \scr.
\]
Thus the $H$-type ground state space is isomorphic to the space of states on $\fZ_{A|B}(M)=\End_{A|B}(M)$. And $\fZ(A) = \fZ_{A|A}(A)$, $\fZ_{A}(M) = \fZ_{\one|A}(M)$ as special cases. We thus have:
\begin{mdframed}
    The zero mode of $(\Z, \Psi,\Phi)$ is the minimal faithful module over the algebra of fixed-point operators on the domain wall $\fZ_{A|B}(M)$.
\end{mdframed}

\subsection{Ground state degeneracy by taking the thermodynamic limit}
We now calculate the ground state degeneracy of the Q-system model by taking the thermodynamic limit. 

Let $\{\cL_L\}_{L=4}^\infty$ be a sequence of quantum systems, where each $\cL_L := (\{1, \cdots, L\}, \Psi, \Phi)$ represents a quantum system defined on the lattice $\{1, \cdots, L\}$. This sequence corresponds to the process of taking the thermodynamic limit.
In general, both the bulk and the edge contribute to the ground state degeneracy of each finite chain $\cL_L$. Our goal is to distinguish these contributions.

Both the bulk and the edge include regions containing a macroscopic number of sites, growing proportionally with the system size as the system itself grows. This perspective is what we refer to as ``macroscopically local'', and we provide a more detailed characterization below.

A macroscopic local region in one dimension is a sequence of local regions $\{\Lambda_{L}\}_{L=4}^\infty$, where $\Lambda_L\subset \{1,\cdots, L\}$ such that the following limits exist
\[
    \lim_{L\rightarrow \infty}\min(\Lambda_L)/L := d_l,\quad \lim_{L\rightarrow \infty}|\Lambda_L|/L  := |\Lambda| > 0.
\]
We also define 
\[d_r := 1 - d_l - |\Lambda|.\]
The $d_l$ ($d_r$) characterizes the left (right) distance between the left (right) boundary of the region and that of the system and $|\Lambda|$ characterizes the length of the region.
The macroscopic local region $\{\Lambda_{L}\}_{L=4}^\infty$ is
\begin{itemize}
    \item a bulk region if $0<d_l<1$ and $0<d_r<1$;
    \item a left boundary region if $\min(\Lambda_L) = 0$ for any $L$ and $0<d_r<1$;
    \item a right boundary region if $\max(\Lambda_L) = L-1$ for any $L$ and $0<d_l<1$.
\end{itemize}
\begin{figure}
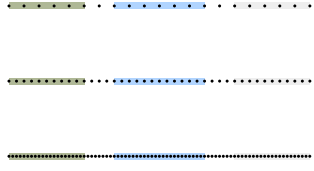

    \centering
    \diagram{2}{
        \fill[YaleLightBlue] (.7,-.02) rectangle (1.3,.02);
        \fill[YaleLightGreen] (0,-.02) rectangle (0.5,.02);
        \fill[YaleLightGrey] (1.5,-.02) rectangle (2,.02);
        \fill[YaleLightBlue] (.7,-.02-.5) rectangle (1.3,.02-.5);
        \fill[YaleLightGreen] (0,-.02-.5) rectangle (0.5,.02-.5);
        \fill[YaleLightGrey] (1.5,-.02-.5) rectangle (2,.02-.5);
        \fill[YaleLightBlue] (.7,-.02-1) rectangle (1.3,.02-1);
        \fill[YaleLightGreen] (0,-.02-1) rectangle (0.5,.02-1);
        \fill[YaleLightGrey] (1.5,-.02-1) rectangle (2,.02-1);
        \foreach \i in {0,1,2,...,20}
        {
            \fill (\i * .1,0) circle [radius = .01];
        }
        \foreach \i in {0,1,2,...,40}
        {
            \fill (\i * .05,-.5) circle [radius = .01];
        }
        \foreach \i in {0,1,2,...,80}
        {
            \fill (\i * .025,-1) circle [radius = .01];
        }   
    }
    \caption{Illustration of the macroscopic local regions. Here the green region represents the left boundary, the blue region represents the bulk and the grey region represents the right boundary.}
    \label{fig:enter-label}
\end{figure}

To calculate the bulk and boundary ground state degeneracies separately, we adopt the principle that \textit{the ground state degeneracy in a macroscopically local region is the effective degeneracy observed by its local observables}. This principle is inspired by the philosophical exploration of the principle that \textit{a phase is defined on an open disk} presented in \cite{kong2022invitationtopologicalorderscategory} and the entanglement bootstrap framework discussed in \cite{Shi_2020,Shi_2021}.

The observable within a macroscopic local region is a sequence of C$^\star $-algebras $\{\fA_{\Lambda_L}\}_{L = 4}^{\infty}$\footnote{An ideal formulation of the thermodynamic limit requires specifying embedding maps between the C$^\star $-algebras for different $L$ values, so that models with varying numbers of sites can be meaningfully compared. The total algebra in the thermodynamic limit is then the inductive limit of this inductive net.
}, with the natural embedding of algebras $\fA_{\Lambda_L}\hookrightarrow\fA_{\{1,\cdots, L\}}$ for each $L$. Any states on $\fA_{\{1,\cdots,L\}}$ can be restricted to $\fA_{\Lambda_L}$. In particular, the restriction of the global ground states
\footnote{Note that for a non-fixed-point model, the ground state is generically not exactly degenerate. Still, for a fixed $L$ there exist energy gaps $E_{i, L}$ between the ground state and several excitation states labeled by $i$, such that $\lim_{L\rightarrow \infty}E_{i,L} = 0$ for all $i$. In this case, an error is needed to include the energy states that are sufficiently low as the ground state subspace when taking the thermodynamic limit \cite{kitaev2024almostidempotentquantumchannelsapproximate}. We will not deal with this case here.} 
defines an effective space of ground states on local regions \cite{nachtergaele2010liebrobinsonboundsquantummanybody}
\begin{gather*}
    \cG_{\Lambda_L} :=  \left\{\omega|_{\fA_{\Lambda_L}}: \omega\in\cG_{\{1,\cdots, L\}}\right\}.
\end{gather*}
This restriction may not be injective; there may exist two distinct ground states $\omega \neq \omega'$ on $\fA_{\{1, \cdots, L\}}$ that become identical on $\fA_{\Lambda_L}$ after restriction:
\[
    \omega|_{\fA_{\Lambda_L}} = \omega'|_{\fA_{\Lambda_L}}.
\]
If the geometric property of the space $\cG_{\Lambda_L}$ stabilizes as $L\to +\infty$, then the ground state degeneracy and the associated superselection sectors can be inferred from this stabilized geometry.

The argument above may be too vague. Now we consider the Q-system model as a specific example. Since it is a fixed-point model, this case is clean and almost trivial. 
Let $(A,m,\iota)$ be a Q-system in $\cC_{\Hilb_f}^\vee$ and $(M,m_M)$, $(N,{}_{N}m)$ are standard $A$-modules. Consider a sequence $\left\{\mathcal{L}_{L}\right\}_{L = 4}^{\infty}$, where $\cL_{L} = (\{1,2,\cdots, L\}, \Psi_L,\Phi_L)$ is the Q-system model on a finite chain
with left boundary condition $M$, right boundary condition $N$ and the bulk $A$.
We know that the ground state subspace of $\cL_L$ is isomorphic to $\chargef (M\ot[A]N)$ for all $L$. 

We denote the map
\[
    m_{\{1,\cdots, L\}} := \bigg(M\ot A^{\otimes(L-2)}\ot N\rightarrow M\ot A\ot N\rightrightarrows \\
    M\ot N\rightarrow M\ot[A]N\bigg).
\]
Given a fixed $L$, ground states of $\cL_L$ are equivalent to density matrices over $M\ot[A]N$. 
Suppose $\rho\in\End(M\ot[A]N)$ is a density matrix, the map
\[
    \rho\mapsto m_{\{1,\cdots, L\}}^\dagger\rho m_{\{1,\cdots, L\}}
\]
carries a density matrix on $M\ot[A]N$ to a density matrix on $M\ot A^{\otimes(L-2)}\ot N$. And the ground state valued on $O \in \fA_{\Lambda_L}$ is
\[
    \omega(O) = \Tr\bigg(m_{\{1,\cdots, L\}}^\dagger\rho m_{\{1,\cdots, L\}}O\bigg) \\= \Tr\bigg(\rho m_{\{1,\cdots, L\}}Om_{\{1,\cdots, L\}}^\dagger\bigg).
\]

\begin{itemize}
    \item When $\{\Lambda_L\}$ is a bulk region, there exists a number $x > 0$ and $N\in \N$, such that $\min(\Lambda_L)> Lx$, and $\max(\Lambda_L)<L(1-x)$ for all $L>N$. Thus for sufficiently large $L$, the region $\Lambda_L$ will be surrounded by $A$'s. Thus
    \[
        \omega(O) = \Tr(\rho m_{\{1,\cdots, L\}}Om_{\{1,\cdots, L\}}^\dagger) \\= \Tr\left(\rho m_{\{1,\cdots, L\}}\scr(O)m_{\{1,\cdots, L\}}^\dagger\right)
    \]
    Suppose $\scr(O) = \sum_{\alpha}\lambda_\alpha p_\alpha$ is the decomposition of bimodule maps, the equation above can be reduced to
    \[
        \omega(O) =\omega\left(\scr(O)\right) =\sum_{\alpha}\lambda_\alpha\omega(p_\alpha),\quad \forall L > N.
    \]
    Thus the value of $\omega$ on local operators in $\Lambda_L$ is completely fixed by $\omega(p_\alpha)$, implying the state space $\cG_{\Lambda_L}$ is a simplex for all $L>N$. We thus conclude that the ground state space on the macroscopic local region is a simplex, whose dimension is $\dim(\fZ(\chargef A)) - 1$. Thus the dimension of $\fZ(\chargef A)$ is the ground state subspace in the bulk, and whose extreme points are superselection sectors;
    \item When $\{\Lambda_L\}$ is a left boundary region, there exists a number $x>0$ and $N\in N$, such that $\max(\Lambda_L)<L(1-x)$ for all $L>N$. Thus for sufficiently large $L$, the region $\Lambda_L$ will be surrounded by $A$'s on the right-hand side. By the Equation \eqref{eq:fpobdy}
    \[
        \omega(O) = \omega\left(\scr(O)\right),\quad \forall L > N.
    \]
    Since $\scr(O) \in \fZ_{\chargef A}(\chargef M)$, the value of $\omega$ on local operators on $\Lambda_L$ is totally given by its value on operators in $\fZ_{\chargef A}(\chargef M)$. Thus, the edge mode is the minimal faithful module over $\fZ_{\chargef A}(\chargef M)$.
    \item The right boundary region case is the same.
\end{itemize}
These results are consistent with the infinite chain calculations in the last section.

\subsection{Deformation}\label{section: deformation}
In this section, we show that a Q-system model $(\Z,\sH,A,-m^\dagger m)$ can be deformed into another smaller one $(\Z,\sH,A_P,-m_P^\dagger m_P)$ constructed from the reduced Q-system $(A_P,m_P,\iota_P)$ without causing phase transition.

Let $(A,m,\iota)$ be a simple Q-system in $\cC_{\Hilb_f}^\vee$. Define the parameterized multiplication:
\[
    m(f) := 
    \begin{tikzpicture}[baseline = (current bounding box), scale = .7]
        \draw (-1,-1) -- (0,0) -- (1,-1);
        \draw (0,0) -- (0,1);
        \draw (0,-.5) -- (-.25,-.25);
        \filldraw[fill = white,draw = black] (0-.2,-.5-.4) rectangle node{$f$} (+.2,-.5);
    \end{tikzpicture}.
\]
where $f\in\cC_{\Hilb_f}^\vee(\one, A)$.
Since $A$ is simple, the following map only gives a constant factor.
\[
    m(f) m(f)^\dagger = \lambda_f \id_A,
\]
Here we denote
\[
    f = 
    \diagram{1}{
        \draw (0,0) -- (0,.5);
        \fill[black] (-.1,-.1) rectangle (.1,.1);
    },\quad f^\dagger = 
    \diagram{1}{
        \draw (0,0) -- (0,.5);
        \fill[black] (-.1,.4) rectangle (.1,.6);
    }.
\]
Thus $\lambda_f$ can be calculated via
\[
    \lambda_f d_A = 
    \diagram{.5}{
        \draw (0,-1) -- (1,0) -- (0,1) -- (-1,0) -- cycle;
        \draw (0,-1.5) -- (0,-1);
        \draw (0,1) -- (0,1.5);
        \draw (0,.5) -- (-.25,.75);
        \draw (0,-.5) -- (-.25,-.75);
        \fill[black] (0-.1,.5-.1) rectangle (0+.1,.5+.1);
        \fill[black] (0-.1,-.5+.1) rectangle (0+.1,-.5-.1);
        \filldraw[fill = white] (0,1.5) circle [radius = .05];
        \filldraw[fill = white] (0,-1.5) circle [radius = .05];
    }
    =
    \diagram{.5}{
        \draw (0,-1) -- (0,1);
        \draw (0,.75) -- (.25,.5);
        \draw (0,-.75) -- (.25,-.5);
        \fill[black] (.25-.1,.5-.1) rectangle (.25+.1,.5+.1);
        \fill[black] (.25-.1,-.5+.1) rectangle (.25+.1,-.5-.1);
        \filldraw[fill = white] (0,1) circle [radius = .05];
        \filldraw[fill = white] (0,-1) circle [radius = .05];
    }
    =
    |f^\dagger f|.
\]
In the second and third steps, we used the spherical property and the pivotal property of the standard dual.
Thus $\lambda_A = d_A^{-1}|f^\dagger f| \geq 0$ and $\lambda_f = 0$ if and only if $f = 0$.

The quantum system
\[
    \cL_f:=\left(\Z,\sH,A,-m(f)^\dagger m(f)\right)
\]
is a gapped commuting model for each $f\in\cC(\one,A)\setminus \{\mathbf{0}\}$. By
\[
    (-m(f)^\dagger m(f))^2 = \frac{|f^\dagger f|}{d_A}m(f)^\dagger m(f),
\]
the energy gap is $\frac{|f^\dagger f|}{d_A}>0$. 
Thus a continuous class of the non-isolated fixed-point models can be realized whenever $A$ is simple and reducible. The phenomenon of the non-isolated fixed-point first appeared in \cite{levin_ppt} by the corner-double-line tensor and further been seriously treated in \cite{Gu_2009} by the so-called \textit{entanglement filtering} procedure. Our local interaction here can be viewed as a type of the corner-double-line tensor in Hamiltonian mechanics and $f$ is the the tensor in the corner.

Given a projector $p$ in $\cC_{\Hilb_f}^\vee(\one, A)$ (definition \ref{def:unit alg}), i.e. 
\[
    p = \ 
    \diagram{1}{
        \draw (0,0) -- (0,.5);
        \fill[black] (0,0) circle[radius = .05];
    },\quad 
    p^\dagger = \
    \diagram{1}{
        \draw (0,0) -- (0,.5);
        \fill[black] (0,.5) circle[radius = .05];
    },
\]
with
\[
    \diagram{1}{
        \draw (-.5,-.5) -- (0,0) -- (0,.5);
        \draw (.5,-.5) -- (0,0);
        \fill[black] (-.5,-.5) circle[radius = .05];
        \fill[black] (.5,-.5) circle[radius = .05];
    }
    \ =\ 
    \diagram{1}{
        \draw (0,-.5) -- (0,.5);
        \fill[black] (0,-.5) circle[radius = .05];
    }\ = \ 
    \diagram{1}{
        \draw (0,0) -- (.2,-.2) -- (.4,0) -- (.4,.6);
        \draw (.2,-.4) -- (.2,-.2);
        \filldraw[fill = white] (.2,-.4) circle[radius = .05];
        \fill[black] (0,0) circle[radius = .05];
    }\ 
    =\ 
    \diagram{1}{
        \draw (0,0) -- (-.2,-.2) -- (-.4,0) -- (-.4,.6);
        \draw (-.2,-.4) -- (-.2,-.2);
        \filldraw[fill = white] (-.2,-.4) circle[radius = .05];
        \fill[black] (0,0) circle[radius = .05];
    }.
\]
It is always possible to choose a path $\gamma:[0,1]\rightarrow \cC_{\Hilb_f}^\vee(\one,A)$, such that $\gamma(0) = \iota$ and $\gamma(1) = p$.
Then $\cL\circ\gamma$ is a path of the commuting models, with $\cL\circ \gamma(0) = \cL_{\iota} = \left(\Z,\sH,A,-m^\dagger m\right)$ be the original model and $\cL\circ \gamma(1) = \cL_p = (\Z,\sH, A, -m(p)^\dagger m(p))$ be the deformed model. 

Suppose $p$ reduces the Q-system to $(A_P,m_P,\iota_P)$ (proposition \ref{prop:qsys reduction}), we show that the $H$-type ground state of $\cL_{p}$ is one-to-one correspondence to that of $(\Z,\sH,A_P, -m_P^\dagger m_P)$. Suppose $\omega$ 
 is an $H$-type ground state for $(\Z,\sH, A, 1-m(p)^\dagger m(p))$, then
\begin{align*}
    \frac{|p^\dagger p|}{d_A} &=\omega(m^\dagger_{i,i+1}(p)m_{i,i+1}(p))\\
    &= \omega\bigg(m^\dagger_{i-1,i}(p)m_{i-1,i}(p) m^\dagger_{i,i+1}(p)m_{i,i+1}(p)
    \times m^\dagger_{i+1,i+2}(p)m_{i+1,i+2}(p)\bigg) \\
    &= \omega\bigg(m^\dagger_{i-1,i}(p)m_{i-1,i}(p) \frac{|p^\dagger p|}{d_A}m^\dagger_{P,i,i+1}m_{P,i,i+1} m^\dagger_{i+1,i+2}(p)m_{i+1,i+2}(p)\bigg)\\
    &=\frac{|p^\dagger p|}{d_A}\omega\left(m^\dagger_{P,i,i+1}m_{P,i,i+1}\right)
\end{align*}
for any $i\in\Z$.
Conversely, if $\omega$ is an $H$-type ground state for $(\Z,\sH,A_P, 1-m_P^\dagger m_P)$, we have
\begin{align*}
    1 &=\omega(m^\dagger_{P,i,i+1}m_{P,i,i+1})\\
    &= \omega\bigg(m^\dagger_{P,i-1,i}m_{P,i-1,i}m^\dagger_{P,i,i+1}m_{P,i,i+1}m^\dagger_{P,i+1,i+2}m_{P,i+1,i+2}\bigg) \\
    &= \frac{d_A}{|p^\dagger p|}\omega\bigg(m^\dagger_{P,i-1,i}m_{P,i-1,i}m^\dagger_{i,i+1}(p)m_{i,i+1}(p)m^\dagger_{P,i+1,i+2}m_{P,i+1,i+2}\bigg)\\
    &=\frac{d_A}{|p^\dagger p|}\omega\bigg(m^\dagger_{i,i+1}(p)m_{i,i+1}(p)\bigg)
\end{align*}
for any $i\in\Z$. Combined with Remark \ref{remark:matrix algebra}, any Morita equivalent Q-systems realize the same phase.

\section{Bimodule category \texorpdfstring{${}_{A_+}(\Hilb_{S_3\xt \Z_3})_{A_+}$}{A+(HilbS3xZ3)A+}}\label{sect:bimodS3xZ3}
First, we can show a result that will be used later. For a finite group $G$ and a 2-cocycle $\psi\in Z^2(G,U(1))$ with $\psi(e,g) = \psi(g,e) = 1$ for any $g\in G$, 
    \[
    \psi(g,h)\psi(h^{-1},g^{-1}) = \frac{\psi(g,g^{-1})\psi(h,h^{-1})}{\psi(gh,h^{-1}h^{-1})}.
    \]
    Thus $\psi(g,h)\psi(h^{-1},g^{-1})$ is cohomologously trivial.



We fix the presentation $S_3 = \la  s, r|s^2 = r^3 = e,\ srs = r^2\ra$ and $\Z_3 = \la a|a^3 = e\ra$. We consider $A_+ = \C^{\psi_+}\la r,a\ra \in \Hilb_{S_3\xt \Z_3}$, where $\psi_+(g,h) = \ee^{\frac{2\pi \ii }{3}g_2h_1}$. Following \cite{ostrik2006modulecategoriesdrinfelddouble}, we compute the bimodule category ${}_{A_+}(\Hilb_{S_3\xt \Z_3})_{A_+}$. Below, we use $1_g$ to represent a basis of $\C_g$, such that $1_g\ot 1_h = 1_{gh}$. And the multiplication of $A_+$ is defined on basis as
\[
1_g\cdot 1_h = \psi_+(g,h)1_{gh}.
\]

\subsection{As a linear category}
The double cosets $\la r,a \ra\setminus (S_3\xt \Z_3)/\la r,a \ra$ are
\[
\la r,a\ra, \quad s\la r,a\ra.
\]

The groups and cocycles defined in \cite{ostrik2006modulecategoriesdrinfelddouble} are:
\begin{itemize}
    \item $\la r,a\ra^e = \la r,a\ra$:
    \[
    \psi^e(c,d) = \psi_+(c,d)\psi_+(d^{-1},c^{-1})\sim_{\mathrm{coh}} 1
    \]
    which is a trivial 2-cocycle following from the result discussed above. The projective representations of $\la r,a\ra$ with projective phase $\psi^e$ are 1-dimensional. Thus
    \[
    \left|\Irr\left(\Rep^{\psi^e}\la r,a\ra^e\right)\right| = 9.
    \]
     We choose the $1$-dimensional projective representations $\rho_{ij}:\Z_3\xt\Z_3\rightarrow \mathrm{PGL}(U_{ij})$ of $\la r,a\ra^e$ with projective phase $\psi^e$:
\[
\rho_{ij}(r^m) = \ee^{\frac{2\pi\ii }{3}im},\quad \rho_{ij}(a^n) = \ee^{\frac{2\pi\ii }{3}jn},\quad i,\, j,\, m,\, n = 0,\,1,\, 2.
\]
And we define
\[
\rho_{ij}(r^ma^n):=\ee^{-\frac{2\pi\ii mn }{3}}\rho_{ij}(a^m)\rho_{ij}(r^n) = \ee^{\frac{2\pi\ii}{3}(im+jn-mn)}.
\]
We denote the simple $A_+$-bimodule related to the projective representation $(U_{ij},\rho_{ij})$ as $[ij]$.
    \item $\la r,a\ra^s= \la r,a\ra$:
    \begin{align*}
      \psi^s(c,d) &= \psi_+(c,d)\psi_+(s^{-1}d^{-1}s,s^{-1}c^{-1}s) \\
      &= \ee^{\frac{2\pi\ii}{3}(c_2d_1-c_1d_2)} \\
      &= \ee^{-\frac{2\pi\ii}{3}c_2d_1}\frac{\ee^{\frac{2\pi\ii}{3}c_1c_2}\ee^{\frac{2\pi\ii}{3}d_1d_2}}{\ee^{\frac{2\pi\ii}{3}(c_1+d_1)(c_2+d_2)}} \\
      &\sim_{\mathrm{coh}} \ee^{-\frac{2\pi\ii}{3}c_2d_1}
\end{align*}
which is a non-trivial 2-cocycle. The representation of $\C^{\psi^s}\la r,a\ra^s$ is $3$-dimensional. Thus
\[
\left|\Irr\left(\Rep^{\psi^s}\la r,a\ra^s\right)\right| = 1
\]
The $3$-dimensional projective representation $\rho_V:\Z_3\xt \Z_3\rightarrow \mathrm{PGL}(V)$ of $\la r,a\ra^s$ can be chosen as
\[
\rho_V(r^m) = \begin{pmatrix}
    \ee^{-2\pi\ii/3} & 0 & 0\\
    0 & 1 & 0\\
    0 & 0 & \ee^{2\pi\ii/3}
\end{pmatrix}^m,\quad 
\rho_V(a^n) = \begin{pmatrix}
    0 & 0 & 1\\
    1 & 0 & 0\\
    0 & 1 & 0
\end{pmatrix}^n.
\]
We then have
\[
\rho_V(r^ma^n) :=\ee^{\frac{2\pi\ii}{3}mn}\rho_V(r^m)\rho_V(a^n),
\]
\[
\rho_V(a^n)\rho_V(r^m) = \ee^{\frac{2\pi\ii}{3}mn}\rho_V(r^ma^n),
\]
\end{itemize}

Denote the simple bimodule related to the projective representation $(V,\rho_V)$ as $\sigma$. Since $\sigma$ is the unique non-invertible object in the bimodule category with the quantum dimension $d_\sigma = 3$, the fusion rule is Tambara-Yamagami-type. There are several groups of order $18$:
\[
D_{18},\quad \Z_{9}\times \Z_2,\quad S_3\times \Z_3,\quad (\Z_3\times \Z_3)\rtimes_{\varphi}\Z_2,\quad \Z_3\times \Z_3\times \Z_2,
\]
Here $\varphi_1(a,b) = (-a, -b)$.
None of these groups have irreducible representation of dimension-$3$. Thus the bimodule category here does not correspond to any group representation category.

From the above setup, the bimodule category is isomorphic to $\Rep(\sH)$ as a linear category, where $\sH$ is an algebra
\[
\sH = \C^{\psi^e}\la r,a\ra\oplus \C^{\psi^s}\la r,a\ra.
\]
We denote the basis in $\C^{\psi^e}\la r,a\ra$ as $\{g_e\}_{g\in\la r,a\ra}$, and in $\C^{\psi^s}\la r,a\ra$ as $\{g_s\}_{g\in\la r,a\ra}$.

\subsection{As a fusion category}
The Tambara-Yamagami category $\TY_{G}^{\chi,\epsilon}$ is defined by
\begin{itemize}
    \item an abelian group $G$ and a symmetric non-degenerate bi-character $\chi:G\xt G\rightarrow U(1)$;
    \item the Frobenius-Schur indicator $\epsilon = \pm 1$.
\end{itemize}
Below we will determine $\chi$ and $\epsilon$.

We first fix the left $A$-action ``$\rt$'' and the right $A$-action ``$\lt$''.

\begin{itemize}
    \item $[ij]$: The invertible objects are $[ij] = U_{ij}\otimes A$, the right $A$-module action on which is given by the free $A$-multiplication, and
\[
1_{g}\rt (u_{ij}\ot 1_e)\lt 1_{g^{-1}} = \rho_{ij}(g)u_{ij}\ot 1_e,\quad \forall u_{ij}\in U_{ij}.
\]

\item $\sigma$: the non-invertible bimodule is $\sigma = V\ot s\ot A$, the right $A$-module action is given by the free $A$-multiplication and
\[
1_g\rt (v\ot s\ot 1_e)\lt 1_{s^{-1}gs} = \rho_V(g)(v)\ot s\ot 1_e,\quad \forall v\in V.
\]
\end{itemize}
 
The relative tensor product is defined as
\[
U\ot[A]V := U\ot V/\la u\lt a\ot v- u\ot a\rt v \ra.
\]
Thus 
\[
u\lt a\ot[A]v = u\ot[A]a\rt v.
\]

\begin{itemize}
    \item $[ij]\ot[A][kl]$: The adjoint action on $u_{ij}\ot 1_e\ot[A] u_{kl}\ot 1_e$ is
    \[
     1_g\rt (u_{ij}\ot 1_e)\ot[A]( u_{kl}\ot 1_e)\lt 1_{g^{-1}} = \psi_+(g^{-1},g)^{-1} (\rho_{ij}(g)u_{ij}\ot 1_e)\ot[A]( \rho_{kl}(g)u_{kl}\ot 1_e).
    \]
    Thus the effective action on the vector space $U_{ij}\ot U_{kl}$ is encoded by the map
\[
\rho_{U_{ij}\ot U_{kl}}(g) := \psi_+(g^{-1},g)^{-1} \rho_{ij}(g)\ot \rho_{kl}(g).
\]
\item $[ij]\ot[A]\sigma$: The adjoint action is
\begin{align*}
    &\quad 1_g\rt (u_{ij}\ot 1_e)\ot[A] (v\ot 1_s)\lt 1_{s^{-1}g^{-1}s}\\
    &= \psi_+(g^{-1},g)^{-1}1_g\rt (u_{ij}\ot 1_e)\lt 1_{g^{-1}}\ot[A]1_g\rt (v\ot 1_s)\lt 1_{s^{-1}g^{-1}s}\\
    &=\psi_+(g^{-1},g)^{-1} \bigg(\rho(g)(u_{ij})\ot 1_e\bigg)\ot[A] \bigg(\rho_V(g)(v)\ot 1_s\bigg).
\end{align*}
The effective action on the vector space $U_{ij}\ot V$ is encoded by the map
\[
\rho_{U_{ij}\ot V}(g) := \psi_+(g^{-1},g)^{-1} \rho_{ij}(g)\ot \rho_{V}(g).
\]
    \item $\sigma\ot[A][ij]$: 
    The adjoint action on $v\ot 1_s \ot[A] u_{ij}\ot 1_{e}\in\sigma\ot[A][ij]$ is
\begin{align*}
&\quad 1_g\rt (v\ot 1_s)\ot[A]( u_{ij}\ot 1_e)\lt 1_{s^{-1}g^{-1}s}\\
    &=\psi_+(s^{-1}g^{-1}s,s^{-1}gs)^{-1}1_g\rt (v\ot 1_s)\lt 1_{s^{-1}g^{-1}s}\ot[A]1_{s^{-1}gs}( u_{ij}\ot 1_e)\lt 1_{s^{-1}g^{-1}s}\\
    &=\psi_+(s^{-1}g^{-1}s,s^{-1}gs)^{-1} \bigg(\rho_V(g)(v)\ot 1_s\bigg)\ot[A] \bigg(\rho_{ij}(s^{-1}gs)(u_{ij})\ot 1_e \bigg).
\end{align*}
We find that the effective action on the vector space $V\ot U_{ij}$ is encoded by the map
\[
\rho_{V\ot U_{ij}}(g) := \psi_+(s^{-1}g^{-1}s,s^{-1}gs)^{-1} \rho_{V}(g)\ot \rho_{ij}(s^{-1}gs).
\]
\item $\sigma\ot \sigma$: The adjoint action is
\begin{align*}
    &\quad 1_g\rt \bigg(u\ot 1_s\bigg)\ot[A]\bigg(v\ot 1_s\bigg)\lt 1_{g^{-1}}\\
    &=\psi_+(s^{-1}g^{-1}s,s^{-1}gs)^{-1} 1_g\rt \bigg(u\ot 1_s\bigg)\lt 1_{s^{-1}g^{-1}s}\ot[A]1_{s^{-1}gs}\rt\bigg(v\ot 1_s\bigg)\lt 1_{g^{-1}}\\
    &=\psi_+(s^{-1}g^{-1}s,s^{-1}gs)^{-1} \bigg(\rho_V(g)(u)\ot 1_s\bigg)\ot[A]\bigg(\rho_{V}(s^{-1}gs)(v)\ot 1_s\bigg).
\end{align*}
The effective action on the vector space $V\ot V$ is encoded by the map
\[
\rho_{V\ot V}(g) := \psi_+(s^{-1}g^{-1}s,s^{-1}gs)^{-1}\rho_V(g)\ot \rho_V(s^{-1}gs).
\]
\end{itemize}

With the conjugation action on the relative tensor product space, we now fix the intertwiner of the fusion.
\begin{itemize}
    \item $[(i+k)(j+l)]\rightarrow [ij]\ot[] [kl]$: 
    \begin{align*}
         &\quad \psi_+((r^ma^n)^{-1},r^ma^n)^{-1}\rho_{ij}(r^ma^n)\rho_{kl}(r^ma^n)\\
         &= \ee^{\frac{2\pi\ii}{3}mn}\ee^{\frac{2\pi\ii}{3}(im+jn-mn)}\ee^{\frac{2\pi\ii}{3}(km+ln-mn)}\\
         &= \ee^{\frac{2\pi\ii}{3}((i+k)m+(j+l)n-mn)}\\
         &= \rho_{i+k,j+l}(r^ma^n).
    \end{align*}
     Thus the abelian group $G$ of the Tambara-Yamagami category is $\Z_3\xt\Z_3$;
     \item $\sigma\rightarrow [ij]\ot \sigma$: the bimodule map condition implies that we only need to find the intertwiner $f_{ij}:V\rightarrow U_{ij}\ot V$, that is, a linear map $f_{ij}$ such that 
     \[
     \rho_{U_{ij}\ot V}(g) f_{ij} = f_{ij}\rho_{V}(g),\quad \forall g\in \la r,a\ra.
     \]
     We find
     \[
     f_{10} = \begin{pmatrix}
        0 & 1 & 0\\
        0 & 0 & 1\\
        1 & 0 & 0
    \end{pmatrix},\quad 
    f_{01} =
    \begin{pmatrix}
        \ee^{-2\pi\ii/3} & 0 & 0\\
        0 & 1 & 0\\
        0 & 0 & \ee^{2\pi\ii/3}
    \end{pmatrix}.
     \]
     \item $\sigma\rightarrow \sigma\ot[][ij]$: we need to determine the intertwiner $k_{ij}: V\rightarrow V\ot U_{ij}$
     \[
     \rho_{V\ot U_{ij}}(g)f_{ij} = f_{ij}\rho_{V}(g),\quad \forall g\in \la r,a\ra.
     \]
     We find
     \[
k_{10}  = 
\begin{pmatrix}
    0 & 0 & 1\\
    1 & 0 & 0\\
    0 & 1 & 0
\end{pmatrix},\quad 
k_{01} = 
\begin{pmatrix}
    \ee^{-2\pi\ii/3} & 0 & 0\\
    0 & 1 & 0\\
    0 & 0 & \ee^{2\pi\ii/3}
\end{pmatrix}.
\]
\item $[ij]\rightarrow \sigma\ot \sigma$:
we find the intertwiner $m_{ij}:U_{ij}\rightarrow V\ot V$, such that
\[
\rho_{V\ot V}(g)m_{ij} =  m_{ij}\rho_{ij}(g).
\]
\end{itemize}

Now we compute the bi-character with the intertwiners. First, it is easy to see that
\[
\chi(r,r) = \chi(a,a) = 1.
\]
Since
\[
f_{10}k_{01} = \ee^{\frac{2\pi\ii}{3}}k_{01}f_{10},
\]
we have
\[
\chi(r,a) = \chi(a,r) = \ee^{\frac{2\pi\ii}{3}}.
\]
Thus, the bicharacter is
\[
\chi(g,h) = \ee^{\frac{2\pi\ii}{3}(g_1h_2+g_2h_1)}.
\]

And since the $m_{00}$ coincides with the coevaluation map, the Frobenius-Schur indicator $\epsilon = 1$.

We have the result:
\[
{}_{A_+}(\Hilb_{S_3\xt \Z_3})_{A_+}\cong \TY_{\Z_3\xt\Z_3}^{\chi,\epsilon},\quad \chi(g,h) = \exp{\left[\frac{2\pi\ii}{3}\left(g_1h_2+g_2h_1\right)\right]},\quad \epsilon = 1.
\]

The fusion structure gives $\sH$ a Hopf algebra structure, which can be directly read off from above
\begin{align*}
    \one &= e_e+e_s,\\
    \Delta g_e &= \psi_+(g^{-1},g)^{-1} g_e\ot g_e + \psi_+(s^{-1}g^{-1}s,s^{-1}gs)^{-1}g_s\ot s^{-1}g_s s,\\
    \Delta g_s &= \psi_+(g^{-1},g)^{-1}g_e\ot g_s + \psi_+(s^{-1}g^{-1}s,s^{-1}gs)^{-1} g_s\ot s^{-1}g_e s,\\
    \epsilon(g_e) &= \psi_+(g^{-1},g),\quad \epsilon(g_s) = 0,\\
    S(g_e)&=g_e^{-1},\quad S(g_s) = s^{-1}g_s^{-1}s.
\end{align*}

\twocolumngrid

\bibliography{ref}
\end{document}